\def\inline#1:{\par\vskip 7pt\noindent{\bf #1:}\hskip 10pt}
\long\def\commabs #1\commabsend{}
\long\def\commful #1\commfulend{#1}
\long\def\omitit #1 \omitend{}
\long\def\commcomm #1\commend{}
\newenvironment{smallitemize} {
  \begin{list}{$\bullet$} {\setlength{\parsep}{0pt}
\setlength{\itemsep}{0pt}} } { \end{list} }
\newcommand{\qed}{\hfill $\Box$ \medbreak}
\newenvironment{proof}{\noindent {\bf Proof.}}{\qed}
\def\level{j}
\newcommand{\vacant}[0]{\texttt{Vacant}}
\newcommand{\rchild}[0]{\texttt{Red\_Child}}
\newcommand{\cred}[0]{\texttt{Contains\_Red}}
\newcommand{\TREECOARSEN}[0]{\texttt{TREE\_COARSEN}}
\newcommand{\PARTLEADER}[0]{\texttt{PART\_LEADER}}
\newcommand{\SENDANCESTORINFO}[0]{\texttt{Send\_Anc\_Info}}
\newcommand{\ANCESTORINFO}[0]{\texttt{Anc\_Info}}
\newcommand{\ALG}[0]{\texttt{SYNC\_MST}}
\newcommand{\COUNTSIZE}[0]{\texttt{Count\_Size}}
\newcommand{\FINDOUT}[0]{\texttt{Find\_Min\_Out\_Edge}}
\newcommand{\PARENTID}[0]{\texttt{Parent\_ID}}
\newcommand{\ECHO}[0]{\texttt{ECHO}}
\def\ID{\mbox{\tt ID}}
\def\SP{\mbox{\tt SP}}
\def\NK{\mbox{\tt NumK}}
\def\EDIAM{\mbox{\tt EDIAM}}
\newcommand{\Ch}[0]{\texttt{Child}}
\newcommand{\OR}[0]{\texttt{OR}}
\newcommand{\False}[0]{\texttt{false}}
\newcommand{\True}[0]{\texttt{true}}
\newtheorem{theorem}{Theorem}[section]
\newtheorem{definition}{Definition}[section]
\newtheorem{Comment}{Comment}[section]
\newtheorem{claim}[theorem]{Claim}
\newtheorem{lemma}[theorem]{Lemma}
\newtheorem{corollary}[theorem]{Corollary}
\newtheorem{observation}[theorem]{Observation}
\def\inline#1:{\par\vskip 7pt\noindent{\bf #1:}\hskip 10pt}
\def\cH{{\cal H}}
\def\cI{{\cal I}}
\def\cE{{\cal E}}
\def\cV{{\cal V}}
\def\cD{{\cal D}}
\def\cM{{\cal M}}
\def\cF{{\cal F}}
\def\cH{{\cal H}}
\def\cP{{\cal P}}
\def\blackslug{\hbox{\hskip 1pt \vrule width 4pt height 8pt
     height 1.5pt \hskip 1pt}}
\def\QED{\quad\blackslug\lower 8.5pt\null\par}
\def\MathN{\hbox{\rm I\kern-2pt I\kern-3.1pt N}}
\def\MathNsub{\hbox{\scriptsize I\kern-2pt I\kern-3.1pt N}}
\long\def\comment #1\commentend{} \long\def\commabs #1\commabsend{}
\newcommand{\ROOTS}[0]{\texttt{Roots}}
\newcommand{\PARENTS}[0]{\texttt{Parents}}
\newcommand{\EndP}[0]{\texttt{EndP}}
\newcommand{\AND}[0]{\texttt{Or\_EndP}}
\newcommand{\DFS}[0]{\texttt{dfs}}
\newcommand{\Info}[0]{\texttt{I}}
\newcommand{\hInfo}[0]{\hat{\texttt{I}}}
\newcommand{\piece}[0]{\texttt{Pc}}
\newcommand{\id}[0]{\texttt{ID}}
\newcommand{\Top}[0]{\texttt{Top}}
\newcommand{\Roo}[0]{\texttt{Root}}
\newcommand{\Bottom}[0]{\texttt{Bottom}}
\newcommand{\Show}[0]{\texttt{Show}}
\newcommand{\Keep}[0]{\texttt{Want}}
\newcommand{\Ask}[0]{\texttt{Ask}}
\newcommand{\WAVE}[0]{\texttt{Wave}}
\newcommand{\MULTI}[0]{\texttt{Multi\_Wave}}
\newcommand{\READY}[0]{\texttt{Ready}}
\newcommand{\FREE}[0]{\texttt{Free}}
\begin{document}

\title{Fast and Compact Self-Stabilizing  Verification, Computation, \\and Fault Detection of an MST}
\author{
Amos Korman
\thanks{CNRS and Univ. Paris Diderot, Paris, 75013, France.  E-mail: {\tt Amos.Korman@liafa.jussieu.fr}.
Supported in part by a  France-Israel cooperation grant
(``Mutli-Computing'' project)
from the France Ministry of Science and Israel Ministry of Science,
by the ANR projects ALADDIN and PROSE, and by the INRIA project GANG.}
\and Shay Kutten
\thanks{Information Systems Group, Faculty of IE\&M, The Technion,
Haifa, 32000 Israel. E-mail: {\tt  my last name at ie dot technion dot ac dot il}.
Supported in part by a  France-Israel cooperation grant
(``Mutli-Computing'' project)
from the France Ministry of Science and Israel Ministry of Science,  by
a grant from the Israel Science Foundation, and by the Technion funds for security research.}
\and
Toshimitsu Masuzawa
\thanks{Graduate School of Information Science and Technology, Osaka University, 1-5 Yamadaoka, Suita, Osaka, 565-0871, Japan.
 E-mail: {\tt masuzawa@ist.osaka-u.ac.jp}.
Supported in part by JSPS Grant-in-Aid for Scientific Research ((B)26280022).}
 }

\date{}

\begin{titlepage}
\def\thepage{}
\maketitle

\date{}

\def\thefootnote{\fnsymbol{footnote}}

\begin{abstract}


This paper demonstrates the usefulness of distributed local verification of proofs, as a tool for the design of self-stabilizing algorithms.
In particular, it introduces a somewhat generalized notion of distributed local proofs, and utilizes it for
improving the time complexity significantly, while maintaining space optimality.
As a result, we show
that optimizing the memory size
carries at most a small cost in terms of time, in the context of Minimum Spanning Tree (MST).
That is, we present algorithms that are both time and space efficient  for both constructing an MST and for verifying it.
This involves several parts that may be considered contributions in themselves.

First, we
generalize the notion of local proofs, trading off  the time complexity
for memory efficiency.
This adds a dimension to the study of distributed local proofs, which has been gaining attention recently.
Specifically, we design a (self-stabilizing) proof labeling scheme which is memory optimal  (i.e., $O(\log n)$ bits per node), and whose time complexity
is $O(\log ^2 n)$ in synchronous networks, or  $O(\Delta \log ^3 n)$ time  in asynchronous ones, where $\Delta$ is the maximum degree of nodes.
This answers an open problem posed by Awerbuch and Varghese (FOCS 1991).
We~also show that $\Omega(\log n)$ time is necessary, even in synchronous networks.
Another property is that
 if $f$ faults occurred,  then, within  the required
detection  time above, they are detected by some node in the $O(f\log n)$
locality of each of the faults.

Second, we show how to enhance a known transformer that makes
 input/output algorithms self-stabilizing. It now takes as input an efficient construction algorithm and an efficient self-stabilizing proof labeling scheme, and produces
 an efficient self-stabilizing algorithm. When used for MST, the transformer produces a memory optimal  self-stabilizing algorithm, whose time complexity, namely, $O(n)$,
 is significantly better even than that of previous algorithms.
(The time complexity of previous  MST algorithms that used $\Omega(\log^2 n)$ memory bits per node  was $O(n^2)$, and the time for optimal space  algorithms was $O(n|E|)$.)
Inherited from our proof labelling scheme, our self-stabilising MST construction algorithm also
has the following two properties: (1) if faults occur after the construction ended, then they are  detected by
some nodes within $O(\log ^2 n)$  time
in synchronous networks, or within $O(\Delta \log ^3 n)$ time  in
asynchronous ones, and (2) if $f$ faults occurred,  then, within  the required
detection  time above, they are detected within the $O(f\log n)$
locality of each of the faults.
We also show how to improve the above two properties, at the expense of some increase in the memory.

\end{abstract}



\bigskip
\noindent {\bf Keywords:} Distributed network algorithms, Locality, Proof labels, Minimum spanning tree,
Distributed property verification, Self-stabilization, Fast fault detection,
Local fault detection.

\end{titlepage}
\pagenumbering{arabic}

\section{Introduction}
\label{sec:intro}

\subsection{Motivation}
In a non-distributed context, solving a problem is believed to be, sometimes, much harder than verifying it (e.g., for NP-Hard problems).
Given a graph $G$ and a subgraph $H$ of $G$, a task introduced by Tarjan \cite{Tarjan79} is to check whether $H$ is a Minimum Spanning Tree (MST) of $G$.
This non-distributed verification seems to be just slightly easier than the non-distributed computation of an MST.
In the distributed context, the given subgraph $H$ is assumed to be represented distributively, such that each node stores pointers to (some of) its incident edges in $H$. The verification task consists of checking whether the collection of pointed edges indeed forms an MST, and if not, then it is required that at least one node {\em raises an alarm}. It was shown recently that such
 an MST verification task requires the same amount of time as the MST computation~\cite{DHKKNPPW, KKP11}.
On the other hand, assuming that each node can store some information, i.e., {\em a  label}, that can be used for the verification, the time complexity of an MST verification can be as small as 1, when using labels of size  $\Theta(\log^2 n)$ bits per node \cite{KormanKutten07,KKP10}, where $n$ denotes the number of nodes.
To make such a proof labeling scheme a useful algorithmic tool, one needs to present a {\em marker} algorithm for computing those labels.
One of the contributions of the current paper is a time and memory  efficient   marker algorithm.

Every decidable graph property (not just an MST) can be verified in a short time given large enough labels~\cite{KKP10}.
A second contribution of this paper is a generalization of such schemes to allow a reduction in the memory requirements, by trading off the locality (or the time).
In the context of MST, yet another (third) contribution is  a reduced space proof labeling scheme for MST. It uses
just $O(\log n)$ bits of memory per node (asymptotically the same as the amount of bits needed for merely representing distributively the MST). This is below the lower bound
of $\Omega(\log ^2 n )$ of \cite{KormanKutten07}. The reason this is possible is that the verification time is increased to $O(\log^2n)$ in synchronous networks and
to $O(\Delta \log^3 n)$ in asynchronous ones, where $\Delta$ is the maximum degree of nodes.
Another important property of the new scheme is that any fault is detected rather close to the node where it occurred.
Interestingly, it turns out that a logarithmic time penalty for verification is unavoidable. That is, we show that $\Omega(\log n)$ time for an MST verification scheme is necessary if the memory size is
restricted to $O(\log n)$ bits, even in synchronous networks.
(This, by the way, means that a verification with $O(\log n)$ bits, cannot be silent, in the sense
of \cite{silent}; this is why they could not be of the kind introduced in
\cite{KKP10}).

Given a long enough time, one can verify $T$ by recomputing the MST.
 An open problem posed by Awerbuch and Varghese \cite{awerbuch-varghese} is to find a synchronous MST verification algorithm whose  time complexity is
 smaller than the MST computation time, yet with a small memory.
 This problem was introduced in  \cite{awerbuch-varghese} in the context of self-stabilization, where
 the verification algorithm is combined with a non-stabilizing {\em construction} protocol to produce a stabilizing protocol. Essentially, for such purposes, the verification algorithm repeatedly checks the output of the non-stabilizing construction protocol, and runs the construction algorithm again if a fault at some node is detected. Hence, the construction algorithm and the corresponding verification algorithm are assumed to be designed together. This, in turn, may significantly simplify the checking process, since the construction algorithm may produce output variables (labels) on which the verification algorithm can later rely. In this context, the above mentioned third contribution solves this open problem by showing an $O(\log^2 n)$ time penalty
(in synchronous networks) when using optimal $O(\log n)$ memory size for the MST verification algorithm.
 In contrast, if we study MST {\em construction} instead of MST verification,
time lower bounds which are polynomial in $n$ for MST construction follow from \cite{lotker-patt-shamir-peleg,mst-lower-bound} (even for constant diameter graphs).

One known application of some
methods of distributed verification is for general transformers that transform non-self-stabilizing algorithms to self-stabilizing ones.
The fourth contribution of this paper is an adaptation of the transformer of \cite{awerbuch-varghese}
such that it can transform
algorithms in our context.
 That is, while the transformer of
 \cite{awerbuch-varghese}
  requires that the size of the network and its diameter are known, the adapted one
 allows networks of unknown size and diameter. Also, here, the verification method is
 a proof labeling scheme whose verifier part is self-stabilizing.
Based on the strength of the original transformer of \cite{awerbuch-varghese} (and that of the companion paper
\cite{APV} it uses), our adaptation yields a  result that is rather useful even without
plugging in the new verification scheme.
This is demonstrated by plugging in the proof labeling schemes of \cite{KormanKutten07,KKP10}, yielding an algorithm which already improves
the time of previous
$O(\log^2 n)$ memory self-stabilizing MST construction algorithm \cite{sebastian-disc-2010}, and also
detects faults using 1 time and at distance  at most $f$ from each fault (if $f$ faults occurred).

Finally, we obtain  an optimal $O(\log n)$ memory size, $O(n)$ time  asynchronous self-stabilizing MST construction algorithm. The state of the art time bound for such optimal memory algorithms was $O(n |E|)$  \cite{sebastian-disc2009, higham}. In fact, our time bound
 improves significantly even the best time bound for algorithms using polylogarithmic memory, which was $O(n^2)$  \cite{sebastian-disc-2010}.

Moreover, our
  self-stabilizing MST algorithm inherits two important properties from our verification scheme, which are: (1) the time it takes to detect faults is small:
 $O(\log^2 n)$ time in a synchronous network, or  $O(\Delta \log^3 n)$ in an asynchronous one; and (2)
if some $f$ faults occur, then each fault is detected within its $O(f\log n)$ neighbourhood.
Intuitively, a short detection distance and a small detection time may be helpful for the
design of local correction, for fault confinement, and for fault containment algorithms
 \cite{containment,AzarKP}.  Those notions were introduced to combat the phenomena of faults ``spreading'' and ``contaminating'' non-faulty nodes. For example,
 the infamous crash of the ARPANET (the predecessor of the Internet) was caused by a fault in a single node.  This caused old updates to be adopted by other nodes, who then generated wrong updates affecting others \cite{arpanet}.
 This is an example of those non-faulty nodes being contaminated.
 The requirement of {\em containment} \cite{containment} is that such a contamination does not occur, or, at least, that it is contained in a small radius around the faults. The requirement
 of {\em confinement} \cite{AzarKP} allows the contamination of a state of a node, as long as this contamination is not reflected in the output (or the externally visible actions) of the non-faulty nodes.
 Intuitively, if the detection distance is short, non-faulty nodes can detect the faults and avoid being contaminated.

\subsection{Related work}
The distributed construction of an MST
 has yielded techniques and insights that were used in the study of many other problems of distributed network protocols. It has also become a standard to check a new paradigm in distributed algorithms theory. The first distributed algorithm was proposed by \cite{first-mst},  its complexity was not analyzed. The seminal paper of Gallager, Humblet, and Spira  presented a message optimal algorithm
 that  used $O(n \log n)$ time, improved by Awerbuch to $O(n)$ time \cite{ghs,awerbuch-mst}, and later improved in  \cite{kdom,kutten-peleg-garay}
  to $O(D + \sqrt{n} \log^* n)$, where $D$ is the diameter of the network. This was coupled  with an almost
   matching lower bound of $\Omega(D+\sqrt{n})$ \cite{mst-lower-bound}.

Proof labeling schemes were introduced in  \cite{KKP10}. The model described therein assumes that the verification is restricted to 1 unit of time. In particular, a 1 time MST verification scheme was described
there using $O(\log^2 n)$ bits per node. This  was shown to be optimal in \cite{KormanKutten07}.  In \cite{GS11}, G\"o\"os and Suomela extend the notion of proof labeling schemes by allowing constant time verification, and
exhibit some efficient proof labeling schemes for recognizing several natural graph families.
In all these schemes, the criterion to decide failure of a proof (that is, the detection of a fault) is the case that at least one node does not manage to verify (that is, detects a fault).
The global state passes a proof successfully if all the nodes verify successfully.
This criterion for detection (or for a failure to prove) was suggested by \cite{AKY1, AKY} in the contexts of self stabilization, and used in self stabilization
(e.g.  \cite{APV, apvd,awerbuch-varghese}) as well as in other other contexts \cite{NS93}.

Self-stabilization \cite{dijkstra} deals with
algorithms that must cope with faults that are rather severe, though
of a type that does occur in reality
\cite{varghese-ss-possible}.
The faults may
cause the states of different nodes to be inconsistent with each
other. For example, the collection of marked edges may not be an MST.

Table \ref{tab:alg-compare} summarizes the known complexity results for self stabilizing MST construction algorithms. The first several entrees show the results of using
(to generate an MST algorithm automatically)
the known transformer of Katz and Perry \cite{katz-perry}, that extends automatically non self stabilizing algorithms to become self stabilizing.
The transformer of Katz and Perry \cite{katz-perry}  assumes a leader whose memory must hold a snapshot of the whole network.
 The time of the resulting self-stabilizing MST algorithm is $O(n)$  and the memory size is $O(|E|n)$.
 We have attributed a higher time to \cite{katz-perry} in the table, since we wanted to remove its assumption of a known leader, to make a fair comparison to the later papers who do not rely on this assumption.

 To remove the assumption, in the first entry we assumed the usage of the only leader election known at the time of \cite{katz-perry}.
That is, in \cite{AKY}, the first self-stabilizing leader election algorithm was proposed in order to remove the assumptions
of~\cite{katz-perry} that a leader and a spanning tree are given. The combination of \cite{AKY} and \cite{katz-perry}
 implied a self-stabilizing MST
in $O(n^2)$ time.
(Independently, a leader election algorithm was also presented by
 \cite{arora-gouda}; however, we cannot use it here since it
needed an extra assumption that a bound on $n$ was known; also, its higher time complexity would have driven the complexity of the transformed MST algorithm higher than the $O(n^2)$ stated above.)

Using unbounded space, the time of self-stabilizing leader election was later improved even to $O(D)$  (the actual diameter) \cite{sonu,datta2010}.
The  bounded memory algorithms of \cite{awerbuch-kutten-patt-et-al} or \cite{afek-bremler,datta2008},
 together with \cite{katz-perry} and \cite{awerbuch-mst}, yield a self-stabilizing MST algorithm using   $O(n|E|\log n)$ bits per node and time $O(D \log n)$ or $O(n)$.


\begin{table}\label{tab:alg-compare}
\caption{\texttt{Comparing self-stabilizing MST construction algorithms}}
\centering
\begin{tabular}{|p{3.5cm}|c|c|c|p{5cm}|}
 \hline
 \hline
 \texttt{Algorithm}  &  space  & time & Asynch & comment   \\
 \hline
      &  &  &  &  \\
     $\rm \cite{katz-perry} \rm $+\cite{AKY}+\cite{awerbuch-mst}& $O(|E|n)$ & $O(n^2)$& yes  &  \\
      \hline
      &  &  &  &  \\
     $\rm \cite{katz-perry} \rm $+ \cite{sonu}+\cite{awerbuch-mst} & unbounded & $O(D)$ & yes & The 2nd component can be replaced by \cite{datta2010}, assuming the $\cal{LOCAL}$ model. \\
    \hline       &  &  &  &  \\
  $\rm \cite{katz-perry}$+   $\rm \cite{awerbuch-mst} \rm $+
         $\rm \cite{awerbuch-kutten-patt-et-al}\rm $  & $O(|E|n)\log n$ &
          $O(\min\{D\log n , n\})$& yes& The third component here can be replaced by
          $\rm \cite{afek-bremler}\rm $ or by $\rm \cite{datta2008} \rm $.  \\
                 \hline    &  &  &  &  \\
     $\rm \cite{gupta-srimani-see-sebastian}\rm $ & $O(n\log n)$ & $O(n)$ & no & Implies an $O(n^2)$ time bound in asynchronous networks, assuming a good bound on the network size is known. The time is based on assuming the $\cal{LOCAL}$ model. \\
          \hline                &  &  &  &  \\
     $\rm \cite{higham}\rm $ & $O(\log n)$ & $\Omega(|E|n)$ &  yes & The time complexity is based on the assumption that a good bound on the network diameter is known.  \\
                \hline     &  &  &  &  \\
     $\rm \cite{sebastian-disc2009} \rm $ & $O(\log n)$ & $\Omega(|E|n)$ & yes & Aims to exchanging less bits with neighbours than $\rm \cite{higham}\rm $. Assumes a leader is known.    \\
      \hline               &  &  &  &  \\
    $\rm \cite{sebastian-disc-2010} \rm $  & $O(\log^2(n))$ & $O(n^2)$ &  yes &   \\
            \hline      \hline             &  &  &  &  \\
     Current paper & $O(\log n)$ & $O(n)$ &  yes &   \\
                  \hline     \hline
\end{tabular}

\end{table}

Antonoiu and Srimani  \cite{srimani-shared} presented a self stabilizing algorithm whose complexities were not analyzed. As mentioned by \cite{higham}, the model in that paper can be transformed to the models of the other papers surveyed here, at a high translation costs. Hence, the complexities of the algorithm of \cite{srimani-shared} may grow even further when stated in these models.
Gupta and Srimani \cite{gupta-srimani-see-sebastian} presented
an  $O(n\log n)$ bits algorithm.
Higham and Liang \cite{higham} improved the core memory requirement to $O(\log n)$, however, the time complexity went up again to $\Omega(n |E|)$.
An algorithm with a similar time complexity and a similar memory per node was also presented by
Blin, Potop-Butucaru, Rovedakis, and Tixeuil
\cite{sebastian-disc2009}.
This latter algorithm exchanges less bits with neighbours than does the algorithm of \cite{higham}.
 The algorithm of \cite{sebastian-disc2009} addressed also another goal- even during stabilization it is {\em loop free}. That is, it also maintains a tree at all times
(after reaching an initial tree).
This algorithm assumes the existence of a unique leader in the network (while the algorithm in the current paper does not). However,
this does not seem to affect the order of magnitude of the time complexity.

Note that the memory size in the last two algorithms above is the same as in the current
paper. However, their time complexity is $O(|E|n)$ versus $O(n)$ in the current paper.
 The time complexity of  the algorithm of
 Blin, Dolev, Potop-Butucaru, and Rovedakis  \cite{sebastian-disc-2010}
  improved the time complexity of \cite{sebastian-disc2009,higham} to
    $O(n^2)$ but at the cost of growing the memory usage to  $O(\log^2 n)$.
  This may be the first paper using labeling schemes
for the design of a self-stabilizing MST protocol, as well as
the first paper implementing the
algorithm by Gallager, Humblet, and Spira in a self-stabilizing manner without using a general transformer.

Additional studies about MST verification in various models appeared in
\cite{DHKKNPPW, DixonRauchTarjan,DixonTarjan, KKP11,KormanKutten07,KKP10}. In particular, Kor et al. ~\cite{KKP11} shows that the verification from scratch (without labels) of
an MST requires $\Omega(\sqrt{n}+D)$ time and $\Omega(|E|)$ messages, and that these bounds are tight up to poly-logarithmic factors.
We note that the memory complexity  was not considered in   \cite{KKP11}, and indeed the memory used therein is much higher than the one used in the current paper.
The time lower bound proof in   \cite{KKP11}
was later extended in \cite{DHKKNPPW} to apply for a variety of  verification and computation tasks.

This paper has results concerning distributed verification. Various additional papers dealing with verification have appeared recently, the models of some of them are rather different than the model here.
Verification in the $\cal{LOCAL}$ model (where congestion is abstracted away) was studied in
\cite{FKP11} from a computational complexity perspective. That paper presents various complexity classes, shows separation between them, and provides complete problems for these classes.
In particular, the class NLD defined therein exhibits
similarities to the notion of proof labeling schemes. Perhaps the main result in \cite{FKP11} is a sharp threshold for the impact of randomization on
local decision of hereditary languages.
Following that paper, \cite{BPLD} showed that the threshold in \cite{FKP11} holds also for any non-hereditary language, assuming it is defined on path topologies. In addition, \cite{BPLD}
showed further limitations of randomness, by presenting a hierarchy of languages, ranging from deterministic, on the one side of the spectrum, to almost complete randomness, on the other side. Still, in the $\cal{LOCAL}$ model, \cite{OPODIS} studied the impact of assuming unique identifiers on local decision.
We stress that the memory complexity  was not considered in neither  \cite{FKP11} nor in its follow-up papers \cite{BPLD,OPODIS}.

\subsection{Our results}
This paper contains the following two main results.

\paragraph{(1) Solving an open problem posed by  Awerbuch and Varghese \cite{awerbuch-varghese}:}
In the context of self-stabilization, an open problem posed in \cite{awerbuch-varghese} is to find a (synchronous) MST verification algorithm whose  time complexity is smaller than the MST computation time, yet with a small memory.
Our first main result solves this question positively by constructing  a time efficient self-stabilizing verification algorithm for an MST while using  optimal memory size, that is $O(\log n)$ bits of
memory per node. More specifically, the verification scheme takes as input  a distributed structure claimed to be an MST. If the distributed structure is indeed an MST, and if
a marker algorithm properly marked the nodes to allow the verification, and if no faults occur, then
our algorithm outputs {\em accept} continuously in every node. However, if faults occur (including the case that the structure is not, in fact, an MST, or that the marker did not perform correctly),
then our algorithm outputs {\em reject} in at least one node. This {\em reject} is outputted in time
$O(\log^2 n)$ after the faults cease, in a synchronous network.
(Recall,  lower bounds which are polynomial in $n$ for MST {\em construction} are known even for synchronous networks \cite{lotker-patt-shamir-peleg,mst-lower-bound}.) In  asynchronous networks, the
 time complexity of our verification scheme grows to $O(\Delta \log^3 n)$.
 We also show that $\Omega(\log n)$ time is necessary if the memory size is restricted to $O(\log n)$, even in
synchronous networks.
 Another property of our verification scheme is that
 if $f$ faults occurred,  then, within  the required
detection  time above, they are detected by some node in the $O(f\log n)$
locality of each of the faults.
Moreover, we present a distributed implementation of  the marker algorithm whose time complexity for assigning the labels is $O(n)$, under the same memory size constraint of
$O(\log n)$ memory bits per node.

\paragraph{(2) Constructing an asynchronous self-stabilizing MST construction algorithm which uses optimal memory ($O(\log n)$ bits) and  runs in $O(n)$ time:} In our second main result, we show how to enhance a known transformer that makes
 input/output algorithms self-stabilizing. It now takes as input an efficient construction algorithm and an efficient self-stabilizing proof labeling scheme, and produces
 an efficient self-stabilizing algorithm. When used with our verification scheme, the transformer produces a memory optimal  self-stabilizing  MST construction algorithm, whose time complexity, namely, $O(n)$,
 is significantly better even than that of previous algorithms.
(Recall, the time complexity of previous  MST algorithms that used $\Omega(\log^2 n)$ memory bits per node  was $O(n^2)$, and the time for optimal space  algorithms was $O(n|E|)$.)
Inherited from our verification scheme, our self-stabilising MST construction algorithm also
has the following two properties. First, if faults occur after the construction ended, then they are  detected by
some nodes within $O(\log ^2 n)$  time
in synchronous networks, or within $O(\Delta \log ^3 n)$ time  in
asynchronous ones, and second, if $f$ faults occurred,  then, within  the required
detection  time above, they are detected within the $O(f\log n)$
locality of each of the faults.
We also show how to improve these two properties, at the expense of some increase in the memory.

\subsection{Outline}
Preliminaries and some examples of simple, yet useful,  proof labeling schemes are given in Section \ref{sec:preliminaries}. An intuition is given in Section \ref{sec:intuition}.
A building block is then given in Section
\ref{sec:mst-construction}. Namely, that section describes a
synchronous MST construction algorithm in $O(\log n)$ bits memory size and $O(n)$ time.
Section \ref{sec:section5} describes the construction of parts of the labeling scheme.
Those are the parts that use labeling schemes of the kind described in \cite{KKP10}- namely, schemes that can be verified in one time unit.
These parts use
the MST construction (of Section \ref{sec:mst-construction}) to assign the labels.
 Sections \ref{sec:proof},
\ref{sec:viewing}, and \ref{sub:utilizing} describe the remaining part of the labeling scheme. This part is a labeling scheme by itself, but of a new kind. It saves on memory by distributing information.  Specifically, Section~\ref{sec:proof} describes how the labels should look if they are constructed correctly (and if an MST is indeed represented in the graph). The verifications, in the special case that no further faults occur, is described in Section \ref{sec:viewing}.
This module verifies (alas, not in constant time) by moving the distributed information around, for a ``distributed viewing''. Because the verification is not done in one time unit, it needs to be made self stabilizing. This is done in Section~\ref{sub:utilizing}.
Section \ref{sec:lower} presents a lower bound for the time of a proof labeling scheme for MST that uses only logarithmic memory. (Essentially, the proof behind this lower bound is based on a simple reduction, using the rather complex lower bound  given in \cite{{KormanKutten07}}.)

The efficient self-stabilizing  MST algorithm is given in Section~\ref{sec:full-mst-alg}.
Using known transformers, we combine efficient MST verification schemes and (non-self-stabilizing) MST construction schemes to yield efficient self-stabilizing schemes.
The MST construction algorithm described in Section~\ref{sec:mst-construction} is a variant of some known time efficient MST construction algorithms. We show there how those can also be made memory efficient (at the time, this complexity measure was not considered),
and hence can be used as modules for our optimal memory self-stabilizing  MST algorithm.

\section{Preliminaries}
\label{sec:preliminaries}
\label{sec:def}

\subsection{Some general definitions}
\label{sub:Some general definitions}
 We use rather standard definitions; a reader unfamiliar with these notions may refer to the model descriptions in the rich literature on these subjects.~In~particular, we use  rather standard definitions of self-stabilization (see, e.g. \cite{dolev-book}).
 Note that the assumptions we make below on time and time complexity imply (in self stabilization jargon) a distributed daemon with a very strong fairness. When we speak of asynchronous networks, this implies a rather fine granularity of atomicity.
  Note that the common self stabilization definitions include the definitions of faults.
  We also use standard definitions of graph theory (including
 an edge weighted graph $G=(V,E)$, with weights that are polynomial in $n=|V|$) to represent a network (see, e.g.  \cite{even-book}).
Each node $v$ has a unique identity $\id(v)$ encoded using $O(\log n)$ bits.
For convenience, we assume that each adjacent edge of each node $v$ has some label that is unique at $v$ (edges at different nodes may have the same labels). This label, called a {\em port-number}, is known only to $v$ and is independent of the port-number of the same edge at the other endpoint of the edge. (Clearly, each port-number can be assumed to be encoded using $O(\log n)$ bits).
Moreover, the network can store an object such as an MST (Minimum Spanning Tree) by having each node store its {\em component} of the representation. A component $c(v)$ at a node $v$ includes
a collection of  pointers (or port-numbers) to  neighbours of $v$, and the collection of the components of all nodes induces a subgraph $H(G)$ (an edge is included in $H(G)$ if and only if at least one of its end-nodes points
at the other end-node). In the verification scheme considered in this current paper, $H(G)$ is supposed to be an~MST and for simplicity, we assume that the component of each node contains a single pointer
(to the parent, if that node is not defined as the root). It is not difficult to extend our verification scheme to hold
also for the case where each component can contain  several pointers. Note that the definitions in this paragraph imply a lower bound of $\Omega(\log n)$ bits on the memory required at each node to even represent an MST (in graphs with nodes of high degree).

Some additional standard (\cite{ghs}) parts of the model include the
assumption that
the edge weights are distinct. As noted often, having distinct edge weights simplifies our arguments
since it guarantees the uniqueness of the MST. Yet,  this assumption is not essential. Alternatively, in case the graph is not guaranteed
to have distinct edge weights, we may  modify the weights locally as was done  in \cite{KKP11}.
The resulted modified weight function $\omega'(e)$ not only assigns distinct edge weights, but also satisfies the property that
 the given subgraph $H(G)$
 is an MST of $G$ under $\omega(\cdot)$ if and only if  $H(G)$ is an MST
of $G$ under $\omega'(\cdot)$.\footnote{We note,  the standard technique (e.g., \cite{ghs}) for obtaining unique weights is not sufficient for our purposes.
Indeed, that technique orders edge weights lexicographically:
first, by their original weight $\omega(e)$, and then, by the identifiers of
the edge endpoints.
This yields a modified graph with unique edge weights, and an MST of the modified graph is necessarily an MST of the original graph.
For construction purposes it is therefore sufficient to consider only
the modified graph. Yet, this is not the case for verification purposes,
as the given subgraph can be an MST of the original graph but not necessarily
an MST of the modified graph.
While the authors in \cite{KKP11} could not guarantee that any MST of the original graph is an MST of
the modified graph (having unique edge weights), they instead make sure that
the particular given subgraph $T$ is an MST of the original graph
if and only if it is an MST of modified one. This condition is sufficient for verification purposes, and allows one to consider only the modified graph.
For completeness, we describe the weight-modification in \cite{KKP11}. To obtain the modified graph, the authors in \cite{KKP11} employ the technique, where
edge weights are lexicographically ordered as follows.
For an edge $e=(v,u)$ connecting $v$ to its  neighbour~$u$,
consider first its original weight $\omega(e)$,
next, the value $1-Y_u^v$ where $Y_u^v$ is the indicator variable of the edge
$e$ (indicating whether $e$ belongs to the candidate MST to be verified),
and finally, the identifiers of the edge endpoints, $\ID(v)$ and $\ID(u)$
(say, first comparing the smaller of the two identifiers of the endpoints,
and then the larger one).
Formally, let
$\omega'(e) ~=~
\left\langle \omega(e), 1-Y_u^v, \ID_{min}(e), \ID_{max}(e) \right\rangle~,$
where $\ID_{min}(e) = \min\{\ID(v),\ID(u)\}$
and $\ID_{max}(e) = \max\{\ID(v),\ID(u)\}$.
Under this weight function $\omega'(e)$, edges with indicator variable set to 1
will have lighter weight than edges with the same weight under $\omega(e)$
but with indicator variable set to 0
(i.e., for edges $e_1\in T$ and $e_2\notin T$ such that
$\omega(e_1)=\omega(e_2)$, we have $\omega'(e_1)< \omega'(e_2)$).
It follows that  the given subgraph $T$
 is an MST of $G$ under $\omega(\cdot) $ if and only if  $T$ is an MST
of $G$ under $\omega'(\cdot)$. Moreover, since $\omega'(\cdot)$ takes into account
the unique vertex identifiers, it assigns distinct edge weights.}

We use the (rather common) ideal time complexity which assumes that a node reads all of its neighbours in at most one time unit,
see e.g.~\cite{sebastian-disc2009,sebastian-disc-2010}. Our results translate easily to
an alternative, stricter, {\em contention} time complexity, where a node can access only one neighbour in one time unit. The time cost of such a translation is at most a multiplicative factor of
$\Delta$, the   maximum degree of a node  (it is  not assumed that $\Delta$ is known to nodes).

As is commonly assumed in the case of self-stabilization,
each node has only some bounded number
 of  memory bits available to be used.
Here, this amount of memory is $O(\log n)$.

\subsection{Using protocols designed for message passing}

We use a self stabilizing transformer of Awerbuch and Varghese as a building block \cite{awerbuch-varghese}.
That protocol was designed for the message passing model.
Rather than modifying that transformer to work on the model used here (which would be very easy, but would take space), we use emulation. That is,
we claim that any self stabilizing protocol designed for the
model of \cite{awerbuch-varghese} (including the above transformer) can be performed in the model used here, adapted from \cite{sebastian-disc2009,sebastian-disc-2010}.
This is easy to show: simply use the current model to implement the links of the model of \cite{awerbuch-varghese}. To send a message from node $v$ to its neighbour $u$,
have $v$ write its shared variable that (only $v$ and) $u$ can read.
This value can be read by $u$ after one time unit in a synchronous network
as required from a message arrival in the model of \cite{awerbuch-varghese}. Hence, this is enough for synchronous networks.

In an asynchronous network, we need to work harder to simulate the sending and the receiving of a message, but only slightly harder, given known art.
Specifically, in an asynchronous network, an event occurs at $u$ when this message arrives. Without some additional precaution on our side, $u$ could have read this value many times (per one writing) resulting in duplications:
multiple message ``arriving'' while we want to emulate
just one message. This is solved by a self stabilizing data link protocol, such as the one used by \cite{AKY}, since this is also the case in a data link protocol in message passing systems where a link may loose a package. There, a message is sent repeatedly, possibly many times, until an acknowledgement from the receiver tells the sender that the message arrived. The data link
protocol overcomes the danger of duplications by simply numbering the messages modulo some small number. That is, the first message is sent repeatedly
with an attached ``sequence number'' zero, until the first acknowledgement arrives. All the repetitions of the second message have as attachments the sequence number~1, etc.
The receiver then takes just one of the copies of the first message, one of the copies of the second, etc. A self stabilized implementation of this idea
in a shared memory model appears in
\cite{AKY} using (basically, to play the role of the sequence number) an additional shared variable called the ``toggle'', which can take one of three values.\footnote{That protocol, called ``the strict discipline'' in \cite{AKY}, actually provides a stronger property (emulating a coarser grained atomicity), not used here.}
When $u$ reads that the toggle of $v$ changes, $u$ can emulate the arrival of a message. In terms of time complexity, this protocol takes a constant time, and hence sending (an emulated) message
still takes a constant time (in terms of complexity only) as required to emulate the notion of ideal time complexity of \cite{sebastian-disc2009,sebastian-disc-2010}. Note that the memory is not increased.

\subsection{Wave\&Echo}
We use the well known Wave\&Echo (PIF) tool. For details, the readers are referred to \cite{pif1, pif2}. For completeness, we remind the reader of the overview of Wave\&Echo when performed over a rooted tree. It is started by the tree root, and every node who receives the wave message forwards it to its children. The wave can carry a command for the nodes. A leaf receiving the wave, computes the command, and sends the output to its parent. This is called an {\em echo}. A parent, all of whose children echoed, computes the command itself (possibly using the outputs sent by the children) and then sends the echo (with its own output) to its parent. The Wave\&Echo terminates at the root when all the children of the root echoed, and when the root executed the command too.

 In this paper, the Wave\&Echo activations carry various commands. Let us describe first two of these commands, so that they will also help clarify the notion of Wave\&Echo and its application.
    The first example is the command to sum up values residing at the nodes. The echo of a leaf includes its value. The echo of a parent includes the sum of its own value and the sums sent by its children. Another example is the counting of the nodes. This is the same as the sum operation above, except that the initial value at a node is $1$. Similarly to summing up, the operation performed by the wave can also be a logical $\OR$.

\subsection{Proof labeling schemes}\label{sec:proof-labeling}
In \cite{KormanKutten07,KKP10,GS11}, the authors consider  a framework
 for maintaining a  distributed proof that the network satisfies some given predicate $\Psi$, e.g., that $H(G)$  is an MST.
We are given a predicate $\Psi$ and a graph family $\cF$
(in this paper, if $\Psi$ and $\cF$ are omitted, then $\Psi$ is MST and $\cF$ (or $\cF(n)$) is  all connected undirected weighted graphs with $n$ nodes).
A {\em proof labeling scheme} (also referred to as a {\em verification} algorithm)
includes the following two components.

\begin{itemize}
 \item
 A   {\em marker} algorithm $\cM$ that  generates a label $\cM(v)$ for every node $v$ in every graph $G\in \cF$.
 \item
A  {\em verifier}, that is a distributed algorithm $\cV$, initiated at
each node of
 a {\em labeled} graph $G\in \cF$, i.e., a graph whose nodes $v$ have labels $L(v)$ (not necessarily correct labels assigned by a marker). The verifier at each node is initiated separately, and at an arbitrary time, and runs forever.
The verifier may {\em raise an alarm} at some node $v$ by outputting ``no'' at $v$.
\end{itemize}
Intuitively,
if the verifier at $v$ raises an alarm, then it detected a fault.
That is, for any graph $G\in \cF$,
\begin{itemize}
\item
If $G$ satisfies the predicate $\Psi$  and
if the label at each node $v$ is $\cM(v)$ (i.e., the  label assigned to $v$ by the marker algorithm $\cM)$
then no node raises an alarm.
In this case, we say that the verifier {\em accepts} the labels.
\item
If $G$ does not satisfy the predicate $\Psi$, then for {\em any} assignment of labels to the nodes of $G$,
after some finite time $t$, there exists a node $v$ that raises an alarm. In this case, we say that the verifier {\em rejects} the labels.
\end{itemize}

Note that the first property above concerns only the labels {\em produced
by the marker algorithm} $\cM$, while the second must hold even if the labels are assigned by some adversary.
We evaluate  a proof labeling scheme
$(\cM,\cV)$
by the following complexity measures.

\begin{itemize}
\item
The {\em memory size}:
 the maximum number of bits stored in the memory of a single node $v$,
 taken over all the nodes $v$ in all graphs  $G\in \cF(n)$ that satisfy the predicate $\Psi$ (and over all the executions);
 this includes:  (1) the bits used for encoding the identity $\id(v)$, (2) the marker memory: number of bits used for constructing and encoding the labels,  and (3) the verifier memory: the number of bits used during the operation of the verifier\footnote{Note that we do not include the number of bits needed for storing the component $c(v)$ at each node $v$. Recall that for simplicity, we assume here that each component contains a single pointer, and therefore, the size of
each component is $O(\log n)$ bits. Hence, for our purposes, including the size of a component in the memory complexity would not increase the asymptotical size of the memory, anyways.
However, in the general case, if multiple pointers can be included in a component, then the number of bits needed for encoding a component would potentially be as large as $O(\Delta \log n)$.
Since, in this case, the verification scheme has no control over the size of the component, we decided to exclude this part from the definition of the memory complexity.}.
\item
The (ideal) {\em detection time}: the maximum,  taken over all the graphs $G\in \cF(n)$ that do
{\em not} satisfy the predicate $\Psi$ and over all the labels given to nodes of $G$ by adversaries (and over all the executions), of the time $t$
 required for some node  to raise an alarm. (The time is counted from the {\em starting time}, when the verifier has been initiated at all the nodes.)
\item
The {\em detection distance}:
 for
 a faulty node $v$, this is the (hop) distance to the closest node $u$ raising an alarm within the detection time after the fault occurs.
 The detection distance of the scheme is the maximum, taken over all the graphs having at most $f$ faults, and over all the  faulty nodes $v$ (and over all the executions), of the detection distance of $v$.
\item
The (ideal) {\em construction-time}: the maximum, taken over all  the graphs  $G\in \cF(n)$ that satisfy the predicate $\Psi$ (and over all the executions), of the
time required for the marker $\cM$ to assign  labels to all nodes of $G$. Unless mentioned otherwise, we measure construction-time in synchronous networks only.

\end{itemize}
In our terms,
the definitions of  \cite{KormanKutten07,KKP10}
allowed only detection time complexity 1. Because of that, the verifier of
\cite{KormanKutten07,KKP10} at a node $v$,  could only
consult the neighbours of~$v$. Whenever we use such a scheme, we
refer to it as a {\em $1$-proof labeling scheme},  to emphasis its running
time. Note that a 1-proof labelling scheme is trivially self-stabilzying.
(In some sense, this is because they ``silently stabilize'' \cite{silent}, and
``snap stabilize'' \cite{BDPV}.)
Also, in \cite{KormanKutten07,KKP10}, if $f$ faults occurred, then the detection distance was $f$.


\subsection{Generalizing the complexities to a computation}
\label{sub:detection-time}
In Section~\ref{sec:proof-labeling}, we defined the memory size, detection time  and the detection distance complexities of a {\em verification} algorithm.
When considering a (self-stabilizing) computation algorithm, we extend the notion of the memory size to include also  the bits needed for encoding the component $c(v)$ at each node. Recall, the definition of a component $c(v)$ in general, and the special case of $c(v)$ for MST, are given in Section \ref{sec:preliminaries}.
(Recall, from Section~\ref{sec:proof-labeling}, that the size of the component was excluded from the definition of memory size for verification because, there, the designer of the verification scheme has no control over
the nodes' components.)

The notions of detection time  and the detection distance can be carried to the very common class of self-stabilizing {\em computation} algorithms that use fault detection. (Examples for such algorithms
are algorithms that have silent stabilization \cite{silent}.)
Informally, algorithms in this class first compute an output.
After that, all the nodes are required to stay in some {\em output state} where they (1) output the computation result forever  (unless a fault occurs); and (2)
  check repeatedly until they discover a fault.
  In such a case, they recompute and enter an output state again.
  Let us term such algorithms {\em detection based} self-stabilizing algorithms.
 We define the {\em detection time} for such algorithms to be
  the time from a fault until the detection.
 However, we only define the detection time (and the detection distance) for faults that occur after all the nodes are in their output states.
(Intuitively, in the other cases, stabilization has not been reached yet anyhow.) The {\em detection distance} is the distance from a node where a fault occurred to the closest node that detected a fault.

\subsection{Some examples of 1-proof labeling schemes}
\label{sub:examples}

As a warm-up exercise, let us begin by describing several simple 1-proof labeling schemes, which will be useful later in this paper.
The first two examples are taken from \cite{KKP10} and are explained there in more details. The reader familiar with proof labeling schemes may skip this subsection.\\

\noindent{\bf $\circ$ Example 1: A spanning tree. (Example $\SP$)} Let $f_{span}$ denote the predicate such that for any input graph $G$
satisfies, $f_{span}(G)=1$ if $H(G)$ is
a spanning tree of $G$, and $f_{span}(G)=0$ otherwise. We now describe an efficient $1$-proof
labeling scheme $(\cM,\cV)$ for the predicate $f_{span}$ and
the family of all  graphs. Let us first describe
the marker $\cM$ operating on a ``correct instance'', i.e.,
a graph $G$ where $T=H(G)$ is indeed a spanning tree of $G$. If there exists
a node $u$ whose component
does not encode a pointer to any of its adjacent edges (observe that there can be at most
one such node), we root $T$
at $u$. Otherwise,
there must be two nodes $v$ and $w$ whose components point at each other. In
this case,
we root $T$ arbitrarily at either $v$ or $w$. Note that after rooting $T$, the
component at each non-root node
$v$ points at $v$'s parent. The label given by $\cM$ to a node $v$ is
composed of two parts. The first part
encodes the identity $\id(r)$ of the root $r$ of $T$, and the second part of
the label of $v$ encodes the distance
(assuming all weights of edges are 1) between $v$ and $r$ in $T$.

Given a labeled graph, the verifier $\cV$ operates at a node $v$ as follows:
first, it checks that the first
part of the label of $v$ agrees with the first part of the labels of $v$'s
neighbours, i.e., that $v$ agrees with its neighbours on
the identity of the root. Second, let $d(v)$ denote the number encoded in
 the second part of $v$'s label. If $d(v)=0$ then
$\cV$ verifies that $\id(v)=\id(r)$ (recall that $\id(r)$ is encoded in the
first part of $v$'s label). Otherwise, if  $d(v)\neq 0$ then
the verifier checks that $d(v)=d(u)+1$, where
$u$ is the node pointed at by the component at $v$. If at least one of these conditions
fails, the verifier $\cV$ raises an alarm at $v$.
It is easy to get convinced that  $(\cM,\cV)$ is indeed a 1-proof labeling
scheme for the predicate $f_{span}$
with memory size $O(\log n)$ and construction time $O(n)$.

\paragraph{Remark.} Observe that in case $T=H(G)$ is indeed a (rooted)
spanning tree of $G$, we can easily let each node~$v$ know which
of its neighbours in $G$ are its children in $T$ and which is its parent.
Moreover, this can be done using one unit of time
 and label size $O(\log n)$ bits. To see this, for each node $v$, we simply
add to its label its identity $\id(v)$  and the identity $\id(u)$ of its
parent $u$.
The verifier at $v$ first verifies that $\id(v)$ is indeed encoded in the
right place of its label. It then looks at the  label of its parent $u$,
and  checks that $v$'s candidate for being the identity of $u$ is indeed
$\id(u)$. Assume now that these two conditions are
satisfied at each node. Then, to identify a child $w$ in $T$, node~$v$
should only look at the labels of its neighbours in $G$  and see which of
them encoded  $\id(v)$ in the
designated place of its label.\\

\noindent{\bf $\circ$ Example 2: Knowing the number of nodes (Example $\NK$)}
Denote by $f_{size}$  the boolean predicate such that  $f_{size}(G)=1$
if and only if one designated part of the label $L(v)$ at each node $v$  encodes the number of nodes in
$G$
 (informally, when $f_{size}$ is satisfied,
we say that each node `knows' the number of nodes in $G$). Let us denote the part of the label of $v$ that is supposed to hold this number by $L'(v)$.

In \cite{KKP10}, the authors give a 1-proof labeling scheme $(\cM,\cV)$  for
$f_{size}$ with   memory size $O(\log n)$.
The idea behind their scheme is simple. First, the verifier checks
 that $L'(u)=L'(v)$ for every two adjacent nodes $u$ and $v$ (if this holds at
each node then all nodes must hold the same candidate
for being the number of nodes). Second,
the marker constructs a  spanning tree $T$ rooted at some node $r$ (and verifies that this is indeed a spanning tree using the Example $\SP$ above).
Third,
the number of nodes in $T$ is
aggregated upwards along $T$ towards its root $r$, by keeping at the
label $\cM(v)$ of each node $v$, the number of nodes $n(v)$
in the subtree of $T$  hanging down from $v$. This again is easily verified
by checking at each node $v$ that $n(v)=1+\sum_{u\in \Ch(v)}n(u)$, where
$\Ch(v)$ is the set of children of $v$. Finally, the root verifies that
$n(r)=L'(r)$. It is straightforward that  $(\cM,\cV)$ is indeed a 1-proof labeling
scheme for the predicate $f_{size}$
with memory size $O(\log n)$ and construction time $O(n)$.\\

\noindent{\bf $\circ$ Example 3: An  upper bound on the diameter of a tree (Example $\EDIAM$)}
Consider a tree $T$ rooted at $r$, and let~$h$ denote
the height of  $T$.
Denote by $f_{height}$  the boolean predicate such that  $f_{height}(T)=1$
if and only if one designated part of the label $L(v)$ at each node encodes the same value $x$,
where $h\leq x$
(informally, when $f_{height}$ is satisfied,
we say that each node `knows' an upper bound of $2x$ on the diameter $D$ of
$T)$. Let us denote the part of the label of $v$ that is supposed to hold this number by $L'(v)$.
We sketch a simple 1-proof labeling scheme $(\cM,\cV)$  for $f_{height}$.
 First, the verifier checks
 that $L'(u)=L'(v)$ for every two adjacent nodes $u$ and $v$ (if this holds at
each node then all nodes must hold the same value $x)$.
Second, similarly to the proof labeling scheme for $f_{span}$ given in Example $\SP$ above, the label in each node~$v$ contains the distance $d(v)$ in the tree
from $v$ to the root. Each non-root node verifies that the distance written
at its parent
is one less than the distance written at itself, and the root verifies that
the distance written at itself is 0. Finally, each node $v$ verifies also
that $x\geq d(v)$.
If no node raises an alarm then~$x$ is an upper bound on the height.
On the other hand, if the value $x$ is the same at all nodes and $x$ is an
upper bound on the height then no node raises an alarm. Hence the scheme is
  a 1-proof labeling scheme for the predicate~$f_{height}$
with memory size $O(\log n)$ and construction time $O(n)$.

\section{Overview of the MST verification scheme and the intuition behind it}
\label{sec:intuition}
The final MST construction algorithm makes use of several modules. The main technical contribution of this paper is the module that verifies that the collection of nodes' components is indeed an MST. This module in itself is composed of multiple modules. Some of those, we think may be useful by themselves. To help the reader avoid getting lost in the descriptions of all the various modules, we present, in this section, an overview of the MST verification part.

Given a graph $G$ and a subgraph that is represented distributively at the nodes of $G$, we wish to produce a self-stabilizing proof labeling scheme that verifies whether the subgraph is an MST of $G$.
By first employing the (self-stabilizing) 1-proof labeling scheme mentioned in Example $\SP$,
we may assume that the subgraph is a rooted spanning tree of $G$ (otherwise, at least one node would raise an alarm). Hence, from now on, we focus on a spanning tree $T=(V(G),E(T))$ of a weighted
graph $G=(V(G),E(G))$,  rooted at some node $r(T)$,  and aim at verifying the minimality of $T$.

\subsection{Background and difficulties}
From a high-level perspective, the proof labeling scheme proves that  $T$ could have been computed by an algorithm that is similar to that
of GHS, the algorithm of Gallager, Humblet, and Spira's described in \cite{ghs}. This leads to a simple idea:
when $T$ is a tree computed by such an algorithm, $T$ is an MST.
Let us first recall a few terms from \cite{ghs}. A {\em fragment} $F$ denotes a connected subgraph of $T$ (we simply refer it to a {\em subtree}).
Given a fragment $F$, an edge $(v,u)\in E(G)$ whose one endpoint $v$ is in
$F$, while the other endpoint $u$ is not, is called {\em outgoing} from $F$. Such an edge
 of minimum weight  is called a {\em minimum outgoing} edge from~$F$.
A fragment  containing a single node is called a {\em singleton}. Recall that
GHS starts when each node is a fragment by itself. Essentially, fragments merge over their minimum outgoing edges to form larger fragments. That is, each node belongs to one fragment $F_1$, then to a larger fragment $F_2$ that contains $F_1$, etc.
This is repeated until one fragment spans the network. A tree constructed in that way is an MST.
Note that in GHS, the collection of fragments is a laminar family, that is, for any two fragments $F$ and $F'$ in the collection, if $F\cap F'\neq \emptyset$ then either $F\subseteq F'$ or $F'\subseteq F$ (see, e.g. \cite{laminar}). Moreover, each fragment has a {\em level}; in the case of $v$ above, $F_2$'s level is higher than that of $F_1$.
This organizes the fragments in a {\em hierarchy} $\cH$, which is a tree whose nodes are fragments, where fragment $F_1$ is a descendant in $\cH$ of $F_2$ if $F_2$ contains $F_1$.
GHS manages to ensure
 that each node belongs to at most one fragment at each level, and that the total number of levels is $O(\log n)$.
Hence, the hierarchy $\cH$ has $O(\log n)$ height.

The marker algorithm in our proof labeling scheme performs, in some sense, a reverse operation. If $T$ is an MST, the marker
``slices'' it back into fragments. Then, the proof labeling scheme computes for each node $v$:
\begin{smallitemize}
\item
The (unique) name of each of the fragments $F_j$ that $v$ belongs to,
\item the level of
$F_j$,  and
\item  the weight of $F_j$'s minimum outgoing edge.
\end{smallitemize}
Note that each node participates in $O(\log n)$ fragments, and the above ``piece of information'' per fragment requires $O(\log n)$ bits.
Hence, this  is really too much information to store in one node. As we shall see later, the verification scheme distributes this information and then brings it
 to the node without violating the memory size bound $O(\log n)$. For now, it suffices to know that given these pieces of information and the corresponding pieces of information of their neighbours, the nodes can verify that
  $T$ could have been constructed by an algorithm similar to GHS. That way, they verify that $T$ is an MST.
  Indeed, the 1-proof labeling schemes for MST verification given in \cite{KormanKutten07,KKP10} follow
  this idea employing memory size of $O(\log^2 n)$ bits.
  (There, each node keeps $O(\log n)$ pieces, each of $O(\log n)$ bits.)

 The current paper allows
 each node
  to hold only $O(\log n)$ memory bits. Hence, a node
 has room for only a constant number of such pieces of information at a time.
 One immediate idea is to store some of $v$'s pieces in some other nodes.
 Whenever $v$ needs a piece, some algorithm should move it towards $v$. Moving pieces would cost  time, hence, realizing some time versus memory size trade-off.

 Unfortunately, the total (over all the nodes) number of pieces in the schemes of \cite{KormanKutten07,KKP10} is $\Omega(n \log n)$. Any way one would assign these pieces to nodes would result in the memory of some single node
 needing to store $\Omega(\log n)$ pieces, and hence, $\Omega(\log^2 n)$ bits. Thus, one technical step we used here is to reduce the total number of pieces
to $O(n)$, so that we could store at each node just a constant number of such pieces. However,
 each node still needs to use $\Omega(\log n)$ pieces.
  That is, some pieces may be required by many nodes.
Thus, we needed to solve also a combinatorial problem: locate each piece ``close'' to each of the nodes needing it, while storing only a constant number of pieces per node.

 The solution of this combinatorial  problem would have sufficed to construct
 the desired scheme in the $\cal{LOCAL}$ model~\cite{peleg-book}. There, node $v$ can ``see'' the storage of nearby nodes.
However, in the congestion aware model, one actually needs to move pieces from node to node, while not violating the $O(\log n)$ memory per node constraint.
This is difficult, since, at the same time $v$ needs to see its own pieces, other nodes need to see their own ones.

\subsection{A very high level overview}
Going back to GHS, one may notice that its correctness follows from the combination of two properties:

\begin{smallitemize}
\item {\bf P1. (Well-Forming)} The existence of a hierarchy tree $\cH$ of fragments, satisfying the following:
\begin{smallitemize}
\item Each fragment $F\in \cH$ has a unique {\em selected} outgoing edge (except when $F$ is the whole tree $T$).
\item A (non-singleton) fragment is obtained by merging its children fragments in $\cH$ through their selected outgoing edges.
\end{smallitemize}
\item {\bf P2. (Minimality)} The selected outgoing edge of each fragment is its minimal outgoing edge.
\end{smallitemize}
In our proof labeling scheme, we verify the aforementioned two properties separately.
In Section \ref{sec:section5}, we show how to verify the first property, namely, property Well-Forming. This turns out to be a much easier task than verifying the second property.
Indeed, the Well-Forming property can be verified using a 1-proof labeling scheme, while still obeying the $O(\log n)$ bits per node memory constraint. Moreover, the techniques we use for verifying the Well-Forming are rather
 similar to the ones described in \cite{KKP10}.
The more difficult verification task, namely, verifying the Minimality property P2, is described in Section~\ref{sec:proof}. This verification scheme requires us to come up with several new techniques which may be considered as contributions by themselves. We now describe the intuition behind these verifications.

\subsection{Verifying the Well-Forming property (described in detail in Sections \ref{sec:mst-construction} and  \ref{sec:section5})}
We want to show that $T$ could have been produced by an algorithm similar to GHS.
Crucially, since we care about the memory size,  we had to come up with a new MST construction algorithm that is similar to GHS but uses only $O(\log n)$ memory bits per node and runs in time $O(n)$. This MST construction algorithm, called $\ALG$, can be considered as a synchronous variant of GHS and is described in Section \ref{sec:mst-construction}.

Intuitively, for a correct instance (the case $T$ is an MST), the marker algorithm $\cM$ produces a hierarchy  of fragments $\cH$ by following the new MST construction algorithm described in Section \ref{sec:mst-construction}. Let $\ell=O(\log n)$ be the height of $\cH$. For a fixed level $j\in [0,\ell]$, it is easy to represent the partition of the tree into fragments of level $j$ using just one bit per node. That is, the root $r'$ of each fragment $F'$ of level $j$ has 1 in this bit, while
the nodes in $F'\setminus \{r'\}$ have 0 in this bit.
Note, these nodes in $F'\setminus \{r'\}$ are the nodes below $r'$ (further away from the root of $T$), until (and not including)
reaching additional nodes whose corresponding bit is~1.
Hence, to represent the whole hierarchy, it is enough to
attach a string of length $\ell+1$-bits at each node $v$.
The string at a node $v$ indicates, for each level $j\in [0,\ell]$, whether $v$ is the root of the fragment of level $j$ containing $v$ (if one exists).

Next, still for correct instances, we would like to represent the selected outgoing edges distributively. That is, each node $v$ should be able to detect, for each fragment $F$ containing $v$, whether $v$ is an endpoint
of the selected edge of $F$. If $v$ is, it should also know which of $v$'s edges is the selected edge.  This representation is used later for verifying the two items of the Well-Forming property specified above. For this purpose, first, we add another string of $\ell+1$ entries at each node $v$, one entry per level $j$. This entry specifies, first, whether there exists a level $j$ fragment $F_j(v)$ containing $v$. If $F_j(v)$ does exist, the entry specifies whether  $v$ is incident to the corresponding selected edge.
Note, storing the information at $v$ specifying the pointers  to all the selected edges of the fragments containing it,  may require $O(\log^2 n)$ bits of memory at $v$.
This is because there may be $O(\log n)$ fragments containing $v$; each of those may select an edge at $v$ leading to an arbitrary neighbour of $v$ in the tree $T$; if $v$ has many neighbours, each edge  may cost $O(\log n)$ bits to encode. The trick is to distribute the information regarding $v$'s selected edges among $v$'s children in $T$.
(Recall that $v$ can look at the data structures of $v$'s children.)

The strings mentioned in the previous paragraphs are supposed to define a hierarchy and selected outgoing edges from the fragments of the hierarchy. However, on incorrect instances, if 
corrupted, the strings may not represent the required. For example, the strings may represent more than one selected edge
 for some fragment. Hence, we need also to attach proof labels for verifying  the hierarchy and the corresponding selected edges represented by those strings. Fortunately, for proving the Well-Forming property only, it is not required to verify that the represented hierarchy (and the corresponding selected edges) actually follow the MST construction algorithm (which is the case for correct instances).
  In Section  \ref{sec:section5}, we present  1-proof labeling schemes to show that the above strings represent {\em some} hierarchy with corresponding selected edges, and that the Well-Forming property does hold for that hierarchy.

\subsection{Verifying The Minimality property (described in detail in Sections  \ref{sec:proof},  \ref{sub:3.2} and \ref{sub:utilizing})}\label{intuition:P2}

A crucial point in
the scheme is letting each node $v$ know, for each of its incident edges
$(v,u) \in E$
 and for each level $j$, whether $u$ and $v$ share the same level $j$ fragment.
 Intuitively, this is needed in order to identify outgoing edges.
For that purpose,
we assign each fragment a unique identifier, and $v$ compares the identifier of its own level $j$ fragment  to the identifier of $u$'s level $j$ fragment.

Consider the number of bits required to represent the identifiers of all the fragments that a node $v$ participates in.
There exists a method to assign unique identifiers such that this total number is only $O(\log n)$  \cite{KormanPeleg08,FG01}.
Unfortunately, we did not manage to use that method here. Indeed, there exists a marker algorithm that  assigns identifiers according to that method. However, we could not find a low space and short time
 method for the verifier to verify that the given identifiers of the fragments
were indeed assigned in that way. (In particular, we could not verify  efficiently that the given identifiers are indeed unique).

Hence, we assign identifiers according to
 another method that appears more memory wasteful, where the identity $\id(F)$ of a fragment $F$ is composed of the (unique) identity of its root together with its level.
We also need each node $v$ to know the weight $\omega(F)$ of the minimum outgoing edge of each  fragment $F$ containing~$v$. To summarize, the {\em piece of information} $\Info(F)$ required in each node $v$ per fragment $F$ containing $v$
 is $\id(F)\circ\omega(F)$. Thus, $\Info(F)$ can be encoded using  $O(\log n)$ bits. Again, note that since a node may participate in $\ell=\Theta(\log n)$ fragments,
the total number of bits used for storing all the $\Info(F)$ for all fragments $F$ containing $v$ would thus be
 $\Theta(\log^2 n)$. Had no additional steps been taken, this would have violated the $O(\log n)$ memory constraint.

  Luckily, the total number of bits needed globally for representing the pieces $\Info(F)$ of all the fragments $F$ is only $O(n \log n)$, since there are at most $2n$ fragments, and $\Info(F)$ of a fragment $F$ is of size $O(\log n)$ bits.
  The difficulty results from the fact that multiple nodes need replicas of the same information. (E.g., all the nodes in a fragment need the $\id$ of the fragment.) If a node does not store this information itself, it is not clear how to bring all the many pieces of information to each node who needs them, in a short time (in spite of congestion) and while having only a constant number of such  pieces at a node at each given point in time.

To allow some node $v$ to check whether its neighbour $u$ belongs to $v$'s level $j$ fragment $F_j(v)$ for some level~$j$,
 the verifier at $v$ needs first to reconstruct the piece of information $\Info(F_j(v))$.
 Intuitively, we had to distribute the information, so that $\Info(F)$ is placed ``not too far'' from every node in $F$.
    To compare $\Info(F_j(v))$  with a neighbour $u$, node $v$
   must also obtain $\Info(F_j(u))$ from $u$. This requires some mechanism to synchronize the reconstructions in neighbouring nodes.
   Furthermore, the verifier must be able to overcome difficulties resulting from faults, which can corrupt the information stored, as well as the reconstruction and the synchronization mechanisms.

   The above distribution of the $\Info$'s is described in Section  \ref{sec:proof}. The
    distributed algorithm for the ``fragment by fragment'' reconstruction (and synchronization)   is described in Section
     \ref{sub:3.2}.
     The required verifications for validating the $\Info$'s and comparing the information of neighbouring nodes are described in Section \ref{sub:utilizing}.

\subsubsection{Distribution of the pieces of information ~(described in detail in Section  \ref{sec:proof})}
\label{subsub:overview-proof}
At a very high level description, each node $v$ stores permanently  $\Info(F)$ for a constant number of  fragments~$F$. Using that,
  $\Info(F)$ is ``rotated''  so that each node in $F$ ``sees'' $\Info(F)$ in
  $O(\log n)$  time.
We term  the mechanism that performs this rotation a {\em train}.
A first idea would
have been to have a separate train for each fragment $F$ that would ``carry'' the piece  $\Info(F)$ and would allow all nodes in $F$ to see it.  However, we did not manage to do that efficiently in terms of time and of space.
That is, one train passing a node could delay the other trains that ``wish'' to pass it. Since neighbouring nodes may share only a subset of their fragments, it is not clear how to pipeline
the trains. Hence,
 those delays could accumulate. Moreover,
 as detailed later, each train utilizes some (often more than constant) memory per node.
 Hence, a train per fragment would
 have
 prevented us from obtaining an $O(\log n)$ memory~solution.

A more refined idea  would
have been to
partition the tree into connected parts, such that each part $P$ intersects  $O(|P|)$ fragments.
Using such a partition,
we could have allocated the $O(|P|)$ pieces (of these $O(|P|)$ fragments), so that
 each node of $P$
would have been assigned  only a
 constant number of  such pieces, costing  $O(\log n)$ bits per node.
 Moreover, just one train per part $P$ could have sufficed to rotate those  pieces among the nodes of $P$. Each node in $P$ would have seen all the pieces
  $\Info(F)$  for fragments $F$ containing it in $O(|P|)$ time. Hence, this would have been time efficient too, had $P$ been small.

 Unfortunately, we did not manage to construct  the above partition. However, we managed to obtain a similar construction:
 we construct {\em two} partitions of $T$, called  $\Top$ and $\Bottom$. We also partitioned
  the fragments into two kinds: top and bottom fragments.
  Now, each part $P$ of partition $\Top$ intersects with $O(|P|)$ top fragments (plus any number of bottom fragments).
  Each part $P$ of partition $\Bottom$ intersects with  $O(|P|)$ bottom fragments (plus top fragments that we do not count here).   For each part in $\Top$ (respectively $\Bottom$),
we shall distribute the information regarding the $O(|P|)$ top (respectively, bottom) fragments it intersects with, so that each node would hold at most a constant number of such pieces of information.  Essentially, the pieces of information regarding the corresponding fragments are put in the nodes of the part (permanently) according to a DFS (Depth First Search) order starting at the root of the part.
For any node $v$, the two parts
containing it encode together the information regarding all fragments containing $v$. Thus, to deliver all relevant information, it suffices to utilize one train per part (and hence, each node participates in two trains only). Furthermore, the partitions are made so that the diameter of each part is $O(\log n)$, which allows each train to quickly pass in all nodes, and hence to deliver the relevant information in short time.

The distributed implementation of this distribution of pieces of information, and, in particular, the distributed construction of the two partitions, required us to come up with a new {\em multi-wave} primitive, enabling an efficient (in $O(n))$ time) parallel (i.e., pipelined) executions of Wave\&Echo operations on all fragments of Hierarchy~$\cH_{\cM}$.

\subsubsection{Viewing the pieces of information ~(described in detail in Section \ref{sub:3.2})}
Consider a node $v$ and a fragment $F_j(v)$ of level $j$ containing it. Recall that the information $\Info(F_j(v))$ should reside in some node of a part $P$ to which $v$ belongs. To allow $v$ to compare $\Info(F_j(v))$ to $\Info(F_j(u))$ for a neighbour $u$, both these pieces
must somehow  be ``brought'' to $v$.  The process handling this task contains several components.
The first component is called the {\em train} which is responsible for moving the pieces of information through $P$'s nodes, such that each node does not hold more than $O(\log n)$ bits at a time, and such that in short time, each node in $P$ ``sees'' all pieces, and in some prescribed order.
Essentially, a train is composed of two ingredients. The first ingredient called {\em convergecast} pipelines the pieces of information in a DFS order towards the root of the part (recall, the  pieces of information of the corresponding fragments are initially located according to a DFS order). The second ingredient {\em broadcasts} the pieces from the root of the part to all nodes in the part. Since the number of pieces is $O(\log n)$ and the diameter of the part is $O(\log n)$, the synchronous environment guarantees that each piece of information  is delivered to all nodes of a part in $O(\log n)$ time. On the other hand, in the asynchronous environment some delays may occur, and the delivery time becomes $O(\log^2 n)$. These time bounds are also required to self-stabilize the trains, by known art, see, e.g.~\cite{CollinDolevDFS,BDPV}.

Unfortunately, delivering the necessary pieces of information at each node is not enough, since $\Info(F_j(v))$ may arrive at $v$ at a different time than $\Info(F_j(u))$ arrives at $u$. Recall that $u$ and its neighbour $v$ need to have these pieces simultaneously in order to compare them (to know whether the edge $e=(u,v)$ is outgoing from $F_j(v)$).

Further complications arise
 from the fact that
the neighbours of a node~$v$ may belong to different parts, so different trains pass there.  Note that $v$ may have many neighbours, and we would not want to synchronize so many trains. Moreover,
had we delayed the train at $v$ for synchronization, the delays would have accumulated, or even would have caused deadlocks.
 Hence, we do not delay these trains. Instead, $v$ repeatedly samples a piece from its train, and synchronizes the comparison of this piece with pieces sampled by its neighbours, while both trains advance without waiting.  Perhaps not surprisingly, this synchronization turns out to be easier in synchronous networks, than in asynchronous ones.
Our synchronization mechanism guarantees that each node can compare all  pieces $\Info(F_j(v))$ with $\Info(F_j(u))$ for all neighbours $u$ and levels $j$ in a short time.
Specifically,  $O(\log^2 n)$ time in synchronous environments and $O(\Delta\log^3 n)$ time in asynchronous ones.

\subsubsection{Local verifications~~(described in detail in Section~\ref{sub:utilizing})}
So far, with respect to verifying the Minimality property, we have not discussed issues of faults that may complicate the verification.
Recall, the verification process must detect if the tree is not an MST. Informally, this must hold despite the fact that
the train processes, the partitions,  and also, the pieces of information carried by the trains may be corrupted by an adversary.
For example, the adversary may  change or erase some (or even all) of such pieces corresponding to existing fragments. Moreover,
even correct pieces that correspond to existing fragments may not arrive at a node in the case that the
adversary corrupted the partitions or the train mechanism.

In Section~\ref{sub:utilizing}, we explain how the verifier does overcome such undesirable phenomena, if they occur.
Intuitively, what is detected is not necessarily the fact that a train is corrupted (for example). Instead, what is detected is the state that {\em some} part is incorrect (either the tree is not an MST, or the train is corrupted, or ... etc.).
Specifically, we show that if an MST is not represented in the network, this is detected in time $O(\log^2 n)$ for  synchronous environments and time $O(\Delta\log^3 n)$ for asynchronous ones.
Note that for a verifier, the ability to detect while assuming any
initial configuration
means that the verifier is self-stabilizing, since the sole purpose of the verifier is to detect.

Verifying that {\em some} two partitions exist is easy.
However, verifying that the given partitions are
as described in Section \ref{sec:partitions}, rather  than being two
arbitrary partitions generated by an adversary seems difficult.
Fortunately, this verification turns out to be unnecessary.

 First, as mentioned, it is a known art to self-stabilize the train process.
After trains stabilize,  we  verify that
the set of pieces stored in a part (and delivered by the train) includes
all the (possibly corrupted) pieces of the form $\Info(F_j(v))$, for every $v$ in the part and for every $j$ such that $v$ belongs to a level $j$ fragment. Essentially,
this is done by verifying at the root $r(P)$ of a part $P$, that (1) the information regarding fragments arrives at it in a cyclic order (the order in which pieces of information are supposed to be stored in correct instances), (2) the levels of pieces arriving at $r(P)$  comply with the levels of fragments to which $r(P)$  belongs to, as indicated by $r(P)$'s data-structure. Next, we verify that the time in which each node obtains all the pieces it needs is short. This is guaranteed by the correct train operation, as long as the diameter of parts is $O(\log n)$, and  the number of pieces stored permanently at the nodes of the part is $O(\log n)$. Verifying these two properties is accomplished using a 1-proof labelling scheme of size $O(\log n)$, similarly to the schemes described in Examples 2 and 3 ($\SP$ and $\EDIAM$, mentioned in Section \ref{sub:examples}).

Finally, if up to this point, no node raised an alarm, then for each node $v$, the (possibly corrupted) pieces of information corresponding to $v$'s fragments reach $v$ in the prescribed time bounds. Now, by the train synchronization process,  each node can compare its pieces of information with the ones of its neighbours. Hence, using similar
arguments as was used in the $O(\log^2 n)$-memory  bits verification scheme of \cite{KKP10}, nodes can now detect the case
 that either one of the pieces of information is corrupted or that $T$ is not an MST.

\section{A synchronous MST construction  in $O(\log n)$ bits memory size and $O(n)$ time}\label{sec:mst-construction}
In this section, we describe an MST construction algorithm, called $\ALG$, that is both linear in its running time and memory optimal, that is, it  runs in $O(n)$ time and has $O(\log n)$ memory size. We note that this algorithm is not self-stabilizing and its correct operation assumes a synchronous environment.
The algorithm will be useful later for two purposes. The first is for distributively assigning the labels of the MST proof labelling scheme, as described in the next section. The second purpose, is to be used as a module in the self-stabilizing MST construction algorithm.

As mentioned, the algorithm of Gallager, Humblet, and Spira (GHS) \cite{ghs}  constructs an MST in $O(n\log n)$ time.
This has been improved by Awerbuch to linear time, using a somewhat involved algorithm. Both algorithms are also efficient in terms of the number of messages they send.
The MST construction algorithm described in this section is, basically, a simplification of the GHS algorithm. There are two reasons for why we can simplify that algorithm, and even get a better time complexity.
The first reason is that our algorithm is synchronous, whereas GHS (as well as the algorithm by Awerbuch) is designed for asynchronous environments.
Our second aid is the fact that we do not care about saving messages (anyhow, we use a shared memory model),
while the above mentioned algorithms strive to have an optimal message complexity.
Before describing our MST construction algorithm, we recall the main features of the GHS  algorithm.

 \subsection{Recalling the MST algorithm of Gallager, Humblet, and Spira (GHS)}
 \label{subsec:ghs}

For full details of GHS, please refer to \cite{ghs}. GHS uses connected subgraphs of the final MST, called {\em fragments}.
Each node in a fragment, except for the fragment's root, has a pointer to its parent in the fragment. When the algorithm starts, every node is the root of the singleton fragment including only itself. Each fragment is associated with its
{\em level} (zero for a singleton fragment) and the identity of its root  (this is a slight difference from the description in \cite{ghs}, where a fragment is said to be rooted at an edge).
Each fragment $F$ searches for its {\em minimum outgoing edge} $e_{\min}(F)=(v,u)$.
Using the selected edges, fragments are merged to produce larger fragments
of larger levels. That is, two or more fragments of some level $j$, possibly together with some fragments of levels
lower than $j$, are merged to create a fragment of level $j+1$. Eventually, there remains only one fragment spanning
the graph which is an MST.

In more details, each fragment sends an offer (over $e_{\min}(F))$ to merge with the other fragment $F'$, to which the other endpoint $u$ belongs. If $F'$ is of a higher level, then $F$ is connected to $F'$. That is,
the edges in $F$ are reoriented so that $F$ is now rooted in the endpoint $v$ of $e_{\min}(F)$, which becomes a child of the other endpoint~$u$.

If the level of $F'$ is lower, then $F$ waits until the level of $F'$ grows (see below, the description of ``test'' messages). The interesting case is when $F$ and $F'$ are of the same level $\level$. If $e_{\min}(F)=e_{\min}(F')$, then $F$ and $F'$ merge to become one fragment, rooted at, say, the highest $\id$ node between $u$ and $v$. The level of the merged fragment is set to $\level+1$.

The remaining case, that (w.l.o.g.)
 $w(e_{\min}(F))> w(e_{\min}(F'))$
 does not need a special treatment.
When $F$ sends $F'$ an offer to merge, $F'$ may have sent such an offer to some $F''$ over
$w(e_{\min}(F'))$. Similarly, $F''$ may have sent an offer to some $F'''$ (over $w(e_{\min}(F'')))$, etc. No cycle can be created in this chain of offers (because of the chain of decreasing weights $w(e_{\min}(F))> w(e_{\min}(F'))>w(e_{\min}(F'')) ...)$. Hence, unless the chain ends with some fragment of a higher level (recall that treating the case that a fragment's minimum edge leads to a higher level fragment was already discussed), some two fragments in the above chain merge, increasing their level by one. This case (for the fragments of the chain, excluding the two merging fragments) now reduces to the case (discussed previously) that a fragment $F$ makes an offer to a fragment of a higher level.

The above describes the behavior of fragments. To implement it by nodes, recall that every fragment always has a root. The root conducts Wave\&Echo  over the fragment to ask nodes to find their own {\em candidate} edges for the minimum outgoing edge. On receiving the wave (called ``find''), each node $v$ selects its minimum edge
$(v,u)$ that does not belong yet to the fragment, and has not been ``tested'' yet (initially, no edge was ``tested''). Node~$v$ sends a ``test'' message to $u$, to find out whether $u$ belongs to $v$'s fragment. The ``test'' includes the $\id$ of $v$'s fragment's root $r$ and its level $\level$. If the level
of $u$'s fragment is at least $\level$ then $u$ answers. In particular, if $u$'s level is $\level$ and $u$'s fragment root is $r$ then $u$ sends a ``reject'' to $v$, causing $v$ to conclude that $(v,u)$ is not outgoing and cannot be a candidate.
(Node $u$ does not answer, until its level reaches $\level$, thus, possibly, causing $v$'s fragment to wait.)
      In the converging wave (called ``found'') of the above ``find'' broadcast, each node $v$ passes to its parent only the candidate edge with the minimum weight (among its own candidate and the candidates it received from its children). Node $v$ also remembers a pointer to whoever sent it the above candidate. These pointers form a route from $F's$ root to the endpoint of $e_{\min}(F)$. The root then sends a message ``change-root'', instructing all the nodes on this route (including itself) to reverse their parent pointers. Hence, $F$ becomes rooted at the endpoint of $e_{\min}(F)$, who now can send an offer to ``connect'' over $e_{\min}(F)$.

\subsection{Algorithm $\ALG$: a synchronous linear time version with optimal memory size}\label{subsec:mst-construction}
The algorithm we now describe is synchronous and  assumes that all the nodes wake up simultaneously at round~$0$.
However, to keep it easy for readers who are familiar with GHS, we tried to keep it as similar to GHS as possible.

Initially, each node is a root of a fragment of {\em level} $0$ that contains only itself.
During the execution of $\ALG$, a node who is not a root, keeps a pointer to its parent. The collection of these pointers (together with all the nodes) defines a forest at all times.
Each node also keeps an {\em estimate} of the $\id$ and the level   of the root of its fragment. As we shall see later, the $\id$ estimate is not always accurate. The level estimate is a lower bound on the actual level.
We use the levels for convenience of comparing the algorithm to that of GHS (and for the convenience of the proof). The levels actually can be computed from the round number, or from the counting procedure defined below. More specifically,
the algorithm is performed in synchronous {\em phases}. Phase $i$ starts at round
$11\cdot 2^i$. Each root $r(F)$ of a fragment $F$ (that is, a node whose parent pointer is null) starts the phase by setting its level to $i$ and then counting the number of nodes in its fragment.

 The counting process, called $\COUNTSIZE$, is defined later, but for now it suffices to say that it consumes
{\em precisely} $2^{i+2}-1$ rounds.
 If the diameter of the fragment is small, then some waiting time is added to keep the precise timing.
On the other hand, if the number of nodes in the fragment  is too large, $\COUNTSIZE$ may terminate before all the nodes in the fragment are counted.
 Specifically, we guarantee that if the counting process succeeds to count all nodes in the  fragment $F$ then the precise number of these nodes  is known to the root $r(F)$ at the end of the counting procedure.
 On the other hand, if the counting process does not count all nodes, then
  the number of nodes in the fragment is at least $2^{i+1}$, and at the end of the $\COUNTSIZE$ process, the root $r(F)$ learns this fact. Moreover, in such a case, as a consequence, $r(F)$
   changes its level to $i+1$.

\begin{definition}
\label{def:fragments}
  A root $r(F)$ is {\em active} in phase $i$ if and only if $|F|\leq 2^{i+1}-1$, where $|F|$ denotes the number of nodes in $F$. Note that if $r(F)$ is active then its level is $i$.
 In particular, all the roots are {\em active} in phase (and level)~$0$.
 A fragment is {\em active} when its root is active.
\end{definition}

\begin{Comment}
\label{com:fragments}
When constructing the marker algorithm in later sections, we use the fragments constructed by algorithm $\ALG$. More specifically, we refer only to the active fragments. As is easy to observe below in the current section, an active fragment is a specific set of nodes that does not change. This is because when the fragment merges with others (or when others join it), it is no longer the same fragment. In particular, when the new set of nodes will be active, it will be in a higher phase.
\end{Comment}

\paragraph{Procedure $\FINDOUT$:} Consider the root $r(F)$ of fragment $F$, who is active in phase~$i$.
 At round $(11+4)\cdot 2^i$, each
such root $r(F)$  instructs the nodes in $F$ to search for the minimum outgoing edge of $F$. This procedure, called $\FINDOUT$, could have been combined with the counting, but we describe it as a separate stage for the sake of simplicity. The method is the same as that of GHS algorithm, except that we achieve an exact timing obtained by not saving in messages.
The search is performed over exactly the same set of nodes which has just been counted.
 This is implemented by a Wave operation initiated by $r(F)$, carrying $r(F)$'s $\id$ and level.
At precisely round $(11+6) \cdot 2^i$,
 each node $v$ who has received the wave, finds the minimum outgoing edge emanating from it. That is, $v$ looks at each of its neighbours $u$ to see whether $u$ belongs to a different fragment of some other root $r(F')\neq r(F)$. We now describe how $v$ identifies this.

Let us   note here two differences from GHS. First, node $v$ tests all of its emanating edges at the same time, rather than testing them one by one (as is done in GHS). Moreover, it does not reject any edge, and will test all its edges in the next searches too.
Intuitively, the above mentioned one by one process was used in GHS in order to save messages. We do not try to save messages, and the simultaneous testing allows us to keep an exact timing on which we rely heavily.
 Second, in GHS, node $u$'s estimate of its level may be lower than that of node $v$.
 In GHS, $v$ then needs to wait for $u$ to reach $v$'s level, before $v$ knows whether edge $(v,u)$ is outgoing. The main reason this action is useful in GHS is to  save on message complexity. Here,
again, we do not intend to save messages.

Recall that the root of $v$'s fragment $F$ is active at phase $i$, hence, $|F|< 2^{i+1}$.
(We shall show that no additional nodes joined $F$ after the counting.)
Hence,  at round $(11+6) \cdot 2^i$, {\em all the nodes} in $F$ have already received the wave, and set their $\id$ estimates to $\id(r(F))$.
The big gain from that, is that at round $(11+6) \cdot 2^i$, the $\id$s of the roots of $u$ and $v$ are different if and only if  the edge $(v,u)$ is outgoing at $v$.
The minimum outgoing edge in the fragment of $r(F)$ is then computed during the convergecast, using the standard Wave\&Echo technique.
 Thus, Procedure $\FINDOUT$ (composed of the aforementioned Wave\&Echo) lasts at most $2(2^{i+1}-1)$ round units,
  hence (having been started at round $(11+4) \cdot 2^i$), it is completed by round
  $(11+8) \cdot 2^i-1$.

\paragraph{Merging and reorienting edges:}
Let $(w,x)$ be the chosen minimum outgoing edge from the fragment $F$, such that $w\in F$. Later, we refer to it as the {\em candidate} edge.
At round $(11+8) \cdot 2^i$, an active root $r(F)$ of $F$ starts the process of re-orienting the edges in $F$ towards  $w$. (For a more thorough description of the root transfer refer to \cite{ghs}.)
This takes at most $2(2^{i}-1)$ rounds.

Node $w$ then conducts a handshake with $x$, referred to as the {\em pivot} of $F$.
This takes a constant time, but, to keep the total numbers simple, we pad this time to $2^{i}$.
One case is that $w$  is, at that time exactly, a pivot of the fragment of $x$, and also $\id(x) < \id(w)$. In this case,
node $x$ will become the child of $w$.
In every other case, $w$  hooks upon the other endpoint $x$ (sets its parent pointer to point at $x$). The hooking is performed exactly
at round $(11+11) \cdot 2^i-1$, ending phase $i$.
Since the next phase starts at round  $22\cdot2^i$  there is no overlap between phases.

\paragraph{Procedure $\COUNTSIZE$:} To complete the description of a phase, it is left to describe the counting process, namely, Procedure $\COUNTSIZE$. To count, a root starts a Wave\&Echo, attaching a {\em time-to-live}~=~$2^{i+1}-1$ counter to its broadcast message. A child $c$ of a node $y$ accepts the wave only if the time-to-live is above zero. Child $c$ then copies the wave broadcast message, decrementing the time-to-live (by 1). If, after decrementing, the value of time-to-live is zero, then $c$ is a leaf who needs to start the echo. The number of the nodes (who copied the broadcast message) is now counted during the echo in the standard way.
Finally, if the count covers the whole graph, this can be easily detected at the time of the echo. The algorithm then terminates.

To sum up, phase $i$ of the MST construction algorithm is composed of the following components.
\paragraph{\underline{Phase $i$} }
\begin{smallitemize}
\item Starts at round $11\cdot 2^i$~~;
\item Root $r(F)$ of each fragment $F$ initiates Procedure $\COUNTSIZE$.

At the end of the procedure, we have:\\  $|F|\leq 2^{i+1}-1$~  iff ~
(1) $r(F)$  is {\em active} and
(2) all nodes in $F$ have their $\id$ estimates set to $\id(r(F))$ ~~;
\item At round $(11+4)\cdot 2^i$, each active
root $r(F)$  initiates Procedure $\FINDOUT$~~;
\item
At round $(11+8) \cdot 2^i$, merge fragments and re-orient edges in the newly created fragments.
\end{smallitemize}
\bigskip
 The proof that  the collection of parent pointers forms a forest (or a tree) at all times is the same as in GHS.
 Let us now analyze the round complexity. Observe that each phase $i$ takes $O(2^i)$ time. Hence, the linear time complexity of the algorithm follows from the lemma below.

\begin{lemma}
\label{lem:disappear}
The size of a fragment $F$ in phase $i$ (and in level $i$) is at least $2^i$.  Moreover,  $|F|<2^{i+1}$ if and only if  $r(F)$ is active by round $(11+4)\cdot 2^i$.
\end{lemma}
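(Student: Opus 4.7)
The plan is to establish both parts of the lemma by joint induction on the phase index~$i$. For the base case $i=0$, every singleton fragment has $|F|=1=2^0$, and $\COUNTSIZE$ with time-to-live $2^1-1=1$ trivially returns the exact count, so each singleton root is declared active by round $(11+4)\cdot 2^0$, and both conclusions hold.

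For the inductive step I would first tackle part~(a), which is the more delicate assertion. Consider a fragment $F$ present at round $11\cdot 2^i$; it was assembled at the end of phase $i-1$ from a collection $\mathcal{S}$ of phase-$(i-1)$ fragments joined by the hook and Case-1 merge operations described in Section~\ref{subsec:mst-construction}. Starting from any active level-$(i-1)$ fragment inside $\mathcal{S}$, I would follow the chain of minimum-outgoing-edge pointers $F_1\to F_2\to\cdots$. Because the weights $\omega(e_{\min}(F_k))$ are distinct and strictly decreasing along the chain, the chain is acyclic and must terminate in exactly one of two ways: (i) at an inactive phase-$(i-1)$ fragment $F^*$, or (ii) at a pair of active fragments that chose each other's edge and performed a Case-1 merge. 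In case~(i) the inductive hypothesis applied to part~(b) yields $|F^*|\geq 2^i$, hence $|F|\geq |F^*|\geq 2^i$; in case~(ii) the inductive hypothesis applied to part~(a) gives each of the two merged active fragments size at least $2^{i-1}$, so $|F|\geq 2\cdot 2^{i-1}=2^i$. The degenerate remaining possibility, that $\mathcal{S}$ contains no active level-$(i-1)$ fragment at all, forces $F$ to be a single inactive phase-$(i-1)$ fragment and reduces to case~(i).

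For part~(b) I would combine the size bound just proved with the design of $\COUNTSIZE$. The procedure launched at round $11\cdot 2^i$ consumes $2^{i+2}-1$ rounds and therefore finishes by round $15\cdot 2^i - 1 < (11+4)\cdot 2^i$, so the root's verdict is available at the deadline. If $|F|<2^{i+1}$, then the depth of $F$'s parent-pointer tree is at most $|F|-1\leq 2^{i+1}-2$, so the time-to-live of $2^{i+1}-1$ suffices for the wave to reach every node and for the precise count to echo back to the root, which then sees that the count is below $2^{i+1}$ and declares itself active. Conversely, if $|F|\geq 2^{i+1}$, either the wave times out before reaching some node (and the root directly infers $|F|\geq 2^{i+1}$) or it delivers an accurate count that is itself $\geq 2^{i+1}$; in either sub-case the root is declared inactive.

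The main obstacle will be the structural argument in part~(a): one must argue that every phase-$i$ fragment unavoidably contains either a Case-1 mutual merge of two active phase-$(i-1)$ fragments or a previously-inactive anchor, and this rests on the acyclicity of the min-outgoing-edge pointers (exploiting distinct edge weights) together with a faithful reading of the handshake rules at round $(11+8)\cdot 2^{i-1}$. Once that structural claim is secured, the size doubling and the timing accounting for part~(b) follow routinely.
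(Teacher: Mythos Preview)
Your proposal is correct and lands on the same inductive skeleton as the paper: establish the doubling bound by locating inside each phase-$i$ fragment either an inactive phase-$(i-1)$ fragment (already of size $\geq 2^i$) or a pair of active phase-$(i-1)$ fragments that merged. The timing argument for part~(b) via $\COUNTSIZE$ is likewise the same.

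Where you differ is in the decomposition used to locate that anchor. The paper fixes attention on the root $r(F)$ at level~$i$ and splits into two cases --- some active fragment hooked onto $r(F)$'s tree, or $r(F)$'s (active) fragment itself hooked onto another --- arguing in either case that two fragments of size $\geq 2^{i-1}$ were joined. You instead start from an arbitrary active fragment in $\mathcal{S}$ and follow the functional graph of minimum-outgoing-edge pointers, using the strict weight decrease to force termination at either an inactive anchor or a mutual (2-cycle) merge. Your version is arguably cleaner: it sidesteps the somewhat delicate claim that ``$r(F)$ was a root also at level $i-1$'' (which requires tracking the reorientation step), and it makes explicit why the chain cannot leave $\mathcal{S}$. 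What the paper's root-centric analysis buys is that it avoids invoking part~(b) of the induction hypothesis --- the paper proves part~(b) once, non-inductively, and then runs the induction on part~(a) alone --- whereas your joint induction is logically equivalent but couples the two statements. Both organizations are sound.
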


\begin{proof}
Let us first prove the second  part of the lemma.
Before deciding whether to be active, a root $r(F)$ of level~$i$ counts the number of nodes in its fragment, by employing Procedure $\COUNTSIZE$. If the count amounts to $2^{i+1}$ or more, then the level of $r(F)$ is set to $i+1$. Otherwise, we have $|F|\leq 2^{i+1}-1$ and the root of $F$ becomes active. Since Procedure $\COUNTSIZE$ is terminated by round $(11+4)\cdot 2^i$, the second part of the lemma follows.
To prove the  first part of the lemma, we need to show that the size of a level $i$ fragment  is at least $2^{i}$. We prove this by induction on $i$.

Intuitively, the induction says that each fragment at the beginning of phase $i-1$, is of size at least $2^{i-1}$. During phase $i-1$, by the second part of the lemma, at time
$(11+4)\cdot 2^{i-1}$, all the non-active fragments are already
of size at least $2^{i}$ and are also of level $i$ (as a result of the count). As for active fragments, each such fragment is combined to at least one other fragment, so the resulting size is at least $2\cdot 2^{i-1}=2^{i}$.

In more details,
note that the claim holds for phase $i=0$.
For a larger phase $i$,  assume that the lemma holds for phases up to $i-1$ including. Consider a root $r(F)$ of a fragment $F$ of level $i$. It was a root also at level $i-1$. First, assume that at phase $i-1$, some other root $r(F_1)$ hooked upon $r(F)$'s tree. To do so,  $r(F_1)$ had to be active at phase $i-1$. By the induction hypothesis (the first part of the lemma), the size of fragment $F_1$, as well as the size of fragment of $F$ at that point in time, was at least $2^{i-1}$. The claim, in this case, follows.

Now, assume that no other fragment hooked upon fragment $F$ in phase $i-1$. Note that $F$ at level $i-1$ does not span the graph (otherwise, no root would reach level $i$, by the 2nd part of the inductive hypothesis, and since the counting process on trees is a known art and is known to be correct).
Hence, it has a minimum outgoing edge $e=(w,x)$, where $w\in F$ and $x$ belongs to some other fragment $F_x$.
We claim that the search process $\FINDOUT$ does find that edge $e$. Recall that in the fragment of an active root, the counting reaches all the nodes in the fragment. Hence, each of them knows the $\id$ of their root $r(F)$ at the time the search in their fragment starts. Moreover, since a hooking is performed only at times of the form $(11+11) \cdot 2^j-1$, no new nodes (or fragments) join until the last time step of the phase (which is after the search, because of what we established about the size of the fragment).

We claim that either (a) Edge $e=(w,x)$ was not the minimum outgoing edge of $x$'s fragment $F_x$, or, alternatively, (b) the root $r_x$  of $x$'s fragment $F_x$ had level $i$ at that time, or
(c) $\id(x)>\id(w)$.
Assume the contrary. By the inductive hypothesis, node $r_x$   is at least at level $i-1$. Since we assumed that (b) does not hold, $r_x$ is exactly at level $i-1$.
This means (by the correctness of the counting) that  $r_x$  is active at phase $i-1$.
  Similarly, this also means that the size of $F_x$ is less than $2^i$. Hence, and by the
  induction hypothesis,
  the counting, the searching, the root transfer, and the handshake end in $x$'s fragment at the same time
they end in $w$'s fragment.
(These processes use known art, and we shall not prove them here.)
Then $x$ hooks upon $w$, contrary to our assumption.

We have just established that the conditions for $w$ to hook upon $x$ hold.
Hence, $w$ hooks upon $x$.
Similarly to the previous case, the size of $w$'s fragment,
as well as the size of $x$'s fragments at that point in time is at least $2^{i-1}$. Thus, the size of the combined fragment is at least $2^i$. This concludes the proof of the first part of the lemma.
\end{proof}

\begin{corollary}\label{cor:time-linear}
The synchronous MST construction algorithm computes an MST in time $O(n)$.
\end{corollary}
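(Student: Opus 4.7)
The plan is to prove the corollary in two parts: first, the time bound, and second, that the constructed tree is indeed an MST.

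For the time bound, I would argue by summing the lengths of the phases. As stated in the description of the algorithm, phase $i$ begins at round $11 \cdot 2^i$ and phase $i+1$ begins at round $22 \cdot 2^i = 11\cdot 2^{i+1}$, so the duration of phase $i$ is exactly $11\cdot 2^i$ rounds. Hence, if the algorithm terminates during phase $i^*$, the total running time is at most $\sum_{i=0}^{i^*} 11 \cdot 2^i = O(2^{i^*})$. It remains to bound $i^*$. By Lemma~\ref{lem:disappear}, any fragment of level $i$ contains at least $2^i$ nodes, and since the algorithm maintains at all times a forest whose nodes together make up $V$, no fragment can contain more than $n$ nodes. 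Therefore, the maximum level ever reached is at most $\lceil \log_2 n\rceil$, giving $i^* = O(\log n)$ and a total running time of $O(2^{\log n}) = O(n)$.

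For correctness, I would argue that the algorithm is a faithful synchronous simulation of GHS and therefore produces an MST by the standard cut/blue-edge argument: every edge the algorithm ever adds to the forest is a minimum outgoing edge of some fragment, hence belongs to every MST (using the uniqueness of edge weights). The subtle points to check are (i) the collection of parent pointers remains a forest — which follows from the GHS merging rules (hookings are either on an edge $(w,x)$ with $w$ selecting $x$, or symmetric ``both sides select each other'' merges broken by the identifier comparison rule described in the ``Merging and reorienting edges'' paragraph), and (ii) when an active root performs its $\FINDOUT$ in phase $i$, it really does identify the minimum outgoing edge of its current fragment. Point (ii) is the main thing one needs to verify, and it rests on the timing guarantees: by Lemma~\ref{lem:disappear}, active fragments at phase $i$ have size at most $2^{i+1}-1$, so the Wave\&Echo of $\FINDOUT$ indeed reaches every node of the fragment within the allotted $2(2^{i+1}-1)$ rounds; moreover, because hookings occur only at the very end of a phase (round $(11+11)2^i - 1$), no node changes fragment during the $\FINDOUT$ stage, and each node $v$ holds the correct estimate $\id(r(F))$ at the instant it tests its incident edges. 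Consequently the comparison of root-identifier estimates between $v$ and a neighbour $u$ correctly identifies whether $(v,u)$ is outgoing from $F$.

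Finally, the algorithm terminates because, as long as the current forest is not spanning, the lemma applied at the top level guarantees that some fragment still has an outgoing edge and is therefore merged in some phase $i \le \lceil \log_2 n\rceil$, doubling the size of some fragment. The hard part of the argument is really point (ii) above — making sure that the synchronous padding in each phase, combined with the uniform level behaviour, allows one to conclude that the $\FINDOUT$ stage of an active fragment sees the correct partition of $V$ into fragments at the moment of testing; everything else is a straightforward geometric-sum time accounting and a routine cut-property invocation.
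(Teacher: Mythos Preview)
Your proposal is correct and matches the paper's approach exactly: the paper derives the $O(n)$ bound by observing that phase $i$ costs $O(2^i)$ rounds and invoking Lemma~\ref{lem:disappear} to bound the number of phases by $O(\log n)$, while correctness is deferred to the GHS invariants. Your write-up actually supplies more detail than the paper does; the only point you leave slightly implicit is why a neighbour $u$ in a \emph{different} (possibly inactive) fragment necessarily carries an $\id$ estimate distinct from $\id(r(F))$ at the test round---this holds because any estimate ever written at $u$ is the identity of a node that once lay in a fragment containing $u$, hence in $u$'s current fragment, hence not equal to $r(F)$.
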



 \paragraph{Implementing the algorithm in the shared memory model with $O(\log n)$ memory:}
 Each node $v$ keeps its fragment level and root $\id$. Node $v$ also remembers whether $v$ is in the stage of counting the number of nodes, or searching for the minimum outgoing edge. It also needs to remember whether it is in the wave stage, or has already sent the echo. Node $v$ needs to remember the candidate  (for being $e_{\min}(F))$ edge that $v$  computed in the echo  (``found'') stage of the convergecast. If this candidate was reported by a child, then $v$ also remembers a pointer to that child.
 Clearly, all the above variables combined need $O(\log n)$ bits of memory.

At a first glimpse it may look as if
 a node must also remember the list of pointers to its children. The list is used for (1) sending the wave (e.g., the ``find'' message of the search, using GHS terms) to all the children, and (2) knowing when all the children answered the echo
 (e.g., the ``found'' of the search).
 Note that a node does not need to store this list itself. Node $v$ can look at each neighbour
 $u$ to see whether the neighbour is $v$'s child. (For that purpose, if $u$ is a child of $v$, then $u$ stores $v$'s $\id$ as $u$'s $\PARENTID$.) Clearly, this can be implemented using $O(\log n)$ bits per node.

To implement (1), node $v$ posts its wave broadcast message (e.g., the ``find'') so that every neighbour can read it. However, only its children actually do. To allow the implementation of (2), a precaution is taken before the above posting. Node $v$ first posts a request for its children to reset their $\ECHO$ variables, and performs the posting of ``find'' only when it sees that $\ECHO$ has been reset for every neighbour $w$ whose parent pointer points at $v$.

To implement (2), node $v$ further reads its neighbours repeatedly. It knows all its children echoed the wave in an iteration when it has just finished rereading all its neighbours, and every node $u$ pointing at $v$ (as its parent), also sets its $\ECHO$ variable to some candidate edge (or to some default value if it has no candidate edge).

\begin{observation}
\label{obs:spce-sync}
The space required by the linear time synchronous algorithm is $O(\log n)$ bits per node.
\end{observation}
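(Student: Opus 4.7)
My plan is to prove the observation by an explicit enumeration of the per-node state that Algorithm $\ALG$ maintains, and then bounding each item by $O(\log n)$ bits. The bulk of the argument has already been set up in the paragraphs immediately preceding the observation (in particular, the discussion of how a node avoids storing a child list); the proof amounts to collecting these pieces and verifying each bound.

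First, I would list the \emph{permanent} state every node $v$ must keep in order for $\ALG$ to run: (i) a fragment-level estimate, (ii) a root-$\id$ estimate, (iii) $v$'s parent pointer in its current fragment (a port number), (iv) the $\PARENTID$ of $v$ (used so that a neighbour can recognize $v$ as its child without $v$ having to enumerate its children). Each of these is either a single port number or a single node identifier, hence $O(\log n)$ bits.

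Next, I would bound the \emph{transient} state used inside one phase. During $\FINDOUT$, $v$ remembers its current best candidate outgoing edge (a port number plus a weight, which is polynomial in $n$, so $O(\log n)$ bits) together with a pointer to the child (if any) through which that candidate was received, and a constant-size flag recording whether $v$ is in the wave stage or has already sent the echo; this is the $\ECHO$ variable described above. During $\COUNTSIZE$, $v$ carries the time-to-live counter, which by construction is bounded by $2^{i+1}-1 \le 2n$ and hence fits in $O(\log n)$ bits, together with the partial subtree-size sum returned on the echo, which is also at most $n$ and therefore fits in $O(\log n)$ bits. Finally, a constant number of flag bits are used to encode which of the four phase components (COUNTSIZE wave/echo, FINDOUT wave/echo, root-transfer, handshake) $v$ is currently executing.

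The single step that is not a pure one-line observation, and which I would treat carefully, is arguing that $v$ does not need to store the list of its children in its current fragment: this is the only quantity that, a priori, could blow up to $\Omega(\Delta\log n)$ bits. The argument is exactly the one sketched before the observation: to broadcast, $v$ posts the wave in a shared register readable by all neighbours and only those neighbours $u$ with $\PARENTID(u)=\id(v)$ act on it; to detect completion of the echo, $v$ scans its neighbours and waits until every $u$ with $\PARENTID(u)=\id(v)$ has set its $\ECHO$ register, using the reset handshake described above to rule out stale values. Since each neighbour is read directly from its own $O(\log n)$-bit registers, $v$ needs only $O(1)$ additional bits (a scan-in-progress flag) to run this loop. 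Summing over all the items in the two preceding paragraphs gives an $O(\log n)$ bound per node, which is the content of the observation; combining this with Corollary \ref{cor:time-linear} establishes that $\ALG$ is simultaneously $O(n)$-time and $O(\log n)$-space.
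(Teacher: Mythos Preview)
Your proposal is correct and follows essentially the same approach as the paper: the paper does not give a separate formal proof of the observation, treating it as an immediate consequence of the preceding paragraph (``Implementing the algorithm in the shared memory model with $O(\log n)$ memory''), which enumerates exactly the same state items you list and handles the child-list issue via the $\PARENTID$/$\ECHO$ mechanism in the same way. Your write-up is simply a more explicit and carefully organized version of that paragraph.
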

The theorem below follows from Observation~\ref{obs:spce-sync} and Corollary \ref{cor:time-linear}.

\begin{theorem}\label{thm:ALG}
The synchronous algorithm $\ALG$ computes an MST in time $O(n)$ and memory size $O(\log n)$.
\end{theorem}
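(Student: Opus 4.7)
The plan is to observe that this theorem is essentially a repackaging of two results already established in the section, namely Corollary~\ref{cor:time-linear} (time $O(n)$) and Observation~\ref{obs:spce-sync} (memory $O(\log n)$ bits per node). So no new arguments are required; I would simply compose these two statements and say that together they yield the claim, with the correctness of $\ALG$ (that the output is actually an MST) inherited from the GHS correctness argument, which the description of $\ALG$ in Section~\ref{subsec:mst-construction} mirrors closely enough that essentially the same proof goes through.

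To make the proof self-contained, I would briefly recap where each bound comes from. For the time bound, the engine is Lemma~\ref{lem:disappear}: any fragment that survives to phase $i$ has size at least $2^i$, so the total number of phases is at most $\log_2 n + O(1)$. Since phase $i$ has length $\Theta(2^i)$ (each of $\COUNTSIZE$, $\FINDOUT$, root-change, and hook takes $O(2^i)$ rounds inside a fragment of size at most $2^{i+1}-1$), the total number of rounds is $\sum_{i=0}^{O(\log n)} O(2^i) = O(n)$, giving Corollary~\ref{cor:time-linear}. For the space bound, I would just point to the enumeration carried out before Observation~\ref{obs:spce-sync}: each node stores only its current level estimate, current root-$\ID$ estimate, a parent pointer, a local stage flag, the current best outgoing-edge candidate and a pointer to the child that reported it, a $\PARENTID$ field, and the $\ECHO$ variable, all of which fit in $O(\log n)$ bits. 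Crucially, child lists are \emph{not} stored; they are reconstructed on the fly by scanning neighbours for those whose $\PARENTID$ points back to $v$, which is what keeps the per-node memory logarithmic even for high-degree nodes.

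Combining these two facts gives exactly the statement of Theorem~\ref{thm:ALG}. I do not anticipate any obstacle in this final assembly: the real technical work was in Lemma~\ref{lem:disappear} (the doubling argument showing that each active fragment at phase $i$ either terminates the algorithm or contributes to at least doubling to some level-$(i+1)$ fragment) and in the careful synchronized schedule described in Section~\ref{subsec:mst-construction}, both of which are already in hand. Thus the proof is a one-line composition: apply Corollary~\ref{cor:time-linear} for the time bound, apply Observation~\ref{obs:spce-sync} for the memory bound, and conclude.
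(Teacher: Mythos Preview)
Your proposal is correct and matches the paper's approach exactly: the paper states just before the theorem that it ``follows from Observation~\ref{obs:spce-sync} and Corollary~\ref{cor:time-linear},'' which is precisely the composition you describe. Your additional recap of Lemma~\ref{lem:disappear} and the memory enumeration is accurate and harmless, though the paper itself does not repeat it at this point.
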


\bigskip

\section{Representing and verifying a hierarchy}
\label{sec:section5}
We are now ready to describe our proof labeling scheme $(\cM,\cD)$ for MST. The goal of this section is to construct some part of the marker $\cM$, and the corresponding part of the verifier $\cD$, which are relatively easy to construct.
 The techniques used in this section
bear similarity to the techniques presented in \cite{KKP10}.
  Hence, we only expose the main ideas behind this part of the proof labeling scheme, leaving out some of the technicalities.
Nevertheless, since the notion of proof labeling schemes can sometimes be confusing, this section may help the reader to get accustomed to the notion
and the difficulties that may arise.

As a warm up,
we
 first note that using the 1-proof labeling scheme described in Example $\SP$, we may  assume that $H(G)\equiv T$ is a spanning tree of $G$ rooted at some node $r$, and
that each node knows which
of its neighbours in $G$ are its children in $T$ and which is its parent.
 Moreover, using the 1-proof labeling scheme described in Example $\NK$, we may also assume that each node knows $n$. The 1-proof labeling schemes described in Examples $\SP$ and $\NK$ use $O(\log n)$
 memory size and can be constructed using $O(n)$ time.
 Hence, using them does not violate the desired  complexity constrains of our scheme.
Thus, from now on, let us fix  a spanning tree $T=(V(G),E(T))$ of a
graph $G=(V(G),E(G))$,  rooted at some node $r(T)$.
The goal of the rest of the verification scheme is to verify that $T$ is in fact, minimal. Before we continue, we need a few definitions.

\begin{definition}
\label{def:hierarchy}
A {\em hierarchy} $\cH$ for $T$
is a collection of fragments of $T$ satisfying the following two properties.
\begin{enumerate}
\item
$T\in\cH$ and,
for every $v\in V(G)$,
there is an $F_v\in\cH$ such such
$V(F_v) = \{v\}$ and $E(F_v) = \emptyset$.
\item For any two fragments $F$ and $F'$ in $\cH$, if $F\cap F'\neq \emptyset$ then either $F\subseteq F'$ or $F'\subseteq F$. (That is, the collection of fragments is a laminar family.)
\end{enumerate}
\end{definition}
Please recall (Definition \ref{com:fragments} and Comment \ref{com:fragments})
that when we construct a hierarchy according to Definition \ref{def:hierarchy}, the fragments referred to are the active fragments constructed in $\ALG$.

The {\em root} of a fragment $F$ is the node in $F$ closest to the root of $T$.
For a fragment $F\in\cH$, let $\cH(F)$ denote the collection of
fragments in $\cH$ which are  {\em strictly} contained  in $F$.
Observe that a
hierarchy $\cH$ can be viewed as a rooted tree, whose root is  the
fragment $T$, and whose leaves are the singleton  fragments in
$\cH$.
A child of a non-singleton fragment $F\in \cH$ is a fragment
$F'\in\cH(F)$ such that no other fragment $F''\in \cH(F)$ satisfies
$F''\supset F'$. Note that the rooted tree induced by a hierarchy is
unique (if the children are unordered). To avoid confusion with tree
$T$, we use the name {\em hierarchy-tree} (or, sometimes even just {\em hierarchy})  for the above mentioned tree induced by a hierarchy. We associate  a {\em level}, denoted  $lev(F)$, with each fragment
$F\in \cH$. It is defined as the height of the node corresponding to
$F$ in the hierarchy-tree induced by $\cH$, i.e., the maximal number of fragments on a simple path in $\cH$ connecting $F$ to a singleton fragment. In particular, the level
of a singleton fragment is 0. The level of the fragment $T$ is
called the {\em height} of the hierarchy, and is denoted by $\ell$. Figure \ref{fig:HierarchyTree1} depicts a hierarchy $\cH$ of a tree $T$.

\begin{figure}
\centering
\includegraphics[scale=.8]{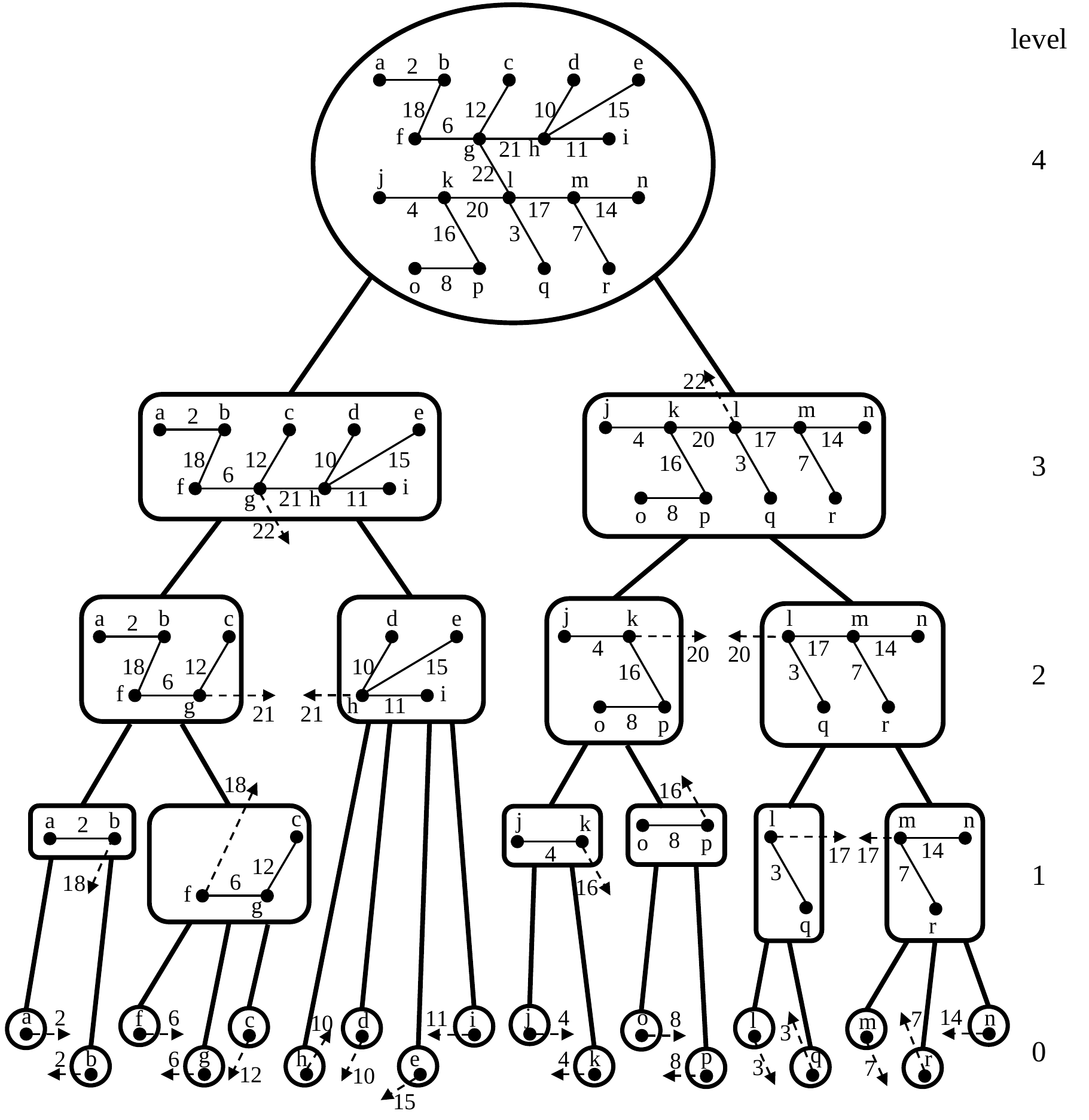}
\caption{A hierarchy $\cH$ of a tree $T$. The root node of $\cH$ represents $T$ (where non-tree edges are omitted). Each fragment that is not a leaf fragment is a parent, in the hierarchy, of the fragments that were merged to form it.
The broken arrow from each fragment is the outgoing edge of the fragment that is used to form a higher level (parent) fragment.}
\label{fig:HierarchyTree1}
\end{figure}

\begin{definition}
Given a hierarchy  $\cH$ for a spanning tree $T$, a function
$\chi:\cH\setminus\{T\} \longrightarrow E(T)$
is called a {\em candidate function} of  $\cal{H}$ if it satisfies $E(F)=\{ \chi(F') | F'\in \cH(F)   \}$ for every $F\in\cH$.
 (Less formally, $F$ is precisely the union of the candidate edges $\chi(F')$ of all fragments $F'$ of $\cH$ strictly contained in $F$).
\end{definition}
\bigskip
The proof of the following lemma is similar, e.g., to the proof of   \cite{ghs}.
\begin{lemma}\label{lem:construction}
Let $T$ be a spanning tree of a graph $G$. If there exists a candidate function $\chi$ for a hierarchy $\cH$ for~$T$,
such that for every $F\in \cH$, the candidate edge $\chi(F)$ is a minimum outgoing edge from $F$, then $T$ is an MST of $G$.
\end{lemma}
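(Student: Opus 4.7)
The plan is to invoke the cut property of minimum spanning trees. Since edge weights in $G$ are assumed distinct (as discussed at the start of Section~\ref{sec:preliminaries}), the MST of $G$ is unique, and the cut property says that for any nontrivial cut $(S,V(G)\setminus S)$, the uniquely minimum-weight edge crossing the cut lies in the MST. It will suffice to show that every edge of $T$ is the minimum-weight edge across some such cut.

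First, I will observe that every edge of $T$ is in the image of $\chi$. Indeed, applying the defining property of a candidate function to $F=T$ itself gives $E(T)=\{\chi(F'):F'\in\cH(T)\}=\{\chi(F'):F'\in\cH\setminus\{T\}\}$, because the hierarchy is rooted at $T$ so every proper fragment lies in $\cH(T)$. Thus an arbitrary edge $e\in E(T)$ may be written as $e=\chi(F)$ for some $F\in\cH\setminus\{T\}$.

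Next, for each such $F$, the hypothesis says that $\chi(F)$ is a minimum outgoing edge of $F$, i.e.\ a minimum-weight edge of $G$ with exactly one endpoint in $V(F)$. Consider the cut $(V(F),V(G)\setminus V(F))$; note this is a nontrivial cut since $F\neq T$ and $F$ is nonempty. By the cut property and the distinctness of edge weights, the unique minimum-weight edge crossing this cut belongs to the (unique) MST of $G$. This edge is precisely $e=\chi(F)$, so $e$ lies in the MST.

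Since this holds for every $e\in E(T)$, the edge set $E(T)$ is contained in the edge set of the MST. Because $T$ is a spanning tree of $G$, $|E(T)|=|V(G)|-1$, which equals the number of edges of the MST, and so $T$ coincides with the MST of $G$, as required. The only step requiring any real care is the first observation — verifying that one may realise every edge of $T$ as some $\chi(F)$ — but this is immediate from applying the candidate-function condition at the top of the hierarchy, so no genuine obstacle arises in this proof.
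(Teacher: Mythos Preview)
Your proof is correct and takes a genuinely different route from the paper's. The paper argues by induction on the level of a fragment in the hierarchy: singletons are trivially subtrees of an MST, and a level-$k$ fragment is obtained by joining its lower-level children via their minimum outgoing edges, so the ``safe edge'' theorem preserves the MST-subtree property at each step, until the top fragment $T$ is itself an MST. By contrast, you apply the candidate-function identity only once, at the root $T$, to see that every tree edge is some $\chi(F)$, and then invoke the cut property (with distinct weights) independently for each such $F$. Your argument is shorter and uses only the global structure of the hierarchy (that $T\in\cH$), whereas the paper's inductive proof mirrors the bottom-up merging of the GHS-style construction and makes explicit that every intermediate fragment is already a piece of the MST---a fact not needed for the lemma itself but natural in the algorithmic context of the paper.
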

 \begin{proof}
  We prove the claim that
  each fragment $F\in\cH$ is a subtree of an MST of $G$, by induction on the level $lev(F)$ of fragment $F$. Note that the
claim obviously holds for any fragment $F$ with $lev(F)=0$ since $F$ is a
 singleton fragment.

 Now consider a fragment $F$ with $lev(F)=k$ under the inductive assumption
 that the claim holds for every fragment $F'$ with $lev(F')<k$.   Let
 $F_1, F_2, \ldots , F_a$ be the child fragments of $F$ in $\cH$.
 Since for each $i\in [1, a]$, fragment $F_i$ satisfies $lev(F_i)<k$, the induction hypothesis implies that $F_i$
 is a subtree of the MST.  It also follows from the facts that
 $E(F)=\{ \chi(F') | F'\in \cH(F)   \}$
  and
   $E(F_i)=\{ \chi(F') | F'\in \cH(F_i)   \}$
   for each
  $i\in [1, a]$ that the fragment $F$ is obtained by connecting
  $F_1, F_2, \ldots , F_a$ with their minimum outgoing edges.
In the case that a fragment $F'$ is a fragment of an MST (as is the case here for $F_1, F_2, \ldots , F_a$, by the induction hypothesis), it is known that
  the union of $E(F')$ with the minimum outgoing edge of $F'$  is a fragment of the MST (the ``safe edge'' theorem).
   (See e.g.,
  \cite{even-book}.)
   Thus, fragment $F$, which is obtained by connecting fragments $F_1,F_2, \ldots, F_a$ with their minimum outgoing edges, is a subgraph of an
  MST.
  \end{proof}
\bigskip
\bigskip

Informally, suppose that we are given
 distributed structures that are claimed to be a tree $T$, a ``legal'' hierarchy~$\cH$ for the tree, and
 a ``legal'' candidate function for the hierarchy.
 The goal obtained in the current section is to
 verify the following properties of
 these structures.
 First, verify
 that this indeed is a hierarchy for $T$
of height $\ell\leq \lceil \log n\rceil$ and a candidate function $\chi$ for $\cH$.
 Moreover,
 verify that each node $v$ ``knows''
to which levels of fragments $v$ belongs and which of its neighbours in $T$
share the same given fragment. (Note that
this section does not guarantee that knowledge for neighbours in $G$ who are not
neighbours in $T$.)
In addition, each node is verified to ``know'' whether
it is adjacent to a candidate edge of any of the fragments it belongs to. Put more formally, this section establishes the following lemma.

\bigskip
\begin{lemma}\label{lem:simple-proof}
There exists a 1-proof labeling scheme with memory size $O(\log n)$ and construction time $O(n)$ that verifies the following:
 \begin{smallitemize}
 \item
$H(G)\equiv T$ is a spanning tree of $G$ rooted at some node $r$, and each node knowns $n$.

\item
 The cartesian product of the data-structures indeed implies a  hierarchy  $\cH$ for $T$ of height
   $\ell\le\lceil \log n\rceil$
  and a candidate function  $\chi$ for $\cH$. Furthermore, the data-structure at each node $v$ allows it to know,
\begin{smallitemize}

\item
Whether  $v$ belongs to a fragment $F_j(v)$ of level $j$ in $\cH$ for each $0\leq j\leq \ell$, and if so:
\begin{smallitemize}
\item
 Whether $v$ is the root of $F_j(v)$. 
\item
Whether $v$ is an endpoint of the
(unique) candidate edge of $F_j(v)$, and if so, which of the edges adjacent to $v$ is the candidate edge.
\end{smallitemize}
\item
Given the data-structure of a node $u$ which is a neighbour of $v$  in $G$, i.e., $(v,u)\in E(G)$, node $v$    can find out whether they are neighbours in $T$ as well, i.e., whether $(u,v)\in E(T)$, and if so, for  each $1\leq j\leq \ell$, whether $u$ belongs to $F_j(v)$.
\end{smallitemize}
\end{smallitemize}
 \end{lemma}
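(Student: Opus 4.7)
The plan is to build the scheme on top of the 1-proof labeling schemes from Examples $\SP$ and $\NK$, which we run as subroutines. Their combination already gives us, using $O(\log n)$ bits per node and $O(n)$ construction time, local verification that $H(G)\equiv T$ is a spanning tree of $G$ rooted at $r$, that each node knows its parent and children in $T$, and that each node knows $n$. What remains is to adjoin $O(\log n)$ extra bits per node encoding a hierarchy and a candidate function, and to describe one-round local checks that force these bits to represent a legitimate hierarchy satisfying all bullets of the lemma.

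For the marker, I would run algorithm $\ALG$ from Theorem \ref{thm:ALG} and take $\cH$ to be the collection of active fragments produced by $\ALG$, together with the singleton fragments and $T$ itself. Construction runs in $O(n)$ time and $O(\log n)$ memory, and by Lemma \ref{lem:disappear} every active fragment has level at most $\lceil\log n\rceil$, so the height $\ell$ is bounded as required. Each node $v$ would store two Boolean strings $R(v), S(v)$ of length $\ell+1$: $R(v)[j]=1$ iff $v$ is the root of the level-$j$ fragment containing $v$, and $S(v)[j]=1$ iff $v$ is incident to the candidate edge of the level-$j$ fragment containing it. The fragment $F_j(v)$ is then implicit---the maximal connected subtree of $T$ containing $v$ whose only node with $R(\cdot)[j]=1$ is its top. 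To additionally tell $v$ \emph{which} of its incident edges is the candidate edge of $F_j(v)$ whenever $S(v)[j]=1$, I would apply the distribution trick from Section \ref{sec:intuition}: for each level $j$ in which the candidate edge of $F_j(v)$ is a tree-edge $(v,c)$ to some tree-child $c$ of $v$, record the corresponding bit at $c$ itself; the remaining case---the candidate edge of $F_j(v)$ being $v$'s own parent-edge---is captured by a single bit per level at $v$. Since each node $c$ contributes at most $O(\log n)$ such bits for its unique parent-edge, and $v$'s $R,S$ strings have length $O(\log n)$, total storage per node stays $O(\log n)$.

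The verifier acts in one round, inspecting at each node $v$ only $v$'s label and those of its $G$-neighbours. It accepts iff: (i) the $\SP$ and $\NK$ subroutines accept and $|R(v)|,|S(v)|\leq \lceil\log n\rceil+1$; (ii) the $R$-bits are monotone-consistent between $v$ and its tree-parent and tree-children, enforcing the laminar nesting $F_{j-1}(v)\subseteq F_j(v)$ and the existence of a unique root per fragment; (iii) for each level $j$ in which $v$ is marked as the root of a fragment $F$, the candidate-edge bits inherited from $v$ and its tree-children, restricted to tree-edges whose lower endpoint lies in $F$, describe exactly the edges of $E(F)$---this is the local transcription of $E(F)=\{\chi(F'):F'\in\cH(F)\}$; and (iv) for each non-top fragment, exactly one candidate edge is marked, which reduces at each node to a per-level count among $v$'s tree-neighbours. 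For the last bullet of the lemma, when $v$ sees a $G$-neighbour $u$, it uses the spanning-tree labels (Example $\SP$) to decide whether $(v,u)\in E(T)$, and if so, determines whether $u\in F_j(v)$ by comparing the relevant bits of $R(u)$ and $R(v)$ along their established parent-child relationship.

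The main obstacle I anticipate is formulating the checks (ii)--(iv) so that they collectively force \emph{some} legitimate hierarchy and \emph{some} candidate function, even if an adversary is free to choose arbitrary bit patterns over an arbitrary tree. The key observation making this possible in one round is that such a pair is entirely determined by its root-markings and candidate-edge-markings, and each required structural property---laminarity, uniqueness of roots, uniqueness and coverage of candidate edges inside each parent fragment---is a purely local constraint between a node and its tree-neighbours, considered one level at a time. Since every check uses only $O(\log n)$ bits from each of $v$'s $O(\Delta)$ neighbours (but only $O(\log n)$ bits of \emph{memory} at $v$ itself to carry out the verification, as required), the scheme satisfies the claimed $O(\log n)$ memory bound and one-round detection time, completing the plan.
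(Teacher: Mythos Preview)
Your overall plan mirrors the paper's approach closely: run $\ALG$ for the marker, store per-level root bits and endpoint bits, push the ``which edge'' information to tree-children, and verify with the $\SP$/$\NK$ subroutines plus local consistency checks. However, two of your verifier checks are not actually one-round local as stated, and this is precisely where the paper has to work harder.

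First, check (iv): ``exactly one candidate edge is marked'' in a fragment cannot be reduced to a per-level count among a node's tree-neighbours. A fragment may be large, and a single node sees only its immediate tree-neighbours, not the whole fragment. The paper verifies this property (its condition EPS1) by attaching \emph{additional} aggregation labels at every node---essentially a per-level subtree OR of the endpoint bits, verified bottom-up in the same style as Example~$\NK$---so that the fragment root can check there is exactly one endpoint. Without such auxiliary labels your verifier cannot enforce uniqueness.

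Second, check (iii) as written is ill-posed: you ask the root $v$ of a level-$j$ fragment $F$ to verify, from its tree-children's bits alone, that the marked candidate edges ``describe exactly the edges of $E(F)$''. But $E(F)$ is global to $F$ and invisible to $v$ in one round. The paper does not verify $E(F)=\{\chi(F'):F'\in\cH(F)\}$ at the root; instead it distributes the check across \emph{every} node via local rules relating the endpoint direction (``up'' vs.\ ``down'') to the $\ROOTS$ bits of the node and its parent (its EPS3--EPS5). These per-edge constraints, together with the aggregated uniqueness above, are what force the candidate-function identity.

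A smaller issue: your binary $R$ string does not encode the case that $v$ belongs to no level-$j$ fragment at all, which can happen in the $\ALG$ hierarchy (a node's active-fragment levels may skip values). The paper uses a ternary symbol~$*$ in its $\ROOTS$ string for this, and the verifier rules (RS5 in particular) rely on it. With these three fixes---ternary root strings, per-node up/down consistency rules in place of your (iii), and auxiliary aggregation labels for your (iv)---your plan becomes essentially the paper's proof.
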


\subsection{Hierarchy $\cH_{\cM}$ and candidate function $\chi_{\cM}$}
On a correct instance, i.e., when $T$ is indeed an MST, the marker $\cM$ first  constructs a particular hierarchy $\cH_{\cM}$ over $T$ and a candidate function $\chi_{\cM}$ for that hierarchy.
Hierarchy  $\cH_{\cM}$ and candidate function $\chi_{\cM}$ are designed so that indeed each candidate of a fragment  is a minimum outgoing edge from that fragment. The marker then encodes
hierarchy  $\cH_{\cM}$ and candidate function $\chi_{\cM}$ in one designated part of the labels using $O(\log n)$ bits per node.
Note, however, that these bits of information  may be corrupted by the adversary. We will therefore need to employ another procedure that verifies that indeed a hierarchy
 $\cH$ and a candidate function $\chi$ are represented by the cartesian product of the encodings of all nodes.
By Lemma \ref{lem:construction}, it is not necessary that
the verifier checks that $\cH$ is, in fact, the particular hierarchy $\cH_{\cM}$ constructed by the marker, or that  the candidate function $\chi$ is $\chi_{\cM}$.
However, as is clear from the same lemma, we do need to show that  $\cH$ and candidate function $\chi$ satisfy that indeed each candidate of a fragment  is a minimum outgoing edge from that fragment.
This task is the main technical difficulty of the paper, and is left for the following  sections.

The  hierarchy $\cH_{\cM}$ and Candidate function $\chi_{\cM}$ built by the marker algorithm are  based on  $\ALG$, the new MST construction algorithm
described in Section~\ref{sec:mst-construction}. Since we assume that the MST is unique, Algorithm $\ALG$ will in fact construct the given MST.
(Recall that we describe here the labels assigned by the marker to a correct instance, where the given subgraph $T$ is indeed an MST.)
The hierarchy and candidate function we define for $T$  follow the merging of {\em active} fragments in algorithm
 $\ALG$. More precisely, the nodes in $\cH_{\cM}$ are the active fragments defined during the execution of $\ALG$. Recall from Section \ref{sec:mst-construction}, that an active fragment $F$ joins some fragment $H$ of $T$, through its minimal outgoing edge $e$.
(It is possible that at the time $F$ joins $H$, $H$ itself was an active fragment that joined $F$ through its own minimal outgoing edge that is also $e$.)
Note that with time, some other fragments join the resulted connected component, until, at some point, the resulted connected component becomes an active fragment $F'$. In the hierarchy tree $\cH_{\cM}$, fragment $F$ is defined as the child of $F'$, and the candidate edge  of $F$ is $e$, i.e., $\chi(F)=e$.

As proved in Lemma~\ref{lem:disappear}, after performing the algorithm for level $i$, the size of every fragment is at least $2^{i}$.
Thus, in particular, the  height of the hierarchy $\cH$ is at most $\lceil\log n\rceil$.
 The  candidate function $\chi_{\cM}$ chosen by the marker for
 $\cH_{\cM}$ is defined by the minimum outgoing edges selected by the algorithm, i.e., for each $F\in\cH_{\cM}$, the candidate edge $\chi(F)$ is the selected edge of $F$.
 Thus, under $\chi_{\cM}$, each candidate of a fragment is, actually,
a minimum outgoing edge.

\subsection{Representing  a hierarchy distributively and verifying it locally}
\label{sec:roots}
\label{app:hierarchy}

\paragraph{Representing a hierarchy:}
 Let
$\ell \le \lceil\log n\rceil$.
Given a hierarchy of fragments $\cH$ of height $\ell$ over the rooted  tree $T=H(G)$, we now describe how we represent it in a distributed manner.
Each node $v$ keeps a string named $\ROOTS(v)$ of length $\ell+1$, where each entry in that string is either ``1'', ``0'', or ``*''. To be consistent with the levels, we enumerate the entries of each string from left to right, starting at position 0, and ending at position $\ell$.
 Fix $j\in[0,\ell]$.
Informally, the $i$'th entry of $\ROOTS(v)$, namely, $\ROOTS_i(v)$, is interpreted as follows.
\begin{smallitemize}
\item $\ROOTS_i(v)=1$ indicates that $v$ is the root of the level $i$ fragment it belongs to.
\item $\ROOTS_i(v)=0$ indicates that $v$ is not the root of the level $i$ fragment it belongs to.
\item $\ROOTS_i(v)=*$ indicates that there is no level $i$ fragment that $v$ belongs to.
\end{smallitemize}
See Table \ref{tab:table1} for an example of $\ROOTS$ strings of nodes corresponding to
Figure \ref{fig:HierarchyTree1}.

\begin{table}
\begin{tabular}{|c|ccccc|p{1cm}|c|ccccc|}
\cline{1-6}
\cline{8-13}
\texttt{Roots}  & 0 & 1 & 2 & 3 & 4 & &
\texttt{EndP}   & 0 & 1 & 2 & 3 & 4\\
\cline{1-6}
\cline{8-13}
a & 1 & 0 & 0 & 0 & 0 & & a & up & none & none & none & none\\
b & 1 & 1 & 0 & 0 & 0 & & b & down & up & none & none & none\\
c & 1 & 0 & 0 & 0 & 0 & & c & up & none & none & none & none\\
d & 1 & * & 0 & 0 & 0 & & d & up & * & none & none & none\\
e & 1 & * & 0 & 0 & 0 & & e & up & * & none & none & none\\
f & 1 & 0 & 0 & 0 & 0 & & f & up & down & none & none & none\\
g & 1 & 1 & 1 & 1 & 0 & & g & down & none & down & up & none \\
h & 1 & * & 1 & 0 & 0 & & h & down & * & up & none & none\\
i & 1 & * & 0 & 0 & 0 & & i & up & * & none & none & none \\
j & 1 & 0 & 0 & 0 & 0 & & j & up & none & none & none & none\\
k & 1 & 1 & 1 & 0 & 0 & & k & down & down & up & none & none\\
l & 1 & 1 & 1 & 1 & 1 & & l & down & down & down & down & none\\
m & 1 & 1 & 0 & 0 & 0 & & m & down & up & none & none & none\\
n & 1 & 0 & 0 & 0 & 0 & & n & up & none & none & none & none\\
o & 1 & 0 & 0 & 0 & 0 & & o & up & none & none & none & none\\
p & 1 & 1 & 0 & 0 & 0 & & p & down & up & none & none & none\\
q & 1 & 0 & 0 & 0 & 0 & & q & up & none & none & none & none\\
r & 1 & 0 & 0 & 0 & 0 & & r & up & none & none & none & none\\
\cline{1-6}
\cline{8-13}
\end{tabular}

\vspace{4mm}

\begin{tabular}{|c|ccccc|p{1cm}|c|ccccc|}
\cline{1-6}
\cline{8-13}
\texttt{Parents}  & 0 & 1 & 2 & 3 & 4 & &
\texttt{Or-EndP}  & 0 & 1 & 2 & 3 & 4\\
\cline{1-6}
\cline{8-13}
a & 1 & 0 & 0 & 0 & 0 & & a & 1 & 0 & 0 & 0 & 0\\
b & 0 & 1 & 0 & 0 & 0 & & b & 1 & 1 & 0 & 0 & 0\\
c & 0 & 0 & 0 & 0 & 0 & & c & 1 & 0 & 0 & 0 & 0\\
d & 1 & 0 & 0 & 0 & 0 & & d & 1 & 0 & 0 & 0 & 0\\
e & 0 & 0 & 0 & 0 & 0 & & e & 1 & 0 & 0 & 0 & 0\\
f & 1 & 0 & 0 & 0 & 0 & & f & 1 & 1 & 0 & 0 & 0\\
g & 0 & 0 & 0 & 1 & 0 & & g & 1 & 1 & 1 & 1 & 0\\
h & 0 & 0 & 1 & 0 & 0 & & h & 1 & 0 & 1 & 0 & 0\\
i & 0 & 0 & 0 & 0 & 0 & & i & 1 & 0 & 0 & 0 & 0\\
j & 1 & 0 & 0 & 0 & 0 & & j & 1 & 0 & 0 & 0 & 0\\
k & 0 & 0 & 1 & 0 & 0 & & k & 1 & 1 & 1 & 0 & 0\\
l & 0 & 0 & 0 & 0 & 0 & & l & 1 & 1 & 1 & 1 & 0\\
m & 0 & 1 & 0 & 0 & 0 & & m & 1 & 1 & 0 & 0 & 0\\
n & 0 & 0 & 0 & 0 & 0 & & n & 1 & 0 & 0 & 0 & 0\\
o & 1 & 0 & 0 & 0 & 0 & & o & 1 & 0 & 0 & 0 & 0\\
p & 0 & 1 & 0 & 0 & 0 & & p & 1 & 1 & 0 & 0 & 0 \\
q & 1 & 0 & 0 & 0 & 0 & & q & 1 & 0 & 0 & 0 & 0\\
r & 1 & 0 & 0 & 0 & 0 & & r & 1 & 0 & 0 & 0 & 0\\
\cline{1-6}
\cline{8-13}
\end{tabular}
\caption{\texttt{Roots, EndP, Parents} and \texttt{Or-EndP} for Figure \ref{fig:HierarchyTree1}.}
\label{tab:table1}
\end{table}

\paragraph{Verifying a hierarchy:}
Observe, the  $\ROOTS$ strings assigned for a correct instance satisfy the following.\\

\noindent{\bf The $\ROOTS$ strings (RS) conditions:}
 \begin{smallitemize}
\item {\bf (RS0)} The prefix of the $\ROOTS$ string at every node is in [1,*]$^*$ and its suffix is in [0,*]$^*$,\\
(*because each node is a root of a level 0 fragment and continues being a root in its fragment  until some level when it stops (if it does stop); when the node stops being a root, it never becomes a root again*)
\item {\bf (RS1)} the length of each  $\ROOTS$ string is $\ell+1$, \\
(*because there cannot be more than $\ell+1$ levels *)
\item {\bf (RS2)} the $\ROOTS$ string of the root $r$ of $T$ contains only ``1''s and ``*''s,  and its  $\ell$'th entry  is ``1'',\\
(*because a zero in the $i$th position would have meant that $r$ is not the root of its fragment of level $i$; the second part follows from the fact that the whole tree is a fragment of level $\ell$ and $r$ is its root *)
\item {\bf (RS3)} the first entry (at position 0) of every $\ROOTS$ string is  ``1'',\\
 (*because every node $v$ is the root of a singleton fragment
  of level 0 containing only node $v$ *),
  \item
{\bf (RS4)} the $\ell$'th entry of every non-root node is ``0'',\\
 (*because only $r$ is the root of a fragment of level $\ell$, since that fragment is the whole tree *)
\item
{\bf (RS5)} if the $j$'th entry of $\ROOTS(v)$ is ``0'' for some node $v$ and $j\in[0,\ell]$, then the $j$'th entry of the $\ROOTS$ strings
of $v$'s parent in $T$ is not ``*''.\\
(*because  node $v$ belongs to a fragment $F$ of level $j$, but is not $F$'s root; hence, $v$'s parent belongs to $F$ of level $j$ too *)

\end{smallitemize}
It is easy to see that for any   assignment of $\ROOTS$ strings  $\cI$  obeying rules RS1--RS5
there exists a unique hierarchy  whose distributed representation is $\cI$.
 Hence, we say that an assignment of $\ROOTS$ strings to the nodes of $T$ is {\em legal} if
 the strings obey the
five
 $\ROOTS$ strings conditions above, namely
  RS1--RS5.
 For a given legal assignment of $\ROOTS$ strings
 $\cI$, we refer to its induced hierarchy as the {\em $\ROOTS$-hierarchy} of $\cI$.
  Recall that at this point, we may assume that each node $v$ knows the value of  $n$, 
and that each node knows whether it is the root of $T$. Hence,
verifying that a given assignment of  $\ROOTS$ strings is a legal one can be done locally, by letting each node look at its own string
and the string of its parent only.

\paragraph{Identifying tree-neighbours in the same fragment:}
Obviously, for correct instances, the marker produces a legal assignment of $\ROOTS$ strings.
For a general instance, if the verifier at some node finds that
the assignment of $\ROOTS$ is not legal then it raises an alarm. Thus, (if no node raises an alarm), we may assume that hierarchy {\em $\ROOTS$-hierarchy}
exists, and that each node knows  (by looking at its own label and the labels of its neighbours in the tree $T)$, for every level $0\leq j\leq \ell$,
\begin{enumerate}
\item  whether it belongs to a fragment
$F_j$ of level $j$, and if so:
\item  which of its  neighbours in $T$ belongs to $F_j$.
\end{enumerate}

\subsection{Representing and verifying a candidate function for the $\ROOTS$-hierarchy}\label{sec:candidate}
Having discussed the proof labeling for the hierarchy, we now describe the proof labeling scheme for the candidate function.
Consider now a correct instance $G$ and the hierarchy $\cH_{\cM}$ produced by the
marker algorithm.  Recall, the  candidate function $\chi_{\cM}$ is given by the selected outgoing edges, which are precisely the minimum outgoing edges of the corresponding fragments, as identified by the construction algorithm $\ALG$. We would like to represent  this candidate function $\chi_{\cM}$ distributively, and to verify that this representation indeed forms a candidate function. Moreover, we would make sure that each node $v$ be able to know, for each fragment $F$ containing it, whether it is an endpoint
of the selected edge of $F$, and if so, which of its edges is the selected edge.

\paragraph{Representing a candidate function:}

Given a correct instance, and its corresponding legal assignment of $\ROOTS$ strings, we augment it by adding, for each node $v$, an additional string of  $\ell+1$ entries named $\EndP(v)$.
Intuitively, $\EndP(v)$ is used by the marker algorithm to mark the levels of the fragments for which $v$ is the endpoint of the minimum outgoing edge. Moreover, in a sense, $\EndP(v)$ also is a part of the marking of the specific edge of $v$ that is the minimum outgoing edge in that level (in the case that $v$ is indeed the endpoint). Let us now give the specific definition of that marking.

 Each entry  in $\EndP(v)$ is one of four symbols, namely,
``up'', ``down'', ``none'' and ``*''.
The entries of  $\EndP(v)$
are defined as follows.
Fix an integer $j\in [0,\ell]$ and a node $v$.
If $v$ does not belong to a fragment of level $j$ in $\cH_{\cM}$, then the $j$'th entry in $\EndP(v)$ is ``*''.
Consider now $j\in [0,\ell]$ such that $v$  does belong to a fragment $F\in \cH_{\cM}$ of level $j$.
If $v$ is not an endpoint of the candidate $\chi_{\cM}(F)$ of $F$,
then the $j$'th entry of  $\EndP(v)$ is ``none''. Otherwise,
node $v$ is an endpoint of $\chi_{\cM}(F)$, i.e.,
$\chi_{\cM}(F)=(v,u)$  (for some $u$ that is not in $F)$.
Consider two cases. If $u$ is $v$'s parent in $T$ then the $j$'th entry of  $\EndP(v)$ is set to ``up''.
If, on the other hand, $u$ is a child of $v$ in $T$, then the $j$'th entry of  $\EndP(v)$ is set to ``down''. See Table 1 for an example of $\EndP$ strings of nodes corresponding to
Figure \ref{fig:HierarchyTree1}.

Consider now  a node $v$ that belongs to a level $j$ fragment $F\in \cH_{\cM}$. By inspecting its own label, node $v$  can find out
whether it is an endpoint of a candidate of $F$ (recall, from the previous subsection, that it also knows whether or not it belongs to a level $j$ fragment).
Moreover, in this case, we would like
$v$ to actually be able to identify in one time unit, which of its incident (tree) edges is the candidate.
Obviously, if  the $j$'th entry in $\EndP(v)$ is ``up'', then
the candidate $e$ is the edge leading from $v$ to its parent in $T$.
Intuitively,
in the case that the entry is ``down'', we would like to store this information in $v$'s children to save space in $v$ (since $v$ may be the endpoint of minimum outgoing edges for several fragments, of several levels, and may not have enough space to represent all of them).
 Hence, we attach to each node~$x$ another
string called $\PARENTS(x)$, composed also of $\ell+1$ bits.
For $j\in[0,\ell]$, the $j$'th bit in $\PARENTS(x)$ is ``1''
 if and only if $(y,x)$ is the candidate of the level $j$ fragment that contains $y$ (if one exists), where $y$ is the parent of~$x$.
See Table 1 for an example of $\PARENTS $ strings of nodes corresponding to
Figure \ref{fig:HierarchyTree1}.
Now, to  identify~$u$, node $v$ needs only
to inspect the $\PARENTS$ strings of its children.
In either of the above cases for the $\EndP(v)$ entry (``up'' or ``down''), we call $e$ the  {\em induced candidate} of $F$.

\paragraph{Verifying a candidate function:} Given a legal assignment of $\ROOTS$ strings, we say that  assignments of $\EndP$ and  $\PARENTS$ strings are {\em legal} if the following conditions hold:

\begin{smallitemize}
\item {\bf (EPS0)}
If the $j$'th entry of $\PARENTS(v)$ is ``1''
 and $u$ is the parent of $v$, then the $j$'th entry of $\EndP(u)$ is ``down'', \\
 (* because if $v$ indicates the minimum outgoing edge of $u$'s fragment (of level $j$) leads from $u$ to $v$, then $v$'s parent $u$ indicates this edge leads to one of $u$'s children *)
\item {\bf (EPS1)}
for each fragment $F$ of level $0\leq j<\ell$ in the $\ROOTS$-hierarchy, there exists precisely one node $v\in F$ whose $j$'th entry in $\EndP(v)$
is either ``up'' or ``down'',\\
(*because  only one node $v$ in each fragment $F$ of level $j$ is the endpoint of the outgoing edge of $F$ *)
\item
{\bf (EPS2)} for each node $v$, if the $j$'th entry in $\EndP(v)$ string is ``down'' then there exists
precisely one child $u$ of $v$ such that the  $j$'th entry in  $\PARENTS(u)$  is ``1'',\\
(*because the j'th entry in $\EndP(v)$ being ``down'' indicates its minimum outgoing edge leads to {\em one} of $v$'s children (only {\em one}, since there is only one minimum outgoing edge of the fragment $F$ of level $j$ containing~$v$); to remember which child, we mark this child $u$ by $\PARENTS(u)=1$ *)
 \item {\bf (EPS3)}
 for each node $v$, and for each $0\leq j< \ell$, if the $j$'th entry in $\EndP(v)$ string is ``up'' then:
\begin{enumerate}
\item
 the  $j$'th entry
of $v$'s $\ROOTS$-string is  ``1'',\\
(*because node $v$ belongs to a different fragment $F_v$ of level $j$ than the level $j$ fragment of $v$'s parent; hence, $v$ is the highest (closest to the root of the whole tree) in $F_v$, that is, $v$ is $F_v$'s root *)
\item\label{item-contain}
 for every $i>j$, the  $i$'th entry
of $v$'s $\ROOTS$-string is not ``1'',\\
(*because fragment $F_v$ of $v$ in level $j$ merges with the fragment (of level $j$) of $v$'s parent; hence, $v$ is not the highest in its fragments of levels $i>j$*)
\end{enumerate}
\item {\bf (EPS4)}
 if  the  $j$'th entry in  $\PARENTS(v)$  is ``1''  then:
\begin{enumerate}
\item the  $j$'th entry of $v$'s $\ROOTS$-string is  not ``0'' ,\\
(*because node $v$ is not in the fragments of level $j$ of $v$'s parent (see EPS2); hence, either $v$
is the root of its fragment of level $j$ (see EPS3, part 1), or $v$ does not belong to a fragment of level $j$ *)
\item  for every $i>j$, the  $i$'th entry
of $v$'s $\ROOTS$-string is not ``1'',\\
(*See EPS3 part 2 *)
\end{enumerate}
\item\label{item-every} {\bf (EPS5)}
for every non-root node $v$, there exists an index integer $j\in[0,\ell]$, such that either the $j$'th entry in $\PARENTS(v)$ is 1
or the  $j$'th entry in $\EndP(v)$ is ``up''.\\
(*because  every node is the root of a fragment of level 0; at some level, $v$'s fragment merges with the fragment of $v$'s parent *)
\end{smallitemize}

\begin{lemma}\label{lem:correct-candidate}
Consider a $\ROOTS$-hierarchy $\cH$ given by a legal assignment of $\ROOTS$ strings.
The conditions  EPS1--EPS5  above imply that
 legal assignments of $\EndP$ and $\PARENTS$ strings
(with respect to  $\cH$) induce a candidate function
$\chi:\cH\setminus\{T\} \longrightarrow E(T)$.
\end{lemma}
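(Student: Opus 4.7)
The plan is first to define $\chi$ unambiguously using EPS1 and EPS2, and then to establish the defining equality $E(F)=\{\chi(F'):F'\in\cH(F)\}$ for every $F\in\cH$ by induction on $\text{lev}(F)$. For the definition I fix $F\in\cH\setminus\{T\}$ at level~$j$; EPS1 supplies a unique pivot $v^{\ast}\in F$ with $\EndP_j(v^{\ast})\in\{\text{``up''},\text{``down''}\}$, and I set $\chi(F):=(v^{\ast},u)$ with $u$ the $T$-parent of $v^{\ast}$ in the ``up'' case, while in the ``down'' case EPS2 produces a unique $T$-child $w$ of $v^{\ast}$ with $\PARENTS_j(w)=1$ and I set $\chi(F):=(v^{\ast},w)$. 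In either case $\chi(F)\in E(T)$, so $\chi$ is a well-defined function.

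The base of the induction is a singleton $F$, where $E(F)=\emptyset=\{\chi(F'):F'\in\cH(F)\}$ trivially. For the inductive step on a non-singleton $F$, I prove the two inclusions separately. For $E(F)\subseteq\{\chi(F'):F'\in\cH(F)\}$, I pick any non-root node $v$ of $F$; the edge $e_v=(v,u)$ to its $T$-parent lies in $E(F)$, and EPS5 yields a level $j_v$ at which either $\EndP_{j_v}(v)=\text{``up''}$ or $\PARENTS_{j_v}(v)=1$. In the first case $e_v=\chi(F')$ for $F'$ the level-$j_v$ fragment containing $v$; in the second case $e_v=\chi(F')$ for $F'$ the level-$j_v$ fragment containing $u$. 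Either way $F'\subsetneq F$, because $V(F')$ contains only one of the two endpoints of $e_v$ while $V(F)$ contains both, and hence $F'\in\cH(F)$.

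For the reverse inclusion $\{\chi(F'):F'\in\cH(F)\}\subseteq E(F)$, take any $F'\in\cH(F)$: one endpoint of $\chi(F')$, namely the pivot $v^{\ast}$, sits in $V(F')\subseteq V(F)$ automatically, so the task is to place the other endpoint in $V(F)$. In the ``up'' case EPS3 part~2 gives $\ROOTS_i(v^{\ast})\ne 1$ for every $i>\text{lev}(F')$, so at $i=\text{lev}(F)$, combining $v^{\ast}\in V(F)$ with laminarity of $\cH$, it must be that $\ROOTS_{\text{lev}(F)}(v^{\ast})=0$, meaning $v^{\ast}$ is not the root of $F$ and therefore the $T$-parent of $v^{\ast}$ lies in $V(F)$. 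In the ``down'' case EPS0 pins down the $\EndP$ entry at the $T$-parent of $w$, while EPS4 part~2 controls $\ROOTS_i(w)$ for $i>\text{lev}(F')$; together with the laminarity enforced by RS0--RS5 this forces $w$ to enter the same higher-level fragment as $v^{\ast}$ at the very next step of the hierarchy chain, and in particular into $V(F)$.

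The main obstacle I anticipate is precisely the ``down'' subcase of the reverse inclusion: showing that the $T$-child $w$ selected by $\PARENTS_j(w)=1$ really lives in the parent fragment in the hierarchy. This does not follow from any single EPS in isolation; it requires combining EPS0, EPS4, and the $\ROOTS$-string constraints RS0--RS5 to rule out the possibility that $w$ slips into a part of the hierarchy disjoint from $F$. The other pieces are comparatively routine once $\chi$ has been set up.
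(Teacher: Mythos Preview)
Your organisation differs from the paper's: the paper establishes three global facts (its Equations (1)--(3)) and combines them, whereas you argue the two inclusions directly for each fixed $F$. The underlying ingredients are the same, however: EPS1 and EPS2 define $\chi$; EPS5 together with the \emph{first} items of EPS3 and EPS4 (which you invoke only implicitly when asserting that $V(F')$ contains just one endpoint of $e_v$) yields $E(F)\subseteq\{\chi(F'):F'\in\cH(F)\}$; and the \emph{second} items of EPS3 and EPS4 are meant to yield the reverse inclusion. Your ``induction on $lev(F)$'' is purely cosmetic, since the inductive hypothesis is never used.

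The gap you yourself flag in the ``down'' subcase is genuine, and the fix you sketch does not close it. Condition EPS4(2) only gives $\ROOTS_i(w)\ne 1$ for $i>j$, which still permits $\ROOTS_{lev(F)}(w)=\mbox{``*''}$, and nothing in RS0--RS5 or EPS0 excludes this. Concretely: take $T$ the star with root $a$ and leaves $b,c$, and hierarchy $\cH=\{\{a\},\{b\},\{c\},\{a,b\},T\}$, so that $\ROOTS(c)=(1,*,0)$; set $\EndP_0(a)=\mbox{``down''}$, $\PARENTS_0(c)=1$, $\PARENTS_0(b)=0$, and fill in the remaining entries (they are forced). All of RS0--RS5 and EPS0--EPS5 hold, yet $\chi(\{a\})=(a,c)\notin E(\{a,b\})$, so the induced $\chi$ is not a candidate function. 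Hence, with the conditions exactly as written, this step cannot be completed; the intended reading of EPS3(2) and EPS4(2)---consistent with the paper's own parenthetical explanations---is that $\ROOTS_i=0$ rather than merely $\ne 1$, and under that reading your ``up''-case argument transplants to the child $w$ verbatim. The paper's proof simply asserts its Equation~(1) from the second items of EPS3 and EPS4, so it passes over the same point without comment.
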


\begin{proof}
Condition EPS1 implies that for each fragment $F\neq T$, there is precisely one node ``suspected'' as an endpoint of the induced candidate of $F$.
Condition EPS2 together with the previous one, implies that there is precisely one induced candidate edge $\chi(F)$ for each fragment $F\neq T$.
That is, these two conditions induce a function  $\chi:\cH\setminus\{T\} \longrightarrow E(T)$. Our goal now is to show that $\chi$ is, in fact, a candidate function. That is, we need to show that
 for every fragment $F\in\cH$,
we have $E(F)=\{ \chi(F') | F'\in \cH(F)   \}$.
(Recall, $\cH(F)$ denotes the set of fragments in $\cH$ which are strictly contained in $F$.)

It follows by the second items  in Conditions EPS3 and EPS4, that
for every fragment $F\in\cH$,
we have
\begin{equation}
E(F)\supseteq \{ \chi(F') | F'\in \cH(F)   \}
\end{equation}
In particular, we have
$E(T)\supseteq \{ \chi(F') | F'\in \cH(T)   \}$.
Now,
by  Condition EPS5, we get that each edge of $T$ is an induced candidate of some fragment.
That is, we have:
\begin{equation}
E(T)=\{ \chi(F') | F'  \in \cH(T)   \}
\end{equation}
The first items in Conditions EPS3 and EPS4 imply that for every fragment $F\in \cH\setminus\{T\}$, the edge $\chi(F)$ is outgoing from $F$.
This fact, together with part
$(2)$ in the definition of a hierarchy, implies that for every fragment $F\in\cH$,
\begin{equation}
\{ \chi(F') | F' \not \in \cH(F)   \} \bigcap E(F) ~=~\emptyset.
\end{equation}
Equations (1), (2), and (3)  imply that for every fragment $F\in\cH$, $\{ \chi(F') | F'\in \cH(F)   \}$.
In other words, $\chi$ is a candidate function for $\cH$, as desired.
\end{proof}

\begin{Comment}
Condition EPS0
is not required in order to prove the above lemma.
  If the labels were assigned by our MST construction algorithm, condition EPS0 holds too. Even though adding the condition seems redundant, we decided to add it because
  we believe  it makes the reading more intuitive.
\end{Comment}
Now, to verify that  assignments of $\EndP$ and $\PARENTS$ strings are legal with respect to a given legal assignment of $\ROOTS$ strings,
 we need to verify the five conditions above. Conditions EPS2--EPS5  can be  verified easily,
in 1 unit of time, while the first condition EPS1 needs additional information at each node to be verified in 1 unit of time. Specifically, verifying the rule amounts to verifying that exactly one of the nodes in a fragment of level $i$ has its $i$'th position in $\EndP$ equal to 1. This is easy to do in a scheme that is very similar to Example $\NK$ in Section \ref{sub:examples}.
Hence, we omit this simple description (nevertheless, it is demonstrated
in Table \ref{tab:table1} in the example of the $\AND$ strings of nodes corresponding to
Figure \ref{fig:HierarchyTree1}).

\subsection{The distributed marker algorithm} \label{sec:labels-construction}
\label{app:marker}
A natural method for assigning the labels of the 1-proof labeling scheme described above (composed of the representation of $\cH_{\cM}$ and $\chi_{\cM}$, and the strings $\ROOTS$, $\PARENTS$, $\EndP$, and $\AND$), is to follow the construction algorithm of the MST, namely $\ALG$
(see Section \ref{sec:mst-construction}), which, in particular, constructs the hierarchy $\cH_{\cM}$ and the candidate function $\chi_{\cM}$.
Recall that the complexity of $\ALG$ is $O(n)$ time and $O(\log n)$ bits of memory per node.

Essentially, assigning the labels is performed by adding some set of actions to $\ALG$. These actions do not change the values of any of the variables of the original algorithm. Also, we do not change the algorithm's flow of control, except for adding these actions. Since each action is just a new assignment to a new variable (of logarithmic size), the addition of these actions cannot violate the correctness of $\ALG$, nor change its time and memory complexities (except by a constant factor).
We note that adding these actions on top of $\ALG$  is not complicated, and can be realized using standard techniques. Hence, we omit it here.
Hence, we obtain the following.

\begin{lemma}\label{lem:lineal-marker}
There exists a distributed marker algorithm assigning the labels of the 1-proof labeling scheme described in Section~\ref{sec:section5},  running  in $O(n)$ time and using $O(\log n)$ bits of memory per node.
\end{lemma}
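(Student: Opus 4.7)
The plan is to augment Algorithm $\ALG$ (from Section \ref{sec:mst-construction}) with bookkeeping steps that compute, store and update the relevant entries of $\ROOTS$, $\EndP$, $\PARENTS$ and $\AND$ as a by-product of the MST construction. Since $\ALG$ itself already explicitly produces the hierarchy $\cH_{\cM}$ and the candidate function $\chi_{\cM}$ (the active fragments and their minimum outgoing edges are exactly the data $\ALG$ manipulates), a node has, in each phase $i$, direct access to the information needed to fill in position $i$ of each of its strings. The main thing to check is that these extra assignments respect both the $O(\log n)$ per-node memory budget and the $O(n)$ time budget.

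First I would describe the per-phase bookkeeping. During phase $i$ of $\ALG$, each active root $r(F)$ sets $\ROOTS_i(r(F)) := 1$; every non-root node currently belonging to a level $i$ fragment $F$ sets $\ROOTS_i := 0$; a node that does not belong to any level $i$ fragment sets $\ROOTS_i := *$. All three bits of information are locally available to the node at the moment Procedure $\COUNTSIZE$ finishes, since at that moment each node knows whether it is an active root and whether its level estimate equals $i$. Analogously, at the end of Procedure $\FINDOUT$ in phase $i$, the root of each active $F$ has already identified $\chi_{\cM}(F)=(w,x)$ together with the reversed path from $r(F)$ to $w$; piggybacking one extra bit on the converge-cast used to select the candidate (and on the subsequent change-root message) lets $w$ set $\EndP_i(w)\in\{\texttt{up},\texttt{down}\}$ correctly, and its neighbour $x$ (which receives the handshake anyway) set $\PARENTS_i(x) := 1$ if $w$ hooks upon $x$ and $w$ is $x$'s child. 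Nodes not in a level $i$ fragment set $\EndP_i := *$ and $\PARENTS_i := 0$; all other nodes inside $F$ set $\EndP_i := \texttt{none}$.

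Second, I would compute the $\AND$ (i.e., $\texttt{Or\_EndP}$) strings used to make Condition EPS1 locally verifiable. Since the $\AND$ bit at position $i$ of a node $v$ is just the OR of the bit indicating ``the $i$'th $\EndP$ entry is \texttt{up} or \texttt{down}'' over the subtree of $T$ hanging from $v$ restricted to the fragment, this can be produced by a single Wave\&Echo on $T$ after $\ALG$ terminates, in the style of Example~$\NK$ of Section~\ref{sub:examples}; an echo carries, for each of the $O(\log n)$ levels, one aggregated bit, costing $O(\log n)$ bits per message and $O(n)$ time overall. The root of each fragment of level $i$ then verifies (after merging the children's OR bits with its own) that exactly one endpoint was marked, which is precisely EPS1.

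Third, I would argue the complexity. All added state at a node is a constant number of $(\ell+1)$-bit strings with $\ell \leq \lceil\log n\rceil$, so the extra memory is $O(\log n)$ bits, preserving the memory bound of Theorem~\ref{thm:ALG}. Each bookkeeping action is a constant amount of local work fired by an already-occurring event of $\ALG$ (the end of $\COUNTSIZE$, of $\FINDOUT$, or of a hook), so no phase is slowed by more than a constant factor; together with the single post-$\ALG$ Wave\&Echo for $\AND$, the total time remains $O(n)$. Correctness is inherited from $\ALG$: since the bookkeeping only reads variables of $\ALG$ and writes into fresh variables, it does not affect the control flow of $\ALG$, and by construction the strings produced satisfy RS0--RS5 and EPS0--EPS5 on the (correct) instance at hand. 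The only mildly delicate point, and the part I expect to require the most care, is making sure that a node can set $\EndP_i$ and $\PARENTS_i$ at the correct phase even though it may be buried deep in a fragment; this is handled by piggybacking the one bit of information onto the convergecast/broadcast already performed by $\ALG$ in $\FINDOUT$ and the root-transfer, which is why no new asymptotic cost appears.
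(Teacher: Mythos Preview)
Your proposal follows exactly the approach taken in the paper: augment $\ALG$ with bookkeeping actions that write into fresh $O(\log n)$-bit variables without altering $\ALG$'s control flow, so that time and space are preserved up to constant factors. The paper's own argument is in fact terser than yours---it merely asserts that ``adding these actions on top of $\ALG$ is not complicated, and can be realized using standard techniques'' and omits all further detail---so your per-phase description already goes beyond what the paper supplies.

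One point to watch: the entries $\EndP_i(w)\in\{\texttt{up},\texttt{down}\}$ and $\PARENTS_i(\cdot)$ are defined with respect to the \emph{final} orientation of $T$, but at the end of phase~$i$ the direction of the edge $(w,x)$ is not yet fixed---a later change-root operation in a containing fragment may reverse it. Hence these bits cannot simply be frozen at phase~$i$ as you propose; they must be maintained each time a change-root flips the relevant edge. This update can be piggybacked on the change-root traversal itself (when a node's parent changes, all its current ``up'' entries become ``down'' with the corresponding $\PARENTS$ bits pushed to the old parent, and conversely the $\PARENTS$ bits already stored at the new parent indicate which ``down'' entries must become ``up''), costing only $O(\log n)$ bits of local exchange per reversed edge and no extra asymptotic time. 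With this adjustment your argument goes through.
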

The lemma above together with Lemma \ref{lem:correct-candidate} establishes Lemma \ref{lem:simple-proof}.

\section{Distributing pieces of information} 
\label{sec:proof}
In the previous section, we described the verification that (1) a tree exists, (2) it is decomposed into a hierarchy of fragments, and (3) edges emanating from the fragments compose a candidate function (so that the tree is the collection of these edges). That verified the Well-Forming property. It is left to verify  the Minimality  property. That is, it is left to show that each edge of the candidate function is the minimum outgoing edge of some fragment in the hierarchy. The current section describes a part of the marker algorithm responsible for marking the nodes for this verification.

 Informally, to perform the verification, each node must possess some information regarding each of the fragments $F$ containing it.
 The information regarding a fragment $F$ contains the weight of the selected edge of the fragment as well as the fragment identity, hence, it can be encoded using $O(\log n)$ bits.
 (The fragment identity is needed to identify the set $O_F$ of outgoing edges from $F$, and the weight of the selected edge is needed for comparing it to the weight of the other edges of $O_F$; this is how we detect that the selected edge is indeed the minimum).
  However, as mentioned, each node participates in $O(\log n)$ fragments, and hence, cannot hold at the same time all the information relevant for its fragments. Instead, we distribute this information among the nodes of the fragments, in a way that will allow us later to deliver this information efficiently to all nodes of the fragment. In this section, we show how to distribute the information regarding the fragments. In the next section, we explain how  to  exploit this distribution of information  so that during the verification phase, relevant information can be delivered to nodes relatively fast and without violating the $O(\log n)$ memory size.

\paragraph{The piece of information $\id(F)$:} As mentioned in Section \ref{intuition:P2},
a crucial point in
the scheme is letting each node $v$ know, for each of its incident edges
$(v,u) \in E$
 and for each level $j$, whether $u$ and $v$ share the same level $j$ fragment.
(Note, in the particular case where $u$ is also a neighbour of $v$ in $T$, this information can be extracted by $v$ using $u$'s data-structure, see Lemma \ref{lem:simple-proof}.) Intuitively, this is needed in order to identify outgoing edges.
For that purpose,
we assign each fragment a unique identifier, and $v$ compares the identifier of its own level $j$ fragment with the identifier of $u$'s level $j$ fragment.
The identifier of a fragment $F$ is $\id(F):=\id(r(F))\circ lev(F)$, where $\id(r(F))$ is the unique identity of the root $r(F)$ of $F$, and $lev(F)$ is $F$'s level.
We also need each node $v$ to know the weight $\omega(F)$ of the minimum outgoing edge of each  fragment $F$ containing~$v$. To summarize, the {\em piece of information} $\Info(F)$ required in each node $v$ per fragment $F$ containing $v$
 is $\Info(F):=\id(F)\circ\omega(F)$. Thus, $\Info(F)$ can be encoded using  $O(\log n)$ bits.

\comment
When describing the construction of hierarchy $\cH_{\cM}$ below, we view it as a tree (and call it the {\em  fragment-tree}) whose nodes are fragments of $T$, and in particular, the  root of the fragment-tree is
$F^{\ell}$, the level $\ell$ fragment which is the whole tree $T$. We shall make sure that $\ell=O(\log n)$.
We refer to the nodes of the fragment-tree $\cH_{\cM}$ as {\em fragment nodes}.
Consider a fragment node $F$, corresponding to a fragment  of $T$.
Whenever no ambiguity arises, we  refer to $F$ sometimes as a node of $\cH_{\cM}$ and sometimes as a fragment of $T$. For example, we may refer either to $F^{\ell}$ or to $T$, whichever simplifies the notations better.

\subsubsection{Constructing  hierarchy $\cH_{\cM}$ and candidate function $\chi$}
 \label{subsub:phases}

.

Unfortunately, we did not manage to utilize one of the known decompositions of a tree, or an MST.
One famous such decomposition is the set of fragments created in the algorithm of Gallager, Humblet, and Spira \cite{ghs} as building blocks of the MST in constructed by their seminal algorithm. First, many of their fragments are very large. We could not spread the fragment's $\Info$ over such a large fragment, since this would take the information far beyond the desired locality radius.
Our construction below groups such fragments into an object we there term ``phase 1'' and replicate $\Info$ over phase 1 as needed in order not to violate the locality radius constraint.
The rest of the fragments are grouped into phases as well. This is motivated by
 our intention to show later (in Section
\ref{sec:distributed-implementation}) a distributed implementation of the labeling schemes presented here.
Intuitively, the implementation needs to use some memory per an object, and let the marker algorithm distribute this memory over the nodes of the object
(we shall see the details in  Section
\ref{sec:distributed-implementation}). Had there been too many objects, we would have violated the memory constraint. Hence, we needed to group fragments together, leading the to ``phases'' structure described below.

The marker constructs  the fragment-tree $\cH_{\cM}$ in $O(\log^* n)$ iterations. Each iteration
constructs a forest (a collection of subtrees of $\cH_{\cM}$) termed {\em phase}. Specifically,
each iteration $i\geq 1$ that constructs a phase $P$, starts with a collection of fragment nodes,
which then become
the roots of the subtrees of phase $P$. We term them
the {\em root fragments} of phase $P$.
For an iteration  $i> 1$, each such root fragment (of phase $P$) is a child
 (in $\cH_{\cM}$) of a fragment leaf of the phase $P^-$ constructed in the previous iteration.
  The first iteration starts with the root fragment of $\cH_{\cM}$, namely, $F^{\ell}$.
  For a fragment $F$ in some phase $P$, let $F^*$ denote
 the root fragment of $P$ that is an ancestor of $F$ in $\cH_{\cM}$.
 We refer to $F^*$ as the {\em root fragment}
 of $F$.

Let $F$ be a fragment of some phase constructed in iteration $i$.
We define parameter $\zeta(F)$ depending on the iteration $i$ as follows.
For $i=1$, $\zeta(F)=\tau/15$ and for $i>1$, $\zeta(F)=\log^2 |F^*|$.
 Two crucial properties of the hierarchy we construct are that for every leaf fragment $F$, we have (1)   $|F|=\Omega(\zeta(F))$, and (2) each child $F_i$ of $F$ (a root fragment of the next phase) satisfies $|F_i|=O(\zeta(F))$.

Each iteration consists of steps, each starting  with a fragment already in the hierarchy, and already in the current phase $P$.
In particular, the first step of the first phase starts with
 the whole $T= F^{\ell}$ and the first step of every other phase starts with a root fragment of that phase.
Let us now describe a step of an iteration. Consider  a fragment $F$. If $F$ is a singleton fragment then $F$ in a leaf of the hierarchy $\cH_{\cM}$.
If $|F|= 2$ then $F=\{u,v\}$ and we let the singleton fragments $\{u\}$ and $\{v\}$ be the two children of $F$ in the hierarchy $\cH_{\cM}$
 (each being a leaf of $\cH_{\cM}$). Consider now a
 fragment $F$  such that $|F|>2$. Fragment $F$  is first decomposed (by the removal of one node- the {\em separator}) into
 multiple subtrees $T_1, T_2, T_3, ..., T_j$.
 In
 traditional separator decompositions, the separator is chosen such that
  $|T_i|<|F|/2$ for every $1\leq i\leq j$,
and each such $T_i$ becomes a child fragment of $F$ in the hierarchy. (The ``removed'' node is added to one of the $T_i$'s arbitrarily besides belonging to $F$).
 We start by partitioning $F$ the same way the traditional decomposition does. However, we change this decomposition (for the construction of $\cH_{\cM}$) to satisfy the two crucial properties mentioned above.

Assume, without loss of generality, that after removing the chosen separator $s$, the resulted subtrees
$T_1,T_2,\cdots, T_j$ are ordered by their size, i.e., such that  $|T_l|\leq |T_r|$ for $l<r$. Let $T'_1$ be the subtree
$T_1$ merged with the separator $s$.
  First of all, if we either have $|F|<6\zeta(F)$ or
that  each  subtree $T_1,T_2,\cdots, T_j$ is smaller than $\zeta(F)$ then
 $F$ becomes a leaf of the current phase $P$ and
the subtrees $T'_1, T_2, T_3,\cdots, T_j$ become root fragments for the phase constructed in the next iteration.
Each such root fragment is nevertheless considered a child of $F$ in the
hierarchy $\cH_{\cM}$ we build.

Consider now the case that $|F|\ge 6\zeta(F)$.
If
 $|T_i| \ge \zeta(F)$ holds for each $i$,
then each of the subtrees $T'_1, T_2, T_3,\cdots, T_j$ becomes a child fragment of $F$, still in the current phase $P$.
We are left with the case that there exists some integer $1<k<j$, such that
$|T_1|, |T_2|, ... , |T_k| < \zeta(F)$, and $|T_{k+1}|, |T_{k+2}|, ..., |T_j| \ge \zeta(F)$.
Consider two subcases:
\begin{enumerate}
\item
 $|T_1|+ |T_2| + ... + |T_k| \ge \zeta(F)$.
 Merge $T_1, T_2, ..., T_k$ adding the separator $s$ of $F$. The result becomes one child fragment $F'$ of $F$ in the same phase.
 Each of the ``large'' subtrees, $T_{k+1}, T_{k+2}, ..., T_j$ also becomes a child of $F$ in the same phase.
In the next step, when considering $F'$, choose $s$ as its separator, to guarantee that $F'$ would be a leaf of the
current phase (note, after removing $s$, fragment $F'$ breaks into the ``small'' subtrees $T_1, T_2, ..., T_k$ which are
root fragments of the next phase).

\item
$|T_1|+ |T_2| + ... + |T_k| < \zeta(F)$. Merge them with $T_j$ and add the separator node. The resulting fragment becomes a child fragment of $F$, in the same phase.  Each of the ``large'' subtrees, $T_{k+1}, T_{k+2}, ..., T_{j-1}$ also becomes a child of $F$ in the same phase.
\end{enumerate}

\begin{lemma}
\label{lem:hierarchy-properties}
Let $F$ be a leaf fragment  of some phase constructed in
iteration $i$. \\The following {\bf  hierarchy properties} hold:
 \begin{enumerate}
\item If $|F|>1$ then $F$ is not a leaf of the hierarchy $\cH_{\cM}$, and
 each child $F_i$ of $F$  satisfies $|F_i|\leq 3\zeta(F)\leq \tau/5$.
 \item $|F|=\Omega(\zeta(F))$.
 \item There are $O(\log^* n)$ phases.
 \item The height of $F$ in the phase of  $F$ is $O(\log |F^*|)$.
\item  The height of hierarchy $\cH_{\cM}$ is $O(\log n)$.
 \end{enumerate}
\end{lemma}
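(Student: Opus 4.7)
I will prove the five items in order, using induction on the iteration number $i$ for items~1 and~2 and then deriving items~3 and~5 from items~1 and~4.

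For item~1, I split into the two cases that force $F$ to become a leaf of its phase. If $|F|<6\zeta(F)$, the standard separator property $|T_h|\le|F|/2$ yields $|T_h|<3\zeta(F)$, and since $|T'_1|\le|T_1|+1$ the same bound holds for $T'_1$ (using $\zeta(F)\ge 1$). If instead every $|T_h|<\zeta(F)$, the bound is immediate and even $|T'_1|\le 2\zeta(F)$. Combining, every child satisfies $|F_i|\le 3\zeta(F)$. To conclude $3\zeta(F)\le\tau/5$, I induct on $i$: for $i=1$ this is $3(\tau/15)=\tau/5$ by definition, and for $i>1$ the inductive hypothesis of item~1 applied to the previous iteration gives $|F^*|\le\tau/5$, whence $3\zeta(F)=3\log^2|F^*|\le 3\log^2(\tau/5)\le\tau/5$ for $\tau$ larger than a constant. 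The ``$F$ is not a leaf of $\cH_{\cM}$'' clause is immediate from the case analysis of the construction: if $|F|=2$ both singletons are inserted as children, and if $|F|>2$ the separator decomposition always yields at least two children.

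For item~2, distinguish whether $F$ is the root fragment of its phase. If $F$ is not, then $F$ is a child of some non-leaf $G$ in the same phase. Inspecting the construction of $G$'s children, every such child has size $\ge\zeta(G)=\zeta(F)$: each large $T_h$ satisfies $|T_h|\ge\zeta(G)$ by definition; the merged child in subcase~1 has size $\ge\zeta(G)$ by its case hypothesis; and the merged child in subcase~2 contains $T_j$, whose size is $\ge\zeta(G)$. Hence $|F|\ge\zeta(F)$. If $F=F^*$, then for $i=1$ we have $F=T$ and $|F|=n\ge\tau/15=\zeta(F)$ (assuming $\tau=O(n)$), while for $i>1$ the elementary inequality $|F^*|\ge\log^2|F^*|$ (valid for $|F^*|$ at least a small constant, finitely many small exceptions handled directly) gives $|F|=\Omega(\zeta(F))$.

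For item~4, I use a centroid/separator-style depth argument. The separator property makes each of the unmerged children $T'_1,T_{k+1},\ldots,T_j$ shrink by a factor of at most $1/2$ (plus~1). The delicate case is the merged child $F'$ produced in subcase~1, which can be almost as large as $F$; however, $F'$ is itself a union of $k$ subtrees each of size $<\zeta(F)$ plus the separator, so when $F'$ is split at the next level the separator theorem applied to this union forces each resulting subtree to have size $<\zeta(F)$, hitting the leaf condition. A potential argument charging two phase-depth units per ``merged then re-split'' event, combined with $\log|F|$ as the main potential, yields height $O(\log|F^*|)$ within the phase.

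Items~3 and~5 then follow quickly. Let $R_i$ denote the maximum size of any root fragment of phase $i$; item~1 gives $R_2\le\tau/5$ and $R_{i+1}\le 3\log^2 R_i$ for $i\ge 2$. Iterating $\log^2$ reaches a constant in $O(\log^* n)$ steps, proving item~3. For item~5, item~4 bounds the phase-$i$ contribution to the height of $\cH_\cM$ by $O(\log R_i)$, so the total height is $O(\log n)+O(\log(\tau/5))+O(\log\log^2(\tau/5))+\cdots = O(\log n)$, the tail being dominated by a geometric series. The main obstacle will be the potential argument in item~4, since the merged child in subcase~1 can prevent naive ``size halves per level'' reasoning; the analysis must exploit the fact that a merge cannot be sustained along a long chain without triggering the leaf condition.
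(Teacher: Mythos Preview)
Your treatment of items~1, 2, 3, and 5 matches the paper's approach closely and is correct.

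For item~4, however, there is a gap, and the paper's argument is both simpler and avoids it. You identify the subcase~1 merged child $F'=T_1\cup\cdots\cup T_k\cup\{s\}$ as the delicate case and correctly observe that, by construction, $F'$ becomes a phase leaf at the next step. But you overlook the merged child produced in \emph{subcase~2}: there the child is $T_1\cup\cdots\cup T_k\cup T_j\cup\{s\}$, whose size can be as large as $|F|/2+\zeta(F)+1$, not $|F|/2$. Your list of ``unmerged children $T'_1,T_{k+1},\ldots,T_j$'' does not match any single case of the construction, and your potential argument gives no reason why subcase~2's merged child cannot persist through many levels without shrinking. Without this, the $O(\log|F^*|)$ bound does not follow.

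The paper sidesteps the potential argument entirely by proving the single clean claim: \emph{if $F'$ is a child of $F$ in the same phase and $|F'|\ge 2|F|/3$, then $F'$ is a phase leaf.} For the subcase~1 merged child this is immediate (it is forced to be a leaf). For the subcase~2 merged child, $|F'|<|F|/2+\zeta(F)$, so $|F'|\ge 2|F|/3$ would force $|F|<6\zeta(F)$, contradicting that $F$ itself is not a phase leaf. All remaining children are individual $T_i$'s with $|T_i|<|F|/2$. Hence every non-leaf child shrinks by a factor $2/3$, and item~4 follows directly with no bookkeeping. You can repair your argument by inserting exactly this $|F|\ge 6\zeta(F)\Rightarrow |F'|\le 2|F|/3$ calculation for subcase~2; but then the potential machinery becomes unnecessary.
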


\begin{proof}
Consider a leaf fragment $F$ of some phase constructed in
iteration $i$. It follows trivially by the construction that if $|F|>1$ then $F$  is not a leaf of $\cH_{\cM}$.
Let $F'$ be a child of $F$ in $\cH_{\cM}$. Either $F$ satisfies
$|F|<6\zeta(F)$, in which case we have $|F'|\leq 3\zeta(F)$, or that $|F|\ge 6\zeta(F)$ but
 $|F'|<\zeta(F)$. property (1) follows. property (3) follows from property (1), since for iteration $i>1$, we have $\zeta(F)=\log^2 |F^*|$.
property (2) is satisfied since we guarantee that a fragment $F'$ is a child of a fragment $F$ in the same phase
only if $|F'| \ge \zeta(F)$. Let us now show that Properties (4) and (5) hold as well.

Let $P$ be a phase and let $F$ be a fragment in $P$.
We first claim that if fragment $F'$, which is a child of fragment $F$, satisfies $|F'|\ge 2|F|/3$ then  $F'$ is a leaf fragment of phase $P$. Assume by contradiction that $F'$ is not a leaf of $P$ and yet $|F'|\ge 2|F|/3$. Consider first the case that $F'$ was  constructed in the first subcase above.
The fact that $F'$ is not a leaf means that $F'$ was not constructed  by merging $T'_1, T_2, ..., T_k$.
Hence, the separator $s$ of $F$ is the one selected by the traditional separator decomposition.
However, if $F'$ is one of the subtrees $T_i$
obtained by removing $s$ from $F$ then $|F'|< |F|/2$. (If $F'=T'_1$ then
$|F'|\leq |F|/2$). A contradiction.
Finally, consider the case that $F'$ was constructed in the second subcase above, by merging $T'_1, T_2, ..., T_k$ together with $T_j$.
Observe that $|T_j|< |F|/2$ and that $|T_1|+ |T_2| + ... + |T_k| < \zeta(F)$. This means that
$|F'|<|F|/2+\zeta(F)$.
If $|F'|\geq 2|F|/3$ then we have $2|F|/3<|F|/2+\zeta(F)$ which implies that $|F|<6\zeta(F)$. However, in this case, $F$ is a leaf of $P$, a  contradicting. This establishes our claim. property (4) now follows immediately, and property (5) follows by property (3).

\end{proof}

The candidate function $\chi_{\cM}$ for $\cH_{\cM}$ is defined as follows. For each $F\in \cH_{\cM}\setminus\{T\}$, let $F'$ be the parent of $F$  in $\cH$.
Then $\chi_{\cM}(F)$ is simply the edge in $T$ connecting $F$ with $H$.

\commentend

At a very high level description, each node $v$ stores {\em permanently}  $\Info(F)$ for a constant number of  fragments~$F$. Using that,
  $\Info(F)$ is ``rotated''  so that each node in $F$ ``sees'' $\Info(F)$ in
  $O(\log n)$  time.
We term  the mechanism that performs this rotation a {\em train}. A crucial point is having each node participate in only few trains.
Indeed, one train passing a node could delay the other trains that ``wish'' to pass it. Furthermore, each train utilizes some (often more than constant) memory per node. Hence, many trains passing at a node would have
violated the $O(\log n)$ memory constraint. In our solution, we let each node participate in  two trains.

Let us recall briefly the motivation for {\em two} trains rather than one. As explained in Section \ref{subsub:overview-proof}, one way to involve only one train passing each node would have been to partition the nodes, such that each fragment would have intersected only one part of the partition. Then, {\em one} train could have passed carrying the pieces of information for all the nodes in the part. Unfortunately, we could not construct such a partition where the parts were {\em small}. A small size of each part is needed in order to ensure that a node sees all the pieces (the whole train) in a short time.

Hence, we construct {\em two} partitions of the tree. Each partition is composed of a collection of  node-disjoint subtrees called {\em parts}.
For each partition, the collection of parts covers all nodes. Hence, each node belongs to precisely two parts, one part per partition. For each part,
we  distribute the information regarding some of the fragments it intersects, so that each node holds at most a constant number of such pieces of information.
 Conversely, the information regarding a fragment is distributed to nodes of one of the two parts intersecting it.
 Furthermore, for any node $v$, the two parts corresponding to it encode together the information regarding all fragments containing $v$. Thus, to deliver all relevant information, it suffices to utilize one train per part (and hence, each node participates in two trains only). Furthermore, the partitions are made so that the diameter of each part is $O(\log n)$, which allows each train to pass in all nodes in short time, and hence to deliver the relevant information quickly. The mechanism of trains and their synchronization is described in the next section.
 The remaining of this current section is dedicated to the construction of  the two partitions, and to explaining how the information regarding fragments is distributed over the parts of the two partitions.

\subsection{The two partitions}
 \label{sec:partitions}
Consider a correct instance, and fix the corresponding hierarchy tree $\cH=\cH_{\cM}$.
We now describe {\em two} partitions of the nodes in $T$, called  $\Top$ and $\Bottom$. (The distributed algorithm that constructs the partitions
is described later.)  We also partition
  the fragments into two kinds, namely, {\em top} and {\em bottom} fragments.
  
 \paragraph{Top and bottom fragments:}  Define  the {\em top} fragments to be precisely those fragments whose number of nodes is at least $\log n$. 
   Observe that the top fragments correspond to a  subtree of the hierarchy tree~$\cH$. Name that subtree $T_{\Top}$.
  All other fragments are called {\em bottom}. See the left side of Figure \ref{fig:Top} for an illustration of the top fragments and the subtree $T_\Top$.

\subsubsection{Partition $\Top$}
Let us first describe partition $\Top$. We first need to define three new types of fragments.

\paragraph{Red, blue, and large fragments:}
   A leaf fragment in subtree $T_{\Top}$ is colored red.
   A fragment not in $T_{\Top}$ which is a sibling in $\cH$ of a fragment in $T_{\Top}$ is colored blue. (Equivalently, a blue fragment is a fragment not in $T_{\Top}$, whose parent fragment in $\cH$ is a non-red fragment in $T_{\Top}$). The following observation is immediate.
   \begin{observation}
   \label{obs:red-blue-partition}
  The collection of red and blue fragments forms a partition ${\cal{P}}'$ of the nodes of~$T$. See Figure \ref{fig:Top}  for an illustration of partition ${\cal{P}}'$.
  \end{observation}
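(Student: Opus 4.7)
The plan is to fix an arbitrary node $v\in V(T)$ and show that $v$ lies in exactly one red-or-blue fragment; this gives both the covering and the pairwise disjointness properties in one shot. Since $\cH$ is laminar and contains both the singleton $\{v\}$ and the whole tree $T$ as fragments, the fragments of $\cH$ containing $v$ are precisely the ancestors of $\{v\}$ in $\cH$ and lie on a single path $\pi_v$ from $\{v\}$ to $T$ along which fragment sizes strictly increase. For $n$ large enough that $1<\log n\le n$, the path $\pi_v$ starts at the bottom fragment $\{v\}$ and ends at the top fragment $T$, so there is a unique lowest top fragment $F_t\in\pi_v$, and its child $F_b\in\pi_v$ is the highest bottom fragment containing $v$ (the degenerate regime where $\{v\}$ is itself top can be handled trivially since then $\{v\}$ is automatically red).

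I would first record the short sub-claim that the top fragments form an ancestor-closed subtree of $\cH$ rooted at $T$: any ancestor of a top fragment is at least as large and hence itself top. This legitimizes the subtree $T_{\Top}$ and lets me split on whether $F_t$ is a leaf of $T_{\Top}$ or an internal node of it. In the first case $F_t$ is red; every strict ancestor of $F_t$ on $\pi_v$ is an internal node of $T_{\Top}$ and hence not red, while every strict descendant of $F_t$ on $\pi_v$ is a bottom fragment whose parent on $\pi_v$ is either bottom (so not in $T_{\Top}$) or equals the red fragment $F_t$, so by the equivalent characterization of blue (parent in $T_{\Top}$ and non-red) none of these descendants is blue. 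Thus $v$ is covered by the single red fragment $F_t$.

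In the second case $F_t$ is an internal node of $T_{\Top}$, so non-red; since $F_b$'s parent $F_t$ is a non-red member of $T_{\Top}$ and $F_b\notin T_{\Top}$, fragment $F_b$ is blue by definition. The same bookkeeping on $\pi_v$ as in the first case shows that ancestors of $F_t$ are non-red top fragments (hence neither red nor blue) and that strict descendants of $F_b$ on $\pi_v$ have bottom parents, so are not blue. Therefore $v$ is covered by the single blue fragment $F_b$.

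The main obstacle is really just carrying out this case analysis on $F_t$ cleanly while applying the equivalent ``parent in $T_{\Top}$ and non-red'' characterization of blue; once laminarity reduces everything to the path $\pi_v$ there is no further combinatorics, and the covering-plus-uniqueness statement immediately yields that $\{$red fragments$\}\cup\{$blue fragments$\}$ is a partition $\cP'$ of $V(T)$.
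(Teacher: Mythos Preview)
Your argument is correct. The paper does not give a proof of this observation at all; it simply declares it ``immediate'' and moves on. Your path-based analysis via the chain $\pi_v$ of fragments containing a fixed node $v$, together with the case split on whether the lowest top fragment $F_t$ on $\pi_v$ is a leaf or an internal node of $T_{\Top}$, is a clean and complete way to make explicit what the paper leaves to the reader. The key points---that top fragments are ancestor-closed in $\cH$, and that the alternative characterization of blue (``bottom fragment whose parent is a non-red member of $T_{\Top}$'') pins down exactly one fragment on $\pi_v$---are handled correctly.
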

  To emphasize the fact that each non-blue child fragment of an internal fragment in $T_{\Top}$  contains at least $\log n$ nodes, we call internal fragments  in $T_{\Top}$  {\em large}. Note, the large fragments are precisely the (strict) ancestors of the red fragments in $\cH$.
 Since the ancestry relation in $\cH$ corresponds to an inclusion relation between the corresponding (active) fragments in $T$, we obtain the following observation.
  \begin{observation}
  \label{obs:large-red-blue}
Each large fragment $F_{large}$ is composed of at least one red fragment $F_{red}$ as well as  one or more blue ones, and does not contain any additional nodes (of course, the part  may contain also the {\em edges} connecting those fragments).
\end{observation}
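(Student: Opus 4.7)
The plan is to establish the three parts bundled in the observation: (a) the node set of $F_{large}$ is exactly the disjoint union of the node sets of the red and blue fragments it contains (the ``no additional nodes'' clause), (b) at least one such red fragment exists, and (c) at least one such blue fragment exists.

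I would start with (a). Pick any $v\in V(F_{large})$, and let $R_v$ be the unique red or blue fragment containing $v$, as guaranteed by Observation~\ref{obs:red-blue-partition}. By the laminarity of $\cH$ (item 2 of Definition~\ref{def:hierarchy}), $R_v$ and $F_{large}$ are nested. To rule out $F_{large}\subsetneq R_v$: if $R_v$ is blue then $|R_v|<\log n\le|F_{large}|$, a contradiction; if $R_v$ is red then $R_v$ is a leaf of $T_{\Top}$, so no top fragment (in particular $F_{large}\in T_{\Top}$) can lie strictly below it in $\cH$, again a contradiction. Hence $R_v\subseteq F_{large}$, from which $V(F_{large})=\bigcup\{V(R):R\in{\cal P}',\,R\subseteq F_{large}\}$, and since ${\cal P}'$ is a partition of $V(T)$ the union is disjoint. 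This yields (a).

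For (b), since $F_{large}$ is by definition an internal node of the finite tree $T_{\Top}$, the subtree of $T_{\Top}$ rooted at $F_{large}$ has at least one leaf. That leaf is a red fragment by definition, and it is a descendant of $F_{large}$ in $\cH$, so its node set is contained in $V(F_{large})$ by the laminar/ancestry interpretation of $\cH$.

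The step I expect to be the main obstacle is (c). Here I would use the specific structure of $\cH_{\cM}$ coming from $\ALG$ (Section~\ref{sec:mst-construction}). Since $F_{large}$ is top but its $\cH$-descendants include every singleton $\{v\}$ with $v\in V(F_{large})$ (all of which are bottom), the subtree of $\cH$ rooted at $F_{large}$ must contain some parent--child edge crossing the top/bottom threshold; this exhibits a bottom fragment $B$ whose parent in $\cH$ is top. If the parent is large, then $B$ is blue by definition and we are done. The delicate sub-case is when every such crossing edge has a red parent, which would force every large descendant of $F_{large}$ (including $F_{large}$ itself) to have only top children in $\cH_{\cM}$. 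I would rule this out by inspecting the $\ALG$ merging process: when active fragments are combined across phases to form the fragment $F_{large}$, by the phase/size bounds of Lemma~\ref{lem:disappear}, at least one of the contributing active subfragments has size strictly less than $\log n$ and hence is bottom; this subfragment is a child of some large ancestor on the way up to $F_{large}$, hence blue and contained in $F_{large}$. Making this last structural claim fully rigorous for the particular hierarchy $\cH_{\cM}$, rather than for an arbitrary laminar family on $T$, is where the technical heart of the proof lies.
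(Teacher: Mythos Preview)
Your arguments for (a) and (b) are correct and essentially match what the paper has in mind: the paper treats the observation as immediate from the ancestry/inclusion correspondence in $\cH$, and your laminarity argument for (a) together with the ``leaf of $T_{\Top}$ below an internal node'' argument for (b) is exactly that.

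However, part (c) as you have stated it --- that $F_{large}$ contains \emph{at least one} blue fragment --- is not provable, because it is false in general for $\cH_{\cM}$. Take $n=2^k$ and a graph/weight configuration for which $\ALG$ produces a perfectly balanced binary hierarchy: every level-$i$ active fragment has size exactly $2^i$ and has two children at level $i-1$ (this is realizable by Lemma~\ref{lem:disappear}, since $2^i<2^{i+1}$ keeps every fragment active at its own phase). Then a fragment is top iff its level is at least $m:=\lceil\log_2 k\rceil$; the red fragments are exactly the level-$m$ fragments; every large fragment (level $>m$) has both children at level $\ge m$, hence both top. Thus no blue fragment exists anywhere, yet large fragments do. Your specific claim that ``at least one contributing active subfragment has size $<\log n$'' fails here: all children of a large fragment have size $\ge 2^m\ge\log n$.

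What the paper actually needs downstream (Procedure Merge and Claim~\ref{cla:red-blue-intersect}) is only that $V(F_{large})$ is the disjoint union of the red and blue fragments it contains, with at least one red --- i.e., your (a) and (b). The ``one or more blue ones'' in the observation is a wording slip; ``zero or more'' is what is meant and what is used. So your proof is complete for the content that matters, and you should not invest further effort in (c).
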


\paragraph{Partition ${\cal{P}}''$:}
\label{par:red-blue-merge}
 Our goal now is to partition the nodes to parts such that each part contains {\em precisely} one red fragment and possibly several blue ones, and no additional nodes. Such a partition exists, since,  it is just a coarsening of the partition $\cP'$ of the nodes to red and blue fragments. Moreover, the construction of {\em some} such a partition is trivial, following Observation \ref{obs:large-red-blue} and the fact that the tree is a connected graph. The following procedure produces such a partition ${\cal{P}}''$  that has an additional property defined below. (A less formal description of the procedure is as follows: let pink parts be either red fragments, or the results of a merge between a red fragment and any number of blue ones.
 Now repeat the following as long as there are unmerged blue fragments: consider a blue fragment $F_{blue}$ who has a sibling pink fragment and, moreover touches that sibling; merge $F_{blue}$ with one of its sibling pink fragments  it touches).

 \paragraph*{Procedure Merge}
 \begin{enumerate}
 \item
 Initialize the set $\tilde{\cP}$ of parts to include precisely the set of red parts.
 (*  $\tilde{\cP}$ is not yet a partition *)
 \item
 Repeat while there are blue fragments not merged into parts of $\tilde{\cP}$
 \begin{enumerate}
 \item
 Let $F_{large}$ be a top fragment that contains a node $u$ that is {\em not} in any part
 of $\tilde{\cP}$, where
  all the nodes of every child fragment of $F_{large}$ belong to parts of
 $\tilde{\cP}$.
    \item
    Let $F_{blue}$ be the blue fragment containing $u$. (Note that we have $u\in F_{blue}\subset F_{large}$.) Let
    $\tilde{P} \in \tilde{\cP}$ be some part that touches $F_{blue}$.
    \item
    Merge $F_{blue}$ with one such $\tilde{P}$. (This also removes $\tilde{P}$ from $\tilde{\cP}$ and inserts, instead, the merged part $F_{blue} \bigcup \tilde{P}$
  \end{enumerate}
  \item When the procedure terminates,  ${\cal{P}}''\gets \tilde{\cP}$.
\end{enumerate}
 See Figure \ref{fig:Top} for an illustration of partition ${\cal{P}}''$.
It is easy to see (e.g., by induction on the order of merging in the above procedure) that partition ${\cal{P}}''$ is constructed in the following way: let $F_{red}$ be the red fragment in a part $\tilde{P}$. Then all the nodes in $\tilde{P}$ belong to ancestor fragments of $F_{red}$. This leads to the following observation.

\begin{figure}
\centering
\includegraphics[scale=0.8]{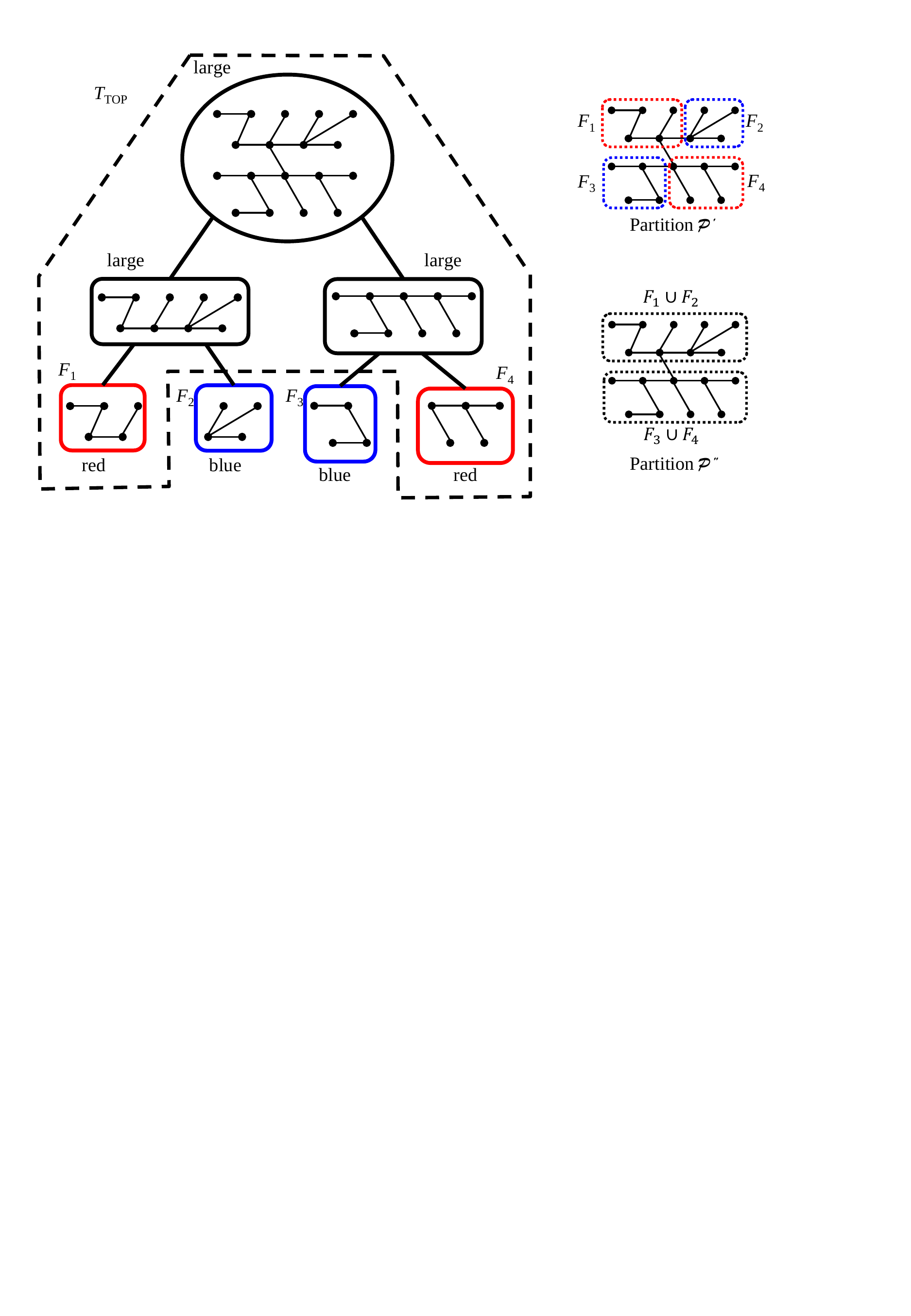}
\caption{On the left: the top fragments and  $T_\Top$; On the right: partition ${\cal{P}}'$ (above) and  partition ${\cal{P}}''$ (below)}
\label{fig:Top}
\end{figure}

 \begin{claim}
 \label{cla:red-blue-intersect}
  Each part  $P\in {\cal{P}}''$
intersects at most one level~$j$ top fragment, for every~$j$.
\end{claim}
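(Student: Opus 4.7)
The plan is to prove a stronger statement: every top fragment meeting $P$ lies on the ancestor-chain of $P$'s unique red fragment $F_{\mathrm{red}}$ in $\cH$ (with $F_{\mathrm{red}}$ itself allowed as the bottom element). Since ancestors of a fixed fragment in $\cH$ form a chain with strictly increasing levels, at most one of them can have level exactly $j$, which is the desired conclusion.

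First I would fix an arbitrary top fragment $F^{\mathrm{top}}$ meeting $P$, choose a witness $v\in P\cap F^{\mathrm{top}}$, and split on where $v$ lies inside $P$. If $v\in F_{\mathrm{red}}$, then by laminarity of the hierarchy the two fragments $F^{\mathrm{top}}$ and $F_{\mathrm{red}}$ (both containing $v$) must be nested. Since $F^{\mathrm{top}}$ is a top fragment, it belongs to the subtree $T_{\Top}$, and since $F_{\mathrm{red}}$ is a leaf of $T_{\Top}$, the inclusion $F^{\mathrm{top}}\subseteq F_{\mathrm{red}}$ forces $F^{\mathrm{top}}=F_{\mathrm{red}}$; the remaining possibility $F_{\mathrm{red}}\subsetneq F^{\mathrm{top}}$ makes $F^{\mathrm{top}}$ a proper ancestor of $F_{\mathrm{red}}$ in $\cH$.

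Otherwise $v$ lies in some blue fragment $F_{\mathrm{blue}}$ that was merged into $P$ by Procedure Merge. Here I would invoke a mild sharpening of the observation stated just before the claim: for every blue $F_{\mathrm{blue}}\subseteq P$, the $\cH$-parent $F_{\mathrm{large}}$ of $F_{\mathrm{blue}}$ is itself an ancestor of $F_{\mathrm{red}}$ in $\cH$. This sharpening is proved by induction on the order in which Procedure Merge processes blue fragments: at the moment $F_{\mathrm{blue}}$ is merged into a part $\tilde P$, the touching condition forces $\tilde P$ to contain a node inside $F_{\mathrm{large}}$, and laminarity (together with the fact that $F_{\mathrm{large}}$ is large, hence not red) forces $\tilde P$'s red fragment to be a strict descendant of $F_{\mathrm{large}}$ in $\cH$. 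Granted this, I then apply laminarity to $F^{\mathrm{top}}$ and $F_{\mathrm{blue}}$: the size gap $|F_{\mathrm{blue}}|<\log n\le |F^{\mathrm{top}}|$ rules out $F^{\mathrm{top}}\subseteq F_{\mathrm{blue}}$, so $F_{\mathrm{blue}}\subsetneq F^{\mathrm{top}}$. Hence $F^{\mathrm{top}}$ is an $\cH$-ancestor of $F_{\mathrm{blue}}$, i.e., $F^{\mathrm{top}}\in\{F_{\mathrm{large}},\,\text{parent of }F_{\mathrm{large}},\ldots\}$; each such fragment is an ancestor of $F_{\mathrm{large}}$ and therefore, by the sharpened observation, an ancestor of $F_{\mathrm{red}}$.

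The main obstacle will be the sharpening used in the blue case: upgrading the loose statement ``all nodes of $\tilde P$ lie in ancestor fragments of $F_{\mathrm{red}}$'' into ``the $\cH$-parent of every blue $F_{\mathrm{blue}}\subseteq P$ is an ancestor of $F_{\mathrm{red}}$.'' The delicate point is handling the possibility that a touching part $\tilde P$ approaches $F_{\mathrm{blue}}$ from outside $F_{\mathrm{large}}$ through $F_{\mathrm{large}}$'s own selected outgoing edge; this will either be excluded by careful tie-breaking in Procedure Merge or shown still to leave $F_{\mathrm{large}}$ on the ancestor chain of $F_{\mathrm{red}}$. Once the sharpened observation is in hand, the rest of the argument is pure laminarity plus the chain structure of $\cH$-ancestors, and the claimed bound follows.
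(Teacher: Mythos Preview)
Your approach is essentially the paper's: the one-line justification preceding the claim (``all the nodes in $\tilde P$ belong to ancestor fragments of $F_{\mathrm{red}}$,'' proved ``by induction on the order of merging'') is precisely your sharpened observation, and your case split on $v\in F_{\mathrm{red}}$ versus $v\in F_{\mathrm{blue}}$ together with laminarity is the intended way to pass from that observation to the claim.

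On the delicate point you flag: of your two proposed resolutions, only the first one works. If Procedure Merge is read literally as ``merge $F_{\mathrm{blue}}$ with \emph{any} touching part,'' then the touching part can indeed reach $F_{\mathrm{blue}}$ through a tree edge leaving $F_{\mathrm{large}}$, and in that situation $F_{\mathrm{large}}$ need not lie on the ancestor chain of $F_{\mathrm{red}}$ (one can build explicit counterexamples, so your second option cannot be salvaged). The intended reading is the one in the paper's informal description just above the formal procedure: $F_{\mathrm{blue}}$ is merged with a touching \emph{sibling} pink part, i.e., a part contained in another $\cH$-child of $F_{\mathrm{large}}$. With that restriction the touching node $b$ lies inside $F_{\mathrm{large}}$, and your inductive step goes through immediately ($\tilde P\subseteq F_{\mathrm{large}}$, hence $F_{\mathrm{red}}\subseteq F_{\mathrm{large}}$). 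Existence of such a sibling is guaranteed because the covered portion of $F_{\mathrm{large}}$ is nonempty (it contains a top child) and $F_{\mathrm{large}}$ is connected, so some uncovered blue child borders a covered node inside $F_{\mathrm{large}}$.
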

The property captured in the above claim is very useful. As can be seen later, this property means that the train in each part $\tilde{P}$ needs to carry only one piece of information for each level.

\paragraph{Partition $\Top$:}
    We would  like to
pass a train in each part  $P$ of ${\cal{P}}''$.
Unfortunately,  the diameter of $P$ may be too large. In such a case, we partition $P$ further to
   {\em neighbourhoods}, such that each neighbourhood is a subtree of $T$ of size at least $\log n$
   and of diameter
   $O(\log n)$. The resulted partition is called $\Top$. The lemma below follows.

 \begin{lemma}\label{lem:top}
 For every part $P$ in partition $\Top$, the following holds.
 \begin{smallitemize}
 \item
 $|P|\geq\log n$,
 \item
  $D(P)=O(\log n)$, where $D(P)$ is the diameter of $P$.
  \item $P$
intersects at most one level $j$ top fragment, for every $j$ (in particular, it intersects at most $\ell=\lceil\log n\rceil$
top fragments).
 \end{smallitemize}
   \end{lemma}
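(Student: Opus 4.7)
My plan is to handle the three items separately. The key observation throughout is that every part $P\in\Top$ is a subset of some part $P' \in {\cal P}''$, so several properties of $P$ can be inherited from those of $P'$, established in Observation~\ref{obs:red-blue-partition}, Observation~\ref{obs:large-red-blue} and Claim~\ref{cla:red-blue-intersect}.

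For item 3, I would use $P \subseteq P'$ together with Claim~\ref{cla:red-blue-intersect}: since $P'$ intersects at most one level-$j$ top fragment for every $j$, the same is true of $P$. The total number of top fragments intersected by $P$ is therefore at most the number of levels at which top fragments may live, which is at most $\ell \le \lceil \log n \rceil$ by Lemma~\ref{lem:simple-proof} (singletons at level $0$ are never top, so this bound is tight).

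For items 1 and 2, I would split on the diameter of the containing $P' \in {\cal P}''$. If $P'$ already has diameter $O(\log n)$, then no further subdivision is performed and $P \equiv P' \in \Top$. In this case $|P'| \ge \log n$ follows from the fact that $P'$ contains a unique red fragment $F_{red}$, which by construction is a leaf of $T_\Top$ and hence is itself a top fragment, so $|F_{red}|\ge \log n$. Otherwise, the diameter of $P'$ exceeds the chosen $O(\log n)$ threshold, and I would decompose $P'$ into neighbourhoods, each a connected subtree of $P'$ of size $\ge \log n$ and diameter $O(\log n)$. I would construct such a partition greedily by a post-order DFS traversal of $P'$: accumulate nodes into a current growing neighbourhood and, once it reaches $\log n$ nodes, close it off and start a new one at the next unassigned node. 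Capping each neighbourhood's size at $c\log n$ for a fixed constant $c$ bounds its diameter by $c\log n - 1 = O(\log n)$, since a connected subtree on $k$ nodes has diameter at most $k-1$.

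The main obstacle will be the technical treatment of the final ``small remainder'' at the top of the DFS traversal, since the greedy accumulation may leave a last connected piece with fewer than $\log n$ nodes. I would handle this by merging that remainder into an adjacent already-closed neighbourhood; such a neighbour must exist because $|P'| \ge \log n$ (again because $P'$ contains the red fragment $F_{red}$). This merge at most doubles the size and diameter of the neighbouring neighbourhood, so all resulting parts continue to satisfy size $\ge \log n$ and diameter $O(\log n)$, completing items 1 and 2.
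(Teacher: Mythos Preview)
Your treatment of item~3 is correct and is exactly how the paper (implicitly) argues: every $P\in\Top$ sits inside some $P'\in{\cal P}''$, so Claim~\ref{cla:red-blue-intersect} for $P'$ immediately gives the bound for $P$. Your observation that $|P'|\ge\log n$ because $P'$ contains its unique red fragment $F_{red}$ (a top fragment by definition) is also correct and handles item~1 when no subdivision occurs. The paper itself does not spell out the neighbourhood decomposition; it simply asserts that $P'$ is refined into subtrees of size $\ge\log n$ and diameter $O(\log n)$ and defers the construction to~\cite{kdom}.

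The genuine gap is in your explicit construction for items~1 and~2. You propose to cap each neighbourhood's \emph{size} at $c\log n$ and deduce the diameter bound from that. But a partition of $P'$ into connected subtrees each of size in $[\log n,\,c\log n]$ need not exist when $P'$ has a high-degree vertex. Take $P'$ to be a path $v_0,v_1,\dots,v_{10\log n}$ with $10\log n$ additional leaves attached to $v_0$; its diameter exceeds your threshold, so subdivision is triggered. Any leaf's only neighbour is $v_0$, so any connected piece of size $\ge\log n$ containing a leaf must contain $v_0$; since the pieces are disjoint, a single piece absorbs all $10\log n$ leaves and has size $>10\log n$, defeating any fixed cap $c$. (That piece still has diameter $2$, so the lemma is fine --- it is only your size-based route to the diameter bound that fails.) Your post-order ``accumulate $\log n$ nodes and close'' rule also does not, as stated, produce connected pieces: the first $\log n$ nodes visited in the example are leaves of the star, which are pairwise non-adjacent.

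The fix is to cut on \emph{depth} rather than size, which is what the construction in~\cite{kdom} effectively does. Process $P'$ bottom-up; at each node $v$ let $R_v$ be the (connected) set of still-unassigned descendants together with $v$, and close $R_v$ as a neighbourhood once its depth reaches $\lceil\log n\rceil$. Each child's remainder had depth $<\lceil\log n\rceil$, so the closed piece has depth exactly $\lceil\log n\rceil$, hence diameter $\le 2\lceil\log n\rceil$ and size $\ge\lceil\log n\rceil+1$. The leftover at $r(P')$ has depth $<\lceil\log n\rceil$; merge it into an adjacent closed neighbourhood (one exists whenever a closing occurred; if none occurred then $P'$ already has depth $<\lceil\log n\rceil$ and you take $P=P'$, whose size is $\ge\log n$ by the red-fragment argument). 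With this change your overall plan goes through.
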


\subsubsection{Partition $\Bottom$}
The bottom fragments are precisely those with less than $\log n$ nodes.
The parts of the second partition $\Bottom$  are the following: (1) the blue fragments, and (2) the children fragments in $\cH_{\cM}$ of the red fragments. By Observation \ref{obs:red-blue-partition}, this collection of fragments is a indeed a partition. Observe that each part of $\Bottom$ is a bottom fragment.
Thus, the size, and hence the diameter, of each  part $P$ of  $\Bottom$,  is less than $\log n$.
Figure~\ref{fig:Bottom} illustrates the bottom fragments and partition $\Bottom$.
 Observe also that a part $P\in \Bottom$ contains all of $P$'s descendant fragments in $\cH$ 
 (recall, $P$ is a fragment, and the collection of fragments are a laminar family), and does not intersect   other bottom fragments.
 Hence, we get the following.

\begin{figure}
\centering
\includegraphics[scale=0.8]{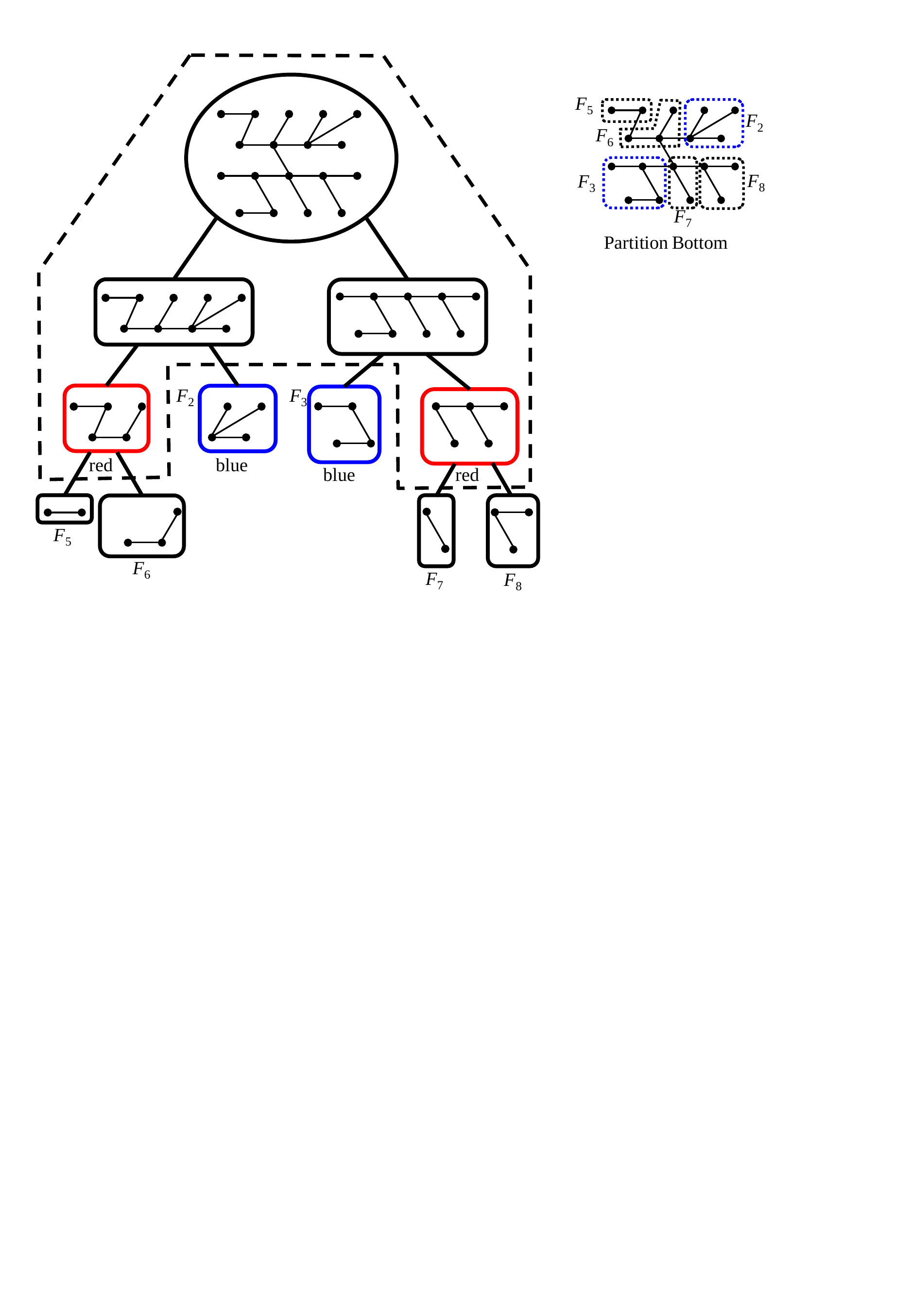}
\caption{The bottom fragments and partition $\Bottom$.}
\label{fig:Bottom}
\end{figure}

  \begin{lemma}\label{lem:bottom}
For every part $P$ of partition $\Bottom$, 
the following holds:
 \begin{smallitemize}
 \item
   $|P|<\log n$, and
   \item $P$
intersects at most $2|P|< 2\log n$ bottom fragments.
 \end{smallitemize}
   \end{lemma}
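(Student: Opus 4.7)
The plan is to prove each of the two bullets separately, leveraging (i) the defining properties of the partition $\Bottom$, (ii) the laminar structure of $\cH$ (Definition~\ref{def:hierarchy}), and (iii) the fact that every non-singleton fragment in $\cH$ has at least two children (since fragments in $\cH_{\cM}$ are produced by merging at least two lower-level fragments in $\ALG$).

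For the first bullet, I would simply observe that, by the definition of the partition $\Bottom$, the part $P$ is either a blue fragment or a child in $\cH$ of a red fragment; in both cases $P$ is a fragment whose parent in $\cH$ lies in $T_{\Top}$. In particular, $P$ itself is not in $T_{\Top}$, so by the very definition of top/bottom fragments we have $|P|<\log n$.

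For the second bullet, the first step is to characterize which bottom fragments intersect $P$. Using the laminar family property, any fragment $F'\in\cH$ with $F'\cap P\neq \emptyset$ satisfies either $F'\subseteq P$ or $P\subsetneq F'$. In the latter case $F'$ is a strict ancestor of $P$ in $\cH$, hence contains $P$'s parent in $\cH$, and therefore contains at least $\log n$ nodes (because, as noted above, the parent of $P$ is a top fragment). Thus every strict ancestor of $P$ is a top fragment, so the bottom fragments intersecting $P$ are exactly $P$ itself and the descendants of $P$ in $\cH$; all of these lie in the subtree $\cH_P$ of $\cH$ rooted at $P$.

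The remaining step is to count the fragments in $\cH_P$. The leaves of $\cH_P$ are singleton fragments, one per node of $P$, so there are exactly $|P|$ leaves. Since every internal fragment in $\cH$ has at least two children, a standard tree-counting argument gives that the number of internal nodes of $\cH_P$ is at most $|P|-1$. Hence the total number of fragments in $\cH_P$ is at most $|P|+(|P|-1)=2|P|-1<2|P|$, and combined with $|P|<\log n$ from the first bullet, the bound $2|P|<2\log n$ follows. No serious obstacle arises here; the whole argument is a careful bookkeeping of the hierarchy and the laminar property, with the one subtle point being the verification that $P$'s parent in $\cH$ is always a top fragment in each of the two cases defining $\Bottom$.
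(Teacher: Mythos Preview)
Your argument is correct and follows essentially the same route as the paper. The paper states the two observations tersely in the paragraph preceding the lemma (each part of $\Bottom$ is a bottom fragment, and $P$ contains exactly its descendant fragments in $\cH$ while intersecting no other bottom fragments), leaving the $2|P|$ count implicit; you have simply filled in the details, including the explicit verification that $P$'s parent in $\cH$ is a top fragment in both cases and the leaf-counting bound for a tree whose internal nodes all have at least two children.
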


\subsubsection{Representations of the partitions}
In Section~\ref{app:neighbourhoods}, we show that the above partitions $\Top$ and  $\Bottom$  can be constructed by a distributed algorithm that uses $O(\log n)$ memory
and linear time. Each part $P$ of each of the two partitions is represented  by encoding  in a designated part of the label of each node in $P$, the identity $\id(r(P))$ of the root of $P$  (the highest node of part $P$). Recall that a node participates in only two parts (one of each partition), so this consumes $O(\log n)$ bits per node.
Obviously, given this representation, the root of a part can identify itself as such by simply comparing the corresponding part of its label with its identity.
In addition, by consulting the data-structure of a tree neighbour~$u$, each node $v$ can detect whether $u$ and~$v$ belong to   the same part (in each of the two partitions).

A delicate and interesting point is that the verifier does not need to verify directly that the partitions $\Top$   or $\Bottom$  were constructed as explained here. This is explained in Section \ref{sub:3.3}.

\subsection{Distributing the information of fragments}
\label{subsec:train-initial}

Fix a part $P$ of  partition $\Top$  (respectively, $\Bottom$). 
Recall that $P$ is a subtree of $T$  rooted at $r(P)$.
Let $F_1,F_2,\cdots, F_k$ be the top (resp., bottom) fragments intersecting~$P$, for some integer $k$.
 By Lemma \ref{lem:top} (resp., Lemma \ref{lem:bottom}), we know that $k\leq \min\{2|P|,2\log n\}$.
Assume w.l.o.g., that the indices are such that
the level of $F_i$ is, at least, the level of $F_{i-1}$, for each $1< i\leq k$.

The information concerning part $P$ is defined as $\Info(P) =\Info(F_1)\circ\Info(F_2)\circ,\cdots,\circ\Info(F_k)$. We distribute this information over the nodes of $P$ as follows.
We break $\Info(P)$ into~$|P|$ pairs of pieces. Specifically, for $i$ such that $1~\leq~i~\leq~\lceil k/2\rceil\leq |P|$,  the $i$'th pair,  termed $\piece(i)$, contains $\Info(F_{2i-1})\circ\Info(F_{2i})$
(for odd $k$, $\piece(\lceil k/2\rceil)=\Info(F_k)$).

The process of storing the pieces permanently at nodes of a part of the partition is referred to as the {\em initialization of the trains}.
The distributed algorithm that implements  the initialization of the trains using $O(\log n)$ memory size and linear time is described next. It is supposed to reach the same result of the following non-distributed algorithm (given just in order to define the result of the distributed one).

This non-distributed algorithm is simply the classical Depth First Search (DFS) plus the following operation in every node visited for the first time.
 Consider a DFS traversal over $P$  that starts at $r(P)$  and let  $\DFS(i)$ denote the  the $i$'th node visited in this traversal.
 For each $i$,  $1\leq i\leq \lceil k/2\rceil$, $\DFS(i)$   stores  permanently the $i$'th pair of  $\Info(P)$,  namely, $\piece(i)$.

\subsection{Distributed implementation}
\label{app:neighbourhoods}

Before describing the distributed construction of the two partitions, namely $\Top$ and $\Bottom$, we need to describe a tool we use for efficiently executing several waves\&echoes operations in parallel.
This {\em $\MULTI$} primitive (described below) performs a Wave\&Echo in every fragment in $\cH$ of level $\level$, for
$\level= 0, 1, 2, \cdots,\ell$.
 Moreover, the $i+1$th Wave\&Echo
 is supposed to start after the $i$th Wave\&Echo terminates.
Furthermore, all this is obtained in time $O(n)$.

 \subsubsection{The $\MULTI$ primitive}
\label{sub:multi-wave}

We shall use this primitive only after the $\ROOTS$ string is already set, so that every node knows for each level, whether it is the root of a fragment of that level.
Let us first present a slightly inefficient way to perform this.
The root of the whole tree starts $\ell+1$
{\em consecutive} waves and echoes, each for the whole final tree.
(By consecutive we mean that the $\level+1$th wave starts when the $\level$th wave terminates.)
Let the level $\level$ wave be called $\WAVE^{I_1}$(T,$\level)$ since it carries some instruction $I_1$, is sent over the whole tree $T$, and carries the information that it is meant for level~$\level$.
A root $v_{\level}$ of a fragment $F^{\level}$ of level $\level$, receiving
$\WAVE^{I_1}$(T,$\level)$, then starts its own Wave\&Echo $\WAVE^{I_2}$($F^{\level}, \level)$
over its own fragment only. (Here, $I_2$ is some instruction possibly different than  $I_1$.)
A node who is not a root of a level $\level$ fragment can echo $\WAVE^{I_1}$(T,$\level)$ as soon as all its children in the final tree (if it has any) echoed. A root~$v_{\level}$ echoes $\WAVE^{I_1}$(T,$\level)$ only after its own wave
$\WAVE^{I_2}$($F^{\level}, \level)$ terminated (and, of course, after it also received the echoes of $\WAVE^{I_1}$(T,$\level)$ from all its children).
The following observation follows immediately from the known semantics of Wave\&Echo.

\begin{observation}
\label{obs:multi-wave-semantics}
Consider a fragment $F^{\level}$ of level $\level$ rooted at some $v_{\level}$.
The wave initiation by $v_{\level}$ starts after all the waves involving its descendant fragments 
terminated (at the roots of the corresponding fragments).
\end{observation}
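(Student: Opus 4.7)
The plan is to prove the observation by a short argument whose core is a simple \emph{level-separation} claim, derived from two sequencing properties made explicit in the description of $\MULTI$: (i) at the root of $T$ the outer waves are launched \emph{consecutively}, so $\WAVE^{I_1}(T,j+1)$ begins there only after $\WAVE^{I_1}(T,j)$ has fully terminated there; and (ii) the root $v_j$ of any level-$j$ fragment $F^j$ withholds its echo of the incoming $\WAVE^{I_1}(T,j)$ until its own inner wave $\WAVE^{I_2}(F^j,j)$ has terminated. Combining (i) and (ii) with the standard Wave\&Echo semantics (the initiator sees $\WAVE^{I_1}(T,j)$ terminate only after every node of $T$, and in particular every level-$j$ fragment root, has echoed) yields the level-separation claim: every inner wave $\WAVE^{I_2}(F^j,j)$ for a level-$j$ fragment terminates strictly before $\WAVE^{I_1}(T,j+1)$ is launched at the root of $T$.

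Given the level-separation claim, the observation follows almost immediately. A descendant fragment of $F^\level$ in the hierarchy $\cH$ has some level $j' < \level$. Applying level-separation with $j = j'$, the inner wave $\WAVE^{I_2}(F^{j'},j')$ over every such descendant terminates before $\WAVE^{I_1}(T,j'+1)$ is launched at the root of $T$; since $j'+1 \le \level$, iterating (i) implies this launch is no later than the launch of $\WAVE^{I_1}(T,\level)$ at the root of $T$. But $v_\level$ initiates its inner wave $\WAVE^{I_2}(F^\level,\level)$ at the moment $\WAVE^{I_1}(T,\level)$ reaches it, which is at or after the latter's launch at the root. Chaining these time bounds gives that every descendant inner wave has already terminated, at the corresponding descendant root $v_{j'}$, by the time $v_\level$ initiates.

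The only step requiring care is the level-separation claim itself, and it is really just a direct consequence of (i), (ii), and the semantics of Wave\&Echo; no argument about locality, diameter, or intra-fragment timing is needed. Thus I do not expect any significant obstacle, and the observation reduces to this short piece of bookkeeping across the $\ell+1$ outer waves of $\MULTI$.
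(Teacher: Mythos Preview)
Your proposal is correct and takes essentially the same approach as the paper: the paper simply asserts that the observation ``follows immediately from the known semantics of Wave\&Echo,'' and your argument just makes explicit the two sequencing rules (consecutive outer waves at $r(T)$, and each $v_j$ withholding its echo of $\WAVE^{I_1}(T,j)$ until $\WAVE^{I_2}(F^j,j)$ terminates) that together yield the level-separation claim. There is nothing to add; your expansion is faithful to, and merely more detailed than, the paper's one-line justification.
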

The ideal time complexity of performing the above  collection of $\ell$ waves is $\Theta(n \log n)$. In the case that the size of a level $\level$ fragment $F^{\level}$ is
$ 2^j \leq |F^{\level}| < 2^{j+1}$,
we can achieve the semantics of Observation \ref{obs:multi-wave-semantics} somewhat more time efficiently. The primitive that achieves this is termed a $\MULTI$. When invoking it, one needs to specify which instructions it carries.
Informally, the idea is that the roots $R_0$ of level $0$ fragments perform the wave (for level 0) in parallel, each in its own fragment of level $0$ (a single node).
Recall that a fragment $F_1$ of level 1 contains multiple fragments of level zero. The roots of these fragments of level zero report the termination of the level $0$ wave to the root of $F_1$. Next, the roots $R_1$ of level $1$ fragments perform the wave (for level 1) in parallel, each in its own fragment of level $1$. The terminations are reported to level $2$ fragment roots, etc., until the $\MULTI$ terminates.

The $\MULTI$ is started at the root of the final tree $T$ by a wave termed
$\MULTI(T, I_1, I_2)$.
 Each node $v$ who receives $\MULTI(T, I_1, I_2)$ acts also as if $v$ has initiated a  $\WAVE^{I_2}(F^{0},0)$ on a tree containing only itself. Ensuring the termination and
 the semantics for level $\WAVE^{I_2}(F^{0},0)$ is trivial. We now define the actions of levels higher than zero in an inductive manner.
Every node $v$ who received (and forwarded to its children if it has any) $\MULTI(T, I_1, I_2)$, simulates the case that it
received (and forwarded to its children)  $\WAVE^{I_2}(F^{\level}, \level)$  for every level $\level$. However, $v$ is not free yet to echo   $\WAVE^{I_2}(F^{\level}, \level)$   until an additional condition holds as follows:
  When some wave $\WAVE^{I_2}(F^{\level}, \level)$ terminates at the root $v_{\level}$ of $F_j$, this root initiates an informing wave
  $\WAVE^{\FREE-I_2}(F^{\level}, \level)$
 to notify the nodes in $F^{\level}$ that the wave of level $\level$ in their subtree terminated, and thus they are free to echo $\WAVE^{I_2}(F^{\level+1}, \level+1)$.
 That is, a leaf of a $F^{\level+1}$ fragment can echo $\WAVE^{I_2}(F^{\level+1}, \level+1)$ immediately when receiving  $\WAVE^{\FREE-I_2}(F^{\level}, \level)$, and a non-leaf of
 $\WAVE^{I_2}(F^{\level+1}, \level+1)$ may echo $\WAVE^{I_2}(F^{\level+1}, \level+1)$ when it receives echoes from all its children in $F^{\level+1}$.

Specifically, the  convergecast  is performed to the containing $\level+1$ fragment as follows:
 a leaf of a level $\level+1$ fragment
who receives $\WAVE^{\FREE-I_2}(F^{\level}, \level)$
sends a message
 $\READY(\level+1,I_2)$ to its parent.
A parent node  sends message  $\READY(\level+1, I_2)$ if it is not a root of a level $\level+1$ fragment, and only after receiving $\READY(\level+1, I_2)$ from all its children. When a root of a level $\level+1$ fragment receives the $\READY(\level+1, I_2)$ message from all of its children, it starts  $\WAVE^{\FREE-I_2}(F^{\level+1}, \level+1)$. The $\MULTI$  terminates at the root of the final tree when the wave for level $\ell$ terminates at that root.
  The informing wave   $\WAVE^{\FREE-I_2}(F^{\level}, \level)$ itself needs no echo.

\begin{observation}
\label{obs:multi-wave-semantics2}
The efficient implementation
of the multi-wave simulates the multiple waves analyzed in Observation
\ref{obs:multi-wave-semantics}. That is, it obtains the same result for the instructions $I_1$ and $I_2$ in every node.
\end{observation}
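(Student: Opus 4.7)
The plan is to reduce the claim to showing that the efficient multi-wave preserves the fragment-wave completion ordering of the sequential multi-wave; once this ordering is established, the results computed by $I_1$ and $I_2$ at every node must coincide with those of the sequential implementation. Concretely, I will establish by induction on $\level$ the invariant: for every fragment $F^{\level}$ with root $v_{\level}$, the wave $\WAVE^{I_2}(F^{\level},\level)$ terminates at $v_{\level}$ (equivalently, $v_{\level}$ has received $\READY(\level,I_2)$ from all of its children in $F^{\level}$) only after every $\WAVE^{I_2}(F^{\level'},\level')$ with $\level' < \level$ and $F^{\level'} \subsetneq F^{\level}$ has already terminated at its own root. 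This is exactly the partial order guaranteed by Observation \ref{obs:multi-wave-semantics}.

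For the base case, a level $0$ fragment is a singleton and the invariant is vacuous. For the inductive step, assume the invariant for all levels smaller than $\level$ and consider $F^{\level}$ with root $v_{\level}$. A leaf of $F^{\level}$ (in the tree-child sense within $F^{\level}$) may send $\READY(\level,I_2)$ to its parent only upon receipt of $\WAVE^{\FREE-I_2}(F^{\level-1},\level-1)$ from the root $v_{\level-1}$ of the level $(\level-1)$ fragment containing it, while an internal node of $F^{\level}$ forwards $\READY(\level,I_2)$ only after receiving it from all of its $F^{\level}$-children. Since $v_{\level-1}$ initiates the $\FREE$ wave only when $\WAVE^{I_2}(F^{\level-1},\level-1)$ has already terminated at $v_{\level-1}$, every level $(\level-1)$ subfragment of $F^{\level}$ must have completed its wave before $v_{\level}$ can receive all of its incoming $\READY(\level,I_2)$ messages. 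Composing this with the inductive hypothesis applied to each such level $(\level-1)$ fragment then yields the invariant for $F^{\level}$.

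It then remains to observe that the data consumed by $I_1$ and $I_2$ at any node is either static (the node's label or local structural data, which is read during the downward propagation of $\MULTI(T,I_1,I_2)$ that visits every node exactly as each separate $\WAVE^{I_1}(T,\level)$ would in the sequential scheme) or is produced by the echo of a strictly lower level wave. The invariant above guarantees that when a node contributes its portion of the $I_2$ echo at level $\level$, every lower level echo at that same node has already been completed, exactly as in the sequential implementation, so the value written by $I_1$ and by $I_2$ at every node is identical in the two schemes. The main obstacle, and the reason the induction must be set up carefully, is that in the efficient implementation the downward propagation of higher level waves may overlap in time with the convergecasts of lower level waves; the $\READY$/$\FREE$ handshake is precisely what prevents this overlap from leaking into the $I_2$ computations at any fragment root, and tracking this through the fragment hierarchy is the crux of the inductive step.
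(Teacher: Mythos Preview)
Your proof is correct and captures the essential content of the paper's argument, but the decomposition differs. The paper introduces an intermediate ``alternative algorithm'' in which each fragment root $v_{\level+1}$, upon receipt of $\MULTI(T,I_1,I_2)$, explicitly initiates a separate wave $\WAVE$-$\READY(\level+1,I_2)$, with the $\READY(\level+1,I_2)$ messages serving as its echoes (delayed until the echoing node has received $\WAVE^{\FREE-I_2}(F^{\level},\level)$). The paper then argues in two short steps: (i) this alternative clearly satisfies Observation~\ref{obs:multi-wave-semantics} by the standard semantics of Wave\&Echo, and (ii) the actual $\MULTI$ simulates the alternative because the single $\MULTI(T,I_1,I_2)$ message is delivered from a node to its child at exactly the moment the imaginary $\WAVE$-$\READY(\level+1,I_2)$ would have been, and the child can reconstruct which such waves it is ``receiving'' from its $\ROOTS$ string.

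Your direct induction on $\level$ essentially fuses these two steps: you prove the fragment-wave completion ordering for the efficient implementation outright, without passing through the intermediate protocol. This is arguably more transparent and self-contained; the paper's route is slightly more modular in that it isolates the Wave\&Echo correctness from the message-piggybacking optimization. Either way the crux is the same $\READY$/$\FREE$ handshake you identify, and your treatment of the overlap between the downward $\MULTI$ propagation and lower-level convergecasts is the right point to stress.
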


\begin{proof}
Consider an alternative algorithm (for $\MULTI$) in which, when a root  of fragment $F^{\level+1}$ receives $\MULTI(T, I_1, I_2)$, it starts a wave
$\WAVE-\READY(\level+1, I_2)$. Assume further, that the $\READY(\level+1, I_2)$ messages are sent as echoes of
$\WAVE-\READY(\level+1, I_2)$. Moreover, assume that an echo $\READY(\level+1, I_2)$ is sent by a node only after it received  $\WAVE^{\FREE-I_2}(F^{\level}, \level)$. The claim for such an alternative algorithm would follow from
 Observation \ref{obs:multi-wave-semantics} and the known properties of Wave\&Echo. Now, it is easy to verify that
 the $\MULTI$ described simulates that alternative algorithm. That is,
  (1) $\MULTI(T, I_1, I_2)$ is sent by a node $ v_{\level+1}$ who belongs to a fragment of level $\level+1$ to its child $u $ in the same fragment exactly when it would have sent the imaginary $\WAVE-\READY(\level+1, I_2)$.
  This is easy to show by induction on the order of events. Moreover, at that time, the child $u $ knows the information carried by $\WAVE-\READY(\level+1, I_2)$ since it knows (from its $\ROOTS)$  which fragments it shares with its parent (and for each one of them we simulate the case $u$ now receives  $\WAVE-\READY)$.
 \end{proof}

\begin{observation}
The ideal time complexity of performing a multi-wave on the hierarch $\cH_{\cM}$ is $O(n)$.
\end{observation}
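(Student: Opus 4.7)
The plan is to bound the time of $\MULTI$ by identifying a critical path of sequential actions through the hierarchy and showing that this path has total length $O(n)$. By Observation~\ref{obs:multi-wave-semantics2}, $\MULTI$ faithfully simulates the sequence of level-$0$, level-$1$, \ldots, level-$\ell$ Wave\&Echo operations, so I only need to sum, along any fixed chain of nested fragments $F^0\subset F^1\subset\cdots\subset F^\ell = T$, the time spent in (a) the initial broadcast of $\MULTI(T,I_1,I_2)$, (b) each level-$\level$ wave\&echo confined to $F^\level$, and (c) each informing wave $\WAVE^{\FREE-I_2}(F^\level,\level)$ that releases the $\READY(\level+1,I_2)$ convergecast inside $F^{\level+1}$.

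First I would note that by Lemma~\ref{lem:disappear}, an active fragment at level $\level$ has at most $2^{\level+1}-1$ nodes, so its diameter is also $O(2^\level)$. Hence both the wave\&echo $\WAVE^{I_2}(F^\level,\level)$ and the informing wave $\WAVE^{\FREE-I_2}(F^\level,\level)$ cost $O(2^\level)$ ideal time, and the subsequent $\READY(\level+1,I_2)$ convergecast through $F^{\level+1}$ costs $O(2^{\level+1})$. Summing over $\level = 0,1,\ldots,\ell$, the geometric sum gives
\[
\sum_{\level=0}^{\ell} O(2^\level) \;=\; O(2^{\ell+1}) \;=\; O(n),
\]
since $\ell \le \lceil\log n\rceil$. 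The initial broadcast of $\MULTI(T,I_1,I_2)$ from the root of $T$ to every node adds $O(n)$ (the depth of $T$), and the final echo back to the root of $T$ is just the level-$\ell$ wave\&echo already counted as $O(|T|) = O(n)$. All contributions on the critical path sum to $O(n)$.

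Finally, I would argue that no other trajectory can be longer: any action performed by $\MULTI$ is triggered either by a message traveling towards the root of some fragment (bounded by the diameter of that fragment) or by an informing/broadcast wave traveling away from such a root (again bounded by the diameter of its fragment), and the chain of dependencies along which one action must wait for another is precisely the nested chain $F^0\subset F^1\subset\cdots\subset F^\ell$ considered above. The main subtlety, which I expect to be the only delicate point, is ensuring that waves at different levels across disjoint fragments really do execute in parallel rather than serially; this is guaranteed by the semantics of $\MULTI$ established in Observation~\ref{obs:multi-wave-semantics2}, and the preceding bound of $O(n)$ on a single dependency chain then yields the desired overall ideal time complexity.
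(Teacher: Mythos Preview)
Your proposal is correct and follows essentially the same approach as the paper: both arguments use Lemma~\ref{lem:disappear} to bound $|F^\level|=O(2^\level)$, sum the resulting geometric series over levels to get $O(n)$, and add the $O(n)$ cost of the initial broadcast from the root of $T$. Your version is in fact more explicit than the paper's about the critical-path/dependency-chain reasoning and about why disjoint same-level fragments proceed in parallel; the paper's proof compresses all of this into a one-line claim that the level-$\level$ wave ``starts at time $O(2^\level)$ after the initiation of the multi-wave.''
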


\begin{proof}
The wave started by the root consumes $O(n)$ time. Recall that hierarchy $\cH_{\cM}$ corresponds to active fragments during the construction of the MST by algorithm $\ALG$.
Hence, Lemma \ref{lem:disappear} implies that in hierarchy $\cH_{\cM}$,
the size of a level $\level$ fragment $F^{\level}$ satisfies
$ 2^j \leq |F^{\level}| < 2^{j+1}$. Thus,
 each wave started by a root of a fragment $F^{\level}$ of level $\level$ takes $O(2^{\level})$ time, and starts at time $O(2^{\level})$ after the initiation of the multi-wave.
\end{proof}


\subsubsection{Distributed construction of partition ${\cal{P}}'$}

The construction of partition ${\cal{P}}'$ is performed in several stages. Each of the tasks below is performed using  the $\MULTI$ primitive.
This is rather straightforward, given that the usage of Wave\&Echo as a primitive is very well studied. Below, we give some hints and overview.

First, we need to identify red fragments. It is easy to count the nodes in a fragment using Wave\&Echo to know which fragment has more than $\log n$ nodes.
 However, a large fragment that properly contains a red fragment is not red itself. Hence, the count is performed first in fragments lower in the hierarchy, and only then in fragments that are higher. Recall that the  $\MULTI$ primitive indeed completes  first waves in fragments that are lower in the hierarchy, before moving to fragments that are higher.
Hence, one execution of the  $\MULTI$ primitive allows to identify red fragment. At the end of this execution, the roots of fragments  know whether they are the roots of red fragments or not. A similar technique can be applied to identify blue fragments.

A second task is to identify a large  fragment $F_{large}$  that is not red, but has a child fragment who is red. It is an easy exercise to perform the construction using the $\MULTI$ primitive.
\commcomm
  In such a case, if the level of $F_{large}$ is $j$,
we set the variable
$\rchild(j)$  (initialized to $\False$) at a root of  $F_{large}$ to $\True$.
This task can be performed by the same set of Wave\&Echoes as above. Recall that in the $\MULTI$ primitive, the Wave\&Echo in $F_{large}$ is performed after the Wave\&Echo in its red child fragment $F_{red}$ terminated in
its root $v_{red}$, and
identified that $F_{red}$ was red.
Recall also, that $F_{large}$ contains $F_{red}$, so the Wave\&Echo in $F_{large}$ passes in $v_{red}$. Thus, by setting the proper flags, it is easy to set the value of
$\rchild(j)$ in $F_{large}$ correctly. We also make use of an output $\cred(j)$ (initialized to $\False$),  that is set to $\True$ if $F^j$ is either red or contains a red descendant fragment. This output of the first two tasks is  useful for performing a third task: identifying the blue fragments.

\begin{observation}
\label{obs:1st-partition}
Consider a fragment $F_{large}$ of some level $j$ which has a red child. When the above two tasks are completed successfully, the following holds.
\begin{enumerate}
\item
The value of $\rchild(j)$ in the root $r_{large}$ of $F_{large}$ is $\True$.
\item
Consider each child fragment $F^i$ of some level $i$ of $F_{large}$.
Assume further that $F^i$ is either red itself, or  it contains a red fragment. Then the value of $\cred(i)$ in the root $r_i$ of $F_i$ is $\True$.
 \item
 A child fragment $F^i$ of $F_{large}$ whose root $r_i$ does {\em not}
 have $\True$ in $\cred(i)$ is a blue fragment.
\end{enumerate}

\end{observation}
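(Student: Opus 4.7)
The plan is to exploit the bottom-up semantics of the $\MULTI$ primitive established in Observations~\ref{obs:multi-wave-semantics} and~\ref{obs:multi-wave-semantics2}: the wave running inside a level-$j$ fragment only starts after every wave inside a descendant fragment has already terminated at the corresponding root, and distinct simulated waves at different levels do not interfere with one another. Consequently, when $r_{large}$ executes its own level-$j$ wave, the roots of all strict descendants of $F_{large}$ in $\cH$ already hold the values they computed in the previous ``red-identification'' task, and those values will not be overwritten later.

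First I would prove claims (1) and (2) simultaneously by induction on the level $i$. The base case $i=0$ is trivial: level-$0$ fragments are singletons, are never red, and have no children, so $\cred(0)=\False$ is correct. For the inductive step on a level-$i$ fragment $F^i$, the second task instructs the level-$i$ wave inside $F^i$ to convergecast to $r_i$ the $\OR$ of two bits: the bit indicating that $F^i$ itself was flagged red during the first task, and the values $\cred(i')$ at the roots $r_{i'}$ of the child fragments $F^{i'}$ of $F^i$ in $\cH$. By Observation~\ref{obs:multi-wave-semantics2}, those child values are already set, and by the inductive hypothesis they correctly record whether each $F^{i'}$ contains a red fragment. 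Hence $\cred(i)$ at $r_i$ evaluates to $\True$ precisely when $F^i$ is red or has a red descendant, giving~(2). Claim~(1) is the special case $i=j$: since $F_{large}$ has a red child $F^{i'}$, the inductive statement yields $\cred(i')=\True$ at $r_{i'}$, so the $\OR$ performed when computing $\rchild(j)$ at $r_{large}$ produces $\True$.

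Claim~(3) is then a purely combinatorial consequence of the definitions together with~(2). Let $F^i$ be a child of $F_{large}$ with $\cred(i)=\False$. By the equivalence proved in the previous paragraph, $F^i$ is neither red nor contains a red fragment. Red fragments are precisely the leaves of $T_{\Top}$, and a trivial induction on the height inside $T_{\Top}$ shows that every fragment of $T_{\Top}$ either is itself a leaf of $T_{\Top}$ or has a red descendant through a downward path in $T_{\Top}$. Hence $F^i\notin T_{\Top}$, i.e., $F^i$ is a bottom fragment. Conversely, since $F_{large}$ has a red child, $F_{large}$ is an internal node of $T_{\Top}$, in particular a top fragment. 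A fragment outside $T_{\Top}$ whose parent in $\cH$ lies in $T_{\Top}$ is by definition blue, which establishes~(3). The only delicate point I anticipate is ensuring that the ``red-identification'' task really finishes writing each $\cred(i)$ before the ``large-with-red-child'' task reads it across the whole hierarchy; this is exactly the guarantee provided by Observation~\ref{obs:multi-wave-semantics2}, so the obstacle reduces to a clean appeal to that observation.
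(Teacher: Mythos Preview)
The paper does not prove this observation at all; it is stated as immediate from the \emph{specifications} of the two preceding tasks, which already say in so many words that $\rchild(j)$ is set to $\True$ at the root of any non-red fragment with a red child and that $\cred(i)$ is set to $\True$ at the root of any fragment that is red or contains a red descendant. Given those specifications, parts~(1) and~(2) are restatements, and part~(3) is a two-line unwinding of the definition of ``blue''.

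Your route is genuinely different: instead of reading the observation off the task specifications, you fix a concrete bottom-up implementation (OR-convergecast of the red flag together with the children's $\cred$ bits) and argue its correctness by induction on the level, invoking Observations~\ref{obs:multi-wave-semantics} and~\ref{obs:multi-wave-semantics2} to guarantee that child values are finalized before a parent reads them. For parts~(2) and~(3) this is sound and, if anything, more informative than the paper, since it actually justifies an implementation rather than assuming one. The one wrinkle is part~(1): you describe the OR only for $\cred$, and then appeal to ``the $\OR$ performed when computing $\rchild(j)$'' without ever saying what that computation is. In the paper $\rchild$ and $\cred$ are distinct variables with distinct semantics ($\rchild$ flags a red \emph{child}, $\cred$ flags a red \emph{descendant}), so they are not computed by the same OR. Your argument still goes through for the one-sided implication in~(1) because a red child is in particular a red descendant, but you should either say explicitly that you are computing $\rchild(j)$ as the OR over children's ``is red'' flags (not their $\cred$ values), or note that you are proving the weaker $\cred(j)=\True$ at $r_{large}$ and that this already suffices for the use the paper makes of~(1).
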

Based on the last part of the above observation, it is easy to have a root detect it is the root of a blue fragment, using another Wave\&Echo (that starts by the root of the parent fragment $F_{large}$).
 As before, the root of the final tree initiates a $\MULTI$ to facilitate that detection.

third task: identifying the blue fragments.

\commend

The third task is that of identifying the blue fragments.
A fourth task is to let each node in a blue fragment, and each node in a red fragment, know the color of their fragments. Again, designing these tasks is an easy exercise given the example of the first task above, and the $\MULTI$ primitive.

   \commcomm
   is easy to do using a Wave\&Echo from a root who already knows this color, as established above. This whole set of Wave\&Echoes is started by the root of the final tree (again, using a $\MULTI$).
\commend

\subsubsection{Constructing partition ${\cal{P}}''$}
  It is rather straightforward to use waves\&echos to implement procedure Merge to generate partition
   ${\cal{P}}''$. The red fragments use waves\&echos to annex roots of sibling blue fragments. They become pink parts (in the terminology of paragraph \ref{par:red-blue-merge}. Then this is repeated in the parent fragment, etc. Since this process goes from a lower level fragment to higher and higher levels, the $\MULTI$ primitive handles this well.

\commcomm
Hence, we construct ${\cal{P}}''$ inductively, using $\ell+1$
Wave\&Echoes, $\TREECOARSEN(T,j)$, for  $j=0, 1, 2, ... \ell$.
To define the inductive property, let us define the following.
   Call a partition in which each part consists of one red fragment and zero or more blue fragments a {\em red-centered partition}. Notice that the process defined in the previous paragraph, establishes a red-centered partition of
   $F_{large}$.
   This establishes the base for the following inductive property:\\

   \noindent {\bf Invariant 1:} Consider the time $\TREECOARSEN(T,j)$ is started. Let $F^i$ of level $i<j$ be a fragment who contains a red fragment (or is red itself). Then, $F^i$ is already partitioned into a red-centered partition. Moreover, each part is rooted in its highest (in the final tree) node.\\

   It follows from Observation \ref{obs:1st-partition}, that any fragment child of $F_{huge}$ that does not contain a red fragment (and is not red itself) is blue. By the induction assumption, when  $\TREECOARSEN(T,j)$ is started in some $F_{huge}$ of level $j$, any fragment child of  $F_{huge}$ is either red-centered partitioned, or is blue. The process is very similar to the one described for $F_{large}$ in the first paragraph of this construction. That is, the root of each red-centered part starts yet another Wave\&Echo (in $F_{huge}$ only) carrying its own $\id$ with invitations to blue fragments to join. Consider the root $v_{blue}$ of a blue fragment (a child fragment of $F_{huge}$) who has not yet joined a part of  ${\cal{P}}''$. Root $v_{blue}$ joins (together with its whole blue fragment) the part constructed by the root corresponding to the first such wave $v_{blue}$ receives.
\commend

\subsubsection{Constructing partition $\Top$}
Upon receiving the echoes for the $\MULTI$ primitive constructing
partition ${\cal{P}}''$, the root of the final tree instructs (by yet another Wave) each $\PARTLEADER$ of
 ${\cal{P}}''$ to start partitioning its part into parts of
partition $\Top$. That is, each part of ${\cal{P}}''$ is partitioned into subtrees, each of diameter  $O(\log n)$ and of size $\Omega(\log n)$.
This task is described in \cite{kdom}. When it is completed, each part of $\Top$ is rooted at its highest node. Moreover, every node in that part is marked by
the name of its part leader, in its variable called $\Top-\Roo$. 
 (Since $\Top$ is a partition, each node belongs only to one part; hence, this does not violate the $O(\log n)$ bits constraint.)

\subsubsection{Constructing partition $\Bottom$}
Recall that the parts of the second partition $\Bottom$ are (1) the blue fragments and (2) the child fragments of red fragments.
Let us term the latter {\em green fragments}.
We already established that members and roots of blue fragments know that they are members and roots of blue fragments.
The green fragments are notified in a similar way the blue ones were. That is, the root of the final tree starts a Wave\&Echo instructing the roots of the red fragments to notify child fragments that they are green.

\begin{claim}
\label{claim:top-bottom-assignment}
The two partitions $\Top$ and $\Bottom$ described in Section \ref{sec:partitions} can be assigned in time $O(n)$ and memory size $O(\log n)$.
\end{claim}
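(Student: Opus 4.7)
The plan is to bound the time and memory of each of the four construction stages described in Sections above (constructing ${\cal P}'$, constructing ${\cal P}''$, constructing $\Top$, and constructing $\Bottom$) and then sum them up. Throughout, the main workhorse is the $\MULTI$ primitive, whose ideal time complexity on $\cH_{\cM}$ was already shown to be $O(n)$, and which at each node requires only $O(\log n)$ local state (a constant number of per-fragment status bits, the current level, and the instruction being carried).

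First I would handle the construction of ${\cal P}'$. A single invocation of $\MULTI$ with an instruction that counts the nodes of every fragment and compares the count to $\log n$ suffices to let each fragment root $r(F)$ decide whether $F$ is a leaf of $T_{\Top}$; by Observation \ref{obs:multi-wave-semantics}, the wave in a fragment $F$ terminates before the wave in its parent begins, so each root can compare the sum of sizes of its red-or-red-containing children against $\log n$ to decide whether it is itself large. One more $\MULTI$ (piggy-backed on the first or executed immediately after) propagates the red/blue labels downward along the hierarchy. Each node stores only a constant number of bits per active level it belongs to; since only $O(1)$ such bits are needed \emph{simultaneously} during each wave phase (the $\MULTI$ processes levels in order), the memory per node is $O(\log n)$. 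Total time: $O(n)$.

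Next, partition ${\cal P}''$ is obtained by running procedure Merge through a further $\MULTI$: each blue fragment root looks, within its parent large fragment, for a sibling part $\tilde P$ that touches it, and annexes itself to one such part, writing the $\id$ of the root of $\tilde P$ in a designated $O(\log n)$-bit field. Because each node ends up in exactly one part of ${\cal P}''$, at most $O(\log n)$ bits per node are needed. Because Merge progresses bottom-up through $\cH_{\cM}$ and the $\MULTI$ processes levels in that order, one multi-wave suffices and the time is again $O(n)$. To produce $\Top$ we then invoke, at each part $\tilde P$ of ${\cal P}''$ in parallel, the bounded-diameter tree-partitioning subroutine of \cite{kdom}, which  partitions a subtree into neighbourhoods of diameter $O(\log n)$ and size $\Omega(\log n)$ in linear time and with $O(\log n)$ memory per node; each node writes the identity of its $\Top$-part root in its $\Top\text{-}\Roo$ variable. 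Since the parts of ${\cal P}''$ are node-disjoint, these sub-invocations do not interfere, and the total time remains $O(n)$.

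Finally, $\Bottom$ is essentially free: the blue fragments are already labelled from the ${\cal P}'$ stage, and the green fragments (children of red fragments) are identified by one more $\MULTI$ in which every red root instructs its child fragment roots to flag themselves as green; every node in a green or blue fragment learns its $\Bottom$-part root via the corresponding intra-fragment wave. This again takes $O(n)$ time and adds $O(\log n)$ bits per node. Summing the four stages gives overall time $O(n)$ and memory $O(\log n)$, as claimed. The step I expect to require the most care is bounding the \emph{simultaneous} memory footprint during the $\MULTI$ invocations: a node lies in $\Theta(\log n)$ fragments, and a naive implementation would keep a status bit per fragment at the node; the key observation making it go through is that $\MULTI$ touches the levels sequentially, so it is enough to store the state for the currently active level plus the small permanent fields (part-root identifiers, color, role) introduced above.
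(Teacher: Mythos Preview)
Your proposal is correct and mirrors the paper's approach: both sequence the same four construction stages (${\cal P}'$, ${\cal P}''$, $\Top$, $\Bottom$), drive each with the $\MULTI$ primitive (or the \cite{kdom} subroutine for the final neighbourhood refinement), and rely on the already-established $O(n)$ running time of $\MULTI$ on $\cH_{\cM}$. One minor imprecision: the memory worry you flag at the end---``a status bit per fragment''---is already only $O(\log n)$ bits total and harmless; the genuine pressure comes from the $O(\log n)$-bit-per-level quantities such as the subtree counts carried during the echo, and it is exactly your level-sequential observation about $\MULTI$ that keeps this within $O(\log n)$.
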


\subsubsection{Initializing the trains information}
\label{sub:init-trains}

First we describe a primitive that a root  of a part $P$ uses for storing $\Info(F)$ of one given fragment $F\in P$.
This is a well known distributed algorithm, so we do not describe it in detail. We use a distributed Depth First Search (DFS), see, e.g.~\cite{liuba-oded, chlamtac, awerbuch-dfs}. Initially, all the nodes in a part $P$ are marked $\vacant(P)$. When the root of the part wants to store the $\Info(F)$ of some fragment $F$, it sends this $\Info(F)$ (with a token) to perform a DFS traversal of part $P$. The first time that token reaches a node marked $\vacant(P)$,
it sets $\vacant(P)$ to $\False$ and stores $\Info(F)$ in that node.
It is left to describe how the root of a part gets $\Info(F)$ for each $F$ whose $\Info(F)$ should reside in that part.

\subsubsection{Storing $\Info$ in  partition $\Top$}
\label{subsub:storing-1st}
A part $P''$ in  partition $\cP''$ contains precisely one red fragment $F_{red}$. Hence, we call such fragments {\em red-centered}.
Consider a part $P$ in  partition $\Top$ that was created from a red-centered part  $P''\in\cP''$.
Recall that such a part $P$ should store only the $\Info$ of top fragments it intersects. Since each such top fragment is an ancestor fragment of $F_{red}$, we let  part  $P$ store the $\Info$ of all ancestor fragments of $F_{red}$. Hence, the set of $\Info$ stored at $P$ includes the  $\Info$  of all fragments $P$ intersects, but may include more $\Info$'s (of fragments
intersecting $P''$ but not $P$). Nevertheless, note that,  for every $j$, these other $\Info$'s correspond to at most one fragment in level $j$. This follows simply from the fact that  $F_{red}$ intersects at most one level $j$ fragment (see Claim \ref{cla:red-blue-intersect}). Recall also that the root of the part knows it is a root of a part (by comparing its $\Top-\Roo$ variable with its identity), and every node knows which part it belongs to (again, using its $\Top-\Roo$ variable)
as well as who are its parent and children in the part. (The latter information a node can deduce by reading each tree neighbour.)

The root of the final tree $T$ starts a $\MULTI$  over $T$. Fix a level $j$.
The $j$th wave of the multi-wave, which we term $\SENDANCESTORINFO(T,j$), signals the roots of every top fragment $F^j$ of level $j$ to obtain the information $\Info(F^j)$ and to send it to the roots of the parts of partition $\Top$ intersecting $F^j$.
 Consider a root $v_j$ of such a fragment $F^j$ who receives the signal of $\SENDANCESTORINFO(T,j)$.
First, to obtain $\Info(F^j)$, node $v_j$ must find the weight of
 $e$, the minimum outgoing edge of $F^j$. Recall that the endpoint $u\in F_j$ of $e=(u,v)$ can identify it is the endpoint
 using the $j$'th position in  $\EndP_u$, and can identify which of its incident edges is $e$.
So, node $v_j$ starts another Wave\&Echo bringing the weight of $e$ to $v_j$.
(Note that $v_j$ is the root of a single fragment in level $j$, though it may be the root of other fragments in other levels; hence, at the time of the wave of level $j$, it handles the piece of only one fragment, namely, $F_j$; hence, not congestion arises).
When this wave terminates, $v_j$ sets  $\Info(F^j)=(\id(v_j), \omega(e))$ and starts  another Wave\&Echo, called $\ANCESTORINFO(F^j, j)$, conveying $\Info(F^j)$ to roots of  the parts of partition $\Top$ intersecting $F^j$.
 To implement  this, $v_j$, the root of $F_j$, first broadcasts $\Info(F^j)$ to the nodes of $F^j$. At this point, each node in $F_j$ knows $\Info(F^j)$.  Next, our goal is to deliver $\Info(F^j)$ to the roots of parts in partition $\Top$ intersecting $F^j$.
However, note that since $F_j$ is a subtree, and all parts are subtrees, the roots of the parts of partition $\Top$ intersecting $F^j$ are all contained in $F_j$, except maybe the root $u_j$ of the part containing $v_j$. So, by now, all roots of parts in partition $\Top$ intersecting $F^j$, except maybe $u_j$, already know $\Info(F^j)$.
To inform $u_j$ it suffices to deliver $\Info(F^j)$ up the tree, from $v_j$ to $u_j$. Since, all roots of parts in $\Top$, and  $u_j$ in particular, know they are roots, this procedure is trivial. Finally, to complete the wave at level $j$, a root of a $\Top$ part receiving   $\Info(F_j)$ stores it in its part as described at top of this section (that is, storing each piece at a node in the part, following a DFS order).

Note, that since the diameter of a part in $\Top$ is of length $O(\log n)$, the wave of level $j$ can be implemented in $O(|F_j|+\log n)=O(2^j+\log n)$ time. Altogether, the $\MULTI$  over $T$ is completed by time $$O(\sum_{j=1}^{\log n} 2^j+\log n)=O(n).$$

\subsubsection{Storing $\Info$ in partition $\Bottom$}
\label{subsub:storing-2nd}

Recall that a part in partition $\Bottom$ is a fragment of size $O(\log n)$. The root of such a part $P$ collects the $\Info$ of fragments in $P$ of each level $i$ by issuing a Wave\&Echo for level $i$.
The weight of the minimum outgoing edge of each fragment $F^i$ of level $i$ is then collected by the root of $F^i$. This ensures that the $\Info(F^i)$ for each fragment $F^i$ of level $i$ in the fragment arrives at $F^i$'s root. Finally, the Wave\&Echo collects the $\Info$s from the roots of the fragments in the $\Bottom$ part to the root of the part. It is easy to see the following.

\begin{claim}
\label{claim:initial-trains}
The initialization of  the trains information described in Section \ref{app:neighbourhoods} can be done in time $O(n)$ and memory size $O(\log n)$.
\end{claim}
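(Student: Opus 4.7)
The plan is to verify the two stated bounds separately for each of the two storage procedures described in Sections \ref{subsub:storing-1st} and \ref{subsub:storing-2nd}, and then to combine them.

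For the time bound on storage into partition $\Top$ (Section \ref{subsub:storing-1st}), the argument is essentially already sketched in the text and needs only to be tightened. The procedure consists of a single $\MULTI$ over $T$, whose level-$j$ sub-wave $\SENDANCESTORINFO(T,j)$ is completed in each level-$j$ top fragment $F^j$ by three internal Wave\&Echo operations: one to bring the weight of the minimum outgoing edge of $F^j$ to its root, one to broadcast $\Info(F^j)$ within $F^j$ and up to the root of the unique $\Top$ part whose root sits above $F^j$, and finally, at each $\Top$-part root receiving $\Info(F^j)$, a DFS-based placement of the new piece into the first $\vacant$ node of its part. By Lemma \ref{lem:disappear}, $|F^j|=\Theta(2^j)$, and by Lemma \ref{lem:top}, every $\Top$ part has diameter $O(\log n)$ and size $\Omega(\log n)$, so the DFS traversal of a $\Top$ part costs $O(|P|)$. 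The level-$j$ work therefore finishes $O(2^j+\log n)$ time after level $j$ begins, and the $\MULTI$ semantics of Observation \ref{obs:multi-wave-semantics2} ensures the levels do not overlap within a fragment. Summing geometrically, $\sum_{j=0}^{\lceil \log n\rceil}(2^j+\log n)=O(n)$, and the DFS placements run in parallel on disjoint $\Top$ parts whose total size is $n$, contributing $O(n)$ overall.

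For the time bound on storage into partition $\Bottom$ (Section \ref{subsub:storing-2nd}), each $\Bottom$ part is a fragment of size less than $\log n$ by Lemma \ref{lem:bottom}. Hence every internal Wave\&Echo (one per level present in the part, at most $\lceil\log n\rceil$ of them, executed sequentially via a local $\MULTI$) costs $O(\log n)$, and the subsequent DFS placement costs $O(\log n)$. Since the $\Bottom$ parts are pairwise node-disjoint, all of them initialize in parallel, so the total time is $O(\log^2 n) = O(n)$; in fact this step runs concurrently with the $\Top$ initialization. Combined with the previous paragraph, the total initialization time is $O(n)$.

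For the memory bound, we need to verify two things: (i) the number of pieces a node stores \emph{permanently} is $O(1)$, and (ii) no node is forced to hold more than $O(\log n)$ transient bits at any moment. Item (i) is immediate from the design in Section~\ref{subsec:train-initial}: each node belongs to one part in $\Top$ and one part in $\Bottom$, and within each part the DFS-based placement assigns at most one pair $\piece(i)$ per node, so at most two pairs, i.e.\ $O(\log n)$ permanent bits. For (ii), note that within a single level $j$ of the governing $\MULTI$, each node participates in at most one auxiliary Wave\&Echo (for the fragment, the broadcast, or the DFS token carrying one $\Info$) and the $\MULTI$ serializes level $j$ against level $j{+}1$ by Observation~\ref{obs:multi-wave-semantics2}. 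Hence at any instant a node buffers only a constant number of $O(\log n)$-bit messages; the $\Bottom$ procedure contributes similarly a constant number of $O(\log n)$-bit messages, since disjointness of parts prevents any node from simultaneously serving two $\Bottom$ subroutines. The main subtlety—and the place where the argument must be carried out carefully—is precisely this ``no congestion'' check for the nested waves inside a top fragment $F^j$: because the $\MULTI$ guarantees that the level-$j$ wave in $F^j$ starts only after all its descendant-level waves have terminated, and because a single node $v_j$ is root of at most one level-$j$ fragment, the three Wave\&Echoes associated with $\SENDANCESTORINFO(T,j)$ never coexist in time at any node with those of another level-$j$ fragment. This gives the claimed $O(\log n)$ memory bound and completes the proof.
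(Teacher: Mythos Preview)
Your proof is correct and follows essentially the same approach as the paper, which in fact only states the time calculation $\sum_j O(2^j+\log n)=O(n)$ in the preceding text and then declares the claim ``easy to see''; you supply the additional memory and $\Bottom$-partition details the paper omits.

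One minor tightening: your sentence ``the DFS traversal of a $\Top$ part costs $O(|P|)$'' followed by ``contributing $O(n)$ overall'' via disjointness does not quite close, since a single placement could cost $O(|P|)$ and there are $O(\log n)$ placements per part, while only $D(P)=O(\log n)$ (not $|P|$) is bounded. The clean bound is that placing the $j$-th piece via DFS visits only the first $j$ DFS nodes (all already non-vacant) before reaching a vacant one, so costs $O(j)\le O(\log n)$ per level, giving $O(\log^2 n)$ total DFS time per part---absorbed in the $O(n)$ sum. With this correction the argument is airtight.
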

The next corollary that summarizes  this section follows from Lemma \ref{lem:lineal-marker} and Claims \ref{claim:top-bottom-assignment} and   \ref{claim:initial-trains}.
\begin{corollary}\label{cor:marker-time}
The marker algorithm $\cM$ can be implemented using memory size $O(\log n)$ and $O(n)$ time.
\end{corollary}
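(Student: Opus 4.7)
The plan is to simply assemble the corollary from the three building blocks already established, checking that running them in sequence preserves the $O(n)$ time bound and the $O(\log n)$ per-node memory bound.

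First, I would observe that the marker $\cM$ is naturally decomposed into three stages that have each been analyzed independently above: (a) assigning the labels of the 1-proof labeling scheme for the hierarchy $\cH_{\cM}$ and the candidate function $\chi_{\cM}$ (i.e., the $\ROOTS$, $\EndP$, $\PARENTS$ and $\AND$ strings on top of the tree and size labels), handled by Lemma \ref{lem:lineal-marker}; (b) constructing the two partitions $\Top$ and $\Bottom$ and writing at each node the identity of the root of the (unique) part of each partition containing it, handled by Claim \ref{claim:top-bottom-assignment}; and (c) initializing the trains, i.e., distributing the pieces $\piece(i) = \Info(F_{2i-1})\circ\Info(F_{2i})$ along the DFS order of each part, handled by Claim \ref{claim:initial-trains}.

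Next, I would argue that these stages can be run one after the other. Each stage relies only on information produced by the previous stages, so sequential composition is the natural order: stage (b) needs to know the hierarchy (to decide which fragments are top/bottom/red/blue/green and to use the $\MULTI$ primitive), which is produced by stage (a); stage (c) needs both the hierarchy and the partitions into parts, to know where to store each $\Info(F)$. Starting each new stage can be triggered by a Wave\&Echo from the root of $T$, which contributes only $O(n)$ time and $O(\log n)$ bits of state per node. Since each of the three stages costs $O(n)$ time, the total time is $O(n)$.

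For the memory bound, the key observation is that the variables used by the three stages are local and can be reused: once stage (a) terminates, only the labels it produced need to be kept; the auxiliary variables used during its execution can be discarded before stage (b) starts, and similarly between stages (b) and (c). Each of the three stages individually uses $O(\log n)$ bits per node by Lemma \ref{lem:lineal-marker} and Claims \ref{claim:top-bottom-assignment} and \ref{claim:initial-trains}, and the permanent output per node (a constant number of $\ROOTS/\EndP/\PARENTS/\AND$ entries per level, two part-root identities, and a constant number of pieces $\piece(i)$) is also $O(\log n)$ bits. Hence at no point during the execution of $\cM$ does any node store more than $O(\log n)$ bits, which together with the $O(n)$ time bound yields the claim. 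I do not expect any real obstacle here; the main thing to be careful about is that between stages one must explicitly free the scratch variables so that the $O(\log n)$ memory bound is respected both during and after $\cM$ terminates.
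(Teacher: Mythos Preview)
Your proposal is correct and takes essentially the same approach as the paper: the paper simply states that the corollary follows from Lemma~\ref{lem:lineal-marker} and Claims~\ref{claim:top-bottom-assignment} and~\ref{claim:initial-trains}, and you have (correctly) spelled out the obvious sequential-composition argument that justifies combining these three bounds.
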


\section{Viewing distributed information}
\label{sec:viewing}
\label{sub:3.2}

We now turn to the verifier algorithm of part of the proof labeling scheme that verifies  the Minimality property.

Consider a node $v$ and a fragment $F_j(v)$ of level $j$ containing it.
Recall that $\Info(F_j(v))$ should reside permanently in some node of a part $P$ to which $v$ belongs.
This information should be compared at $v$ with the information $\Info(F_j(u))$ regarding a neighbour~$u$ of $v$, hence both these pieces
must somehow  be ``brought'' to $v$.  The process handling this task contains several components.
The first is called a ``train'' and is responsible for moving the pieces' pairs $\piece(i)$ through $P$'s nodes, such that each node does not hold more than $O(\log n)$ bits at a time, and such that in short  time, each node in $P$ ``sees'' all pieces, and in their correct order.
(By short time, we mean $O(\log n)$ time in synchronous networks, and $O(\log^2 n)$ time asynchronous networks.)

Unfortunately, this is not enough, since $\Info(F_j(v))$ may arrive at $v$ at a different time than $\Info(F_j(u))$ arrives at $u$, hence some synchronization must be applied. Further difficulties arise
 from the fact that
the neighbours of a node~$v$ may belong to different parts, so different trains pass there.  Note that~$v$ may have many neighbours, and we would not want to synchronize so many trains.

A first idea  to obtain synchronization would have been to utilize  delays of trains.
However,  delaying trains at different nodes could accumulate, or could even cause deadlocks.
 Hence, we avoid delaying trains almost completely. Instead, each node $v$ repeatedly samples a piece from its train, and synchronizes the comparison of this piece with pieces sampled by its neighbours, while both trains advance without waiting.  Perhaps not surprisingly, this synchronization turns out to be easier in synchronous networks, than in asynchronous ones.

   This process presented below assumes that no fault occurs. The detection of faults is described later. 
   
\subsection{The trains}
\label{sub:the-trains}
For simplicity, we split the task of a train into two subtasks, each performed repeatedly --
the first, {\em convergecast}, moves (copies of) the pieces one at a time {\em pipelined}  from their permanent locations to $r(P)$, the root of part $P$,  according to the DFS order. (Recall, $\DFS(i)$   stores  permanently the $i$'th piece of  $\Info(P)$.)

\begin{definition}
 A {\em cycle} is a consecutive delivery of  the $k$ pairs of pieces $\piece(1), \piece(2),\cdots,\piece(k)$ to $r(P)$.
\end{definition}
Since we are concerned with at most
  $k\leq 2\log  n$  pairs  of pieces, each cycle   can be performed in
 $O(\log n)$ time.
The second subtask, {\em broadcast}, broadcasts each piece from $r(P)$ to all other nodes in $P$ (pipelined).
This subtask can be performed in $D(P)=O(\log n)$ time, where $D(P)$ is the diameter of $P$.
We now describe these two subtasks (and their stabilization) in detail.

Consider a part  $P$ (recall, a part is a subtree).
The (pipelined) broadcast in $P$ is the simpler subtask. Each node contains a broadcast buffer for the current broadcast piece, and the node's children (in the part) copy the piece to their own broadcast buffer. When all these children of a node acknowledge the reception of the piece, the node can copy the next piece into its broadcast buffer.
Obviously, this process guarantees that the broadcast of each piece is performed in $D(P)=O(\log n)$ time, where $D(P)$ is the diameter of $P$.

We now describe the convergecast subtask.
Informally, this is a recursive process that is similar to a distributed DFS. The subtask starts at the root. Each node $v$ which has woken-up, first wakes-up its first child (that is, signals the first child to start). When the first child $u_1$ finishes (delivering to $v$ all the pieces of information in $u_1$'s subtree), then $v$ wakes-up the next child, and so forth.

Each node holds  two buffers of $O(\log n)$ bits each for two pieces of the train, besides its own piece (that it holds permanently). The node uses one of these buffers, called the {\em incoming car},
 to read a piece from one of its children, while the other buffer, called the {\em outgoing car} is used to let its parent (if it has one) read the piece held by the node.
A node $v\neq r(P)$ participates in the following simple procedure whenever signaled by its parent to wake-up. Let   $u_1,u_2,\cdots, u_{d}$ denote the children of $v$ in $P$ (if any exists), ordered according
 to their corresponding port-numbers at $v$ (i.e., for $i<j$, child $u_i$ is visited before $u_j$ in the DFS tour).\\

 \noindent
{\bf Train Convergecast Protocol} (performed at each node $v\neq r(P)$)\\
(*Using two buffers: incoming car and outgoing car *)
\begin{enumerate}
\item
Copy $v$'s (permanent)  piece into $v$'s outgoing car
      \item
            For $i=1$ to $d$ (*$d$ is the number of $v$'s children*)
      \begin{enumerate}
         \item
         Signal $u_i$ to start performing the train algorithm; (*wake-up $u_i$*)
         \item
         Repeat until $v$ receives a signal ``finished'' from $u_i$
             \begin{enumerate}
             \item
          Copy the  piece from the outgoing car of $u_i$ to $v$'s incoming car
             \item
             Wait until $v$'s outgoing car is read by its parent (*to accomplish that, $v$ reads the incoming car of its parent and compares it with its outgoing car *)
             \item
             Move the piece from the incoming car to outgoing car (and, subsequently, empty the content of the incoming car);
       \end{enumerate}
      \end{enumerate}
   \item
      Report ``finished'' to parent;

\end{enumerate}
The train Convergecast protocol of the  root $r(P)$ is slightly different. Instead of waiting for its parent to read each piece, it waits for the train Broadcast protocol (at the root) to read the piece to its own buffer. Instead of reporting ``finished'' to its parent, it generates a new start to its first child.
\begin{theorem}\label{thm:train}
Let  $t_0$ be some time when
the root $r=r(P)$ of $P$ initiated the ``For'' loop of the train  Convergecast protocol.
Each node in $P$ sees the pieces in the cycle  $\{\piece(1), \piece(2), \cdots, \piece(\phi(P))\}$ in $O(\log n)$ time  after~$t_0$ in synchronous networks and in $O(\log^2 n)$ time  after~$t_0$ in asynchronous networks.
\end{theorem}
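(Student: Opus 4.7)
The plan is to analyze the convergecast and the broadcast subtasks separately, using the structural bounds established earlier, namely $D(P) = O(\log n)$ (Lemma~\ref{lem:top} and Lemma~\ref{lem:bottom}) and $\phi(P) \leq k \leq 2\log n$. The key invariant to establish is that the pieces are pushed up to $r(P)$ in the correct DFS order, pipelined, so that the first piece reaches the root quickly and subsequent pieces follow with small delay; the broadcast then spreads each piece to every node in $P$ in $O(D(P))$ time, also pipelined.

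For the synchronous case, I would first prove by induction on the tree depth that in each round a given piece can advance at least one hop toward the root, via the two-buffer handshake (the incoming car is filled from a child's outgoing car, then the content shifts to the node's own outgoing car as soon as the parent reads it). Because each node holds a dedicated outgoing car, no piece blocks the piece following it for more than a constant number of rounds, so pipelining yields a convergecast time of $O(D(P) + k) = O(\log n)$ for the entire cycle to be completed at $r(P)$. The broadcast subtask, symmetrically, pushes pieces from $r(P)$ downward through dedicated broadcast buffers, taking $O(D(P) + k) = O(\log n)$ rounds for every node to see every piece of the cycle. Adding the two bounds gives the claimed $O(\log n)$ time in the synchronous model.

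For the asynchronous case, the analogous inductive argument shows that any single piece completes one hop in $O(1)$ asynchronous time units. The complication is that the handshake between incoming car, outgoing car, and parent's read (steps 2(b)(ii)--(iii) of the protocol) introduces a serialization: a piece waiting in the incoming car cannot be replaced until the current content has been read by the parent and moved into the outgoing car, and the DFS structure processes children sequentially. I would argue that, in the worst case, each of the $O(\log n)$ pieces can be delayed by $O(D(P)) = O(\log n)$ steps while traveling up the part, giving a convergecast time of $O(D(P) \cdot k) = O(\log^2 n)$. The broadcast is again bounded by $O(D(P) + k) = O(\log n)$, so the overall bound is $O(\log^2 n)$, matching the statement.

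The main obstacle, I expect, is the formal argument that pipelining does not deteriorate further in the asynchronous model, i.e., that the $O(\log^2 n)$ bound is not violated by cascading delays between the convergecast and broadcast phases, or between the sequential handling of sibling subtrees in the DFS. Handling this cleanly will require defining a clear potential or credit scheme over (piece, position) pairs, showing that at each asynchronous step either some piece advances or the number of pieces still ``owed'' to the root strictly decreases. Once such an argument is in place, the combination of the two subtasks and their respective time bounds yields the theorem.
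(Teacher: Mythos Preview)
Your synchronous analysis is essentially right and matches the paper's: pipelining through the two-buffer handshake gives $O(D(P)+k)=O(\log n)$ for both the convergecast and the broadcast.

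Your asynchronous analysis, however, has a real gap. You attribute the extra $\log n$ factor to ``serialization'' in the convergecast---the handshake between incoming car, outgoing car, and parent read, together with the sequential DFS processing of siblings. But these mechanisms are \emph{identical} in the synchronous and asynchronous models, so they cannot explain why the bound degrades from $O(\log n)$ to $O(\log^2 n)$. In fact, the convergecast in isolation still pipelines to $O(D(P)+k)$ even asynchronously: each hop of the handshake is a local $O(1)$-time interaction. The actual bottleneck is something you treat as separate: the convergecast and broadcast are \emph{coupled at the root}. The root's outgoing car is not freed until the broadcast protocol reads it, and in the asynchronous model the broadcast may take up to $D(P)$ time between consecutive reads (because buffers down the broadcast tree must clear). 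So pieces back up at the root, not along the path.

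The paper handles this cleanly by introducing a single parameter $\tau$, the maximal time between two consecutive broadcast reads at the root ($\tau=1$ synchronously, $\tau\leq D(P)=O(\log n)$ asynchronously), and then measuring everything in \emph{phases} of $\tau$ time units. This unifies the two cases: one then proves, via a concrete potential $d_t(i)$ (the minimum DFS number of any node currently holding $\piece(i)$) and a double induction, that $d_t(x)\leq i-1$ whenever $t\geq 3x-i$ phases. Taking $i=1$ shows every piece reaches the root within $O(k)=O(\log n)$ phases, i.e., $O(\tau\log n)$ real time. Your proposed ``credit scheme over (piece, position) pairs'' is in the right spirit, but the paper's $d_t(i)$ is exactly such a potential, and the double induction is what replaces the vague pipelining argument once the root coupling is accounted for.
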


\begin{proof}
First observe that the train broadcast in a leaf node of the part who received a piece from its parent, does not need to pass that piece to any further children.
Hence the train process does not incur a deadlock.

As mentioned before, once the root sees a piece, the broadcast protocol guarantees that this piece is delivered to all nodes in the part in $D(P)=O(\log n)$ time.
Let $\tau'$ denote the maximal time period between two consecutive  times that the broadcast protocol at the root  reads the buffer of the convergecast protocol to take a new piece
 (a piece is actually taken only if the convergecast has managed to bring there a new piece, after the broadcast process took the previous one).
Now, denote $\tau=\max\{1,\tau'\}$.

\begin{observation}
In synchronous networks, we have $\tau=1$. In asynchronous networks, we have $\tau\leq D(P)=O(\log n)$.
\end{observation}
The first part of the observation is immediate. To see why the second part of the observation holds, note that
by the definition of time,  it takes $O(D(P))$ for a chain of events that transfer a piece to a distance of $D(P)$, in the case that all the buffers on the way are free; note that there is no deadlock and no congestion for information flowing down the tree, away from the root; this can be seen easily by induction on the distance of a broadcast piece from the furthest leaf; clearly, if the distance is zero, the piece is consumed, so there is already a room for a new piece; the rest of the induction is also trivial.\\

We shall measure the time
in {\em phases}, where  each phase consists of $\tau$  time units.   Let us start counting the time after time $t_0$, that is, we say that phase 0 starts at time $t_0$.
Our goal now is to show that (for either synchronous or asynchronous networks), for each $1\leq i < \phi(P)$, piece  $\piece(i)$ arrives at the root within $O(\log n)$ phases.

We say that a node $v$ is {\em holding} a piece at a given time  if either (1) $v$ keeps the piece permanently, or (2)  at the given time,
 the piece resides in either $v$'s incoming car or its outgoing car.
 Consider now phase $t$.   For each~$i$, where
$1\leq i \leq \phi(P)$,  if the root $r$ held  $\piece(i)$ at some time between $t_0$ and
 the beginning of the phase~$t$, then we say that $i$ is {\em not $t$-relevant}. Otherwise,
$i$ is {\em $t$-relevant}.
For any $t$-relevant $i$, where
$1\leq i \leq \phi(P)$,
let $d_t(i)$
denote the smallest  DFS number of a node $v$ holding  $\piece(i)$
at the beginning of phase $t$.
That is,
$d_t(i) =
\min \{ \DFS(u) |~ u \;\; {\rm holds} \;\; \piece(i) \;\; {\rm at \;\; time} \;\; t \}$.
For any  $i$ that is not $t$-relevant, let $d_t(i)=0$.
The following observation is immediate.

\begin{observation}\label{obs:d}
At any time $t$,
\begin{itemize}
\item
For any  $1\leq i \leq \phi(P)$,  we have $d_t(i)\geq d_{t+1}(i)$ (in other words, $d_t(i)$ cannot increase with the phase).
\item
For any  $1\leq i < \phi(P)$, we have  $d_t(i)\leq d_t(i+1)$.
\end{itemize}
\end{observation}
Informally, the following lemma gives a bound for the delay of a piece as a result of processing previous pieces.

\begin{lemma}
\label{lem:train}
Let $x$ and $i$ be two integers such that
  $1\leq i\leq  x\leq \phi(P)$.
  Then,  $d_t(x)\leq i-1$ ~for~ $t\geq 3x-i$.
\end{lemma}
To prove the lemma,
first observe that the condition holds for the {\em equality} case, that is, the case where $i=x$. Indeed,
for each  $1\leq  x\leq \phi(P)$, the node holding $\piece(x)$ permanently is at distance at most $x-1$ from the root.
Hence,
$d_{0}(x)\leq x-1$. Now, the condition follows since, by Observation~\ref{obs:d}, $d_t(x)$ cannot increase with the phase.

We now prove the lemma using a double induction.
The first induction is on $x$. The basis of the induction, i.e., the case $x=1$, is trivial, since it reduces to the equality case $i=x=1$.

Assume by induction that the condition holds for $x-1$ and any $i$, such that $1\leq i\leq x-1\leq \phi(P)$.
 We now prove that the condition holds for $x$ and any  $1\leq i\leq x$. This is done using a   reverse induction on $i$.

 The basis of this (second) induction, i.e., the case $i=x$, is an equality case and hence, it is already known to
 satisfy the desired condition.  Now assume by induction, that the condition holds for $x$ and $i$, where  $2\leq  i\leq x$, and let us show that it holds also for  $x$ and $i-1$.

Let us first consider the case $i=2$. By the (first) induction hypothesis (applied with values $x-1$  and $i=1$), we know that

$$d_{t'}(x-1)\leq 0 \mbox{~~~~where~~~~} t'= 3x-4.$$
 Thus, at phase $t'=3x-4$,  piece
 $\piece(x-1)$ is not $t'$-relevant. That is, at that time,
  piece $\piece(x-1)$  is either in the outgoing car of the root $r(P)$ or in the root's incoming car.
  In the first case, the incoming car of the root is already empty at $t'$. Otherwise, recall that,
 by definition, the broadcast process at the root consumes a piece from $r(P)$'s outgoing car every phase (if there is a new piece there it has not taken yet).
 Hence, the outgoing car at $r(P)$ is consumed by phase $t'+1$. By that phase, the root notices the piece is consumed,
  deliverers the content of its incoming car (namely, piece $\piece(x-1)$)
 to its outgoing car, and empties its incoming car.

 On the other hand, by the second induction hypothesis, $d(x) \le 1$ at the beginning of phase $3x-2=t'+2$. That is, $\piece(x)$ is at some child $v$ of the root. By the second part of Observation \ref{obs:d}, node $v$ is the child the root reads next, and, moreover no piece other than $\piece(x)$ is at the outgoing car of $v$.
 If at the beginning of phase $t'+2$, piece $\piece(x)$ is at the outgoing car of $v$, then the piece reaches the incoming car of the root already at phase $t'+2$. Otherwise, by at most phase $t'+3$, node $r(P)$ has a copy of $\piece(x)$ in its incoming car.
   This means that $d_{t''}(x)\leq 0$, where $t''=t'+3\leq 3x-(i-1)$, as desired.

 Now consider the case that $2<i$.  By the second induction hypothesis, we have $d_{t'}(x)\leq i-1$, where $t'=3x-i$.
If $d_{t'}(x)\leq i-2$ at the beginning of  phase $t'$ then we are done. Otherwise,
let $v$ be the node holding $\piece(x)$ at the beginning of phase  $t'$
such that the distance (on the tree) of $v$ from $r$ is $i-1$.
 Let $u$ be $v$'s parent. Our goal now is to show that $u$ holds $\piece(x)$ by phase $t'+1$.
The (first) induction hypothesis implies  that the condition holds for the pair $x-1$ and $i-2$. That is,
$$d_{t'-1}(x-1)\leq i-3.$$ 
 Thus, $\piece(x-1)$ has already been copied to $u$'s parent $w$. The only reason
 $\piece(x-1)$ may be stuck at $u$ (perhaps at both  the incoming and outgoing cars of $u$) at phase $t'-1$,
  is that  $u$ has not observed yet that its parent $w$ actually already copied $\piece(x-1)$. This is observed by $u$ by phase $t'$
 (when $u$ observes this, it empties the content of its incoming car).
Thus, by phase $t'+1$, node $u$ has a copy of $\piece(x)$, as desired. This concludes the proof of the lemma.

The theorem  now follows from the lemma and from the fact that $\phi(P)=O(\log n)$.
\end{proof}


\paragraph{Recognizing membership to arriving fragments: }
Consider now the case that a piece containing $\Info(F)$ carried by the broadcast wave arrives at some node $v$.
Abusing notations, we refer to this event by saying that fragment $F$ {\em arrives} at $v$.
Recall that $v$ does not~have~enough memory to remember the identifiers of all the fragments  containing it. Thus, a mechanism for letting~$v$ know~whether the arriving fragment $F$ contains $v$ must be employed. Note that the level $j$ of $F$ can be extracted from~$\Info(F)$, and recall that it is
already ensured that $v$ knows whether
it is contained in some level $j$ fragment. Obviously,  if $v$ is not contained in a level $j$ fragment then   $v\notin F$. If $F_j(v)$ does exist, we now explain
how to let  $v$ know whether~$F=F_j(v)$.

Consider first a train in a part  $P\in \Top$. Here, $P$ intersects  at most one level $j$ top fragment, for each level~$j$ (see Lemma~\ref{lem:top}). Thus, this train carries at most one level $j$  fragment $F_j$. Hence, $F_j=F_j(v)$ if and only if $F_j(v)$ exists.

Now consider a train in a part  $P\in \Bottom$.
In this case, part $P$ may intersect several bottom fragments of the same level.
 To allow a node $v$ to detect whether a fragment $F_j$ arriving  at $v$ corresponds to fragment $F_j(v)$, we slightly refine the above mentioned  train  broadcast mechanism as follows. During the broadcast wave, we attach a flag to each  $\Info(F)$, which can be either ``on'' or ``off'', where initially, the flag is ``off''.
 Recall that $\Info(F)$  contains the identity $\id(r(F))$ of the root $r(F)$ of $F$.
When the broadcast wave reaches  this root  $r(F)$ (or, when it starts in $r(F)$ in the case that $r(F)=r(P)$), node $r(F)$  changes the
flag to ``on''.
In contrast, before transmitting the broadcast wave from a leaf $u$ of $F$ to $u$'s children in $T$ (that do not belong to $F$), node $u$ sets the flag to ''off''.
That way,  a   fragment $F$ arriving at a node $v$ contains $v$ if and only if the corresponding flag is set to ``on''.
(Recall that the data structure lets each node know whether it is a leaf of a level $j$ fragment.) This process allows each node $v$ to detect whether $F=F_j(v)$.

To avoid delaying the train beyond a constant time, each node multiplexes the two trains passing via it.
 That is, it passes one piece of one train, then one piece of the other.

\subsection{Sampling and synchronizing}
\label{sub:Sampling and synchronizing}
Fix a partition (either $\Top$ or $\Bottom$), and a part $P$ of the partition.
Node $u\in P$ maintains two variables: $\Ask(u)$ and $\Show(u)$, each for holding one piece $\Info(F)$. In $\Ask(u)$, node $u$ keeps $\Info(F_j(u))$
for some $j$, until $u$ compares the piece $\Info(F_j(u))$ with the piece $\Info(F_j(v))$, for each of its neighbours $v$.
Let $\cE(u,v,j)$ denote the event that node $u$ holds $\Info(F_j(u))$ in $\Ask(u)$ and sees $\Info(F_j(v))$ in $\Show(v)$.
(For simplicity of presentation, we consider here the case that both $u$ and $v$ {\em do} belong to some
 fragments
 of level $j$;
 otherwise,
  storing and comparing the information for a non-existing
 fragments is trivial.)
 For any point in time~$t$, let $C(t)$ denote the minimal  time interval $C(t)=[t,x(t)]$  in which
 every event of the type $\cE(u,v,j)$ occurred.
 For the scheme to function, it is crucial that  $C(t)$ exists for every time~$t$. Moreover,
to have a fast scheme, we must ensure that $\max_t |C(t)|$ is small.

Recall that the train (that corresponds to $P$)  brings the pieces $\Info(F)$  in a cyclic order.
When $u$ has done comparing  $\Info(F_j(u))$ with  $\Info(F_j(v))$ for each of its neighbours $v$, node
$u$ waits until it receives (by the train)
the first piece $\Info(F)$ following $\Info(F_j(u))$ in the cyclic order, such that $F$ contains $u$ (recall that $u$ can identify this~$F$). Let us denote the level of this next fragment $F$ by $j'$, i.e.,
$F=F_{j'}(u)$.
 Node $u$ then removes $\Info(F_{j}(u))$ from  $\Ask(u)$ and stores $\Info(F_{j'}(u))$  there instead,
 and so~forth. Each node  $u$ also stores some piece $\Info(F_i(u))$ at
$\Show(u)$ to be seen by its neighbours.  (Note that the value at $\Show(u)$ may be different than the one at $\Ask(u)$.)

Let us  explain the comparing mechanism.
Assume that everything functions correctly. In particular, assume that the partitions and the distribution of the information are as described above, and the trains function correctly as well.
Let us first focus our attention on the simpler and seemingly more efficient synchronous case.

\subsubsection{The comparing mechanism in synchronous networks}
Fix  a node $v$.
 In a synchronous network, node $v$ sees $\Show(u)$ in every pulse, for each neighbour $u$.
 Let every node $u$ store in $\Show(u)$ each piece that arrives in the train (each time, replacing the previous content of $\Show(u)$).
 Hence, by Theorem \ref{thm:train}, given a level $j$, node $v$ sees  $\Info(F_{j}(u))$ (if such exists) within $O(\log n)$ time.
Put differently, if $v$ waits some  $O(\log n)$ time
(while $\Info(F_j(v))$ is in $\Ask(v)$), node $v$  sees $\Info(F_j(u))$ in $\Show(u)$ for each neighbour $u$.
(We do not assume that $u$ keeps track of which neighbours $v$ has already seen $\Info(F_j(u))$; node $v$ simply waits
sufficient time -- to allow one cycle of the train, while looking at its neighbours, looking for their $\Info$ for level $j$.)
Subsequently, node $v$ waits another $O(\log n)$ rounds until the train brings it $\Info(F_{j'}(v))$ and stores it in  $\Ask(v)$, and so forth.
In other words, we have just established that event  $\cE(v,u,j')$ occurs within $O(\log n)$ time after $v$ stores $\Info(F_{j'}(v))$ in $\Ask(v)$, which happens
$O(\log n)$ time after event $\cE(v,u,j)$, and so forth.
 The time for at most $\log n+1$ such events to occur (one per  level $j$)  is $O(\log^2 n)$.

\begin{lemma}
\label{lem:synchronous-time}
In a synchronous environment,  for each node $v$ and its neighbour $u$, all events of type $\cE(v,u,j)$ (for all levels $j$) occur within time
 $O( \log^2 n)$.
\end{lemma}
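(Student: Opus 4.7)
The plan is to bound the time for a single event $\cE(v,u,j)$ by $O(\log n)$, and then multiply by the number of levels, which is at most $\ell+1\le\lceil\log n\rceil+1$ by Lemma~\ref{lem:simple-proof}. The two key ingredients will both be supplied by Theorem~\ref{thm:train}: the cycle time of the train in $v$'s part $P_v$ and, independently, the cycle time of the train in $u$'s part $P_u$. Each of these is $O(\log n)$ in synchronous networks (the extra factor $2$ from the multiplexing of the two trains at each node only changes a constant).

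First I would pin down how long $v$ keeps a given piece in $\Ask(v)$. By the description in Section~\ref{sub:Sampling and synchronizing}, $v$ replaces $\Info(F_j(v))$ only when the broadcast component of $P_v$'s train delivers at $v$ the next piece in the cyclic order that corresponds to a fragment containing $v$. Theorem~\ref{thm:train} guarantees that every such piece is delivered within one full cycle, i.e., in $O(\log n)$ rounds. Hence $v$ holds $\Info(F_j(v))$ in $\Ask(v)$ for an interval of length $\Theta(\log n)$ before moving to $\Info(F_{j'}(v))$ for the next relevant level $j'$ in the cyclic order.

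Second, inside any such $\Omega(\log n)$-round window I would show that the event $\cE(v,u,j)$ must occur. Applying Theorem~\ref{thm:train} to $P_u$, the train there also completes a full cycle in $O(\log n)$ synchronous rounds, so during $v$'s retention window the piece $\Info(F_j(u))$ is broadcast through $u$ and thus written into $\Show(u)$ at some pulse $t^*$. Combined with the ``on/off'' flag mechanism described for bottom parts, $u$ writes precisely $\Info(F_j(u))$ (and not the piece of some other level-$j$ fragment in $P_u$ that happens not to contain $u$) into $\Show(u)$. Because the network is synchronous, $v$ reads $\Show(u)$ at every pulse, in particular at pulse $t^*$, and since $\Ask(v)$ still holds $\Info(F_j(v))$ at $t^*$, the event $\cE(v,u,j)$ takes place.

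Finally I would iterate the argument over the levels $j$ for which $v$ belongs to a fragment: there are at most $\ell+1=O(\log n)$ of them, and by the previous two paragraphs $v$ disposes of each in $O(\log n)$ rounds, so all events $\cE(v,u,j)$ occur within $O(\log^2 n)$ time. The non-existence cases (only one of $u,v$ belongs to a level-$j$ fragment) are handled trivially as remarked in the excerpt, so they do not affect the bound. The main technical care is in the second step: the two trains are not synchronized and neither waits for the other, so one must use the synchrony of the network, the cyclic delivery guarantee of Theorem~\ref{thm:train} at $u$, and the cycle-length lower bound on $v$'s retention together, to guarantee that the ``right'' piece appears in $\Show(u)$ during a window in which $\Ask(v)$ still holds the matching piece.
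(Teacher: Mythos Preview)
Your overall strategy is exactly the paper's: bound one event $\cE(v,u,j)$ by $O(\log n)$ via Theorem~\ref{thm:train} applied to $u$'s part, then sum over the $O(\log n)$ levels. The ingredients and the final bound are the same.

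There is, however, a genuine gap in your first step. You write that ``$v$ holds $\Info(F_j(v))$ in $\Ask(v)$ for an interval of length $\Theta(\log n)$,'' and you derive this from Theorem~\ref{thm:train}. But Theorem~\ref{thm:train} only gives an \emph{upper} bound on the cycle time; it says all pieces are seen within $O(\log n)$ pulses, not that consecutive pieces are $\Omega(\log n)$ apart. In fact, because the broadcast is pipelined, consecutive pieces can reach $v$ at consecutive pulses, so if $v$ replaced $\Ask(v)$ the moment the next relevant piece arrived, the retention window could be a single pulse --- far too short to guarantee that $\Info(F_j(u))$ shows up in $\Show(u)$ during it.

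The paper closes this gap not by a lower bound on the train spacing but by protocol design: in the synchronous mechanism (Section~\ref{sub:Sampling and synchronizing}, synchronous case), node $v$ \emph{deliberately waits} a fixed $c\log n$ rounds with $\Info(F_j(v))$ in $\Ask(v)$ before declaring the comparison done and only then takes the next train piece. That built-in wait is what makes the retention window $\Omega(\log n)$ and lets your second step go through. Once you replace ``the train guarantees $\Theta(\log n)$ retention'' by ``$v$ waits $c\log n$ rounds by design,'' your argument is correct and coincides with the paper's.
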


\subsubsection{The comparing mechanism  in asynchronous networks}
 In an asynchronous network, without some additional kind of a handshake,
 node $u$ cannot be sure that the piece in $\Show(u)$ was actually seen by its neighbours.
 (Intuitively, this is needed, so that $u$ can replace the piece with the next one.)
 Moreover, it is not easy to perform such handshakes with all of $u$'s neighbours, since
 $u$ does not have enough memory to keep track on which of its neighbours $v$
  has seen the piece and which has not yet.
 First, let us describe a simple, but somewhat inefficient handshake solution. A more efficient one is presented later.

 Each node $v$, holding some piece $\Info(F_j(v))$ in $\Ask(v)$, selects a neighbour $u$ and acts as a ``client'': that is, node $v$ writes in its register $\Keep$ the pair $(\id(u),j)$.
 Node $v$ then looks repeatedly at $\Show(u)$ until it sees $\Info(F_j(u))$ there.
 At the same time, each node $u$  also has a second role -- that of a ``server''.
 That is, each node rotates these two roles: it performs one atomic action as a server and one as a client.
  Acting as a server, $u$ selects a client to serve (in a round robin order). If the client has written
 some $(\id(w),i)$ in the client's $\Keep$, for $w\neq u$, then $u$ chooses another client.  On the other hand, if the client wrote $(\id(u),j)$
in the client's $\Keep$, then $u$ waits until it receives by the train $\Info(F_j(u))$ and stores it in $\Show(u)$. A trivial handshake then suffices for $u$ to know that
this value has been read by the client. Node $u$, in its role as a server, can then move to serve its next neighbour, and node $v$, in its role as a client, can move on to the next server. In particular,
 if the client $v$ has already received service from all its neighbours for $\Info(F_j(v))$, then $v$ waits until the train  brings it the next piece $\Info(F_{j'}(v))$  that $v$ needs to compare.

Consider now the time a client $v$ waits to see $\Info(F_j(u))$  for one of its neighbours $u$. Before serving $v$, the server $u$ may serve $O(\Delta)$ neighbours. By Theorem \ref{thm:train} (applied for the asynchronous setting), each service takes $O(\log^2 n)$
time. In addition, the client needs services from $\Delta$ servers, and for $O(\log n)$ values of $j$. The total time for all the required events to happen in this simple handshake mechanism
is, thus,
$O(\Delta ^2 \log^3 n)$.

Let us now describe the more efficient asynchronous comparison mechanism that requires only $O(\Delta \log ^3 n)$ time.
Before dwelling into the details of the comparison mechanism, let us first describe a difference in the way we employ the train.
Recall that in the simple  solution above (as well as in the synchronous case), the movement of  trains was independent from the actions of the comparison mechanism, and hence, by Theorem \ref{thm:train}, each train finishes a cycle in $O(\log^2 n)$
time.
In contrast,  a train here may be delayed by the nodes it passes, in a way to be described.
Crucially, as we show later, the delay at each node is at most some constant time  $c$, and hence,  the time a train finishes a cycle remains asymptotically the same, namely, $O(\log^2 n)$.

As before, node $v$, holding $\Info(F_j(v))$ in $\Ask(v)$ chooses a server $u$ among $v$'s neighbours
and reads $\Show(u)$.
Another small, but crucial addition to the actions taken in the simple procedure, is the following: if, when reading $\Show(u)$, node $v$ reads $\Info(F_j(u))$, then  $\cE(v,u,j)$ occurred, and $v$ moves on to read another neighbour.
This is illustrated in Figures \ref{fig:handshake1} and \ref{fig:handshake2}.

Only in the case that $\Info(F_j(u))$ is not at $\Show(u)$ at that time
(see Figure \ref{fig:handshake3}), node $v$ sets $\Keep(v) \gets (\id(u), j)$
(see Figure \ref{fig:handshake4}).
In this case, we say that $v$ {\em files a request for} $j$ {\em at} $u$. This request stays filed until the value of $\Show(u)$ is the desired one and  $\cE(v,u,j)$ occurs.
 Similarly to the synchronized setting, in the case that~$v$ has just finished seeing
 $\Info(F_j(u))$ in every  neighbour $u$, node $v$
first waits until it gets by the train, the next piece $\Info(F_{j'}(v))$ in the cycle, and then  puts $\Info(F_{j'}(v))$ as the new content of
$\Ask(v)$.

Now consider any node $u$ in its role as a server. It reads all the clients.
(Recall that the ideal time complexity assumes this can be performed in one time unit.)
When node $u$ receives $\Info(F_j(u))$ from the train, it puts this value in
$\Show(u)$. It now delays the train
 as long as it sees any client $v$ whose  $\Keep(v) = (\id(u), j)$. In particular, node $u$
keeps $\Info(F_j(u))$ in $\Show(u)$ during this delay time period.
    If $u$ has not read any neighbour $v$ such that $\Keep(v) = (\id(u), j)$, then $u$ stops delaying the train, waits for receiving the next piece  $\Info(F_{j'}(u))$ from the train, and uses it to replace the content of $\Show(u)$.

\begin{figure}
\centering
\includegraphics[scale=.4]{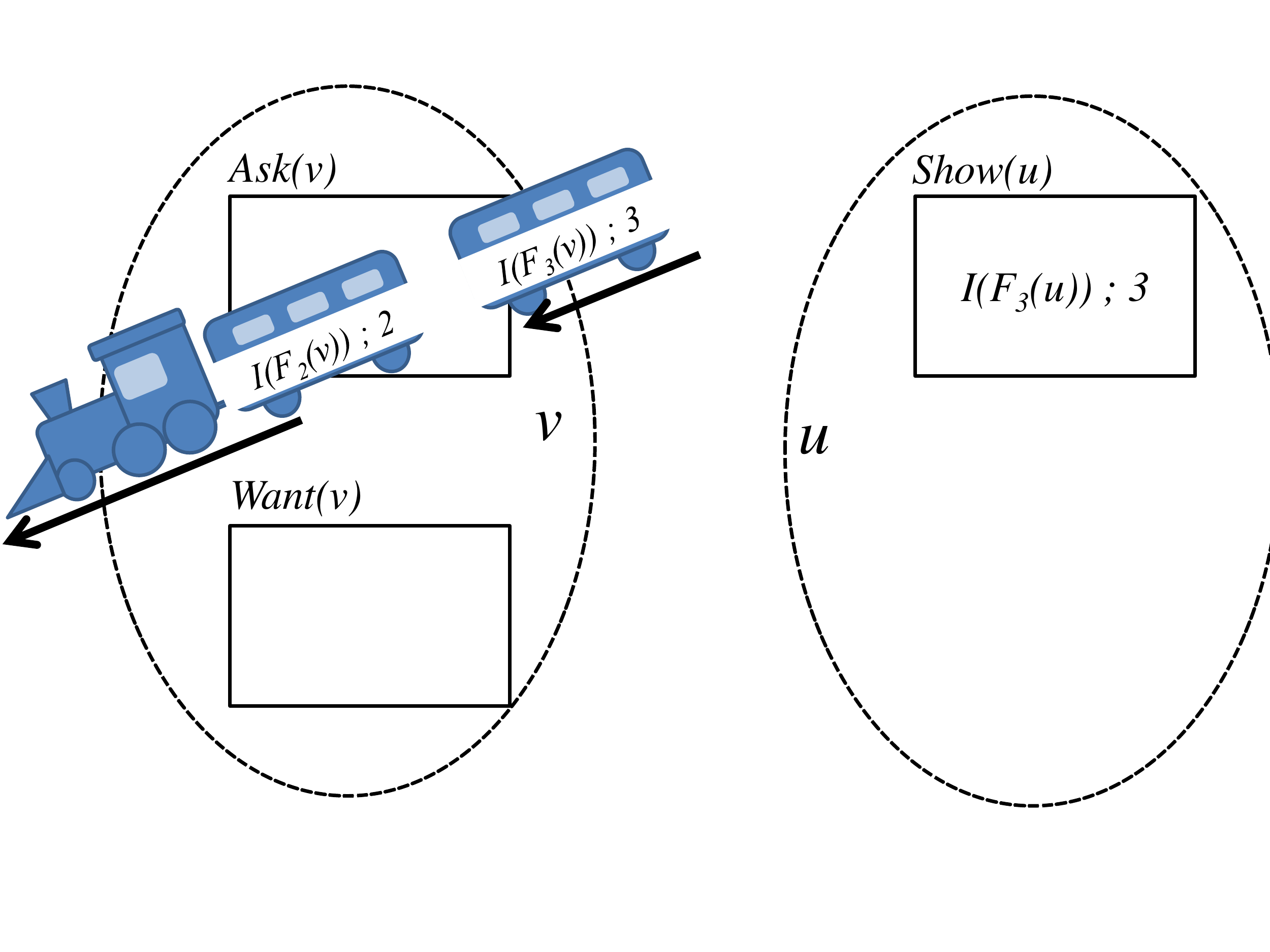}
\caption{Node $v$ receives the next piece (for $j=3$) to compare.
}
\label{fig:handshake1}
\end{figure}

\begin{figure}
\centering
\includegraphics[scale=.4]{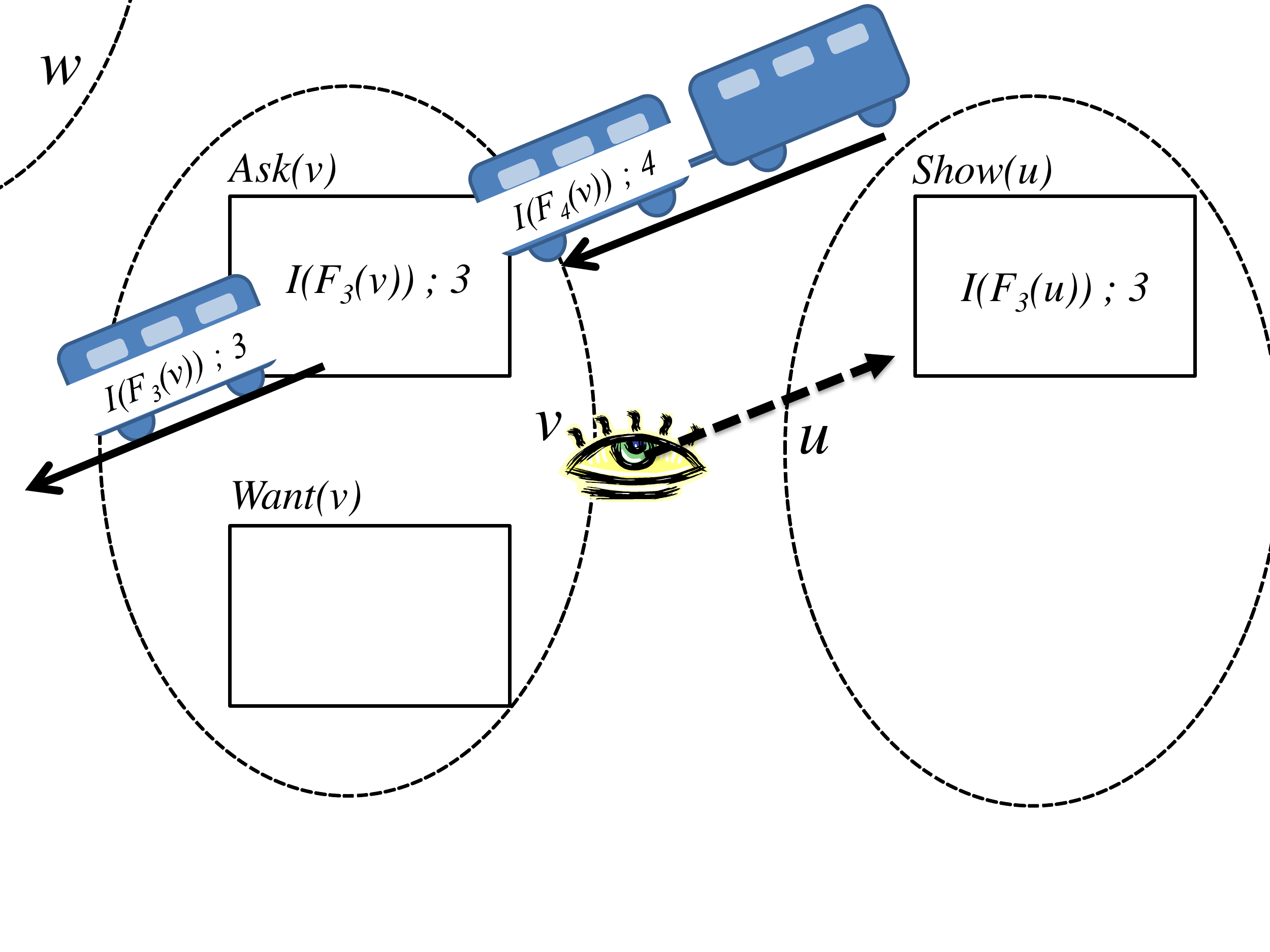}
\caption{First case,  $\cE(v,u,3)$ occurred the first time $v$ reads $\Info(F_3(u))$. Next, $v$ may look at its next node, $w$.}
\label{fig:handshake2}
\end{figure}

\begin{figure}
\centering
\includegraphics[scale=.4]{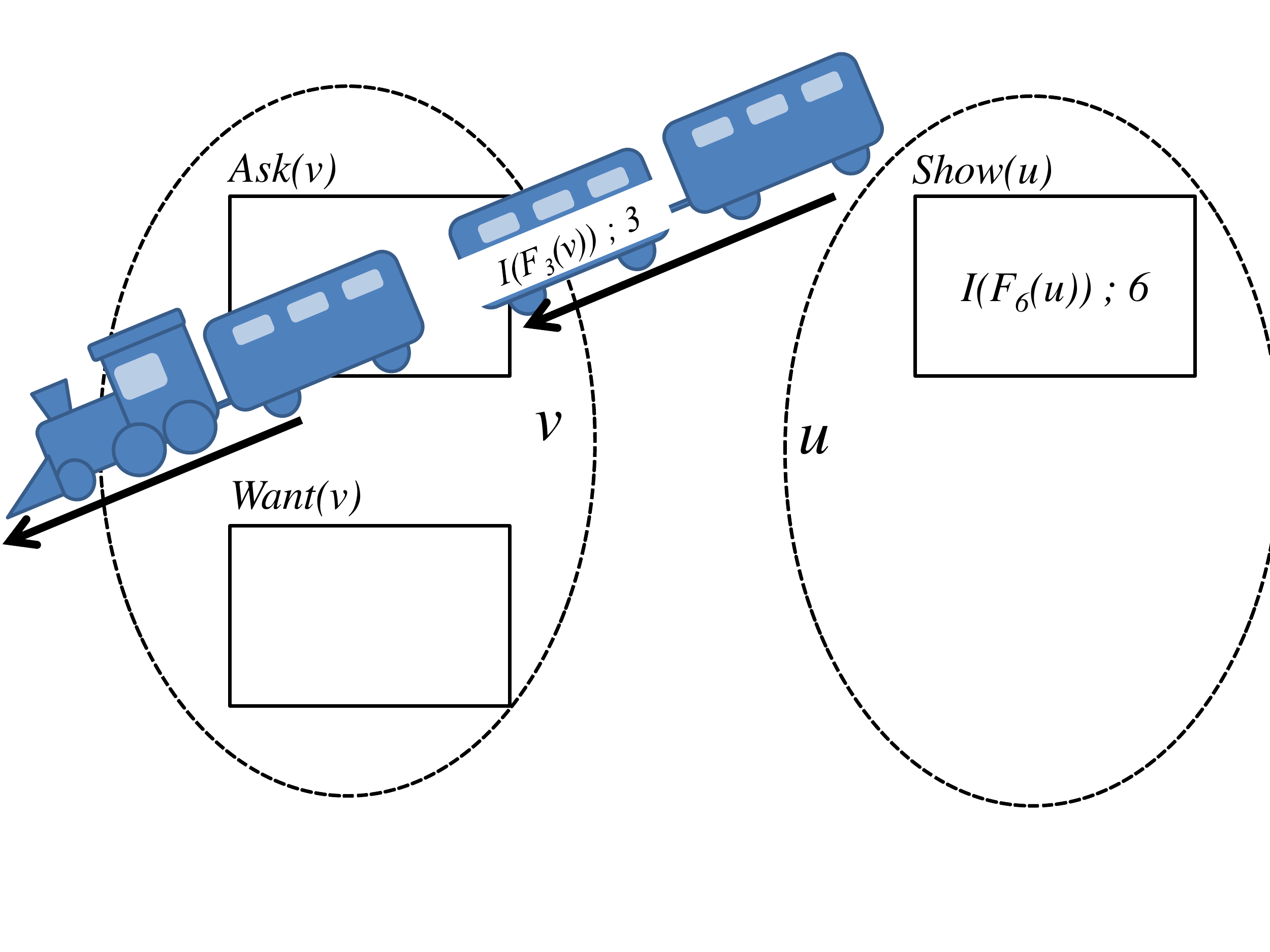}
\caption{Second case,  $\cE(v,u,3)$ does not occur immediately.}
\label{fig:handshake3}
\end{figure}

\begin{figure}
\centering
\includegraphics[scale=.4]{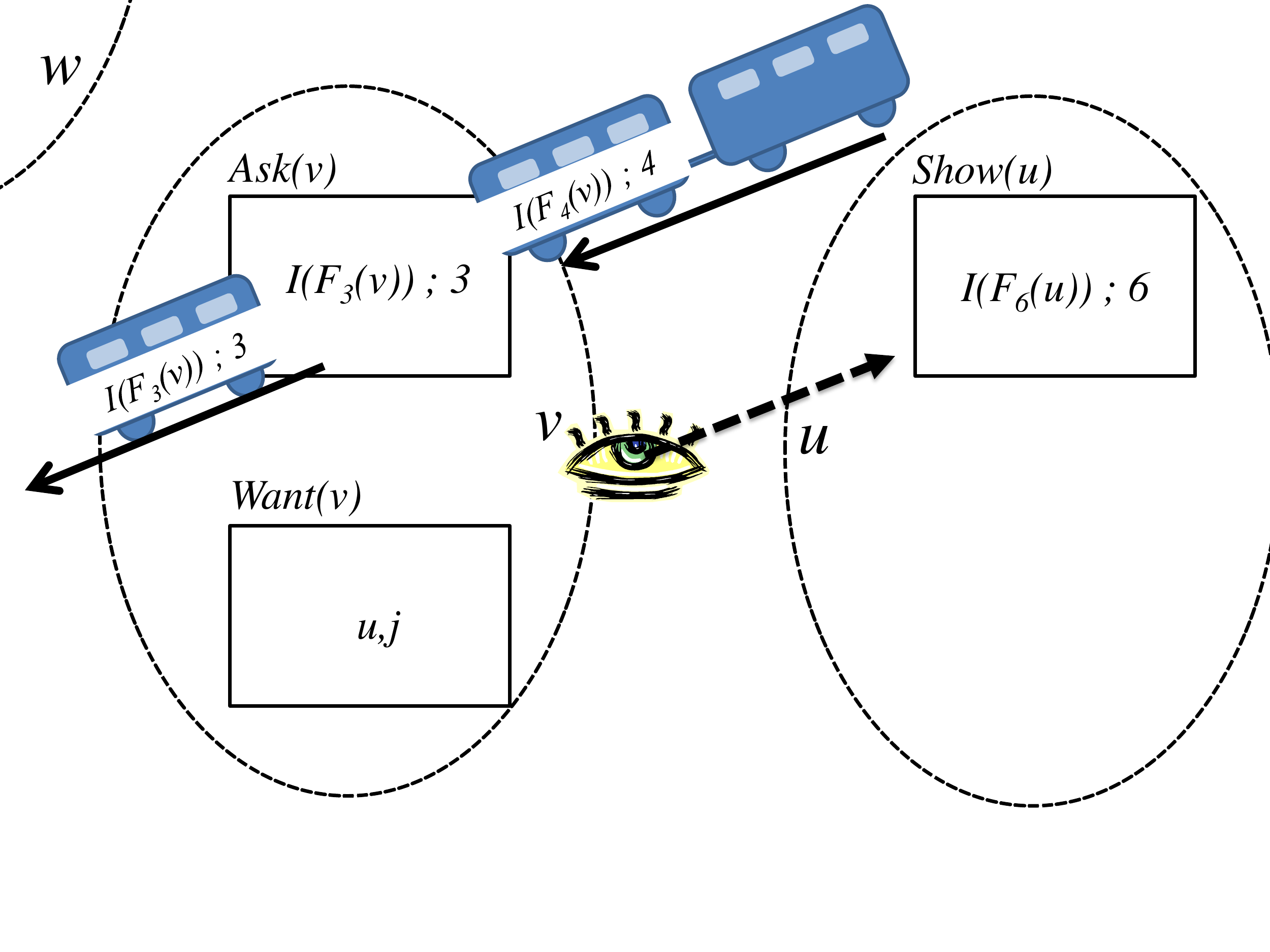}
\caption{Node $v$ files a request at $u$ for $j=3$.}
\label{fig:handshake4}
\end{figure}

\begin{figure}
\centering
\includegraphics[scale=.4]{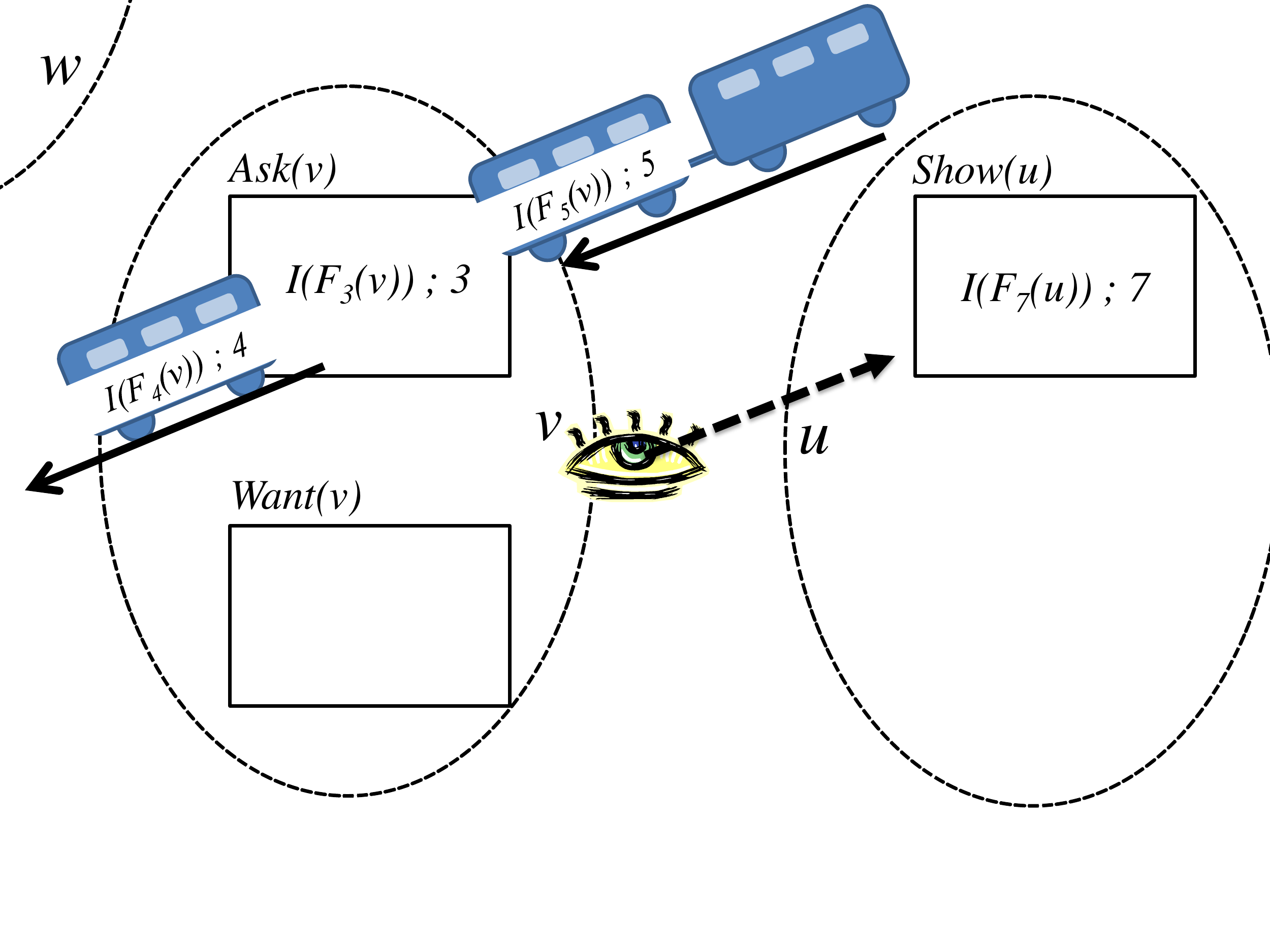}
\caption{The trains at $v$ and at $u$ do not stop.}
\label{fig:handshake5}
\end{figure}

\begin{figure}
\centering
\includegraphics[scale=.3]{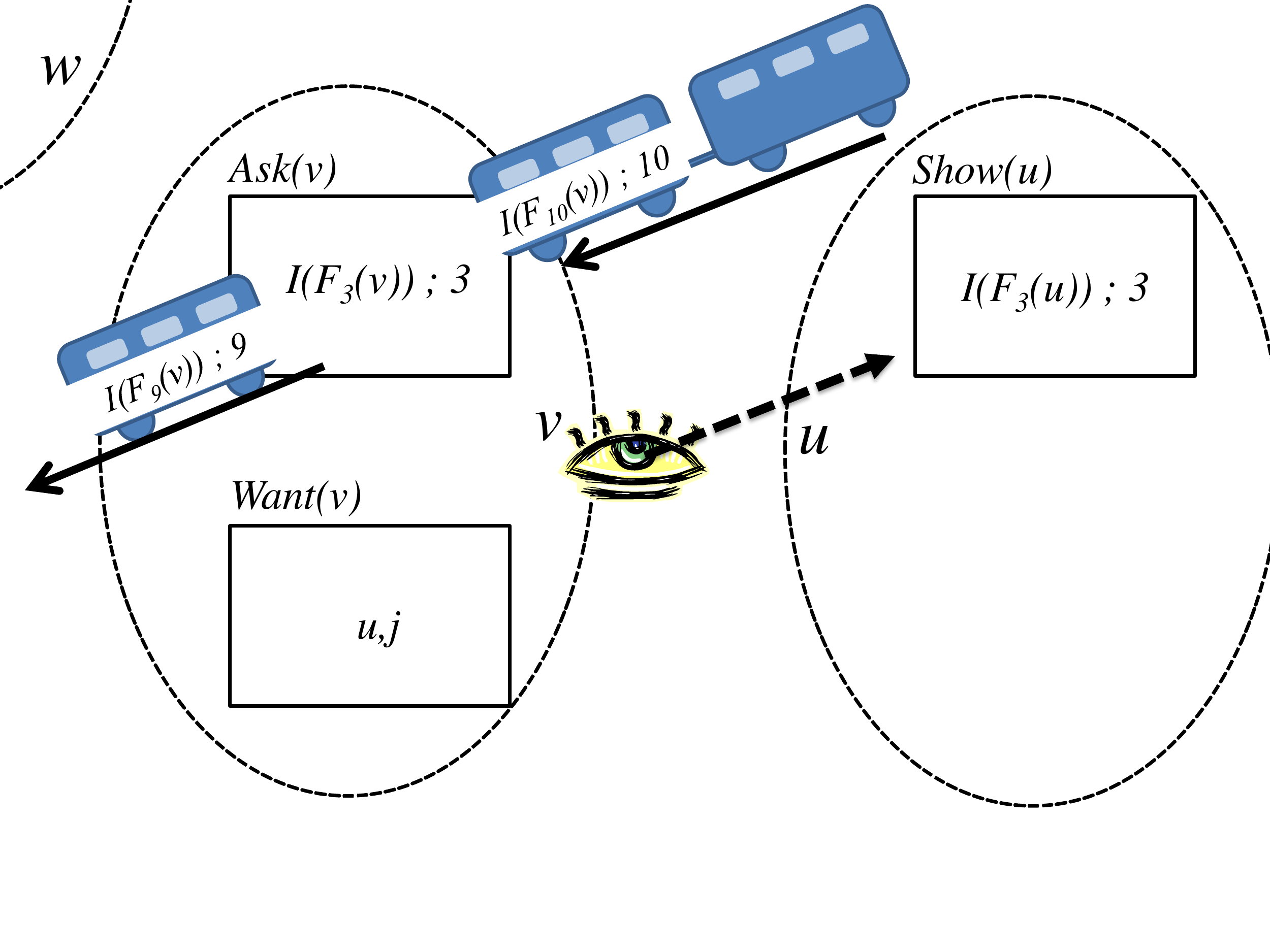}
\caption{The server $u$ received the requested piece at last.}
\label{fig:handshake6}
\end{figure}
 We define the
   $\Ask$ cycle of a node $v$. This is the time interval starting at the time a client $v$  replaces
  the content of $\Ask(v)$ from
  $\Info(F_{j^{max}}(v))$ to
  $\Info(F_{j^{min}}(v))$, and until (and excluding) the time $v$ does that again. Here, ${j^{max}}$ is the highest level of a piece
    in that train, such that
    $F_{j^{max}}(v)$ exists, and ${j^{min}}$ is the smallest level of a piece in that train,
    such that  $F_{j^{min}}(v)$ exists.

 \begin{lemma}
 \label{lem:cycle-time}
 The total length of a $\Ask $ cycle of a node $v$ is
 $O(\Delta \log^3 n)$.
 \end{lemma}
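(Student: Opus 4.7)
The plan is to upper bound the cycle by accounting for the time spent at each (level, neighbour) pair that $v$ must resolve before it is allowed to replace $\Ask(v)$. Throughout, I would rely on the fact, asserted just before the lemma, that even with the ``delaying'' handshake the train at every node still completes a full cycle in $O(\log^2 n)$ time because each node delays the train by only a constant amount $c$ per round.

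First I would count the work in one $\Ask$ cycle. In a single cycle, $v$ successively loads into $\Ask(v)$ the pieces $\Info(F_j(v))$ corresponding to the at most $\ell+1 = O(\log n)$ levels $j$ for which $F_j(v)$ exists. For each such level $j$, node $v$ visits each of its (at most $\Delta$) neighbours one by one in its ``client'' role; only after $\cE(v,u,j)$ has occurred for every neighbour $u$ (and for every level in the current cycle) does $v$ move on to load the next piece $\Info(F_{j'}(v))$ from its own train. Thus it suffices to bound the time required to resolve one (level, neighbour) pair, and then multiply by the two combinatorial factors $\Delta$ and $O(\log n)$.

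Second, I would bound the resolution time of a single pair $(u,j)$. When $v$ reads $\Show(u)$, either $\Info(F_j(u))$ is already present, in which case $\cE(v,u,j)$ occurs immediately (constant time), or $v$ files the request $\Keep(v)\gets(\id(u),j)$. In the latter case, node $u$, as soon as it notices the request, keeps holding its current $\Show(u)$ value and, upon receiving the next piece from the train, will continue to freeze $\Show(u)$ on each value only long enough to clear any pending matching request. Hence $v$ must wait at most until the piece $\Info(F_j(u))$ is delivered to $u$ by its train; in the worst case this is one full train cycle. By the cycle bound inherited from Theorem~\ref{thm:train} (adapted to the delaying setting via the claimed per-node constant slowdown), this is $O(\log^2 n)$ time. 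Once $\Info(F_j(u))$ reaches $\Show(u)$, the constant-time handshake (a subsequent read by $v$, detected by $u$ before it advances $\Show(u)$) guarantees $\cE(v,u,j)$ within an additional constant.

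Finally, combining the factors, the total time to resolve all (level, neighbour) pairs in one $\Ask$ cycle is at most $(\ell+1)\cdot\Delta\cdot O(\log^2 n) = O(\Delta\log^3 n)$, with an additive $O(\log^2 n)$ for the waits on $v$'s own train between successive values of $\Ask(v)$, which is absorbed in the bound. The main obstacle in making this rigorous is justifying the $O(\log^2 n)$ train cycle in the presence of the handshake-induced delays: one must show that the delay contributed at each node per piece is $O(1)$, so that the argument of Theorem~\ref{thm:train} still applies with only a constant blowup; this is exactly the claim highlighted just before the lemma, and I would prove it by observing that at any moment a server $u$ freezes $\Show(u)$ on a given piece only while an already-posted request $\Keep(w)=(\id(u),j)$ is outstanding, and each such request is cleared in one read/write round, so the freeze contributes only $O(1)$ per piece at $u$.
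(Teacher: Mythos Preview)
Your plan is correct and mirrors the paper's proof: bound each $(j,u)$ pair by one train cycle of $O(\log^2 n)$, multiply by $\Delta$ neighbours and $O(\log n)$ levels, and justify the $O(\log^2 n)$ cycle by showing the per-node train delay is constant. The one place to sharpen is your constant-delay argument: it is not enough that each outstanding request is cleared in one round; you must also argue that no \emph{new} request $\Keep(w)=(\id(u),j)$ is filed while $\Show(u)$ already holds $\Info(F_j(u))$ (otherwise requests could queue up), which follows because any client reading $\Show(u)$ during that interval sees the desired value and therefore does not file---the paper makes exactly this point explicit.
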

 \begin{proof}
Fix a node $u$ and let $t_u^1$ be some time that $u$ starts storing  $\Info(F_j(u))$ in $\Show(u)$, for some level $j$; moreover, $\Info(F_j(u))$
is stored there until some
time $t_u^2$
when $u$ replaces the content of $\Show(u)$ again.

Recall, node $u$ delays the train and
keeps $\Info(F_j(u))$ in $\Show(u)$  as long as it sees any client $v$ such that  $\Keep(v) = (\id(u), j)$; when it sees that no such neighbour $v$ exists, it stops delaying the train and waits for the train  to deliver it the next piece $\Info(F_{j'}(u))$ to be used for replacing the content of
$\Show(u)$.
We now claim that the delay time period at node $u$ is at most some constant time.
To prove that, we first show that there exists a constant $c$ such that
no client $v$ has  $\Keep(v) = (\id(u), j)$
 in the time interval $[t_u^1+c, t_u^2]$.  Indeed, for each neighbour $v$ of $u$, let $t_v$ be the first time after $t_u^1$ that $v$ reads the value of
$\Show(u)$. Clearly, there exists a constant $c$ (independent of $u$ and $v$) such that $t_v\in [t_u^1, t_u^1+c]$. Right at time $t_v$, the content of $\Keep(v)$ stops being  $(\id(u), j)$
(if it were before), since
  $\Info(F_j(u))$ is  the value of $\Show(u)$
 during the whole time interval $[t_u^1, t_u^2]$.
 Moreover, during the time interval $[t_v, t_u^2]$, node $v$ does not  file a request  for $j$ at $u$, since again,
 whenever it reads $\Show(u)$
 during that time interval, it sees $\Info(F_j(u))$.
Hence, no client $v$ has  $\Keep(v) = (\id(u), j)$
 in the time interval $[t_u^1+c, t_u^2]$.
Now, from time $t_u^1+c$,   it takes at most some constant time to let $u$ observe that none of its neighbours $v$ has $\Keep(v) = (\id(u), j)$.
This establishes the fact that the delay of the train at each node is at most some constant. Hence, as mentioned before,
the time the train finishes a cycle is $O(\log^2 n)$. (It is also easy to get convinces that this delay does not prevent the train from being self-stabilizing.)

Next, consider the time that some node $v$ starts holding $\Info(F_j(v))$ in   $\Ask(v)$. Consider a neighbour~$u$ of~$v$.
The time it takes for $v$ until it  sees $\Info(F_j(u))$ in $\Show(u)$
is $O(\log^2 n)$.
Hence, a client $v$ waits $O(\log^2 n)$ for each request $v$ files at a server $u$ for a value $j$. The total time that $v$ waits for a service of $j$ at all the servers is then
$O(\Delta \log^2  n)$. From that time, $v$ needs to wait additional $O(\log^2 n)$ time  to receive from the train the next piece  $\Info(F_{j'}(v))$ (to replace the content of $\Ask(v)$).
Summing this over the $O(\log n)$ pieces in the cycle, we conclude that the total time of an $\Ask$ cycle of $v$ is
$O(\Delta \log^3 n)$.
 \end{proof}

 \begin{lemma}\label{lem:time-sync}
If (1)  two partitions are indeed represented, such that   each part of each partition  is of diameter $O(\log n)$,
and the number of pieces in a part is $O(\log n)$, and (2) the trains operate correctly, 
then the following holds.
\begin{itemize}
\item
In a synchronous network, $\max_t |C(t)|=O(\log^2 n)$.
\item
In an asynchronous network, $\max_t |C(t)|=O(\Delta\log^3 n)$.
 \end{itemize}

 \end{lemma}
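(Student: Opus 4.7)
The plan is to derive both bounds as almost immediate corollaries of the two preceding technical lemmas (Lemma~\ref{lem:synchronous-time} and Lemma~\ref{lem:cycle-time}), by observing that within one full $\Ask$ cycle of every node, every required event $\cE(v,u,j)$ is forced to occur. Since $C(t)$ is defined as the minimal interval starting at $t$ in which every event of type $\cE(u,v,j)$ occurs, I will bound $|C(t)|$ by (at most twice) the length of an $\Ask$ cycle, uniformly over $t$.

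For the synchronous case, I would fix any start time $t$ and any triple $(v,u,j)$ with $v,u$ neighbours both belonging to a level-$j$ fragment. By the synchronous train (Theorem~\ref{thm:train}) and the assumption that each part has $O(\log n)$ pieces and diameter $O(\log n)$, within $O(\log^2 n)$ time after $t$ the train delivers every piece of the cycle to $v$, so in particular $\Info(F_j(v))$ is loaded into $\Ask(v)$ at some time $t' \in [t, t+O(\log^2n)]$. Lemma~\ref{lem:synchronous-time} then gives $\cE(v,u,j)$ within a further $O(\log^2 n)$ time of $t'$. Taking the maximum over all valid triples $(v,u,j)$ and noting that we need only one such window (not a sum) yields $|C(t)| = O(\log^2 n)$.

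For the asynchronous case, I would invoke Lemma~\ref{lem:cycle-time}: every $\Ask$ cycle of any node $v$ has length $O(\Delta \log^3 n)$. The key point I would then argue, using the client–server protocol of Section~\ref{sub:Sampling and synchronizing}, is that during any single $\Ask$ cycle of $v$, every required event $\cE(v,u,j)$ involving $v$ actually fires. Indeed, while $\Info(F_j(v))$ is in $\Ask(v)$, node $v$ (as client) files a request $(\id(u),j)$ at each neighbour $u$ in turn and does not advance to the next piece $\Info(F_{j'}(v))$ until it has either read $\Info(F_j(u))$ directly in $\Show(u)$ or been served by $u$ via the handshake; dually, the server $u$ is guaranteed (by the analysis inside the proof of Lemma~\ref{lem:cycle-time}) to place $\Info(F_j(u))$ in $\Show(u)$ and keep it there long enough for $v$ to see it. Hence, within the first complete $\Ask$ cycle that begins after time $t$ — and such a cycle begins within $O(\Delta \log^3 n)$ of $t$ since the preceding cycle also has that length — every event $\cE(v,u,j)$ has occurred, giving $|C(t)| = O(\Delta \log^3 n)$.

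The main obstacle, and the only place that requires genuine care, is verifying the claim that one $\Ask$ cycle suffices to fire all events $\cE(v,u,j)$ for the given $v$; in particular, one must justify that the client does not advance past level $j$ until each neighbour has been served, and that no server starves a persistently filed request. Both facts are essentially built into the handshake rules (the client's round-robin over servers and the server's refusal to release $\Show(u)$ while any pending $(\id(u),j)$-request is visible), but they depend on the assumption that the trains have stabilized and deliver the correct pieces, which is exactly hypothesis~(2) of the lemma. Once this is pinned down, the two bounds follow by uniform maximisation over $v$, $u$ and $j$, and the lemma is proved.
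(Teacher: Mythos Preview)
Your proposal is correct and matches the paper's approach: the paper states Lemma~\ref{lem:time-sync} without an explicit proof, treating it as an immediate consequence of Lemma~\ref{lem:synchronous-time} for the synchronous bound and Lemma~\ref{lem:cycle-time} for the asynchronous bound, exactly as you outline. Your observation that one must allow for (at most) two $\Ask$ cycles---one to reach the start of a fresh cycle after time $t$, and one to complete it---and your remark that the client does not advance past level $j$ until every neighbour has been served are precisely the details the paper leaves implicit.
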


\section{Local verifications}
\label{sub:utilizing}\label{sec:verifying}
\label{sub:3.3}
In this section, we describe the measures taken in order to make the verifier self-stabilizing.
That is, the train processes, the partitions,  and also, the pieces of information carried by the train may be corrupted by an adversary.
 To stress this point and avoid confusion, a piece of information of the form $z\circ j\circ\omega$,
 carried by a train, is termed the {\em claimed} information
 $\hInfo(F)$ of a fragment $F$ whose root $\id$ is $z$, whose level is $j$,
 and whose minimum outgoing edge is $\omega$. Note that such a fragment $F$ may not even exist, if the information is corrupted.
  Conversely, the adversary may also erase some (or even all) of such pieces corresponding to existing fragments. Finally, even correct pieces that correspond to existing fragments may not arrive at a node in the case that the
adversary corrupted the partitions or the train mechanism. Below we explain how the verifier does detect such undesirable phenomena, if they occur. Note that for a verifier, the ability to detect with assuming any
initial configuration
means that the verifier is self-stabilizing, since the sole purpose of the verifier is to detect.
We show, in this section, that if an MST is not represented in the network, this is detected. Since the detection time (the stabilization time of the verifier) is sublinear,  we still consider this detection as local, though some of the locality was traded for improving the memory size when compared with the results of \cite{KormanKutten07,KKP10}.

Verifying that {\em some} two partitions exist is easy. It is sufficient to (1) let  each node verify that its label contains the two bits corresponding to the two partitions; and (2) to have the root $r(T)$ of the tree verify that the value of each of its own two bits is 1.
(Observe that if these two conditions hold then (1)
 $r(T)$ is a root of one part in each of the two partitions; and (2) for a node $v\neq r(T)$, if one of these two bits in $v$ is zero, then $v$ belongs to the same part in the corresponding partition as its parent.)
Note that this module of the algorithm  self-stabilizes trivially in zero time.

It seems difficult to verify that the given partitions are as described in Section \ref{sec:partitions}, rather  than being two
arbitrary
partitions generated by an adversary.  Fortunately, this verification turns out to be unnecessary.
(As we shall see, if the components at the nodes do not describe an MST, no adversarial partitioning can cause the verifier to accept this as representing an MST; if partitions are represented, we just need to verify that a part is not too large for the time complexity).

First, for the given partitions, it is a known art to self-stabilize the train process.
 That is, the broadcast part of the train is a standard flooding, for which the self stabilization has been heavily studied, see, in particular, \cite{BDPV,CPVD}.
 For the convergecast, first, note that pieces are sent up the tree. Hence, they cannot cycle, and cannot get ``stuck''. Moreover, it is easy to get convinced that only pieces that are already in some buffer (either incomming, or outgoing, or permanent) can be sent. Finally, notice that
the order of the starting of the nodes is exactly the DFS order. The stabilization of the DFS process is well understood \cite{CollinDolevDFS}. It is actually easier here, since this is performed on a tree (recall that another part of the verifier verifies that there are no cycles in the tree).

 Finally, composing such self-stabilizing primitives in a self-stabilizing manner is also a known art, see e.g.~\cite{dim-composition, jalfon-composition,protocol-composition,katz-perry}.
In our context, once the DFS part stabilizes, it is easy to see the pieces flow up the tree stabilizes too. This leads to the following observation.

 \begin{observation}
 \label{obs:train-sync}
Starting at a time that is $O(\log n)$ after the faults in synchronous networks, and   $O(\log^2 n)$ time in asynchronous networks, the trains start delivering only pieces that are stored permanently at nodes in the part.
\end{observation}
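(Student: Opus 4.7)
The plan is to argue separately that (a) the broadcast subroutine stabilizes in the claimed time, (b) the convergecast (DFS) subroutine stabilizes, and then (c) their composition yields the desired property, invoking the standard composition results cited before the observation.

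For the broadcast, note that each non-root node of $P$ simply copies the content of its parent's broadcast buffer into its own and acknowledges, while the root fills its broadcast buffer exclusively from the buffer supplied by the convergecast. This is the classical self-stabilizing flooding pattern of \cite{BDPV,CPVD}: within $D(P)=O(\log n)$ pulses after the fault in the synchronous case, every broadcast buffer down to the leaves has been overwritten by content pushed by the root; in the asynchronous case, the same holds after $O(D(P)^2)=O(\log^2 n)$ time, the usual cost of making a flooding self-stabilizing with local handshakes. The key invariant to check is that the content a non-root node writes into its buffer cannot originate from any source other than its parent's buffer; this is immediate from inspection of the code and rules out corrupted ``phantom'' pieces surviving more than one round-trip delay from the root.

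For the convergecast, I would argue that after one full cycle initiated by the root, all pieces flowing up are images of pieces read out of permanent storage. Because pieces only move upward along the tree, no cycles can form, and every buffer location can be written to only by the outgoing car immediately below it (step 2(b)ii of the Train Convergecast Protocol). Thus each buffer's content is overwritten within $O(1)$ time after its child next executes the protocol. The remaining question is whether the DFS order itself is corrupted, i.e., whether the root's wake-up tokens and the ``finished'' acknowledgements are consistent. Here I would appeal to the known self-stabilizing DFS construction of \cite{CollinDolevDFS}, noting that in our setting the underlying structure is already a tree (verified by the $\SP$ module), which removes the cycle-handling difficulty and leaves the stabilization time linear in the diameter, i.e., $O(\log n)$ pulses synchronously and $O(\log^2 n)$ time asynchronously.

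Finally, I would invoke the fair composition results of \cite{dim-composition,jalfon-composition,protocol-composition,katz-perry} to combine the two stabilized modules. Once the DFS stabilizes, the root deterministically initiates a new cycle, and thereafter each time a node writes its outgoing car it is writing either (i) a copy of its own permanent piece (step~1 of the protocol) or (ii) a copy of its incoming car, which itself was filled from a child's outgoing car. Unrolling this chain to the leaves, every piece propagated up the tree within the first completed cycle following stabilization originates in some node's permanent storage. The broadcast then picks up these legitimate pieces only. Summing the times gives $O(\log n)$ in the synchronous case and $O(\log^2 n)$ in the asynchronous case. The main technical obstacle, in my view, is a careful bookkeeping argument that any corrupted content initially present in some buffer either is delivered at most once (and then discarded by the already-stabilized DFS skeleton) or is overwritten before being read by the next layer; everything else follows from directly quoting the cited self-stabilizing primitives.
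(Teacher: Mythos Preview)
Your proposal is correct and follows essentially the same approach as the paper. The paper justifies this observation in the paragraph immediately preceding it by (i) citing \cite{BDPV,CPVD} for self-stabilizing broadcast/flooding, (ii) noting that convergecast pieces only move up the tree and hence cannot cycle or get stuck, with the DFS order stabilizing per \cite{CollinDolevDFS} (and remarking that the tree structure makes this easier), and (iii) invoking the composition results of \cite{dim-composition,jalfon-composition,protocol-composition,katz-perry}; your three-part decomposition (a)--(c) mirrors this exactly, with somewhat more explicit bookkeeping.
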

{\em After} the trains stabilize (in the sense described in Observation \ref{obs:train-sync}), what we want to ensure at this point is
 that the set of pieces stored in a part (and delivered by the train) includes
all the (possibly corrupted) pieces of the form $\Info(F_j(v))$, for every $v$ in the part and for every $j$ such that $v$ belongs to a level $j$ fragment.
  Addressing this,
we shall show that the verifier at each node rejects  if it does not obtain all the required pieces eventually, whether the partitions are correct or not.
Informally, this is done as follows.  Recall  that each node $v$ knows the set $J(v)$ of levels $j$ for which there exists a fragment of level $j$ containing it, namely, $F_j(v)$. Using a delimiter (stored at $v$), we partition $J(v)$ to  $J_{\Top}(v)$ and $J_{\Bottom}(v)$;  where
 $J_{\Top}(v)$ (respectively, $J_{\Bottom}(v)$) is the set of levels $j\in J(v)$ such that $F_j(v)$   is top   (resp., bottom).

 Node $v$ ``expects'' to receive the claimed information
 $\hInfo(F_j(v))$ for $j\in J_{\Top}(v)$ (respectively, $j\in J_{\Bottom}(v)$) from the train
of the part in $\Top$ (respectively, $\Bottom$) it belongs to.

Let us now consider the part $P_{\Top}\in \Top$ containing $v$.
In correct instances, by the way the train operates, it follows that the levels of fragments arriving at $v$ should arrive in a strictly {\em increasing order} and in a {\em cyclical} manner, that is,  $j_1<j_2<j_3<\cdots< j_{a}, j_1<j_2<\cdots j_a,j_1\cdots$ (observe that $j_a=\ell$). Consider the case that the verifier at $v$ receives two consecutive pieces $z_1\circ j_1\circ\omega_1$
and $z_2\circ j_2\circ \omega_2$ such that $j_2 \leq j_1$.
The verifier at $v$ then  ``assumes'' that the event $S$ of the arrival of the  second piece
$z_2\circ j_2\circ \omega_2$ starts
a new cycle of the train. Let the set of pieces arriving at $v$ between two consecutive such $S$ events be named a {\em cycle set}.
 To be ``accepted'' by the verifier at $v$, the set of levels of the fragments arriving at $v$ in each cycle set must contain
$ {J}_{\Top}(v)$. It is trivial to verify this in two cycles after the faults cease. (The discussion above is based implicitly on the assumption that each node receives pieces infinitely often;
this is guaranteed by the correctness of the train mechanism, assuming that at least one piece is indeed stored permanently in $P_{\Top}$;
verifying this assumption is done easily by
the root $r(P_{\Top})$ of $P_{\Top}$, simply by verifying that $r(P_{\Top})$ itself does contain a piece.) Verifying the reception of all the pieces in a part in $\Bottom$ is handled very similarly, and is thus omitted.
 Hence, we can
 sum up the above discussion as follows:
 
\begin{claim}\label{claim:received-info}
If the verifier accepts then each node $v$ receives $\hInfo(F_j(v))$, for every level $j\in J(v)$
(in the time stated in Lemma \ref{lem:time-sync}), and conversely, if a node does not receive $\hInfo(F_j(v))$
(in the time stated in Lemma \ref{lem:time-sync}) then the verifier has rejected.
\end{claim}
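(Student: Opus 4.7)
The plan is to argue both implications by analysing the local tests performed at the root of each part together with the tests performed at each ordinary node. First I would invoke Observation \ref{obs:train-sync}: after the stabilization time of the trains, which is within the time bound stated in Lemma \ref{lem:time-sync}, the train in every part $P$ delivers only pieces that are \emph{permanently} stored at nodes of $P$, in the DFS-induced order, repeatedly. So the ``deliveries'' that the verifier compares against are well-defined bags of pieces, and the remainder of the argument can be phrased as: do the tests run by the verifier force that bag to contain $\hInfo(F_j(v))$ for every level $j\in J(v)$?

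Next, I would treat the two partitions separately but analogously. For the $\Top$ part $P_{\Top}$ containing $v$, the local test at $v$ compares each consecutive pair of arriving pieces, declares a new \emph{cycle set} each time the level strictly drops or stays equal, and raises an alarm unless the levels appearing in every cycle set contain $J_{\Top}(v)$. Also, the root $r(P_{\Top})$ verifies that at least one piece is stored at itself, which, together with train correctness and Lemma~\ref{lem:time-sync}, guarantees that a fresh cycle set does begin within the required time. Hence if no alarm is raised, then for every $j\in J_{\Top}(v)$ node $v$ receives some piece with level field equal to $j$ within a cycle-set window of duration bounded as in Lemma~\ref{lem:time-sync}; by the cyclic strictly-increasing level rule enforced by the verifier, that piece is exactly the one permanently stored in $P_{\Top}$ whose claimed level is $j$, i.e.\ $\hInfo(F_j(v))$. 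The same reasoning with $J_{\Bottom}(v)$ and $P_{\Bottom}$ yields the remaining levels in $J(v)$. Conversely, if some $\hInfo(F_j(v))$ fails to arrive in the time of Lemma~\ref{lem:time-sync}, then either the train at $P_{\Top}$ or $P_{\Bottom}$ is not delivering a full cycle, or $j$ is missing from some cycle set at $v$; in either case one of the local tests rejects, so the verifier raises an alarm at some node.

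For both directions, I would also note that the verifier must tolerate adversarial partition encodings. The point here is that the tests above never rely on the partitions being the ones produced by the marker; they only require that \emph{some} pair of partitions is represented (one bit per node per partition, plus the consistency check with the parent and the check that the root of $T$ has both bits set, which is verifiable in 1 time unit). Any adversarial but well-formed partition still defines parts and trains on which the above reasoning applies verbatim after the stabilization time, because the cycle-set test and the ``root stores at least one piece'' test depend only on what arrives at $v$, not on what the parts ``should'' have been.

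The main obstacle I expect is the backward direction in its asynchronous form: one must rule out the adversarial configuration in which pieces appear at $v$ in an order that superficially satisfies the strictly-increasing cyclic rule yet omits some $j\in J(v)$, while some other node in the same part simultaneously fails to detect this because it does not belong to a level $j$ fragment. The key step there is exploiting the fact that $r(P)$ itself runs the same cycle-test using its own $J(r(P))$, combined with the DFS-based permanent storage guaranteeing that a piece present in the train's cycle is present at every node the train reaches; once the train has stabilized, inconsistency between different nodes' observations becomes incompatible with the accept condition at $r(P)$. This, together with the time bounds from Lemma~\ref{lem:time-sync}, yields the claim in exactly the stated time window.
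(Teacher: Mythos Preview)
Your proposal is essentially correct and mirrors the paper's own argument closely: invoke train stabilization (Observation~\ref{obs:train-sync}), use the cycle-set test at each node $v$ against $J_{\Top}(v)$ and $J_{\Bottom}(v)$, and use the root-of-part check that at least one piece is stored. The paper's proof is precisely the discussion immediately preceding the claim, and you have reproduced its structure.

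Your final paragraph, however, manufactures an obstacle that is not there and then resolves it by an unnecessary detour through $r(P)$. The scenario you describe --- pieces arriving at $v$ in strictly increasing cyclic order yet omitting some $j\in J(v)$ --- is detected \emph{by $v$ itself}, since $v$'s local test is exactly that the levels in each cycle set must contain $J_{\Top}(v)$ (respectively $J_{\Bottom}(v)$). Whether or not other nodes in the part belong to a level-$j$ fragment is irrelevant; the test is local to $v$ and uses $v$'s own $J(v)$. The role of $r(P)$ in the paper's argument is only to ensure the train is nonempty (so that cycle sets actually occur), not to cross-check levels on behalf of other nodes. You can simply delete that paragraph; the backward direction is already complete after your second paragraph.
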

 Let $p(v)$ denote the parent of $v$ in $T$.  Recall, that by comparing the data structure of a neighbour $u$ in $T$, node $v$ can know whether
 $u$ and $v$ belong to the same fragment of level $j$, for each $j$. In particular, this is true for $u$ being the parent of $v$ in $T$.
 Consider an event $\cE(v,p(v),j)$. In case $p(v)$ belongs to the same level $j$ fragment as $v$, node $v$
    compares $\hInfo(F_j(v))$ with $\hInfo(F_j(p(v)))$, and verifies
 that  these pieces are equal (otherwise, it rejects). By transitivity, if no node rejects, it follows  that
for every fragment $F\in\cH$, we have that $\hInfo(F)$  is of the form $z\circ j\circ\omega$, and all nodes in $F$ agree on this.
By verifying at the root $r_F$ of $F$ that $\id(r_F)=z$, we obtain the following.
\begin{claim}
If the verifier accepts
then:
\begin{itemize}
\item
 The  claimed identifiers of the  fragments  are compatible with the given hierarchy $\cH$.  In particular,  this guarantees that the identifiers of fragments are indeed unique.
\item
For every $F\in \cH$, all the nodes in $F$ agree on the {\em claimed}  weight
of the minimum outgoing edge of fragment $F$, denoted
 $\hat{\omega}(F)$, and on the  identifier of fragment $F$, namely,
 $\id(F)$.
 \end{itemize}
 \end{claim}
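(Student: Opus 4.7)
The plan is to derive both items from the pairwise parent-child comparisons at events $\cE(v,p(v),j)$, propagated by transitivity inside each fragment, and then fixed at the root by a local check of the identifier field.

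First, I would invoke Claim \ref{claim:received-info}: since the verifier accepts, for every node $v$ and every $j\in J(v)$, the piece $\hInfo(F_j(v))$ actually arrives at $v$ within the time bound of Lemma \ref{lem:time-sync}. Thus every relevant comparison is guaranteed to be performed; otherwise some node would already have rejected.

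Next, fix $F\in\cH$ of level $j$ and I would argue agreement by induction along $F$. Since each fragment is a connected subtree of $T$, every non-root $u\in F$ satisfies $p(u)\in F$, and by Lemma \ref{lem:simple-proof} node $u$ can decide from its own and $p(u)$'s data-structure that $u$ and $p(u)$ share the level-$j$ fragment. At the event $\cE(u,p(u),j)$, the verifier at $u$ compares $\hInfo(F_j(u))$ with $\hInfo(F_j(p(u)))$ and rejects on inequality. Acceptance therefore yields $\hInfo(F_j(u))=\hInfo(F_j(p(u)))$ for every such $u$, so by transitivity along paths to $r_F$ all nodes in $F$ carry the same piece, call it $\hInfo(F)=z\circ j'\circ\omega$. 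This establishes the second bullet once we identify this common value with $\id(F)\circ\hat\omega(F)$.

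To pin down compatibility with $\cH$, I would then use the local check at the root $r_F$ of each fragment: $r_F$ knows from its $\ROOTS$-string (verified in Section~\ref{sec:section5}) the list of levels at which it is a root, and for each such level~$j$ it can verify that the piece it receives at that level has root-component $z=\id(r_F)$ and level-component $j'=j$. Acceptance forces $\hInfo(F)=\id(r_F)\circ j\circ\hat\omega(F)$, i.e.\ the claimed identifier of $F$ coincides with the intended $\id(F)=\id(r(F))\circ lev(F)$. For uniqueness, suppose two distinct fragments $F_1\ne F_2$ had equal claimed identifiers. Their level-components coincide, so $lev(F_1)=lev(F_2)=j$, and their root-components coincide, so $\id(r_{F_1})=\id(r_{F_2})$; since node identities in $G$ are unique, $r_{F_1}=r_{F_2}$, and a node is the root of at most one level-$j$ fragment in any hierarchy satisfying RS0--RS5, hence $F_1=F_2$.

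The main obstacle I expect is not algebraic but bookkeeping: making sure that for every pair $(u,j)$ with $u\in F$ and $p(u)\in F$ (both of which hold automatically from the connectedness of fragments) the event $\cE(u,p(u),j)$ really occurs in the detection window, so that no equality we rely on was simply missed. This is exactly what Lemma~\ref{lem:time-sync} buys us, provided the train mechanism has stabilized (Observation~\ref{obs:train-sync}) and the two partitions are represented in the required format, both of which are ensured by the preceding acceptance conditions.
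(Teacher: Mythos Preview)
Your proposal is correct and follows essentially the same approach as the paper: the paper argues the claim in the paragraph immediately preceding it, using exactly the parent--child comparison at events $\cE(v,p(v),j)$, transitivity across the connected fragment, and the root check $\id(r_F)=z$. Your write-up adds some details the paper leaves implicit (notably the explicit uniqueness argument via unique node identities and the observation that a node roots at most one level-$j$ fragment), but the structure and key ideas are the same.
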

So far, we have shown that each node does receive the necessary information needed for the verifier. Now,
finally, we show how to use this information to detect whether this is an MST. Basically, we
 verify that $\hat{\omega}(F)$ is indeed the minimum outgoing edge ${\omega}(F)$ of $F$  and that this minimum is indeed the candidate edge of $F$, for every $F\in\cH$.
 Consider a time when $\cE(v,u,j)$ occurs. Node $v$ rejects  if any of the checks below is not valid.
\begin{itemize}
\item
{\bf C1:} If $v$ is the endpoint of the  candidate edge $e=(v,u)$  of $F_j(v)$   then $v$ checks that $u$ does not belong to $F_j(v)$, i.e., that ${\id}(F_j(v))\neq {\id}(F_j(u))$,   and that
$\hat{\omega}(F_j(v))=\omega(e)$ (recall, it is already ensured  that $v$ knows whether it is an endpoint, and if so, which of its edges is the candidate);
\item
{\bf C2:} If
${\id}(F_j(v))\neq {\id}(F_j(u))$
then
 $v$ verifies  that $\hat{\omega}(F_j(v))\leq \omega((v,u))$.
\end{itemize}

The following
lemma
now follows from C1, C2 and Lemma~\ref{lem:construction}.

\begin{lemma}
\label{lem:self-stab-verifier}
\begin{itemize}
\item
If
 by some time $t$,
 the events  $\cE(v,u,j)$ occurred for each node $v$ and each neighbour $u$ of $v$ in $G$ and for each level~$j$, and the verifier did not reject, then $T$ is an MST of $G$.
 \item
 If $~T$ is not an MST, then in the time stated in Lemma~\ref{lem:time-sync}  after the faults cease, the verifier rejects.
 \end{itemize}
\end{lemma}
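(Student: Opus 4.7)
The plan is to deduce both parts from Lemma~\ref{lem:construction} together with the checks C1, C2 and the scaffolding built in Sections~\ref{sec:section5}--\ref{sub:3.3}. The first item is the substantive direction, and the second is its contrapositive combined with Lemma~\ref{lem:time-sync}.

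For the first part, assume the verifier has not rejected by time $t$ and that $\cE(v,u,j)$ has occurred for every $v$, every neighbour $u\in E(G)$, and every $j$. Since the verifier did not reject, all checks of the 1-proof labeling scheme of Lemma~\ref{lem:simple-proof} succeeded, so the nodes' labels induce a hierarchy $\cH$ for $T$ together with a candidate function $\chi:\cH\setminus\{T\}\to E(T)$ (Lemma~\ref{lem:correct-candidate}). The task is thus to check the hypothesis of Lemma~\ref{lem:construction}, namely that for every $F\in\cH\setminus\{T\}$, the edge $\chi(F)$ is a minimum outgoing edge from $F$. First, Claim~\ref{claim:received-info} delivers $\hInfo(F_j(v))$ to every $v$ for every $j\in J(v)$. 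Next, the events $\cE(v,p(v),j)$ between $v$ and its tree-parent $p(v)$ (when $v$ and $p(v)$ share $F_j(v)$) force equality of the claimed pieces across each fragment, so every $F\in\cH$ carries well-defined values $\id(F)$ and $\hat{\omega}(F)$ agreed upon by all its members; and the verification at the fragment root identifies $\id(F)$ with the root identity, guaranteeing distinct identifiers across $\cH$.

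Now fix $F=F_j(v)\in\cH\setminus\{T\}$ and let $e=\chi(F)=(v,u)$ be its candidate. Since some event $\cE(v,u,j)$ has occurred and check C1 did not fire, we have $\id(F_j(v))\neq\id(F_j(u))$ (so $e$ is truly outgoing from $F$) and $\hat{\omega}(F)=\omega(e)$. For every other outgoing edge $(v',u')$ of $F$ with $v'\in F$ and $u'\notin F$, the corresponding event $\cE(v',u',j)$ has also occurred; since $\id(F_j(v'))\neq\id(F_j(u'))$ (using uniqueness of fragment identifiers and the fact that $u'\notin F$), check C2 applied to $v'$ yields $\hat{\omega}(F_j(v'))\leq \omega(v',u')$. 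Because $\hat{\omega}(F_j(v'))=\hat{\omega}(F)=\omega(e)$, this gives $\omega(e)\leq \omega(v',u')$ for every edge outgoing from $F$. Hence $\chi(F)$ is a minimum outgoing edge from $F$. Lemma~\ref{lem:construction} now implies that $T$ is an MST of $G$.

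For the second part, suppose $T$ is not an MST and consider the time given by Lemma~\ref{lem:time-sync} after the faults cease. By Observation~\ref{obs:train-sync} and Claim~\ref{claim:received-info}, either the verifier has already rejected (and we are done), or every node $v$ has received $\hInfo(F_j(v))$ for each $j\in J(v)$ within this time. In the latter case, by Lemma~\ref{lem:time-sync}, by that time each event $\cE(v,u,j)$ has occurred for every $v$, every neighbour $u$, and every level $j$. If no rejection happened, the argument of the first part would force $T$ to be an MST, contradicting the assumption. Therefore, the verifier must have rejected within the claimed time bound.

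The main obstacle to turning this plan into a formal proof is confirming that the verification work implicitly relies on \emph{all} the local checks interacting correctly: the Well-Forming layer of Section~\ref{sec:section5} must deliver a genuine candidate function (so C1 can even refer to ``the candidate edge''), the self-stabilized train must deliver precisely those pieces stored in the part (Observation~\ref{obs:train-sync} and the cycle-set check), and the transitive agreement on $\hat{\omega}(F)$ and $\id(F)$ must propagate along \emph{tree} edges only, while C1/C2 exploit \emph{graph} edges. Bridging these two different neighbourhoods using the events $\cE(v,u,j)$ is the delicate step, and it is the place where the distributed viewing mechanism of Section~\ref{sub:3.2} must be invoked cleanly.
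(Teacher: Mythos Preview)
Your proposal is correct and follows essentially the same approach as the paper, which states only that the lemma ``follows from C1, C2 and Lemma~\ref{lem:construction}''; you have simply spelled out the details the paper leaves implicit (the transitive agreement on $\hat{\omega}(F)$ and $\id(F)$ along tree edges, the use of C1 to pin down $\hat{\omega}(F)=\omega(\chi(F))$ and the outgoing status of $\chi(F)$, and the use of C2 across all graph edges to certify minimality). The second part is indeed the contrapositive of the first combined with the time bound of Lemma~\ref{lem:time-sync}, exactly as the paper intends.
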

We are now ready for the following theorem, summarizing Sections \ref{sec:mst-construction} to \ref{sec:verifying}.

\begin{theorem}
\label{thm:verification-properties}
The  scheme described in Sections \ref{sec:mst-construction}--\ref{sec:verifying} is a correct proof labeling scheme for MST. Its memory complexity is $O(\log n)$ bits. Its detection time
complexity is $O(\log^2 n)$ in synchronous networks and  $O(\Delta\log^3 n)$ in asynchronous ones.  Its detection distance is $O(f\log n)$ if $f$ faults occurred. Its construction time is $O(n)$.
\end{theorem}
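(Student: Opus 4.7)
The plan is to assemble the final theorem as a summary of results already established in Sections \ref{sec:mst-construction}--\ref{sec:verifying}, carefully combining the guarantees for the Well-Forming component and the Minimality component, and then verifying each of the four complexity claims (correctness, memory, detection time, detection distance, construction time).

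First I would handle correctness. The scheme is the composition of two parts: the 1-proof labeling scheme of Lemma \ref{lem:simple-proof}, which verifies that $H(G)\equiv T$ is a spanning tree, that each node knows $n$, and that the data-structures induce a hierarchy $\cH$ of height $\ell\le\lceil\log n\rceil$ together with a candidate function $\chi$ (the Well-Forming property); and the Minimality verifier of Section~\ref{sec:verifying}, whose correctness is given by Lemma \ref{lem:self-stab-verifier}. If $T$ is indeed an MST, then on the labels produced by the marker of Section~\ref{sec:labels-construction}, no node raises an alarm (since all $\hInfo(F)$ agree across $F$ and the checks C1, C2 are satisfied by construction). Conversely, if $T$ is not an MST, Lemma~\ref{lem:simple-proof} catches the case in which the induced hierarchy/candidate structure is ill-formed, while in the case it is well-formed, Lemma~\ref{lem:construction} implies that some $\chi(F)$ is not the minimum outgoing edge of $F$, and then Lemma~\ref{lem:self-stab-verifier} ensures that some node eventually rejects. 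The hierarchy-construction lemma (Lemma~\ref{lem:construction}) bridges the two parts and is the central logical link.

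Second, for the complexity measures I would invoke the following. The memory bound: the labels of the 1-proof scheme use $O(\log n)$ bits (Lemma~\ref{lem:simple-proof}); the permanent pieces stored per node are a constant number of $O(\log n)$-bit words (Section~\ref{subsec:train-initial}); each train keeps only a constant number of car buffers of $O(\log n)$ bits, and each node participates in only two trains (Section~\ref{sec:partitions}); the variables $\Ask$, $\Show$, $\Keep$, the partition-root identifiers, and the auxiliary strings $\ROOTS$, $\EndP$, $\PARENTS$, $\AND$ all use $O(\log n)$ bits (Section~\ref{sec:section5}). The construction time: Corollary~\ref{cor:marker-time} directly gives $O(n)$. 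The detection time: the Well-Forming part is verified in a single time unit, so the bottleneck is the Minimality verifier, whose time is bounded by Lemma~\ref{lem:time-sync}, giving $O(\log^2 n)$ synchronously and $O(\Delta\log^3 n)$ asynchronously once the trains have stabilized (Observation~\ref{obs:train-sync}).

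Third, for the detection-distance bound of $O(f\log n)$, my plan is to argue locality of the checks. Every comparison that can cause a rejection in the Minimality verifier (checks C1 and C2) or in the Well-Forming verifier takes place at a node $v$ using only information drawn either from $v$'s own part in $\Top$ or $\Bottom$ or from a neighbour's part. Each part has diameter $O(\log n)$ (Lemmas~\ref{lem:top} and \ref{lem:bottom}), and any fault at a node $w$ can only affect pieces of information originating in the at most two parts containing $w$. Hence a single fault can only cause rejection within an $O(\log n)$-radius neighbourhood of $w$. With $f$ faults, an uncorrupted fragment's boundary is at most $O(f\log n)$ hops from the nearest fault, so some node within that neighbourhood must detect an inconsistency in time.

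The main obstacle, and the part that I would take most care with, is the detection-distance claim: the other four claims are essentially invocations of prior lemmas and observations, while the $O(f\log n)$ bound requires a careful locality argument that ties the partition-diameter bound together with the observation that each train carries information generated entirely inside one part. The remaining work is routine bookkeeping, combining the stated lemmas without re-deriving them.
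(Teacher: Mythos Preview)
Your treatment of correctness, memory, detection time, and construction time is essentially the same as the paper's: you cite Lemma~\ref{lem:simple-proof}, Lemma~\ref{lem:self-stab-verifier}, Claim~\ref{claim:received-info}, Lemma~\ref{lem:time-sync}, Observation~\ref{obs:train-sync}, and Corollary~\ref{cor:marker-time} in the right places, and that is exactly how the paper assembles those four bounds.

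Your detection-distance argument, however, differs from the paper's and has a real gap. You invoke Lemmas~\ref{lem:top} and \ref{lem:bottom} to bound part diameters by $O(\log n)$, but those lemmas describe the partitions \emph{as produced by the marker on a correct instance}. After faults, the part structure stored in the nodes' states need not match those lemmas at all (indeed, Section~\ref{sub:3.3} explicitly says the verifier does not check that the partitions are the ones of Section~\ref{sec:partitions}); a single flipped ``part-root'' bit can merge two parts, and $f$ such faults can produce a part of diameter $\Theta(f\log n)$, which is precisely why the bound is $O(f\log n)$ rather than $O(\log n)$. Your final sentence about ``an uncorrupted fragment's boundary'' does not repair this and is not the mechanism that yields the $f$ factor.

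The paper instead uses an indistinguishability argument: it lets $U$ consist of every faulty node, every neighbour of a faulty node, and the (possibly corrupted) $\Top$ and $\Bottom$ parts of all of these; it then argues that any node $w\notin U$ would behave identically in a second network $G_2$ that agrees with $G_1$ outside $U$ but is completed to a globally correct configuration inside $U$. Correctness of the scheme on $G_2$ forces $w$ not to alarm there, hence not in $G_1$ either. The bound then comes from observing that the radius of $U$ around the faults is $O(f\log n)$. This sidesteps having to trace exactly which pieces of information a given fault corrupts, and it does not rely on Lemmas~\ref{lem:top} or \ref{lem:bottom} holding in the faulty configuration. If you want to keep a direct locality argument, you would need to argue about the corrupted parts rather than the marker's parts, and make precise how the $f$ faults combine; the indistinguishability route is considerably cleaner.
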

\begin{proof}
The correctness and the specified detection time complexity follow from Lemma~\ref{lem:self-stab-verifier} and Claim~\ref{claim:received-info}.
The space taken by pieces of $\Info$ stored permanently at nodes (and rotated by the trains) was already shown to be
$O(\log n)$ bits.
 In addition,
 a node needs some additional $O(\log n)$ bits of memory for the actions described in Section \ref{sub:utilizing}. Similarly,
 the data-structure at each node and the corresponding $1$-proof labeling schemes (that are used to verify it) consume additional $O(\log n)$ bits.
 Finally, for each train, a node needs a constant number of counters and variables, each of logarithmic size. This establishes the required memory size of the scheme.

 To show the detection distance, let network $G_1$ contain faults.  Consider a (not necessarily connected) subgraph $U$ containing every faulty node $v$, every neighbour $u$ of $v$, and the
 parts, both of $\Bottom$ and of $\Top$ of $v$ and $u$.
First, we claim that no node outside of $U$ will raise an alarm. To see that, assume (by way of contradiction) that some node $w$
  outside $U$ does raise an alarm.
  Now, consider
  a different network $G_2$ with the same sets of nodes and of edges as $G_1$. The state of every node in $G_2\setminus U$
     is exactly the (correct) state of the same node in $G_1$. The states of the nodes
        in $U$ are chosen so that to complete the global configuration to be correct. (Clearly, the configuration can be completed in such a way.)
  Hence, no node should raise an alarm (since we have shown that our scheme is correct). However, node $w$ in $G_2$ receives exactly the same information it receives in $G_1$, since it receives only information from
nodes in the parts to which it or its neighbours belong. Hence, $w$ will raise an alarm. A contradiction.
The detection distance complexity now follows from the fact that the radius of $U$ is $O(f \log n)$.
(Informally, this proof also says that non-faulty nodes outside of $U$ are not contaminated by the faulty nodes, since the
verification algorithm sends information about the faulty nodes only within $U$.)

The construction time complexity required for the more complex part of the proof labeling scheme, that is, the proof  scheme described in Sections  \ref{sec:proof}--\ref{sec:verifying}, is  dominated by the construction time of the MST algorithm $\ALG$. This time is shown to be $O(n)$ in Theorem~\ref{thm:ALG}. The construction time required for the simpler 1-proof labeling scheme described in Section~\ref{sec:section5} is shown to be linear in Lemma \ref{lem:simple-proof}.
\end{proof}


\section{Verification of time lower bound}
\label{sec:lower}
We now show that any proof labeling scheme for MST that uses optimal memory must use at least logarithmic time complexity, even when restricted to  synchronous networks.
The lower bound is derived below from the relatively complex lower bound for 1-proof labeling schemes for MST presented in  \cite{KormanKutten07}, by a not- too- difficult reduction from that problem.
We prove the lower bound on the specific kind of networks used in  \cite{KormanKutten07}.
These networks are a family of weighted graphs termed
$(h,\mu)${\em-hypertrees}. (The name may be misleading; a $(h,\mu)$-hypertrees
is neither a tree nor a part of a hyper graph; the name comes from them being a combination of
$(h,\mu)$-trees (see also \cite{GPPR01,KKKP04}) and  hypercubes.)
We do no describe these hypertrees here, since we use them, basically,
 as black boxes. That is, all we need here is to know certain properties (stated below) of these graphs. (We also need to know the lower bound of \cite{KormanKutten07}).

The following two properties of this family  were observed in
\cite{KormanKutten07}. First, all
$(h,\mu)$-hypertrees are identical if one considers them as
unweighted. In particular, two homologous nodes in any two
$(h,\mu)$-hypertrees  have the same identities. Moreover,
the components assigned to two homologous graphs in  \cite{KormanKutten07} are the same. Hence,
 the
(unweighted) subgraphs $H(G)$ induced by the components of any two
$(h,\mu)$-hypertrees are the same.
The second property that was observed is that this subgraph $H(G)$ is in
fact a (rooted) spanning tree of $G$, the corresponding
$(h,\mu)$-hypertree. Another easy observation that can be obtained by following the recursive
construction of an $(h,\mu)$-hypertree
(see Section 4 of \cite{KormanKutten07}), is that each node in an
$(h,\mu)$-hypertree $G$ is adjacent
to at most one edge which is not in the tree $H(G)$, and that the root of
$H(G)$ is adjacent only to edges in $H(G)$.

Fix an integer $\tau$.
Given a $(h,\mu)$-hypertree $G$, we transform $G$ into a new graph $G'$
according to the following procedure (see Figures \ref{fig:lower1}  and \ref{fig:lower2}).
We replace every edge $(u,v)$ in $G$ where $\id(u)<\id(v)$ with a simple path
$P(u,v)$
containing $2\tau+2$ consecutive nodes, i.e.,  $P(u,v)=(x_1,x_2\cdots,
x_{2\tau+2})$,
where $x_1=u$, and $x_{2\tau+2}=v$. For $i=2,\cdots,2\tau+1$, the port
number at $x_i$ of the port leading to $x_{i-1}$ (respectively, $x_{i+1})$
is
1 (resp., 2).
The port-number at $x_1$ (respectively, $x_{2\tau+2})$ of the
port leading to $x_2$ (resp., $x_{2\tau+1})$ is the same as the port-number
of the port leading from $u$ (resp., $v)$ to $v$ (resp., $u)$ in $G$.
 The weight of the edge $(x_{2\tau+1},x_{2\tau+2})$ is  the weight of $(u,v)$, that is, $\omega(x_{2\tau+1},x_{2\tau+2})=\omega(u,v)$, and the
weight of  all other edges in $P(u,v)$ is 1.
The identities of the nodes in the resulted graph are given according to a
DFS traversal on $G'$. We now describe the component
of each node in $G'$.
Let $(u,v)$ be an edge in $G$ and let $P(u,v)=(x_1,x_2\cdots, x_{2\tau+2})$
be the corresponding path in $G'$, where
$x_1=u$ and $x_{2\tau+2}=v$ (here we do not assume necessarily that
$\id(u)<\id(v))$. Consider
 first the case that in the graph $G$, the edge $(u,v)$ belongs to the tree
$H(G)$. Assume without loss of generality that the component of $u$ in $G$
points at $v$. For each $i=1,2,\cdots,2\tau+1$, we let the component at
$x_i$ point at $x_{i+1}$ (the component at $x_{2\tau+2}$ is the same as the
component of $v$ in $G)$.
Consider now the case that $(u,v)$ does not belong to $H(G)$.
In this case, for $i=2,3,\cdots,\tau+1$, we let the component at $x_i$ point
at $x_{i-1}$ (the component at $x_1=u$ in $G'$ is the same as the component of $u$ in
$G)$ and
for $i=\tau+2,\tau+3,\cdots,2\tau+1$, we let the component at $x_i$ point at
$x_{i+1}$ (similarly, the component at $x_{2\tau+2}$
is the same as the component of $v$ in~$G)$.

\begin{figure}
\centering
\includegraphics[scale=.4]{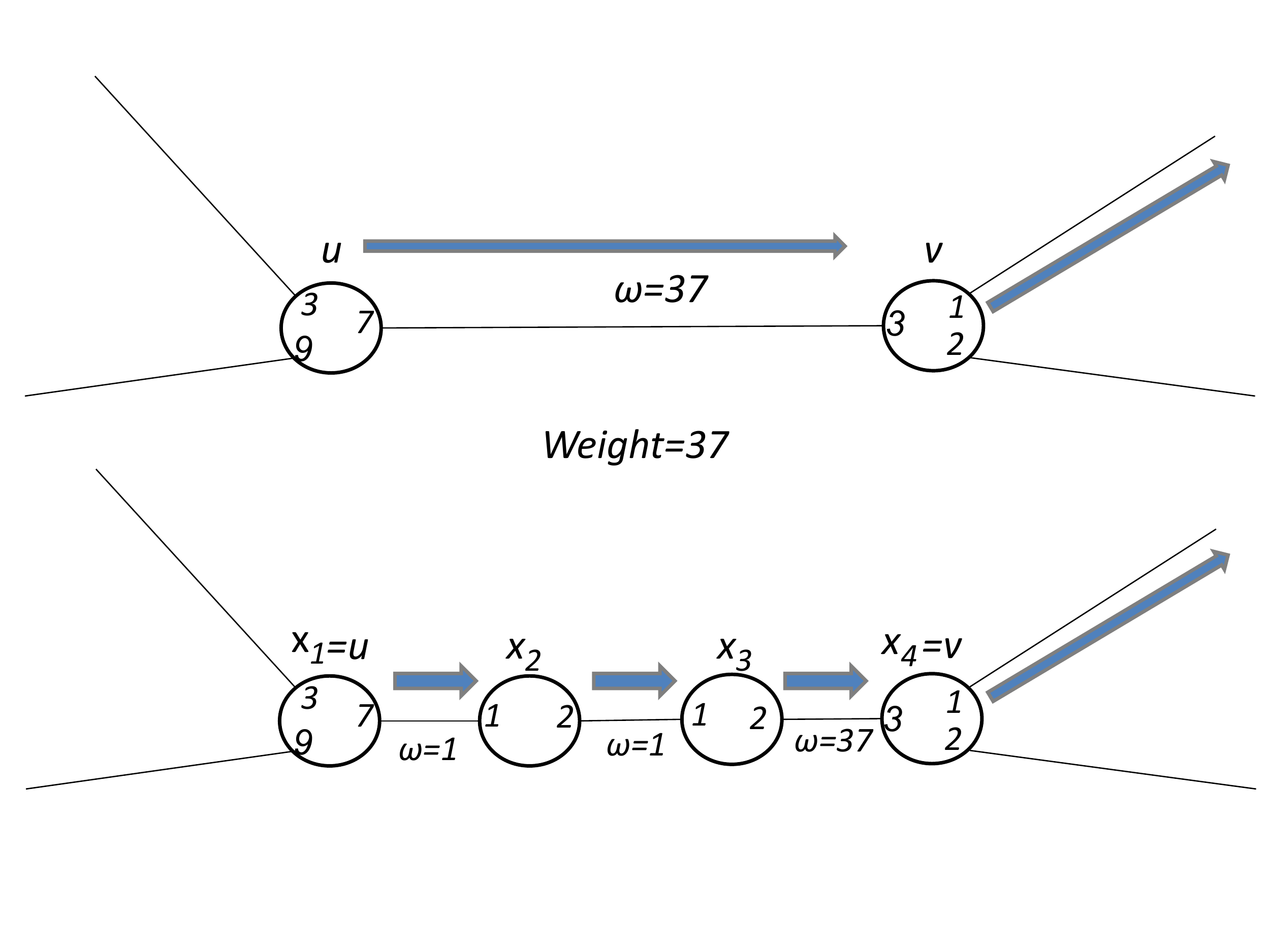}
\caption{Transforming an edge of $G$ (the upper part) to a path of $G'$ (the lower part) for $\tau=1$ and the case that the component of $u$ points at $v$. The component of $v$ points at $v$'s port 1.}\label{fig:lower1}
\end{figure}

\begin{figure}
\centering
\includegraphics[scale=.4]{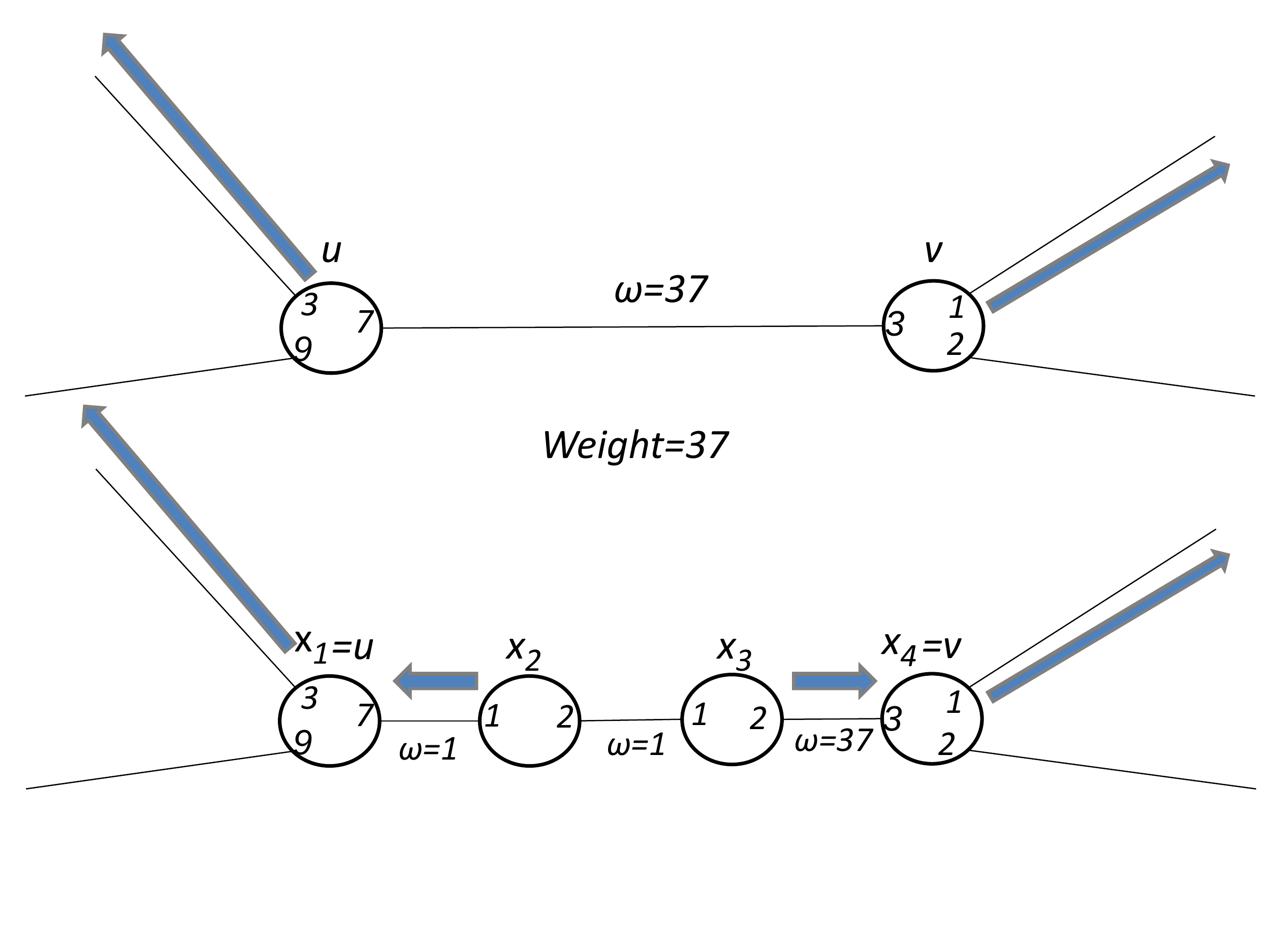}
\caption{The case that the component of $u$ points at $u$'s port 3 that does not lead to $v$, and the component of $v$ points at $v$'s port 1 that does not lead to $u$.}\label{fig:lower2}
\end{figure}

By this construction of $G'$, we get that the subgraph $H(G')$ induced by
the components of $G'$ is a spanning tree of $G'$, and it is an MST
of $G'$ if and only if $H(G)$ is an MST of $G$.
Let $\cF(h,\mu,\tau)$ be the family of all weighted graphs $G'$ obtained by
transforming every $(h,\mu)$-hypertree $G$ into $G'$ using the method
explained above.

\begin{lemma}
If there exists a proof labeling scheme for MST on the family
$\cF(h,\mu,\tau)$ with memory complexity $\ell$ and detection time $\tau$
then there exists a $1$-proof labeling scheme (a proof labeling scheme in the sense of
\cite{KormanKutten07}) for the MST predicate on the family
of  $(h,\mu)$-hypertrees with label size $O(\tau\ell)$.
\end{lemma}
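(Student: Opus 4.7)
The plan is a straightforward reduction: given a proof labeling scheme $(\cM',\cV')$ for MST on $\cF(h,\mu,\tau)$ with memory $\ell$ and detection time $\tau$, I build a $1$-proof labeling scheme $(\cM,\cV)$ on the family of $(h,\mu)$-hypertrees. The marker $\cM$ first constructs $G'\in\cF(h,\mu,\tau)$ from the input hypertree $G$ using the edge-subdivision transformation described just above the lemma, and runs $\cM'$ on $G'$. For each original node $v\in V(G)$, I set $\cM(v)$ to consist of $\cM'(v)$ together with, for each port of $v$ leading to a neighbour $u$ of $v$ in $G$, the sequence of $\cM'$-labels on the $\tau$ path-intermediates closest to $v$ on $P(u,v)$, namely $\cM'(x_{\tau+2}^{uv}),\cM'(x_{\tau+3}^{uv}),\ldots,\cM'(x_{2\tau+1}^{uv})$. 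Each label is $\ell$ bits, giving a per-port contribution of $O(\tau\ell)$ as claimed.

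The verifier $\cV$ at $v$ reads $\cM(v)$ together with $\cM(u)$ for every $u\in N_G(v)$. Concatenating the two halves stored at $u$ and at $v$ yields the entire $\cM'$-label sequence $\cM'(u)=\cM'(x_1^{uv}),\cM'(x_2^{uv}),\ldots,\cM'(x_{2\tau+2}^{uv})=\cM'(v)$ along the whole path $P(u,v)$, for every $u\in N_G(v)$. The verifier then simulates $\cV'$ for exactly $\tau$ rounds at $v$ itself and at each intermediate $x_i^{uv}$ with $\tau+2\le i\le 2\tau+1$ (that is, at every intermediate assigned to $v$'s ``side''), and raises an alarm at $v$ iff any such simulation raises an alarm.

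The heart of the argument is to verify that these simulations can in fact be executed locally at $v$: the $G'$-ball of radius $\tau$ around any simulated node $w$ must be covered by the $\cM'$-labels visible to $v$. For $w=v$ itself, $B_\tau^{G'}(v)$ consists of $v$ and the $\tau$ path-intermediates closest to $v$ on each incident path, all of which are stored in $\cM(v)$. For $w=x_i^{uv}$ with $\tau+2\le i\le 2\tau+1$, a direct distance computation on $G'$ shows
\[
B_\tau^{G'}(x_i^{uv})\ \subseteq\ \{x_j^{uv}:\,i-\tau\le j\le 2\tau+2\}\ \cup\ \bigcup_{w\in N_G(v)\setminus\{u\}}\{x_j^{vw}:\,3\tau+4-i\le j\le 2\tau+2\},
\]
and the right-hand side is entirely covered by $\cM(u)\cup\cM(v)$: the intermediates on $P(u,v)$ split naturally into the ``$u$-half'' stored at $u$ and the ``$v$-half'' stored at $v$, while any ``leakage'' through $v$ into neighbouring paths $P(v,w)$ stays on $v$'s half and so is recorded in $\cM(v)$. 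I expect this bookkeeping, and in particular checking the extremal case $i=\tau+2$ where the ball consumes the full path $P(u,v)$, to be the main obstacle — though nothing more than arithmetic on path distances.

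Correctness follows from two facts recorded before the lemma: (i) $H(G')$ is an MST of $G'$ iff $H(G)$ is an MST of $G$; and (ii) each node $w\in V(G')$ is either an original node $v^*$, or an intermediate $x_i^{uv}$ with a unique ``closer'' endpoint $v^*\in\{u,v\}$ under the partition $\{2,\ldots,\tau+1\}\cup\{\tau+2,\ldots,2\tau+1\}$ of intermediate indices, and in both cases $v^*$ simulates $\cV'$ at $w$. Hence a $\cV'$-rejection at any $w\in V(G')$ within $\tau$ time translates into a $\cV$-rejection at some $v^*\in V(G)$ within $1$ time, while on correct instances every simulation reproduces an accept of $\cV'$ and so $\cV$ accepts everywhere.
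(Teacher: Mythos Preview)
Your reduction is sound in spirit and very close to the paper's, but there is a gap in the label-size accounting. You write that each port contributes $O(\tau\ell)$ bits ``as claimed,'' but the total label at $v$ is the sum over all ports, i.e.\ $O(\deg_G(v)\cdot\tau\ell)$, and you never bound $\deg_G(v)$. The statement you are proving asserts label size $O(\tau\ell)$ with no degree factor, so as written your marker does not meet the bound unless you separately argue that $(h,\mu)$-hypertrees have bounded degree. (They do---the underlying $H(G)$ is a binary tree and each node carries at most one non-tree edge, so $\deg\le 4$---but this fact must be invoked, and the only structural property quoted in the present paper is the bound on \emph{non-tree} edges.)

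The paper sidesteps the degree issue entirely by a different packaging of the same information. Rather than splitting each path in half and storing one half per endpoint, it has each node $u$ store the \emph{entire} sequence of $2\tau+1$ $\cM'$-labels along its parent edge path $P_2(u)$ and along its unique non-tree edge path $P_1(u)$---two paths, independent of degree, hence $O(\tau\ell)$ bits unconditionally. The labels along child-edge paths are recovered at $u$ by reading each child's $P_2$-field. The verifier at $u$ then simulates $\cV'$ on all of $B_\tau(u')$ (not just on ``$u$'s side'' of each path), which is legitimate because the model of \cite{KormanKutten07} bounds only label size, not verifier working memory. Your half-path scheme and your ``one side per endpoint'' partition of the simulation work are a perfectly valid alternative once bounded degree is on the table; the paper's asymmetry (parent path $+$ non-tree path) is what buys the degree-free bound.
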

\begin{proof}
Let $(\cM',\cV')$ be a proof labeling scheme for MST and the family
$\cF(h,\mu,\tau)$  with memory complexity $\ell$ and detection time $\tau$.
We describe now a 1-proof labeling scheme $(\cM,\cV)$ for the MST predicate on
the family
of  $(h,\mu)$-hypertrees.
Let $G$ be an $(h,\mu)$-hypertree that satisfies the MST predicate. We first
describe the labels assigned by the marker $\cM$
to the nodes on $G$.
In this lemma, we are not concerned with the time needed for actually assigning the labels using a distributed algorithm, hence, we describe the marker $\cM$ as a
centralized algorithm and not as a distributed one.   (We note that this is consistent with the model of \cite{KormanKutten07} that considers only centralized marker algorithms.)

The marker $\cM$ transforms $G$ to $G'$. Observe that $G'$ must also satisfy
the MST predicate.
$\cM$ labels the nodes of $G'$ using the marker $\cM'$. Note that any label
given by the marker $\cM'$ uses at most $\ell$ bits. Given a node $u\in G$,
let $e_1(u)$ be the edge not in the tree $H(G)$ that is adjacent to $u$
(if one exists) and let $e_2(u)$ be the
edge in $H(G)$ leading from $u$ to its parent in $H(G)$ (if one exists). Let
$P_1(u)=(w_1,w_2\cdots, w_{2\tau+2})$ be the path in $G'$ corresponding
to $e_1(u)$.
If $u$ is not the root of $H(G)$ then $e_2(u)$ exists and let
$P_2(u)=(y_1,y_2\cdots, y_{2\tau+2})$
be the path in $G'$ corresponding
to $e_2(u)$, where $y_1=u$ and $y_{2\tau+2}$ is the parent of $u$. For the
root $r$ of $H(G)$, let $P_2(r)$ be simply $(r)$.  If $u$ is not the root
of $H(G)$ then for each $i\in\{1,2,\cdots, 2\tau+1\}$, the marker $\cM$
copies the labels $\cM'(y_i)$ and
$\cM'(w_i)$ into the $i$'th field in the label $\cM(u)$. (Note that the
labels $\cM'(w_i)$ are copied in the labels given by $\cM$ to
both end-nodes of $e_1(u)$.) If $r$ is the root of $H(G)$ then $e_2(r)$ does
not exist and actually, also $e_1(r)$ does not exist,
as $r$ is not adjacent to any edge not in $H(G)$.
The marker $\cM$ simply copies the label $\cM'(r')$ into the label $\cM(r)$,
where
$r'$ is the corresponding node of $r$ in $G'$.

In the model of proof labeling schemes in \cite{KormanKutten07}, the
verifier $\cV$ at a node $u\in G$  can look at the labels of all nodes $v$
such that $(u,v)$ is an edge of $G$. In particular,
it sees the labels assigned by $\cM'$ to all nodes in $G'$ at distance at
most $2\tau$ from its corresponding node $u'$ in $G'$.
Let $B_\tau(u')$ be the set of nodes at distance at most $\tau$ from $u'$ in
$G'$.
 We let the verifier $\cV$ at
$u$
simulate the operations of the verifier $\cV'$ at each node in
$B_\tau(u')$--this can be achieved as
the information in the $1$-neighbourhood of $u$ (in $G)$ contains the
information in the $\tau$-neighbourhood of $G'$ of any node in $B_\tau(u')$.
Finally, we let $\cV(u)=1$ if and only if
 $\cV'(x)=1$ for all $x\in B_\tau(u')$.
It can be easily observed that $(\cM,\cV)$ is indeed a $1$-proof labeling
scheme for the family of  $(h,\mu)$-hypertrees.
Moreover, each label assigned by the marker $\cM$ uses $O(\tau\ell)$ bits.
(Note that the model in \cite{KormanKutten07} restricts only the sizes of
the labels and not the memory size used by the verifier.)
This completes the proof of the lemma.
\end{proof}

\begin{corollary}
Fix a positive integer $\tau=O(\log n)$. The memory complexity of any proof
labeling scheme for $\cF(n)$ with detection time $\tau$
is $\Omega(\frac{\log^2 n}{\tau})= \Omega(\log n)$.
(Recall $\cF(n)$ represents all connected undirected weighted graphs.)
\end{corollary}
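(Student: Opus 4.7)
The plan is to derive this corollary as a nearly immediate consequence of the preceding lemma, combined with the known lower bound of \cite{KormanKutten07} for $1$-proof labeling schemes for MST. The reduction has already done the heavy lifting; what remains is just bookkeeping on the parameters.

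First I would argue by contrapositive. Assume there exists a proof labeling scheme $(\cM',\cV')$ for MST on the family $\cF(n)$ with detection time $\tau$ and memory complexity~$\ell$. Since $\cF(h,\mu,\tau)\subseteq \cF(n')$ for the appropriate~$n'$ (the vertex count of the transformed graphs), this scheme in particular works on $\cF(h,\mu,\tau)$ with the same memory and time bounds. Applying the preceding lemma then yields a $1$-proof labeling scheme for MST on the family of $(h,\mu)$-hypertrees, of label size $O(\tau\ell)$.

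Next I would invoke the lower bound of \cite{KormanKutten07}, which asserts that any $1$-proof labeling scheme for MST on $(h,\mu)$-hypertrees on $m$ nodes must use labels of size $\Omega(\log^2 m)$. Combined with the previous step, this forces $\tau\ell = \Omega(\log^2 m)$, hence $\ell = \Omega(\log^2 m / \tau)$.

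The only point requiring care (and really the only place the argument could go wrong) is to check that the size parameters match. The transformation $G \mapsto G'$ replaces every edge of $G$ by a path of $2\tau+2$ nodes, so $n' = |G'| = \Theta(\tau\cdot m)$, where $m = |G|$ is the hypertree size. For $\tau = O(\log n)$, we have $\log n' = \Theta(\log m)$, so $\log^2 m = \Theta(\log^2 n')$. Thus $\ell = \Omega(\log^2 n' / \tau)$; identifying $n'$ with the free parameter $n$ of the statement, and using $\tau = O(\log n)$, we obtain $\ell = \Omega(\log^2 n / \tau) = \Omega(\log n)$, as required. The main obstacle is purely this parameter-matching, and it goes through cleanly because the blow-up factor $\tau$ is logarithmic in $n$, so taking logarithms absorbs it.
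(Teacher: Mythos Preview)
Your proposal is correct and follows essentially the same approach as the paper: combine the preceding reduction lemma with the $\Omega(\log^2 n)$ lower bound of \cite{KormanKutten07} for $1$-proof labeling schemes on hypertrees, and then check that the blow-up in graph size under $G\mapsto G'$ is only polynomial (so that logarithms match). The paper's own proof is terser---it simply notes that $|G'|$ is polynomial in $n$ for $G'\in\cF(\log n,n,\tau)$---whereas you spell out the parameter bookkeeping more explicitly, but the argument is the same.
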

\begin{proof}
In \cite{KormanKutten07} we showed that  the label size of any proof
labeling scheme for the MST predicate and the family of
$(\log n,n)$-hypertrees is $\Omega(\log^2 n)$ bits.
The claim now follows by combining the previous lemma together with the fact
that
the number of nodes in a graph $G'\in\cF(\log n,n,\tau)$ is polynomial in
$n$.
\end{proof}

\section{The self-stabilizing MST construction algorithm}
\label{sec:full-mst-alg}
\label{scheme}

We use
a transformer that inputs a non-self-stabilizing algorithm and outputs a self-stabilizing one.
For simplicity, we first explain how to use the transformer proposed in the seminal paper of Awerbuch and Varghese \cite{awerbuch-varghese}
 (which utilizes the transformer of its companion paper
\cite{APV} as a black box).
This already yields a self-stabilizing MST algorithm with $O(n)$ time and $O(\log n)$
memory per node.
Later, we refine that transformer somewhat to add the property that the
verification time is of $O(\log^2 n)$
in a synchronous network, or $O(\min\{ \Delta\log^3 n ,n\})$ in an asynchronous
one.
We then also establish the property that if $f$ faults occur, then each fault is
detected within its $O(f \log n)$ neighbourhood.

The Resynchronizer of \cite{awerbuch-varghese}  inputs a non-stabilizing synchronous input/output algorithm\footnote{An input/output algorithm is one whose correctness requirement can be specified as a relation between its input and its output.}
$\Pi$
whose running time and memory size  are some $T_\Pi$ and $S_\Pi$, respectively.
Another input it gets is $\hat{D}$, which is an {\em upper bound} on the actual diameter $D$ of the network.
It then yields a self-stabilizing version whose memory size is $O(S_\Pi  + \log n)$ and whose
time complexity is $O(T_\Pi+\hat{D})$.

For our purposes, to have the Resynchronizer yield our desired result, we first need
to come up with such
a bound $\hat{D}$ on the diameter.
 (Recall that we do {\em not}  assume that $D$, or even $n$, are known).
 Second, the result of the Resynchronizer of \cite{awerbuch-varghese} is a synchronous algorithm, while we want an algorithm that can be executed in an asynchronous network.
Let us describe how we bridge these two gaps.

We use known self-stabilizing protocols \cite{afek-bremler,datta2008}
 to compute~$D$, the diameter of the network, in time $O(n)$, using $O(\log n)$ bits of memory per node. We use this computed $D$ as the desired $\hat{D}$. Note that
at the time that~\cite{awerbuch-varghese} was written, the only algorithm for computing a good bound (of $n)$ on the diameter with a bounded memory had time complexity
$\Theta(n^2)$ \cite{AKY}.

To bridge the second gap, of converting the resulting self-stabilizing algorithm for an {\em asynchronous} network, we use a {\em self-stabilizing synchronizer}
that
transforms algorithms designed for synchronous networks to function
correctly in  asynchronous ones. Such a synchronizer was
not known at the time that~\cite{awerbuch-varghese} was written, but several are available
now. The synchronizer
 of \cite{awerbuch-kutten-patt-et-al,dependable-systems} was first described as if it needs unbounded memory. However, as is stated in
 \cite{awerbuch-kutten-patt-et-al}, this synchronizer is meant to be coupled with a reset protocol to bound the memory. That is, to have a memory size of $O(\log n)$ and time
 $O(n)$, it is sufficient to use a reset protocol with these complexities. We use the reset protocol of  \cite{APV}. Similarly, this reset protocol is meant to be coupled with a
 self-stabilizing spanning tree construction algorithm. The complexities of the resulting reset protocol
  are dominated by those of the spanning tree construction. We plug in some spanning tree algorithm with the desired properties (such as  \cite{afek-bremler,datta2008}) whose memory size and time complexities are
  the desired $O(\log n)$ and $O(n)$ in asynchronous networks, respectively.
  (It is easy to improve the time to $O(D)$ in synchronous networks.)
 This  yields the desired reset protocol, and, hence, the desired synchronizer protocol\footnote{An alternative synchronizer can be based on the one of~\cite{graph-theory-helps}, again, coupled  with some additional known components, such as a module to compute $n$.}.

Let us sum up the treatment of the first two gaps: thanks to some new modules developed after \cite{awerbuch-varghese}, one can now use the following version of the main result of~\cite{awerbuch-varghese}.

\begin{theorem}\label{EAV}
{\bf Enhanced Awerbuch-Varghese Theorem,  (EAV):}
Assume we are given a distributed algorithm~$\Pi$ to compute an input/output relation.
Whether $\Pi$ is
synchronous or asynchronous,
let $T_\Pi$ and
 $S_\Pi$ denote $\Pi$'s time complexity and  memory size, respectively, when executed in synchronous networks.
The enhanced Resynchronizer compiler produces an asynchronous (respectively, synchronous) self-stabilizing algorithm whose memory size is $O( S_\Pi+\log n) $ and whose time complexity
is $O(T_\Pi+ n)$ (resp., $O(T_\Pi+ D))$.
\end{theorem}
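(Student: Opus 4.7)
The plan is to obtain the theorem by a fair self-stabilizing composition of three building blocks: (i) the original Awerbuch--Varghese Resynchronizer of \cite{awerbuch-varghese,APV}, (ii) a self-stabilizing module computing the diameter $D$, and (iii) a self-stabilizing synchronizer for the asynchronous case. The key point is that each block is already known (separately) to have the right complexities; what needs to be checked is that composing them does not increase the memory beyond $O(S_\Pi+\log n)$ and does not increase the time beyond the additive $O(n)$ (asynchronous) or $O(D)$ (synchronous) overhead.

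First, I would run in parallel a self-stabilizing diameter-computation algorithm (for instance that of \cite{afek-bremler} or \cite{datta2008}) that, in $O(n)$ time and using $O(\log n)$ bits per node, outputs at every node a value $\hat{D}$ which eventually equals $D$ (in synchronous networks it stabilizes in $O(D)$ time). This $\hat{D}$ is used as the diameter upper bound required by the Resynchronizer. Since the Resynchronizer consumes $\hat{D}$ only as a timeout, its correctness is insensitive to the transient values taken by $\hat{D}$ during the diameter module's stabilization; once $\hat{D}$ has stabilized to $D$, subsequent executions of the Resynchronizer satisfy its assumptions.

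Second, apply the Resynchronizer of \cite{awerbuch-varghese} with parameter $\hat{D}$ to the input/output algorithm $\Pi$. The theorem of \cite{awerbuch-varghese} (which internally invokes the reset module of \cite{APV}) then yields a synchronous self-stabilizing algorithm with memory $O(S_\Pi + \log n)$ and time $O(T_\Pi + \hat{D})=O(T_\Pi+D)$, establishing the synchronous part of the theorem. For the asynchronous part, compose the resulting synchronous self-stabilizing algorithm with a self-stabilizing synchronizer such as the one of \cite{awerbuch-kutten-patt-et-al,dependable-systems}, instantiated with a reset protocol built from \cite{APV} on top of a self-stabilizing spanning-tree construction from \cite{afek-bremler} or \cite{datta2008}. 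The synchronizer's reset and tree modules stabilize in $O(n)$ asynchronous time with $O(\log n)$ memory, and thereafter each simulated synchronous round takes $O(1)$ asynchronous time (in the ideal time model used here). Adding the $O(T_\Pi+D)$ simulated rounds to the $O(n)$ stabilization of the synchronizer and the diameter module gives the advertised $O(T_\Pi+n)$ asynchronous time bound, and the memory is still $O(S_\Pi+\log n)$ since each auxiliary module uses only $O(\log n)$ bits.

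The main obstacle will be verifying the fair self-stabilizing composition of the three modules: one must ensure that the Resynchronizer is resilient to arbitrary transient values of $\hat{D}$ during the stabilization of the diameter module, and that the synchronizer is resilient to the transient behaviour of the Resynchronizer sitting on top of it. Both verifications are standard applications of fair composition of self-stabilizing layers (see, e.g., \cite{dim-composition,jalfon-composition,protocol-composition,katz-perry}): once a lower layer has stabilized, its output becomes a well-defined input to the next layer, which then stabilizes within its own advertised time bound; these stabilization times add, and because each of them is $O(n)$ asynchronously or $O(D)$ synchronously, the additive overhead collapses into the claimed $O(n)$ (respectively, $O(D)$) term, concluding the proof.
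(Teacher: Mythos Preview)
Your proposal is correct and matches the paper's own argument essentially step for step: compute $D$ self-stabilizingly via \cite{afek-bremler,datta2008}, feed it as $\hat D$ into the Resynchronizer of \cite{awerbuch-varghese,APV}, and for the asynchronous case layer on the self-stabilizing synchronizer of \cite{awerbuch-kutten-patt-et-al,dependable-systems} built over the reset of \cite{APV} and a spanning tree from \cite{afek-bremler,datta2008}, with fair composition \cite{dim-composition,protocol-composition} justifying that the pieces glue together. Your explicit discussion of why transient values of $\hat D$ do not derail the Resynchronizer is, if anything, slightly more careful than the paper's informal treatment.
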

The EAV theorem differs from the result in \cite{awerbuch-varghese} by (1) addressing also asynchronous algorithms, and (2) basing the time complexity on the actual values of $n$ and $D$ of the network rather than on
an a-priori bound $\hat {D}$ that may be arbitrarily larger than $D$ or $n$.

 Recall from Theorem \ref{thm:ALG} that  in synchronous networks, algorithm $\ALG$ constructs an MST in $O(n)$ time and using $O(\log n)$ memory bits per node. Hence, plugging in algorithm $\ALG$ as $\Pi$ yields the following theorem.

 \begin{theorem}
There exists a self-stabilizing MST construction algorithm that can operate in an asynchronous environment, runs in $O(n)$ time and uses $O(\log n)$ bits of memory per node.
 \end{theorem}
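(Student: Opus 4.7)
The plan is to apply the Enhanced Awerbuch--Varghese (EAV) Theorem directly, using the synchronous MST construction algorithm $\ALG$ from Section~\ref{sec:mst-construction} as the input/output algorithm $\Pi$. By Theorem~\ref{thm:ALG}, $\ALG$ correctly computes the (unique) MST of any weighted graph with $T_\Pi=O(n)$ and $S_\Pi=O(\log n)$ in a synchronous network. Since the MST problem is an input/output relation (the graph together with its edge weights determines a unique subset of edges forming the MST), $\ALG$ satisfies the hypotheses of Theorem~\ref{EAV}.

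Plugging these parameters into the EAV Theorem, I obtain a self-stabilizing algorithm that operates correctly in an asynchronous environment, has memory size $O(S_\Pi + \log n) = O(\log n)$ bits per node, and has time complexity $O(T_\Pi + n) = O(n)$. This immediately gives the conclusion, so no further argument is needed beyond verifying that the preconditions of EAV are satisfied.

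The only thing that needs some care is to confirm that $\ALG$ fits the ``input/output algorithm'' template expected by the Resynchronizer: namely, that the correctness of the produced output is a function of the (fixed) input (the graph and its edge weights), rather than depending on transient internal state. This is indeed the case, since the distinct edge weights (obtained, if necessary, by the local weight modification described in Section~\ref{sub:Some general definitions}) guarantee a unique MST, and $\ALG$ is a deterministic synchronous algorithm whose output is precisely that MST. The other subtlety, namely that EAV requires handling the unknown diameter $D$ and the asynchrony, has already been absorbed into the statement of Theorem~\ref{EAV} by plugging in a self-stabilizing diameter computation (e.g.~\cite{afek-bremler,datta2008}) and a self-stabilizing synchronizer from \cite{awerbuch-kutten-patt-et-al,dependable-systems} coupled with the reset protocol of~\cite{APV}; each of these components uses $O(\log n)$ memory and $O(n)$ time, so they do not dominate the final complexities.

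Hence there is no real ``hard step'' left in this particular theorem: the entire technical work is packaged inside Theorem~\ref{thm:ALG} (optimal-memory linear-time synchronous MST) and Theorem~\ref{EAV} (the enhanced transformer). The main conceptual obstacle was to produce those two ingredients in the preceding sections. The present theorem is simply their composition, and the resulting complexities---asynchronous, self-stabilizing, $O(n)$ time, $O(\log n)$ memory per node---follow by direct substitution.
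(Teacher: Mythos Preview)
Your proposal is correct and follows essentially the same approach as the paper: apply the Enhanced Awerbuch--Varghese Theorem (Theorem~\ref{EAV}) with $\Pi=\ALG$, and read off the $O(\log n)$ memory and $O(n)$ time bounds from Theorem~\ref{thm:ALG}. The paper's own argument is a single sentence to this effect, so your additional remarks about the input/output template and the components absorbed into EAV are accurate elaborations rather than departures.
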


\subsection{Obtaining fast verification}

The Resynchronizer compiler performs iterations forever. Essentially, the first iteration is used to compute the result of $\Pi$, by executing $\Pi$
plus some additional components   needed for the self-stabilization. Each of the later iterations is used to check that the above result is correct. For that, the Resynchronizer executes a {\em checker}. If the result is not correct, then the checker in at least one node ``raises an alarm''. This, in effect, signals the Resynchronizer to drop back to the first iteration. Let us term such a node a {\em detecting node}. Our refinement just replaces the checker, interfacing with the original Resynchronizer by supplying such a detecting node.

We should mention that the original design in \cite{awerbuch-varghese} is already modular in allowing such a replacement of a checker. In fact,
two alternative such checkers are
studied in \cite{awerbuch-varghese}. The first kind of a checker is $\Pi$ itself. That is, if $\Pi$ is deterministic, then, if executed again, it must compute the same result again
(this is adjusted later in \cite{awerbuch-varghese} to accommodate randomized protocols).
This checker functions by comparing the result computed by $\Pi$ in each ``non-first'' iteration to the result it has computed before. If they differ, then a fault is detected.
  The second kind of a checker is a local checker of the kinds studied in
\cite{AKY,APV} or even one that can be derived from local proofs \cite{KormanKutten07,KKP10}. That is, a checker whose time complexity is exactly~1. When using this kind of a checker, the Resynchronizer uses one iteration to
execute~$\Pi$, then the Resynchronizer executes the checker repeatedly until a fault is detected.
It was argued in \cite{awerbuch-varghese} that the usage of such a checker (of time complexity exactly $1$) is easy, since such a checker self-stabilizes trivially.
We stress  that it was later shown that such a checker (whose time complexity is 1) must use $\Omega(\log^2 n)$ bits~\cite{KKP10}.
Hence, plugging such a checker into the Resynchronizer compiler cannot yield an optimal memory self-stabilizing algorithm.

The door was left open in \cite{awerbuch-varghese} for additional checkers.
It was in this context that they posed the open problem of whether MST
has a checker which is faster than MST computation, and still uses small memory. (Recall that Theorem \ref{thm:verification-properties} answers the open problem in the affirmative.)

We use a self-stabilizing verifier (of a proof labeling scheme) as a checker. That is, if a fault occurs, then the checker detects it, at least in one node, regardless of the initial configuration.
Such nodes where the fault is detected serve as the detecting nodes used above by the Resynchrnonizer. The following theorem differs from the EAV theorem by stating that the final protocol (resulting from the transformation) also enjoys the good properties of the self-stabilizing verifier. I.e., if the self-stabilizing verifier has a good detection time and good detection distance, then, the detection time and distance of the resulting protocols are good too.

\begin{theorem}\label{thm:self-stab}
Suppose we are
given the following:
\begin{itemize}
\item
A distributed algorithm $\Pi$ to compute an input/output relation~$R$. Whether $\Pi$
is
synchronous or asynchronous,
let $T_\Pi$ and
 $S_\Pi$ denote $\Pi$'s time complexity and  memory size, when executed in {\em synchronous} networks.
\item
An asynchronous (respectively, synchronous)  proof labeling scheme $\Pi'$ for verifying $R$
 with memory size~$S_{\Pi'}$,
whose verifier self-stabilizes
with verification time and detection distance $t_{\Pi'}$~and~$d_{\Pi'}$, and whose
  construction time (of the marker) is $T_{\Pi'}$.
\end{itemize}
Then, the enhanced Resynchronizer  produces an asynchronous  (resp., synchronous) self-stabilizing algorithm whose memory and time complexities are $O( S_\Pi+ S_{\Pi'} + \log n) $ and  $O(T_\Pi+T_{\Pi'}+{ t_{\Pi'}} +n)$ (resp.,  $O(T_\Pi+T_{\Pi'}+{ t_{\Pi'}} +D))$,  and whose
  verification time and detection distance are $t_{\Pi'}$ and $d_{\Pi'}$.
\end{theorem}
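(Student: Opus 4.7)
The plan is to feed the EAV compiler (Theorem~\ref{EAV}) a composite input/output algorithm that incorporates both $\Pi$ and the marker of $\Pi'$, and then replace the Resynchronizer's built-in checker with the self-stabilizing verifier of $\Pi'$. Concretely, I would define $\hat{\Pi}$ to be the algorithm that first executes $\Pi$ to produce an output satisfying the relation $R$, and then, on top of that output, executes the marker of $\Pi'$ to assign proof labels to every node. Viewed as a single input/output algorithm, $\hat{\Pi}$ has synchronous running time $T_\Pi+T_{\Pi'}$ and memory size $S_\Pi+S_{\Pi'}$, since the two modules can be run sequentially using the same workspace.

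Next, I would invoke the enhanced Resynchronizer of Theorem~\ref{EAV} on $\hat{\Pi}$. The Resynchronizer, as described in Section~\ref{scheme}, is modular in that its checker can be replaced; I would plug in the self-stabilizing verifier of $\Pi'$ in that slot. Since this verifier raises an alarm in at least one node whenever the computed output together with its labels fail to encode a correct instance of $R$, it supplies precisely the ``detecting node'' signal that the Resynchronizer expects. After initialization, the compiled algorithm alternates between (i) an iteration that re-runs $\hat{\Pi}$ to recompute output and labels from scratch, and (ii) repeated invocations of the verifier which continuously re-check the labelled output until a fault is detected, at which point the Resynchronizer restarts from (i). The memory bound $O(S_\Pi+S_{\Pi'}+\log n)$ and the time bound $O(T_\Pi+T_{\Pi'}+t_{\Pi'}+n)$ in the asynchronous case (and with $n$ replaced by $D$ in the synchronous case) then follow by plugging the complexities of $\hat\Pi$ into the conclusion of Theorem~\ref{EAV}, while noting that each checking iteration now costs $t_{\Pi'}$ instead of one time unit.

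Finally, I would argue that the self-stabilizing compiled algorithm inherits both the verification time $t_{\Pi'}$ and the detection distance $d_{\Pi'}$ of $\Pi'$. Once the compiled algorithm has produced a correct labelled output (which the Resynchronizer guarantees after its stabilization phase), a subsequent fault has its effect measured exactly by the verifier of $\Pi'$ operating on the current labelled configuration: within time $t_{\Pi'}$ after the fault ceases, some node at distance at most $d_{\Pi'}$ from a faulty node raises an alarm, precisely by the guarantees assumed on $\Pi'$. The main subtlety, and the step I expect to require the most care, is showing that the signalling layer introduced by the Resynchronizer (its own alarm propagation and reset) does not degrade these two locality-sensitive parameters; this requires verifying that the verifier-as-checker is consulted in place of the original unit-time checker without any global synchronisation step being inserted between ``fault occurrence'' and ``alarm raised''. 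Once this is established, combining the bounds above yields the statement.
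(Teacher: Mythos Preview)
Your proposal is correct and follows essentially the same approach as the paper: form the composite algorithm $\hat\Pi$ (the paper calls it $\Pi''$) that runs $\Pi$ followed by the marker of $\Pi'$, feed it to the EAV Resynchronizer, and plug in the self-stabilizing verifier of $\Pi'$ as the checker so that the detection time and distance are inherited directly. The paper's proof is in fact terser than yours and does not spell out the subtlety you flag about the Resynchronizer's alarm/reset layer possibly interfering with the verifier's locality guarantees; it simply asserts that once all nodes are in the output state the detection parameters ``follow directly'' from those of $\Pi'$.
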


\begin{proof}
The proof relies heavily on the Resynchronizer compiler given by the EAV theorem (Theorem~\ref{EAV}). This Resynchronizer  receives as input  the following algorithm $\Pi''$, which is not assumed to be neither self-stabilizing nor asynchronous.  Specifically,  algorithm $\Pi''$ first constructs the relation $R$ using algorithm $\Pi$ and, subsequently, executes the marker algorithm of the proof labeling scheme $\Pi'$.

The resulted Resynchronizer (when executing together with the algorithm $\Pi''$ it transforms) is a detection based self-stabilizing algorithm
 (see the explanation of the detection time and distance in Section \ref{sub:detection-time}). It executes algorithm $\Pi''$ for a set amount of time
 (here, counting the time using the self-stabilizing synchronizer) and then puts all the nodes in an {\em output} state, where it uses the self-stabilizing verifier of the proof labeling scheme $\Pi'$ to check. (Recall, in contrast to the marker algorithm, the verifier algorithm of $\Pi'$ is assumed to be self-stabilizing.) The detection time and the detection distance of the combined algorithm thus follow directly from the detection time and the detection distance of the proof labeling scheme $\Pi'$. This concludes the proof of the theorem.
\end{proof}

Now, as algorithm $\Pi$, we can plugged in Theorem \ref{thm:self-stab} the MST construction algorithm $\ALG$, that uses optimal memory size and runs in $O(n)$ time. Furthermore, two possible proof labeling schemes that can be plugged in Theorem \ref{thm:self-stab} as $\Pi'$ are the schemes of \cite{KormanKutten07,KKP10}. Both these schemes use $O(\log^2 n)$ memory size. Since their detection time is 1, they stabilize trivially. The corresponding distributed markers are simplified versions of the marker of the proof labeling scheme given of the current paper, and hence their construction time  is $O(n)$. Hence, plugging either one of these schemes as $\Pi'$
 yields the following.

 \begin{corollary}
\label{cor:weak-MST}
There exists a self-stabilizing MST algorithm with  $O(\log^2 n)$ memory size and $O(n)$ time. Moreover, its detection time
is $1$ and its detection distance is $f+1$.
\end{corollary}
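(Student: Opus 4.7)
The plan is to derive the corollary as a direct instantiation of Theorem~\ref{thm:self-stab}. I would take as the input algorithm $\Pi$ our new synchronous MST construction algorithm $\ALG$ of Section~\ref{sec:mst-construction}. By Theorem~\ref{thm:ALG}, when executed in synchronous networks $\ALG$ computes an MST with $T_\Pi = O(n)$ and $S_\Pi = O(\log n)$, so it satisfies the hypotheses that Theorem~\ref{thm:self-stab} places on $\Pi$.

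As the proof labeling scheme $\Pi'$ in Theorem~\ref{thm:self-stab}, I would plug in either the $1$-proof labeling scheme for MST from \cite{KormanKutten07} or the one from \cite{KKP10}. Both schemes have memory size $S_{\Pi'} = O(\log^2 n)$ and, by their very definition, verification time $t_{\Pi'} = 1$ (the verifier at a node only inspects its own label and those of its neighbours). Because they have detection time $1$, they are trivially self-stabilizing: one unit of time always suffices to detect an inconsistency regardless of the initial memory configuration (each such scheme is even silent in the sense of \cite{silent}). The corresponding detection distance after $f$ faults is $f+1$, as discussed in Section~\ref{sec:proof-labeling} (a 1-proof labeling scheme rejects in at least one node within distance $f+1$ of some fault, since the inconsistency introduced by each fault propagates to the immediate neighbourhood). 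Finally, a distributed marker for either of these $1$-proof labeling schemes can be realized in time $T_{\Pi'} = O(n)$: the marker is essentially a simplification of the marker developed in Sections~\ref{sec:mst-construction}--\ref{sec:section5}, whose construction time is bounded by $O(n)$ as in Corollary~\ref{cor:marker-time} (or, equivalently, Lemma~\ref{lem:lineal-marker}), since the MST labels can be assigned as a side effect of $\ALG$.

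Substituting these bounds into Theorem~\ref{thm:self-stab} yields an asynchronous self-stabilizing MST algorithm whose memory complexity is $O(S_\Pi + S_{\Pi'} + \log n) = O(\log n + \log^2 n + \log n) = O(\log^2 n)$, and whose overall time complexity is $O(T_\Pi + T_{\Pi'} + t_{\Pi'} + n) = O(n + n + 1 + n) = O(n)$. Furthermore, the theorem guarantees that the verification time and detection distance of the combined self-stabilizing algorithm equal $t_{\Pi'}$ and $d_{\Pi'}$ respectively, giving detection time $1$ and detection distance $f+1$ when $f$ faults occur. This establishes the corollary; the only minor subtlety is checking that the $1$-proof labeling markers of \cite{KormanKutten07,KKP10} admit an $O(n)$-time, $O(\log^2 n)$-memory distributed implementation, but this is immediate from the fact that their labels (fragment identifiers, fragment levels, and minimum outgoing edge weights, stored in full at each node) can be assigned directly from the run of $\ALG$ itself, without requiring the more elaborate distribution and train machinery developed in Sections~\ref{sec:proof}--\ref{sub:utilizing}.
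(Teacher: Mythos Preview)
Your proposal is correct and follows essentially the same approach as the paper: instantiate Theorem~\ref{thm:self-stab} with $\Pi=\ALG$ (giving $T_\Pi=O(n)$, $S_\Pi=O(\log n)$) and with $\Pi'$ being the $1$-proof labeling scheme of \cite{KormanKutten07} or \cite{KKP10} (giving $S_{\Pi'}=O(\log^2 n)$, $t_{\Pi'}=1$, $d_{\Pi'}=f+1$, and marker time $O(n)$), then read off the stated bounds. Your justification for the $O(n)$ marker time (labels assignable as a side effect of $\ALG$) is exactly what the paper means when it calls those markers ``simplified versions'' of the one developed here.
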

 Finally, by  plugging  to  the Resynchronizer given in Theorem \ref{thm:self-stab},
the construction algorithm $\ALG$ as~$\Pi$ and our optimal memory proof labeling scheme
 mentioned in Theorem \ref{thm:verification-properties} as $\Pi'$, we obtain
 the following.

\begin{theorem}
\label{cor:verification-properties}
There exists a self-stabilizing MST algorithm that uses optimal $O(\log n)$ memory size and $O(n)$ time.
Moreover,  its detection time
complexity is $O(\log^2 n)$ in synchronous networks and  $O(\Delta\log^3 n)$ in asynchronous ones.  Furthermore, its detection distance is $O(f\log n)$.
\end{theorem}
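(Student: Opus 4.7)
The plan is to invoke Theorem~\ref{thm:self-stab} (the enhanced Resynchronizer) with two carefully chosen plug-in components, and then read off each complexity measure from its conclusion. The proof is essentially a bookkeeping exercise, since all of the substantial work has been done in the preceding sections.

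First, I would take $\Pi$ to be the synchronous MST construction algorithm $\ALG$ developed in Section~\ref{sec:mst-construction}. Theorem~\ref{thm:ALG} guarantees that $\ALG$ computes an MST (viewed as an input/output relation on the weighted graph) with synchronous running time $T_\Pi = O(n)$ and memory $S_\Pi = O(\log n)$. Note that $\ALG$ is not self-stabilizing and the Resynchronizer does not require it to be.

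Next, I would take $\Pi'$ to be the self-stabilizing proof labeling scheme for MST built in Sections~\ref{sec:section5}--\ref{sec:verifying}. Theorem~\ref{thm:verification-properties} packages exactly the parameters the Resynchronizer consumes: memory size $S_{\Pi'} = O(\log n)$, marker construction time $T_{\Pi'} = O(n)$, self-stabilizing verifier detection time $t_{\Pi'} = O(\log^2 n)$ in synchronous networks and $O(\Delta\log^3 n)$ in asynchronous ones, and detection distance $d_{\Pi'} = O(f\log n)$ whenever $f$ faults occur.

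Plugging these parameters into Theorem~\ref{thm:self-stab} yields an asynchronous self-stabilizing MST algorithm whose memory is $O(S_\Pi + S_{\Pi'} + \log n) = O(\log n)$, whose overall time complexity is $O(T_\Pi + T_{\Pi'} + t_{\Pi'} + n) = O(n)$, and which inherits verification time $t_{\Pi'}$ and detection distance $d_{\Pi'}$ from the plugged-in proof labeling scheme. These are precisely the bounds claimed in the statement. There is no serious obstacle to overcome at this stage: the ``hard part'' was already carried out elsewhere, namely, (i) showing that $\ALG$ can be made memory-optimal while remaining linear in time (Theorem~\ref{thm:ALG}), (ii) constructing a self-stabilizing proof labeling scheme for MST with optimal memory, small verification time, and small detection distance (Theorem~\ref{thm:verification-properties}), and (iii) extending the Awerbuch--Varghese transformer to accept asynchronous environments and self-stabilizing proof labeling schemes as checkers (Theorem~\ref{thm:self-stab}). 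The corollary simply composes these three ingredients.
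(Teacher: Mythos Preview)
Your proposal is correct and matches the paper's approach exactly: the paper also obtains Theorem~\ref{cor:verification-properties} by plugging $\ALG$ (as $\Pi$) and the proof labeling scheme of Theorem~\ref{thm:verification-properties} (as $\Pi'$) into the enhanced Resynchronizer of Theorem~\ref{thm:self-stab}. Your parameter bookkeeping is accurate and slightly more explicit than the paper's one-line justification.
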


\subsection{Combining self stabilzing algorithms}

The algorithm in this paper is composed of multiple modules (Figures \ref{fig:Structure1} and \ref{fig:Structure2}).
Some of them are self stabilizing, and some are not. When composing self stabilizing algorithms together, the result may not be self stabilizing, so one should take care \cite{dim-composition}.
We have claimed  the stabilization of composite programs throughout this paper. For the sake of completeness, let us go over all the components here once again, to recall that their composition self stabilizes in spite of the composition.

\begin{figure}
\centering
\includegraphics[scale=.8]{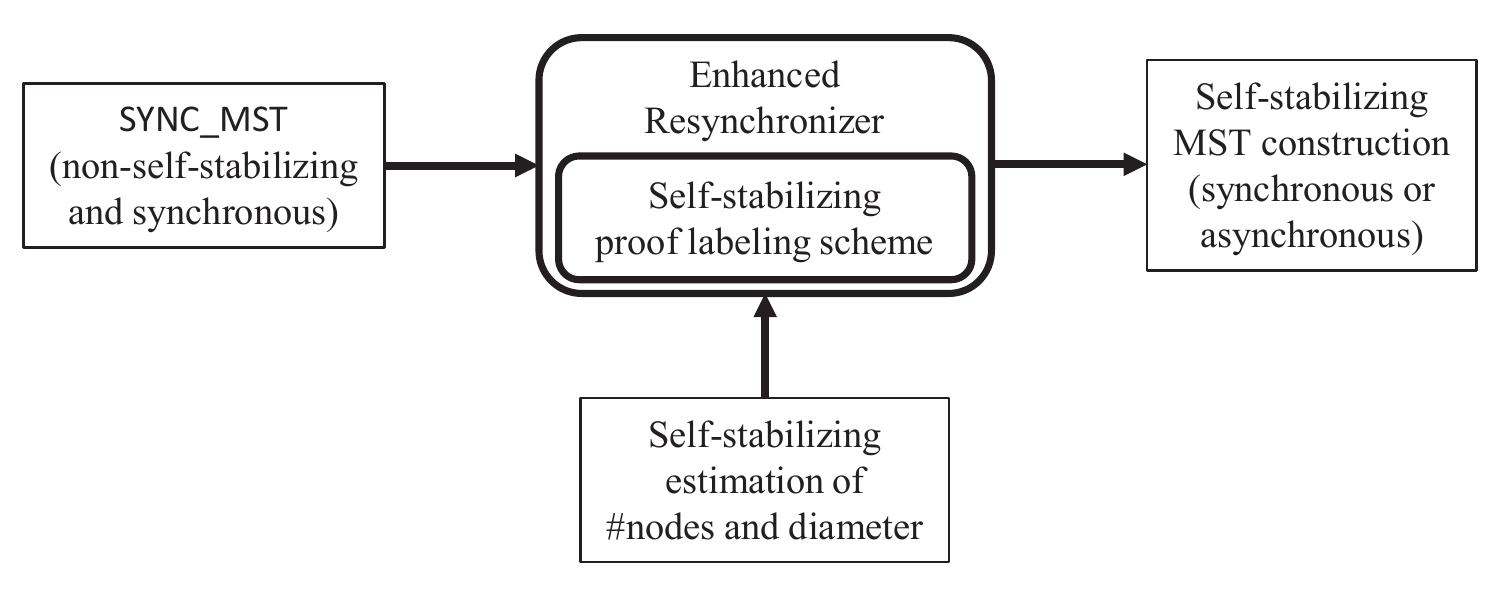}
\caption{Structure of the self-stabilizing (synchronous or asynchronous) MST construction algorithm obtained by the enhanced Resynchronizer.}
\label{fig:Structure1}
\end{figure}

\begin{figure}
\centering
\includegraphics[scale=.8]{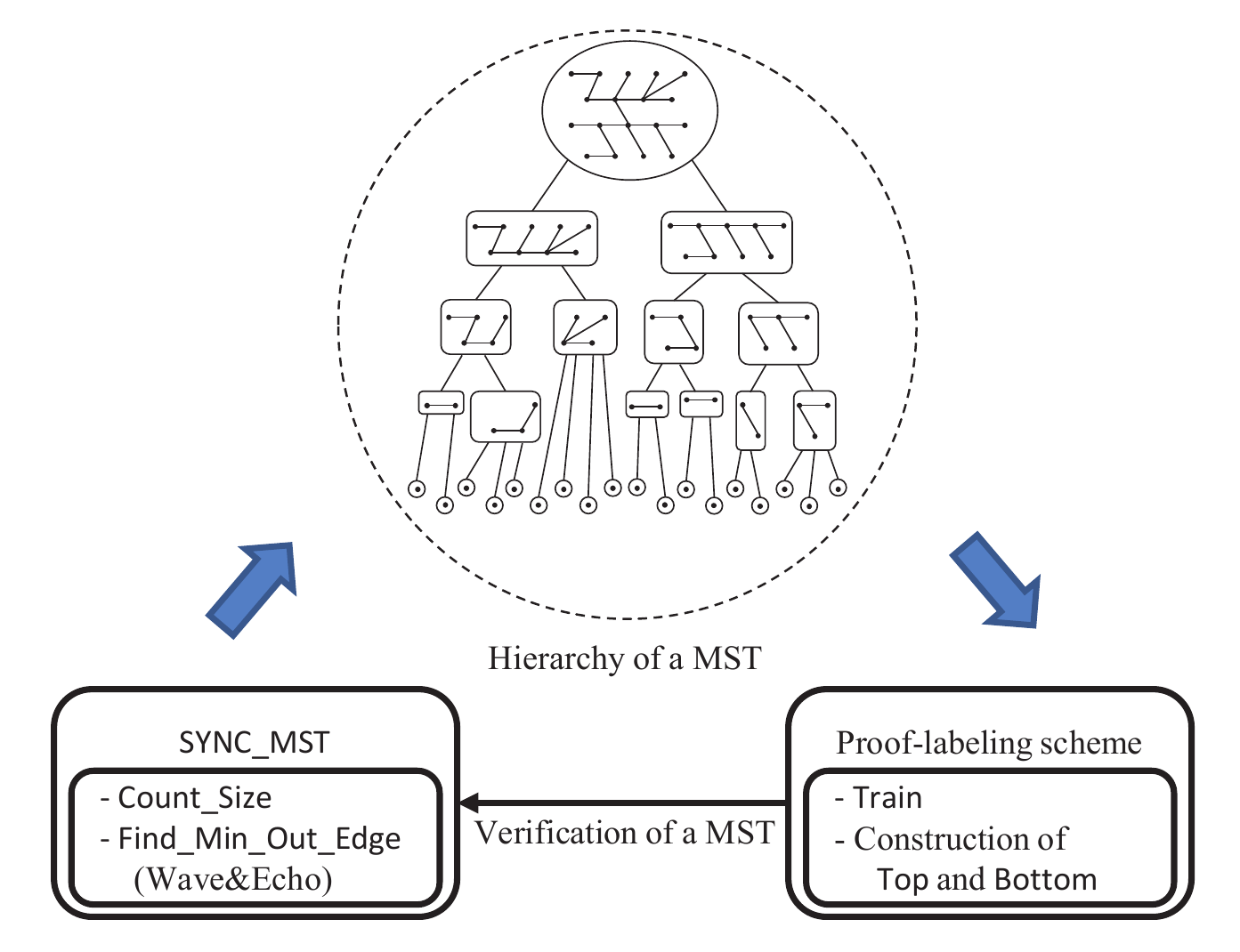}
\caption{Algorithm $\ALG$, mainly consisting of algorithms $\COUNTSIZE$ and $\FINDOUT$, induces the hierarchy of a MST.  From the hierarchy the proof-labeling-scheme, mainly consisting of the trains and the construction of partitions $\Bottom$ and $\Top$, produces a marker and a verifier.}
\label{fig:Structure2}
\end{figure}

 The main composition is that of the transformer algorithm of Awerbuch and Varghese \cite{awerbuch-varghese} together with a checking scheme. The way to perform this composition, as well as its correctness, have been established in \cite{varghese-thesis} (as well as in \cite{awerbuch-varghese}). See Theorem 6.1 in \cite{awerbuch-varghese} and Theorem 9.3.2 in \cite{varghese-thesis}.

A synchronizer uses, as an input, the number of nodes and the value of the diameter computed by other algorithms. Here the correctness follows easily from the ``fair combination'' principle of \cite{dim-composition,protocol-composition}. That is, the algorithms computing these values do not use inputs from the other algorithms in the composition. Moreover, their outputs stabilize to the correct values at some points (from their respective proofs, that do not depend on assumptions in other algorithms). From that time on, their values are correct.

The tree construction itself is not supposed to be self stabilizing for the transformer scheme of  \cite{awerbuch-varghese}. This is also the case with the marker algorithm, since the MST construction algorithm and the marker together constructs a data structure to be verified. (Recall that verifying the MST alone is costly \cite{KKP11}; hence the idea is to construct a ``redundant'' representation of the MST, containing the MST plus the proof labels, such that verifying this redundant data structure is easier).

It is left to argue that the verifier on its own self stabilizes, in spite of the fact that it is composed of several components. Recall that the output of the verifier is a logical AND of several verifiers. That is, if either the verifier for the scheme for the Well-Forming property (Sections \ref{sec:mst-construction} and  \ref{sec:section5}) or the verifier for the scheme for  the Minimality  property (Sections  \ref{sec:proof},  \ref{sub:3.2}, and \ref{sub:utilizing}) outputs ``no'', then the combined (composed together) verifier outputs no. Hence, the different schemes do not interfere with each other. If all of them are self stabilizing, then the composition is self stabilizing. In particular, the scheme for verifying the Well-Forming property runs in one time unit repeatedly. As observed by \cite{awerbuch-varghese}, such a verifier is necessarily self stabilizing. It is then left to show that the verifier for  the Minimality property self stabilizes.

Note that Section \ref{sec:proof} describes a part of the marker, devoted to the scheme for verifying  the Minimality property. Recall that the marker is not required to self stabilize.
Section \ref{sub:3.2} describes the trains process which is composed  of two parts: the convergecast of the information to a part's root, and its broadcast from the root. The second process (the broadcast) inputs (at the root) the results of the first process, but not vice versa. Hence, clearly, the composition self stabilizes as above (that is, after the first process eventually stabilizes, the second process will eventually stabilize too). The pieces of information carried by the train are then used by each node to compare information with its neighbours (in Section \ref{sub:Sampling and synchronizing}) and by the part root
(in Section  \ref{sec:section5}).  Again, the flow of information between modules is one way. That is, from the train process to the computations by each node and by the root. After the trains stabilizes, so does the rest, eventually. (The later computations also input the output of the module computing the number $n$ of nodes in the network; again, the flow of information is only unidirectional, and hence the verifier does stabilize after the $n$ computation stabilizes).

\begin{Comment}
{\bf Using later synchronizers:}
As explained in Section \ref{sec:full-mst-alg}, for simplicity of the presentation we prefer using the synchronizer and the reset protocols built in the scheme of
\cite{awerbuch-varghese}, since the proof of their composition is already covered in \cite{awerbuch-varghese, varghese-thesis}. For those who prefer using the later synchronizers and reset protocols we mentioned, e.g., \cite{dependable-systems}, the composition would remain self stabilizing even if we use those. The correction of this statement has essentially been established in those synchronizers papers. That is, they presented synchronizers such that they can take any algorithm intended to run over a synchronous network, compose with it, and have it run correctly (and in a self-stabilizing manner) in an asynchronous network. The same holds also for self stabilizing reset protocols.

For the sake of completeness, let us recall, nevertheless, why this composition is correct.
For the synchronizer to work, it needs a certain output from the algorithm. This output is TRIVIAL. That is, a SYNCHRONOUS  algorithm at a node at a pulse acts as follows. It receives messages from ALL the neighbours (or at least a statement that no message is going to arrive from a specific neighbour), and then processes a message from each neighbour. Then it is ready for the next pulse.

Thus, the synchronizer needs to know (1) when did the algorithm receive messages from all the neighbours. For this purpose, the synchronizer receives the messages on the algorithm's behalf, and when it receives all of them (or notifications that no messages will be sent), it passes all of them to the algorithm together, which, in turn,  processes all of these messages together. The algorithm needs then to tell the synchronizer that it has finished processing the messages. If this processing generates messages to be sent to neighbours, the algorithm needs to give these new messages to the synchronizer to send them on the algorithm's behalf. (This is done so that if there is a neighbour $u$ to which the node does not send a message in the current pulse, the synchronizer will send a ``dummy'' message, saying that no message will arrive.)

The analysis of the synchronizers (in the papers that presented self stabilizing synchronizers, e.g., \cite{dependable-systems}) were base on the rather obvious observation regarding the correctness of this trivial information for any ``reasonable'' algorithm, starting from the second round. That is, it is {\em not} assumed that the computation or the messages are correct. What is assumed by the synchronizers is  just the fact that the algorithm computed already the messages from the previous round (and is giving the synchronzer the resulting messages).  Obviously, this assumption holds for our algorithm too, so we can rely on the results of the papers where the synchronizers were designed.
\end{Comment}

\newpage

\end{document}